\DeclareMathOperator*{\argmax}{arg\,max}
\newcommand{\blue}[1]{\textcolor{black}{#1}}
\newcommand{\spara}[1]{\smallskip\noindent\textbf{#1.}}
\newcommand{\comment}[1]{}
\newcommand{\eat}[1]{}
\newcommand{\InFullOnly}[1]{}
\newtheorem{theorem}{Theorem}
\newtheorem{claim}{Claim}
\newtheorem{definition}{Definition}
\newtheorem{lemma}{Lemma}
\newtheorem{example}{Example}
\newtheorem{problem}{Problem}
\newtheorem{proposition}{Proposition}
\def\noflash#1{\setbox0=\hbox{#1}\hbox to 1\wd0{\hfill}}
\newcommand{\PP}{P\xspace}
\newcommand{\calA}{\mathcal{A}}
\newcommand{\calB}{\mathcal{B}}
\newcommand{\qao}{q_{\calA|\emptyset}\xspace}
\newcommand{\qbo}{q_{\calB|\emptyset}\xspace}
\newcommand{\qab}{q_{\calA|\calB}\xspace}
\newcommand{\qba}{q_{\calB|\calA}\xspace}
\newcommand{\comic}{Com-IC\xspace}
\newcommand{\model}{Com-IC\xspace}
\newcommand{\timsim}{$\mathsf{RR}$-$\mathsf{SIM}$\xspace}
\newcommand{\timsimfast}{$\mathsf{RR}$-$\mathsf{SIM}$+\xspace}
\newcommand{\timcim}{$\mathsf{RR}$-$\mathsf{CIM}$\xspace}
\newcommand{\SIM}{\textsc{SelfInfMax}\xspace}
\newcommand{\CIM}{\textsc{CompInfMax}\xspace}
\newcommand{\EPT}{\mathit{EPT}\xspace}
\newcommand{\OPT}{\mathit{OPT}\xspace}
\newcommand{\rom}[1]{\uppercase\expandafter{\romannumeral #1\relax}}
\newcommand{\lrom}[1]{\lowercase\expandafter{\romannumeral #1\relax}}
\newcommand{\flix}{{Flixster}\xspace}
\newcommand{\douban}{{Douban}\xspace}
\newcommand{\dbBook}{{Douban-Book}\xspace}
\newcommand{\dbMovie}{{Douban-Movie}\xspace}
\newcommand{\lastfm}{{Last.fm}\xspace}
\newcommand{\bQ}{\mathbf{Q}}
\newcommand{\highdeg}{$\mathsf{HighDegree}$\xspace}
\newcommand{\pagerank}{$\mathsf{PageRank}$\xspace}
\newcommand{\rand}{$\mathsf{Random}$\xspace}
\newcommand{\generalTIM}{$\mathsf{GeneralTIM}$\xspace}
\newcommand{\greedy}{$\mathsf{Greedy}$\xspace}
\newcommand{\vanillaIC}{$\mathsf{VanillaIC}$\xspace}
\newcommand{\copying}{$\mathsf{Copying}$\xspace}
\newcommand{\squishlist}{
 \begin{list}{$\bullet$}
  {  \setlength{\itemsep}{0pt}
     \setlength{\parsep}{3pt}
     \setlength{\topsep}{3pt}
     \setlength{\partopsep}{0pt}
     \setlength{\leftmargin}{2em}
     \setlength{\labelwidth}{1.5em}
     \setlength{\labelsep}{0.5em}
} }
\newcommand{\squishlisttight}{
 \begin{list}{$\bullet$}
  { \setlength{\itemsep}{0pt}
    \setlength{\parsep}{0pt}
    \setlength{\topsep}{0pt}
    \setlength{\partopsep}{0pt}
    \setlength{\leftmargin}{2em}
    \setlength{\labelwidth}{1.5em}
    \setlength{\labelsep}{0.5em}
} }
\newcommand{\squishdesc}{
 \begin{list}{}
  {  \setlength{\itemsep}{0pt}
     \setlength{\parsep}{2pt}
     \setlength{\topsep}{2pt}
     \setlength{\partopsep}{0pt}
     \setlength{\leftmargin}{2em}
     \setlength{\labelwidth}{1.5em}
     \setlength{\labelsep}{0.5em}
} }
\newcommand{\squishdesctight}{
 \begin{list}{}
  {  \setlength{\itemsep}{0pt}
     \setlength{\parsep}{0pt}
     \setlength{\topsep}{0pt}
     \setlength{\partopsep}{0pt}
     \setlength{\leftmargin}{1em}
     \setlength{\labelwidth}{1.5em}
     \setlength{\labelsep}{0.5em}
} }
\newcommand{\squishnumlist} {
\newcounter{qcounter}
\begin{list}{\arabic{qcounter}.~}{\usecounter{qcounter}} 
{  \setlength{\itemsep}{0pt}
    \setlength{\parsep}{0pt}
    \setlength{\topsep}{0pt}
    \setlength{\partopsep}{0pt}
    \setlength{\leftmargin}{2em}
    \setlength{\labelwidth}{1.5em}
    \setlength{\labelsep}{0.5em}
}}
\newcommand{\squishend}{
  \end{list}
}
\begin{document}
\title{{\bf From Competition to Complementarity:\\Comparative Influence Diffusion and Maximization}\footnote{An abridged version of this article is to appear in the Proceedings of VLDB Endowment (PVLDB), Volume 9, No.~2, and the 42nd International Conference on Very Large Data Bases (VLDB 2016), New Delhi, India, September 5--9, 2016. The copyright of the PVLDB version is held by the VLDB Endowment.}
}
%\subtitle{[Full Technical Report]}
%\titlenote{HELLO WORLD}

\author{
\begin{tabular}{ccc}
Wei Lu$^\dag$ \hspace{5mm} & Wei Chen$^\ddag$ \hspace{5mm}  & Laks V.S. Lakshmanan$^\dag$ \\
\end{tabular}
\\$ $\\
\begin{tabular}{ccc}
$^\dag${University of British Columbia}  & $ \qquad $ & $^\ddag${Microsoft Research }\\
{Vancouver, B.C., Canada} & $ \qquad $ &{Beijing, China}\\
{\tt \{welu, laks\}@cs.ubc.ca} & $ \qquad $ & {\tt weic@microsoft.com}
\end{tabular}
}

%\numberofauthors{3}
%\author{
%\alignauthor
%Wei Lu\\
%        \affaddr{University of British Columbia}\\
%       \affaddr{Vancouver, B.C., Canada}\\
%       \email{welu@cs.ubc.ca}
%\alignauthor
%Wei Chen\\
%       \affaddr{Microsoft Research}\\
%       \affaddr{Beijing, China}\\
%       \email{weic@microsoft.com}
%\alignauthor Laks V.S. Lakshmanan\\
%       \affaddr{University of British Columbia}\\
%       \affaddr{Vancouver, B.C., Canada}\\
%       \email{laks@cs.ubc.ca}
%}

\maketitle

\begin{abstract}
Influence maximization is a well-studied problem that asks for a small set of influential users from a social network, such that by targeting them as early adopters, the expected total adoption through influence cascades over the network is maximized.
However, almost all prior work focuses on cascades of a single propagating entity or
	purely-competitive entities.
In this work, we propose the {\em Comparative Independent Cascade} (\comic) model
	that covers the full spectrum of entity interactions from competition to complementarity.
In \comic, users' adoption decisions depend not only on edge-level information propagation, but also on a node-level automaton whose behavior is governed by a set of model parameters,
	enabling our model to capture not only competition, but also complementarity, to {\sl any possible degree}.
We study two natural optimization problems, \emph{Self Influence Maximization} and \emph{Complementary Influence Maximization}, in a novel setting with complementary entities.
Both problems are NP-hard, and we devise efficient and effective approximation algorithms via non-trivial techniques based on reverse-reachable sets and a novel ``sandwich approximation'' strategy.
The applicability of both techniques extends beyond our model and problems.
Our experiments show that the proposed algorithms {consistently} 
outperform intuitive baselines on four real-world social networks, often by a significant margin.
In addition, we learn model parameters from real user action logs.
\end{abstract}
 
%However, almost all existing influence diffusion models focus either on
%	the propagation of a single entity or purely-competitive entities.

%In this work, we propose a new stochastic diffusion model called \emph{Comparative Independent Cascade} (\model), which captures a full range of possibilities in terms of relationships amongst different propagating entities.
%In this work, we extend existing diffusion models, which only capture 
%	diffusions of a single entity or purely competitive entities, to cover the full spectrum of
%	entity interactions from competition to complementarity, for which we call
%	{\em Comparative Independent Cascade} (\model) model.

\clearpage
\section{Introduction}\label{sec:intro}
Online social networks are ubiquitous and play an essential role in our daily life.
Fueled by popular applications such as viral marketing, there has been extensive research in influence and information propagation in social networks, from both theoretical and practical points of view.
A key computational problem in this field is {\em influence maximization}, which asks to identify a small set of $k$ influential users (also known as {\em seeds}) from a given social network, such that by targeting them as early adopters of a new technology, product, or opinion, the expected number of total adoptions triggered by social influence cascade (or, propagation) is maximized~\cite{kempe03, infbook}.
The dynamics of an influence cascade are typically governed by a 
{\em stochastic diffusion model}, which specifies
how adoptions propagate from one user to another in the network.

Most existing work focuses on two types of diffusion models --- {\em single-entity models} and {\em pure-competition models}.
A single-entity model has only one propagating entity for social network users to adopt:
the classic Independent Cascade (IC) and Linear Thresholds (LT) models~\cite{kempe03} belong to this category.
These models, however, ignore complex social interactions involving multiple propagating entities.
Considerable work has been done to extend IC and LT models to study competitive influence maximization, but almost all models assume that the propagating entities are in pure competition and users adopt at most one of them~\cite{ChenNegOpi11,HeSCJ12,BudakAA11,PathakBS10,borodin10,BharathiKS07,CarnesNWZ07, lu2013}.

In reality, the relationship between different propagating entities is certainly more general than pure competition.
In fact, consumer theories in economics have two well-known notions: {\em substitute goods} and {\em complementary goods}~\cite{snyder08}.
Substitute goods are similar ones that compete, and can be purchased, one in place of the other, e.g.,  smartphones of various brands. 
Complementary goods are those that tend to be purchased together, e.g,. iPhone and its accessories, computer hardware and software, etc.
There are also varying {\em degrees} of substitutability and complementarity: buying a product could lessen the probability of buying the other without necessarily eliminating it; similarly, buying a product could boost the probability of buying another to any degree.
Pure competition only corresponds to the special case of perfect substitute goods.

The limitation of pure-competition models can be exposed by the following example.
Consider a viral marketing campaign featuring  iPhone 6 and Apple Watch.
It is vital to recognize the fact that Apple Watch generally needs an iPhone to be usable, and iPhone's user experience can be greatly enhanced by a pairing Apple Watch  (see, e.g., \url{http://bit.ly/1GOqesc}).
Clearly none of the pure-competition models is suitable for this campaign because they do not even allow users to adopt both the phone and the watch!
This motivates us to design a more powerful, expressive, yet reasonably  tractable model that captures not only competition, but also complementarity, and to any possible degrees associated with these notions.

To this end, we propose the {\em Comparative Independent Cascade} model, or \model for short, which, unlike most existing diffusion models, consists of two critical components that work jointly to govern the dynamics of diffusions: edge-level information propagation and a {\em Node-Level Automaton} (NLA) that ultimately makes adoption decisions based on a set of model parameters, known as the {\em Global Adoption Probabilities} (GAPs). 
Of these, edge-level propagation is similar to the propagation captured by the classical IC and LT models, but only controls {\sl information awareness}. The NLA is a novel feature and is unique to our proposal. Indeed, the term ``comparative'' comes from the fact that once a user is aware, via edge-level propagation, of multiple products, intuitively she makes a comparison between them by ``running'' her NLA. Notice that ``comparative'' subsumes ``competitive'' and ``complementary'' as special cases. 
In theory, the \model model is able to accommodate any number of propagating entities (items) and cover the entire spectrum from competition to complementarity between pairs of items, reflected by the values of GAPs.

In this work, as the first step toward comparative influence diffusion and viral marketing, we focus on the case of two items. 
At any time, w.r.t.\ any item $\calA$, a user in the network is in one of the following four states: 
\emph{$\calA$-idle}, \emph{$\calA$-suspended}, \emph{$\calA$-rejected}, or \emph{$\calA$-adopted}.
The NLA sets out probabilistic transition rules between states, and different GAPs are applied based on \textsl{a given user's state w.r.t.\ the other item $\calB$ and the relationship between $\calA$ and $\calB$.}
Intuitively, competition (complementarity) is modeled as reduced probability 
	(resp., increased probability) of adopting the second item
	after the first item is already adopted.
After a user adopts an item, she propagates this information to her neighbors in the network, making them aware of the item.
The neighbor may adopt the item with a certain probability, as governed by her NLA.

We then define two novel optimization problems for two complementary items $\calA$ and $\calB$. 
Our first problem, {\em Self Influence Maximization (\SIM)}, asks for $k$ seeds for $\calA$ such that given a fixed set of $\calB$-seeds, the expected number of $\calA$-adopted nodes is maximized.
The second one, {\em Complementary Influence Maximization (\CIM)}, considers the flip side of \SIM:
given a fixed set of $\calA$-seeds, find a set of $k$ seeds for $\calB$ such that 
	the expected increase in $\calA$-adopted nodes thanks to $\calB$ is maximized.
%{\sl To the best of our knowledge, \CIM is the first problem to focus on maximizing complementary effects in influence maximization and viral marketing literature.}
{\sl To the best of our knowledge, we are the first to systematically study
	influence maximization for complementary items.}  %in the influence maximization and viral marketing literature.}
%\weic{I think both problems are novel. Laks seems to also make this comment.}

We show that both problems are NP-hard under \model.
Moreover, two important properties, 
		{\em submodularity} and {\em monotonicity} (see \textsection\ref{sec:related}),
		which would allow a greedy approximation algorithm frequently used for influence maximization,
		do not hold in unrestricted \model model.
Even when we restrict \model to
	mutual complementarity, submodularity still does not hold in general.

To circumvent the aforementioned difficulties, we first show that {submodularity} %and monotonicity 
holds for a subset of the complementary parameter space.
We then make a non-trivial extension to the {\em Reverse-Reachable Set} (RR-set) techniques~\cite{borgs14,tang14,tang15},
originally proposed for influence maximization with single-entity models,
%-- proposed by \cite{borgs14} and improved by \cite{tang14,tang15} for influence maximization with the IC and LT models -- 
to obtain effective and efficient approximation solutions to both \SIM and \CIM.
Next, we propose a novel {\em Sandwich Approximation} (SA) strategy which, for a given non-submodular set function, provides an upper bound function and/or a lower bound function, and uses them to obtain data-dependent approximation solutions w.r.t.\ the original function.
{We further note that both techniques are applicable to a larger context beyond the
	model and problems studied in this paper: 
	for RR-sets, we provide a new definition and general sufficient conditions
	not covered by~\cite{borgs14,tang14,tang15} that apply to a large family of 
	influence diffusion models, while SA applies to the maximization of any non-submodular functions that are upper- and/or lower-bounded by submodular functions.}

%
%\red{We emphasize that {\sl neither technique is exclusive for our own model and problems}:
%The RR-set generalization provides a general solution framework for a large family of influence diffusion  models,
% while SA is virtually applicable to the maximization of any non-submodular functions
% }
% \blue{that are upper- or lower-bounded by submodular functions.}

In experiments, we first learn GAPs from user action logs from two social networking sites -- Flixster.com and Douban.com.
We demonstrate that our approximation algorithms based on RR-sets and SA techniques %significantly 
{consistently}
outperform several intuitive baselines, {typically by a significant margin} on real-world networks.

To summarize, we make the following contributions:

\begin{itemize}
	\item We propose the \model model to characterize influence diffusion dynamics of products with arbitrary degree of competition or complementarity, and identify a subset of the parameter space under which submodularity and monotonicity of influence spread hold, paving the way for approximation algorithms (\textsection\ref{sec:model} and \textsection\ref{sec:submod}).
	\item We propose two novel problems -- Self Influence Maximization and Complementary Influence Maximization -- for complementary products under the \model model (\textsection\ref{sec:problem}).
	\item We show that both problems are NP-hard, and devise efficient and  effective approximation solutions
	{by non-trivial extensions to RR-set techniques and by proposing Sandwich Approximation}, both having applicability beyond this work (\textsection\ref{sec:rrset}). 
%\note[Laks]{what are we saying is non-trivial? the rr-sets technique proposed by others or our adaptation to that technique?} 
	\item We conduct empirical evaluations on four real-world social networks and demonstrate the superiority of our algorithms over intuitive baselines, and also propose a methodology for learning global adoption probabilities for the \model model from user action logs of social networking sites (\textsection\ref{sec:exp}).
\end{itemize}

For better readability, most of the proofs, as well as some additional theoretical and experimental results are presented in the appendix.

%The rest of the paper includes \textsection\ref{sec:related} that provides the background and related work, and \textsection\ref{sec:concl} that summarizes the paper and discusses future work.
%For lack of space, we omit some technical proofs and additional results; they are presented in the full version of this paper~\cite{comicTR}.

\eat{It is worth mentioning that although past work has suggested that the Linear Threshold (LT) model is better-suited to describe dynamics of product adoptions, while IC is more suitable for describing phenomena similar to the spread of epidemics or viruses, we first explore possibilities with the IC model for its simplicity and the fact that this research direction is almost completely uncharted (to the best of our knowledge).}
%Later, we will consider similar extensions to the LT model.

\section{Background \& Related Work}\label{sec:related}
Given a graph $G=(V,E,p)$ where $p: E \to [0,1]$ specifies pairwise influence probabilities (or weights) between nodes,  and $k\in \mathbb{Z}_+$, the {\em influence maximization} problem asks to find a set $S\subseteq V$ of $k$ seeds, activating which leads to the maximum 
	expected number of active nodes (denoted $\sigma(S)$)~\cite{kempe03}.
Under both IC and LT models, this problem is NP-hard;
Chen et al.~\cite{ChenWW10, ChenWW10b} showed computing $\sigma(S)$ exactly for any $S\subseteq V$ is \#P-hard. 
Fortunately, $\sigma(\cdot)$ is a {\em submodular} and {\em monotone} function of $S$ for both IC and LT, which allows a simple greedy algorithm with an approximation factor of $1-1/e-\epsilon$, for any $\epsilon > 0$~\cite{kempe03, submodular}.
A set function $f: 2^U \to \mathbb{R}_{\geq 0}$ is submodular if for any $S\subseteq T\subseteq U$ and any $x \in U \setminus T$, $f(S\cup\{x\}) - f(S) \geq f(T\cup\{x\}) - T(S)$, and monotone if $f(S) \leq f(T)$ whenever $S\subseteq T\subseteq U$.
Tang et al.~\cite{tang14,tang15} proposed new randomized approximation algorithms
which are orders of magnitude faster than the original greedy algorithms in \cite{kempe03}. 

In competitive influence maximization~\cite{ChenNegOpi11,HeSCJ12,BudakAA11,PathakBS10,borodin10,BharathiKS07,CarnesNWZ07, lu2013} (also surveyed in \cite{infbook}), 
	a common thread is the focus on pure competition, 
	which only allows users to adopt at most one product or opinion.
Most works are from the follower's perspective~\cite{BharathiKS07,CarnesNWZ07, HeSCJ12,BudakAA11}, i.e., given competitor's seeds, how to maximize one's own spread, or minimize the competitor's spread.
Lu et al.~\cite{lu2013} aims to maximize the total influence spread of all competitors while ensuring fair allocation.

For viral marketing with non-competing items,
Datta et al.~\cite{dattaMS10} studied influence maximization with  items whose propagations are independent.
Narayanam et al.~\cite{narayanam2012viral} studied a setting with two sets of products, where a product can be adopted by a node only when it has already adopted a corresponding product in the other set.
Their model extends LT. 
We depart by defining a significantly more powerful and expressive model in \model, compared to theirs which only covers the special case of {\em perfect complementarity}.
\InFullOnly{Our technical contributions for addressing the unique challenges posed by \model are substantially different from \cite{narayanam2012viral}, which follows the typical route as in~\cite{kempe03}.}

%The work in 
Myers and Leskovec  analyzed Twitter data to study the effects of different cascades on users and predicted the likelihood of a user adopting a piece of information given the cascades to which the user was previously exposed~\cite{myers12}.
McAuley et al. used logistic regression to learn substitute/complementary relationships between products from user reviews~\cite{mcauley15}.
Both studies focus on data analysis and behavior prediction and do not provide
	diffusion modeling for competing and complementary items, nor do they study the
	influence maximization problem in this context.

\section{Comparative Independent Cascade Model}\label{sec:model}
\subsubsection*{Review of Classical IC Model}
In the IC model~\cite{kempe03}, there is just one entity (e.g., idea or product) being propagated through the network. 
%We use ``product'' in our description of the model for concreteness and simplicity. 
An instance of the model has a directed graph $G=(V,E,p)$ where % with nodes $V$, edges $E\subset V\times V$, and function 
$p:E \to [0,1]$, %specifies pairwise influence probabilities between nodes,
and a seed set $S\subset V$.
For convenience, we use $p_{u,v}$ for $p(u,v)$.
At time step $0$, the seeds are {\em active} and all other nodes are {\em inactive}.
Propagation proceeds in discrete time steps.
At time $t$, every node $u$ that became active at $t-1$ makes one attempt to activate each of its inactive out-neighbors $v$.
This can be seen as node $u$ ``testing'' if the edge $(u,v)$ is ``live'' or ``blocked''.
The out-neighbor $v$ becomes active at $t$ iff the edge is live. 
The propagation ends when no new nodes become active.

\begin{figure}[t]
 \centering
     \includegraphics[width=0.66\textwidth]{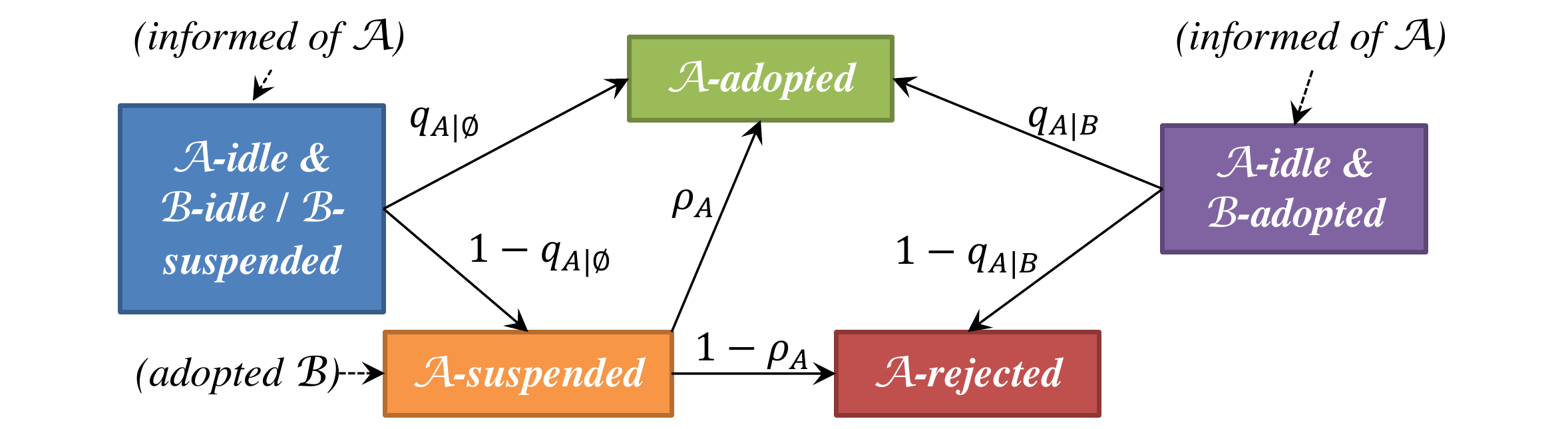}
%\vspace{-3mm}
 \caption{Com-IC model: Node-level automaton for product/item $\calA$.} \label{fig:nla}
%\vspace{-5mm}
\end{figure}

\subsubsection*{Key differences from IC model}
In the {\em Comparative IC model} (\comic), there are at least two products.
For ease of exposition, we focus on just two products $\calA$ and $\calB$ below.
Each node can be in any of the states \{\emph{idle}, \emph{suspended}, \emph{adopted}, \emph{rejected}\} w.r.t. {\sl each} of the products.
All nodes are initially in the joint state of ($\calA$-idle, $\calB$-idle).
One of the biggest differences between \comic and IC is the separation of information diffusion (edge-level) and the actual adoption decisions (node-level).
Edges only control the information that flows to a node: e.g., when $u$ adopts a product, its out-neighbor $v$ may be informed of this fact.
Once that happens, $v$ uses its own node level automaton (NLA) to decide which state to transit to.
This depends on $v$'s current state w.r.t. the two products as well as parameters corresponding to the state transition probabilities of the NLA, namely the Global Adoption Probabilities, defined below.

A concise representation of the NLA is in Figure~\ref{fig:nla}. 
Each state is indicated by the label.
The state diagram is self-explanatory. 
E.g., with probability $\qao$, a node transits from a state where it's $\calA$-idle to $\calA$-adopted, regardless of whether it was $\calB$-idle or $\calB$-suspended. %, or $\calB$-rejected  in the beginning. 

From the $\calA$-suspended state, it transits to $\calA$-adopted w.p. $\rho_\calA$ and to $\calA$-rejected w.p. $1-\rho_\calA$. The probability $\rho_\calA$, called {\em reconsideration probability}, as well as the reconsideration process will be explained below. 
Note that in a \comic diffusion process defined below, not all joint state is reachable from the initial ($\calA$-idle, $\calB$-idle) state, e.g., ($\calA$-idle, $\calB$-rejected).
Since all unreachable states are irrelevant to adoptions, they are negligible (details in Appendix).

\subsubsection*{Global Adoption Probability (GAP)}
The Global Adoption Probabilities (GAPs), consisting of four parameters
	$\bQ = (\qao, \qab, \qbo, \qba) \in [0,1]^4$ are important parameters of the NLA which decide the likelihood of adoptions after a user is informed of an item. 
%\eat{ 
%More specifically,
% $\qao$ (resp.\ $\qbo$) is the probability that a user adopts $\calA$ (resp.\ $\calB$)  given that she is neither $\calA$-adopted nor $\calB$-adopted, and
%$\qab$  (resp.\ $\qba$) is the probability that a user adopts $\calA$ (resp.\ $\calB$)  given that she is already $\calB$-adopted (resp.\ $\calA$-adopted). 
%} 
%Specifically,
 $\qao$ is the probability that a user adopts $\calA$ given that she is informed of $\calA$ but not $\calB$-adopted, and
$\qab$  is the probability that a user adopts $\calA$ given that she is already $\calB$-adopted. A similar interpretation applies to $\qbo$ and $\qba$.  
%\eat{We denote by $\bQ$ any set of GAPs: $\{\qao, \qbo, \qab, \qba\}$, and futhermore define $\bQ^+$ and $\bQ^-$ :} 

Intuitively, GAPs reflect the overall popularity of products and how they are perceived by the entire market.
They are considered aggregate estimates and hence are not user specific in our model.
We provide further justifications at the end of this section and describe a way to learn GAPs from user action log data in \textsection\ref{sec:exp}.

GAPs enable \model to model competition and complementarity, to arbitrary degrees. 
We say that $\calA$ {\em competes} with $\calB$ iff $\qba \leq \qbo$.
Similarly, $\calA$ {\em complements} $\calB$ iff $\qba \geq \qbo$. 
We include the special case of $\qba=\qbo$ in both cases above for
	convenience of stating our technical results, and it actually means 
	that the propagation of $\calB$ is completely independent of $\calA$
	({\em cf}. Lemma~\ref{lemma:indiff}).
%
%\red{Both relationships include $\qba=\qbo$ as a special case.}  
Competition and complementarity in the other direction are similar. 
The degree of competition and complementarity is determined by the difference between the two relevant GAPs, i.e., $|\qba - \qbo|$ and $|\qab - \qao|$.
%Both relationships include $\qab = \qao$ and $\qba = \qbo$ as special cases.
%We will see that if, e.g., $\qab = \qao$, the propagation of $\calA$ is completely independent of $\calB$
%% then irrespective of $\calB$-seeds, the influence spread of $\calA$ achieved by any given $\calA$-seed set remains the same 
%({\em cf}.\ Lemma~\ref{lemma:indiff}).
%In fact, it is the probability distribution over the sets of $\calA$-adopted nodes that will not change, and neither will the spread. We refer to this situation as $\calA$ is {\em indifferent} to $\calB$.
{For convenience, 
%we use $\bQ$ to refer to any set of GAPs $\in [0,1]^4$.
we use
$\bQ^+$  to refer to any set of GAPs representing {\em mutual complementarity}: $(\qao \le \qab) \wedge (\qbo \le \qba)$, and similarly, $\bQ^-$ for GAPs representing {\em mutual competition}: $(\qao \ge \qab) \wedge (\qbo \ge \qba)$.}

%\begin{definition}
%$\bQ^+$ denotes any set of GAPs representing {\em mutual complementarity}: $(\qao \le \qab) \wedge (\qbo \le \qba)$ and $\bQ^-$ denote for any set of GAPs representing {\em mutual competition}: $(\qao \ge \qab) \wedge (\qbo \ge \qba)$.
%\end{definition}

\subsubsection*{Diffusion Dynamics}
Let $G=(V,E,p)$ be a directed social graph with pairwise influence probabilities.
Let $S_\calA, S_\calB \subset V$ be the seed sets for $\calA$ and $\calB$.
Influence diffusion under \comic proceeds in discrete
	time steps.
Initially, every node is $\calA$-idle and $\calB$-idle.
At time step $0$, every $u\in S_\calA$ becomes $\calA$-adopted and
	every $u\in S_\calB$ becomes $\calB$-adopted\footnote{
No generality is lost in assuming seeds adopt an item without testing the NLA: for every  $v\in V$, we can create two dummy nodes $v_\calA, v_\calB$ 
	and edges $(v_\calA, v)$ and $(v_\calB, v)$ with $p_{v_\calA, v} = p_{v_\calB, v} = 1$.
	Requiring seeds to go through NLA is equivalent to
	constraining that $\calA$-seeds ($\calB$-seeds) be 
	selected from all $v_\calA$'s (resp.\ $v_\calB$'s).}.
If $u \in S_\calA \cap S_\calB$, we randomly decide the order of $u$ adopting
	$\calA$ and $\calB$ with a fair coin.
For ease of understanding, we describe the rest of the diffusion process in a
	modular way in Figure~\ref{fig:model}.
We use $N^+(v)$ and $N^-(v)$ to denote the set of out-neighbors and in-neighbors of
	$v$, respectively.

We draw special attention to \emph{tie-breaking} and \emph{reconsideration}. 
Tie-breaking is used when a node's in-neighbors adopt different products and try to inform the node at
	the same step. 
Node reconsideration concerns the situation that a node $v$ did not adopt $\calA$
	initially but later after adopting $\calB$ it may reconsider adopting $\calA$:
	when $\calB$ competes with $\calA$ ($\qao\ge \qab$), $v$ will not reconsider
	adopting $\calA$, but when $\calB$ complements $\calA$ (specifically, $\qao < \qab$), 
	$v$ will reconsider adopting $\calA$.
In the latter case, the probability of adopting $\calA$, $\rho_\calA$, is defined
	in such a way that the overall probability of adopting $\calA$ is equal to $\qab$
	(since $\qab = \qao+(1-\qao)\rho_\calA$).

%%%%%%%%%%%%%%%%%%%%%%%%%%
\begin{figure}[t!]
\vspace{-2mm}

\begin{framed}
%\begin{small}

%\begin{description}[style=unboxed,leftmargin=0pt]

\vspace{-3pt}
\spara{Global Iteration}
At every time step $t\ge 1$, for all nodes that became $\calA$- or $\calB$-adopted at $t-1$, their outgoing edges are tested for transition (1 below). 
After that, for each node $v$ that has at least one in-neighbor (with a live edge) becoming $\calA$- and/or $\calB$-adopted at $t-1$, $v$ is tested for possible state transition (2-4 below).

	\begin{description}[style=unboxed,leftmargin=8pt]

\item[1.]
\textbf{Edge transition.}
For an untested edge $(u,v)$, flip a biased coin independently: $(u,v)$ is {\em live} w.p.\ $p_{u,v}$ and {\em blocked} w.p.\ $1-p_{u,v}$. Each edge is tested at most once in the entire diffusion process.

\item[2.]
\textbf{Node tie-breaking.}
Consider a node $v$  to be tested at time $t$.
Generate a random permutation $\pi$ of $v$'s in-neighbors (with live edges)
	that adopted at least one product at $t-1$.
Then, test $v$ with each such in-neighbor $u$ and 
	$u$'s adopted item ($\calA$ and/or $\calB$)
	following $\pi$. 
If there is a $w\in N^-(v)$ adopting both $\calA$ and $\calB$,
	then test both products, following
	their order of adoption by $w$.

\item[3.]
\textbf{Node adoption.}
Consider the case of testing an  $\calA$-idle node $v$ for adopting $\calA$ (Figure~\ref{fig:nla}).
If $v$ is {\em not} $\calB$-adopted, then w.p.\ $\qao$, it becomes $\calA$-adopted
	and w.p.\ $1-\qao$  it becomes $\calA$-suspended.
If $v$ is $\calB$-adopted,  then w.p.\ $\qab$, it becomes $\calA$-adopted
	and w.p.\ $1-\qab$  it becomes $\calA$-rejected.
The case of adopting $\calB$ is symmetric.

\item[4.]
\textbf{Node reconsideration.}
Consider an $\calA$-suspended node $v$ that just adopts $\calB$ at time $t$.
Define $\rho_\calA = \max\{\qab - \qao, 0\} / (1 - \qao)$.
Then, $v$ {\em reconsiders} to become $\calA$-adopted w.p.\ $\rho_\calA$, or
	$\calA$-rejected w.p.\ $1-\rho_\calA$.
The case of reconsidering $\calB$ is symmetric.

	\vspace{-8pt}
	\end{description}
%\end{description}

%\end{small}
\end{framed}
%\vspace{-4mm}
\caption{\model model: diffusion dynamics}
\label{fig:model}
%\vspace{-4mm}
\end{figure}

\subsubsection*{Design Considerations}
The design of \comic not only draws on the essential elements from a classical diffusion model (IC) stemming from mathematical sociology, but also closes a gap between theory and practice, in which diffusions typically do not occur just for one product or with just one mode of pure competition.
With GAPs in the NLA, the model can characterize {\sl any} possible relationship between two propagating entities: competition, complementarity, and any degree associated with them.
GAPs are fully capable of handling asymmetric relationship between products.
%Besides, the pairwise relationship between products is not necessarily symmetric, which can also be handled by GAPs.
E.g., an Apple Watch ($\calA$) is complemented more by an iPhone ($\calB$) than the other way round: many functionalities of the watch are not usable without a pairing iPhone, but an iPhone is totally functional without a watch.
This asymmetric complementarity can be expressed by any GAPs satisfying $(\qab - \qao) > (\qba - \qbo) \geq 0$.
%\comic explicitly separates information propagation (edge-level) and adoption decision-making (node-level).
%On the one hand, information are propagated via friendship links in a social network, and on the other hand, each node makes its own adoption decisions pursuant to various node rules.
Furthermore, 
introducing NLA with GAPs and separating the propagation of product information from actual adoptions reflects Kalish's famous characterization of new product adoption~\cite{kalish85}: customers go through two stages -- {\em awareness} followed by {\em actual adoption}.
In Kalish's theory, product awareness is propagated through word-of-mouth effects; after an individual becomes aware, she would decide whether to adopt the item based on other considerations. % only when her valuation (the maximum amount of money she is willing to pay) is no less than the price.
%Furthermore, \comic offers additional novelty in its capability of handling two or more products.
%The model also specifies that once the information channel (edge) from one user to another becomes live, 
%	it remains live regardless of future products $u$ will inform $v$ of.
%The activation of information channels 
Edges in the network can be seen as information channels from
	one user to another.
Once the channel is open (live), it remains so.
{This modeling choice is reasonable as competitive goods are typically of the same kind and complementary goods tend to be adopted together.}
%is more natural than the alternative of attempting to open an information channel every time a user adopts a new product.

We remark that \comic encompasses previously-studied single-entity and pure-competition models as special cases.
When $\qao = \qbo = 1$ and $\qab = \qba = 0$, \model reduces to the (purely) Competitive Independent Cascade model~\cite{infbook}.
If, in addition, $\qbo$ is $0$, the model further reduces to the classic IC model.

\section{Formal Problem Statements}\label{sec:problem}
Many interesting optimization problems can be formulated under the expressive 
	\comic model.
In this work, we focus on influence maximization with {\sl complementary propagating entities}, since competitive viral marketing has been studied extensively (see \textsection\ref{sec:related}).
In what follows, we propose two problems.
The first one, {\em Self Influence Maximization} (\SIM), is a natural extension to the classical influence maximization problem~\cite{kempe03}.
The second one is the novel {\em Complementary Influence Maximization} (\CIM), where the objective is to maximize complementary effects (or ``boost'' the expected number of adoptions) by selecting the best seeds of a complementing good.

Given the seed sets $S_\calA, S_\calB$, we first define $\sigma_\calA(S_\calA, S_\calB)$ and $\sigma_\calB(S_\calA, S_\calB)$ to be the expected number of $\calA$-adopted and $\calB$-adopted nodes, respectively under the \comic model.
Clearly, both $\sigma_\calA$ and $\sigma_\calB$ are real-valued bi-set functions mapping $2^V \times 2^V$ to $[0, |V|]$, for any fixed $\bQ$. Unless otherwise specified,  GAPs are not considered as arguments to $\sigma_\calA$ and $\sigma_\calB$ as $\bQ$ is constant in a given instance of \comic.
Also, for simplicity, we may refer to $\sigma_\calA(\cdot,\cdot)$ as $\calA$-spread and $\sigma_\calB(\cdot,\cdot)$ as $\calB$-spread.
{The following two problems are defined in terms of  $\calA$-spread, without loss of generality.}

\begin{problem}[\SIM]\label{prob:sim}
Given a directed graph $G=(V,E,p)$ with pairwise influence probabilities, $\calB$-seed set $S_\calB \subset V$, a cardinality constraint $k$, and a set of GAPs $\bQ^+$, find an $\calA$-seed set $S_\calA^* \subset V$ of size $k$, such that the {\em expected number of $\calA$-adopted nodes} is maximized under \model:
$S_\calA^* \in \argmax_{T \subseteq V, |T| = k} \sigma_\calA(T, S_\calB).$
\end{problem}

\SIM is obviously NP-hard, as it subsumes {\sc InfMax} under the
	classic IC model when $S_\calB = \emptyset$ and $\qao = \qab = 1$.
By a similar argument, it is \#P-hard to compute the exact value of
	$\sigma_\calA(S_\calA, S_\calB)$ and $\sigma_\calB(S_\calA, S_\calB)$
	for any given $S_\calA$ and $S_\calB$.

\begin{problem}[\CIM]\label{prob:cim}
Given a directed graph $G=(V,E,p)$ with pairwise influence probabilities, $\calA$-seed set $S_\calA \subset V$, a cardinality constraint $k$, and a set of GAPs  $\bQ^+$, find a $\calB$-seed set $S_\calB^* \subseteq V$ of size $k$ such that the {\em expected increase (boost) in $\calA$-adopted nodes} is maximized under \model:
$S_\calB^* \in \argmax_{T\subseteq V, |T| = k} [ \sigma_\calA(S_\calA, T) - \sigma_\calA(S_\calA, \emptyset) ].$
\end{problem}

%}

%
%\weil{I commented out the old remark here which says CIM is equivalent to $\argmax_{S_\calB\subseteq V, |S_\calB| = k} \sigma_\calA(S_\calA, S_\calB)$.  IMO this gives little useful insight.  I wrote another insight that's potentially more valuable for the paper (e.g. in experiments, the Y-axis for CIM spread is much smaller than that for SIM spreads).}

\def\theoremCIMHard{
\CIM is NP-hard.
}

\begin{theorem}\label{thm:hard}
{\theoremCIMHard}
\end{theorem}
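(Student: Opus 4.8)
The plan is to reduce from the NP-complete \textsc{Set Cover} problem: given a ground set $U = \{u_1,\dots,u_n\}$, a family $\mathcal{C} = \{C_1,\dots,C_m\}$ of subsets of $U$ (WLOG $k \le m$, else the instance is trivial), and an integer $k$, decide whether some $k$ members of $\mathcal{C}$ cover $U$. The key idea is to pick an extreme point of the complementary parameter space $\bQ^+$ that makes the diffusion essentially deterministic, so that ``$\calB$ reaches a node'' becomes a purely combinatorial reachability condition and the boost in $\calA$-adoptions counts exactly the nodes that $\calB$ manages to reach.

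Concretely, I would set all edge probabilities to $1$ and take the GAPs $(\qao,\qab,\qbo,\qba) = (0,1,1,1)$, which lies in $\bQ^+$ since $\qao \le \qab$ and $\qbo \le \qba$. Under these parameters $\calB$ is adopted by every informed node and propagates along every edge, while a node informed of $\calA$ but not yet $\calB$-adopted can only become $\calA$-\emph{suspended} (as $\qao=0$); such a node then adopts $\calA$ iff it later adopts $\calB$, because the reconsideration probability is $\rho_\calA = \max\{\qab-\qao,0\}/(1-\qao) = 1$ (and if it is already $\calB$-adopted when tested it adopts $\calA$ outright since $\qab=1$). I build a bipartite-style gadget with one node $s_j$ per set $C_j$; one \emph{element} node $t_i$ per $u_i$ with an edge $s_j \to t_i$ whenever $u_i \in C_j$; one \emph{private} node $r_j$ per set with an edge $s_j \to r_j$; and a single fixed $\calA$-seed $a$ with edges $a \to t_i$ and $a \to r_j$ for all $i,j$. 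Thus $S_\calA = \{a\}$, every $t_i$ and $r_j$ is made $\calA$-suspended at step $1$, and $a$ is the only $\calA$-adopted node in the baseline, giving $\sigma_\calA(S_\calA,\emptyset)=1$.

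With this gadget, seeding $\calB$ at a node drives the subset of $\{t_i\} \cup \{r_j\}$ reachable from it to adopt $\calA$, so the boost $\sigma_\calA(S_\calA,S_\calB)-\sigma_\calA(S_\calA,\emptyset)$ equals the number of distinct $t_i$ and $r_j$ nodes reached by $\calB$. I would then show that the maximum boost over $|S_\calB|=k$ is exactly $n+k$ iff $U$ has a cover of size $\le k$. The upper bound $n+k$ is immediate: at most $n$ element nodes are reached, and each reached $r_j$ requires a dedicated seed (its only $\calB$-source is $s_j$ or itself), so at most $k$ private nodes are reached. Seeding the set-nodes of a cover (padded to $k$ distinct $s_j$'s, possible as $k\le m$) reaches all $n$ element nodes plus the $k$ private nodes of the chosen sets, attaining $n+k$. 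Conversely, reaching $k$ distinct private nodes forces all $k$ seeds to be $s_j$- or $r_j$-nodes, and since each $r_j$ is a leaf covering no element, the $s_j$-seeds alone must reach all $n$ element nodes, i.e.\ their sets cover $U$.

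The main obstacle is exactly what the private nodes $r_j$ are designed to overcome: without them, one could gain boost by \emph{directly} $\calB$-seeding an element node $t_i$, so the optimal $\calB$-set would correspond to a ``set cover with singletons allowed'' rather than to \textsc{Set Cover} proper, breaking the equivalence. Attaching a private leaf $r_j$ to each $s_j$ and rewarding it in the count forces every seed contributing to the $+k$ term to be a set-node (or a useless leaf covering nothing); this is the analogue of the ``$+k$ for set vertices'' device in the classical IC hardness reduction of Kempe et al.~\cite{kempe03}. The remaining technical point is to verify that the device is robust to the simultaneous arrival of $\calA$- and $\calB$-information at a reward node under either tie-breaking order (a short case check, and avoidable altogether by inserting one delay node on each $s_j$'s out-edges so that $\calA$-awareness strictly precedes $\calB$).
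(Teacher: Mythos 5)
Your high-level instinct is the same as the paper's: choose GAPs $(\qao,\qab,\qbo,\qba)=(0,1,1,1)$, so that a node informed of $\calA$ can only become $\calA$-suspended and then adopts $\calA$ exactly when $\calB$ reaches it (reconsideration probability $\rho_\calA=1$), making the boost a pure $\calB$-reachability count. However, your gadget has a fatal flaw: the hub node $a$. Nothing in the definition of \CIM prevents the optimizer from placing a $\calB$-seed on $a$ itself ($S_\calB$ may be any subset of $V$, including $\calA$-seeds). Since $a$ has live edges to \emph{every} $t_i$ and \emph{every} $r_j$, the single seed set $S_\calB=\{a\}$ makes all of them $\calB$-adopted (as $\qbo=1$) and hence $\calA$-adopted, achieving boost $n+m$. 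Thus for every $k\ge 1$ the maximum boost equals $n+m$, regardless of whether a $k$-cover exists, and the decision ``is the optimal boost at least $n+k$?'' is always YES (recall $k\le m$). The specific error is the sentence ``its only $\calB$-source is $s_j$ or itself'': the hub $a$ is a third $\calB$-source for every $r_j$ (and for every $t_i$), so your $n+k$ upper bound, and with it the whole equivalence, collapses. The tie-breaking issue you flag at the end is, by contrast, harmless.

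The repair is to eliminate the hub and give each reward node its own \emph{private} $\calA$-seed with a single out-edge: a node $a_i$ with edge $a_i\to t_i$ for each element, and $b_j$ with edge $b_j\to r_j$ for each set. Then $\calB$-seeding any dummy reaches only its unique out-neighbor, every seed can account for at most one private node, only $s_j$-seeds reach element nodes, and your $n+k$ accounting goes through. This per-node-dummy device is precisely the paper's construction, executed at a different level of generality: the paper reduces from \textsc{InfMax} under the classic IC model (rather than directly from \textsc{Set Cover}) by attaching to \emph{every} node $v$ of an arbitrary IC instance a dummy copy $v'$ with edge $(v',v)$, $p_{v',v}=1$, setting $S_\calA=V'$ and the same GAPs; a domination argument (Claim~\ref{claim:cimhard1}) shows that w.l.o.g.\ $\calB$-seeds are dummies, yielding the identity $\sigma_{IC}(T)=\sigma_\calA(S_\calA,T')-n$ and hence NP-hardness. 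Reducing from IC \textsc{InfMax} with uniform per-node dummies also avoids the asymmetric set/element/private-node bookkeeping that your gadget needs.
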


\begin{proof}

Let $\bf I$ be an instance of {\sc InfMax} with the IC model,
	defined by a directed graph $G=(V,E)$ and budget $b < |V|$.
We define an instance $\bf J$ of \CIM as follows.
For each $v\in V$, create a dummy copy $v'$ and a directed edge $(v',v)$ with influence probability $p_{v',v} = 1$.
Let $V'$ be the set of all dummy nodes.
Set $S_\calA = V'$, $\qao = 0$, $\qab = \qbo = \qba = 1$.
The budget $b$ of selecting $\calB$ seeds in instance 
	$\bf J$ is the same $b$ as in instance $\bf I$.
	
\begin{claim}\label{claim:cimhard1}
Consider the \CIM instance $\bf J$.
For all $\cal B$-seed sets $S_\calB$, there exists a set $T_\calB \subseteq V'$
	such that $\sigma_\calA(S_\calA, T_\calB) \geq \sigma_\calA(S_\calA, S_\calB)$.
\end{claim}

\begin{proof}[Proof of Claim~\ref{claim:cimhard1}]
Let $v \in S_\calB \cap V$ be any regular node.
Assume for now that $v'\not\in S_\calB$. 
Let $S'_\calB$ denote the $\calB$-seed set obtained by replacing all such $v$ by
	its corresponding dummy copy $v'$.
Since $S_\calA=V'$, all dummy nodes are $\calA$-adopted and all regular nodes are
	$\calA$-informed. 
Choosing $v'$ to be a $\calB$-seed makes $v'$ and $v$ both $\calB$-adopted,
	so by reconsideration $v$ becomes $\calA$-adopted as well.
Clearly, the sets of regular nodes that are $\calB$-adopted 
	under the configurations $(S_\calA, S_\calB)$ and $(S_\calA, S'_\calB)$ are the same.
Thus,  sets of regular nodes that are $\calA$-adopted under these configurations
	are also the same, which verifies the claim. 

Now suppose $v, v'\in S_\calB$.
We claim that there exists a dummy node outside $S_\calB$, which follows from the fact that
	the budget $b < |V|=|V'|$.
Now, replace $v$ with any dummy node $u'$ not in $S_\calB$ and call the resulting $\calB$-seed set $S'_\calB$.
By an argument similar to the above case, we can see that every regular node that is $\calA$-adopted under 
	configuration $(S_\calA, S_\calB)$ is also $\calA$-adopted under configuration $(S_\calA, S'_\calB)$. 
There may be additional $\calA$-adopted regular nodes
	thanks to the seed $u'\in S'_\calB \setminus S_\calB$. 
Again, the claim follows.
Finally, if $S_\calB \cap V = \emptyset$ (no regular node in $S_\calB$),
	then the claim trivially holds as we can simply let $T_\calB = S_\calB$.
\end{proof}

Now consider any $T\subseteq V$ and let $\sigma_{IC}(T)$ denote the expected spread of
	$T$ under the IC model in $G$.
Let $T' = \{v'\in V' : v\in T\}$ be the set of corresponding dummy copies.
Since $S_\calA = V'$, all regular nodes (in $V$) would become $\cal A$-informed
	when $S_\calB = \emptyset$. %, and moreover, $\sigma_\calA(S_\calA, \emptyset) = |V'|$.
Now set $S_\calB = T'$, which would make all nodes in $T$ (regular) become $\calB$-adopted,
	%(recall that $\qbo =1$ and $p_{v',v} =1$ for all $v\in V$), 
	and then $\calA$-adopted (by reconsideration), and further propagate both $\calA$
	and $\calB$ over the network.
Hence,
\begin{align}\label{eqn:nphard}
\sigma_{IC}(T) = \sigma_\calA(S_\calA, T') - n,
\end{align}
where $n =_{\mathrm{def}} |V| = |V'|$.

%\weic{We cannot directly claim that $S^*_{\bf J}$ is a solution of $\bf I$,
%	because it may contain nodes in $V'$, which is not in the instance of
%	$\bf I$. I made some changes below.}

Next, to prove the theorem, it suffices to show:
\begin{claim}\label{claim:cimhard2}
The subset $T_*\subset V$ is an optimal solution to the {\sc InfMax} (with IC model)
	instance ${\bf I}$ if and only if $T_*' \subset V'$ is an optimal solution to
	the \CIM instance $\bf J$ under the \comic model, where $T_*' = \{v'\in V' : v\in T_*\}$.
\end{claim}

\begin{proof}[Proof of Claim~\ref{claim:cimhard2}]
$(\Longrightarrow)$: Suppose for a contradiction that $T_*'$ is suboptimal to $\bf J$.
Let $X'$ be an optimal solution to $\bf J$ instead, which implies that
	$\sigma_\calA(S_\calA, X') > \sigma_\calA(S_\calA, T_*')$.

If $X'$ contains only dummy copies, let $X$ be the corresponding set of regular nodes.
Then, by Eq.~\eqref{eqn:nphard},
\[
\sigma_\calA(S_\calA, X') = \sigma_{IC}(X) + n > \sigma_\calA(S_\calA, T_*') = \sigma_{IC}(T_*) + n.
\]
Hence $\sigma_{IC}(X) > \sigma_{IC}(T_*)$, contradicting to the fact that $T_*$ is
	optimal to ${\bf I}$.

If $X'$ contains at least one regular node, then by repeated application of
	Claim~\ref{claim:cimhard1}, we can obtain a $\calB$-seed set with no regular nodes
	and furthermore dominates $X'$ w.r.t. $\calA$-spread.
Thus, w.l.o.g. we can ignore $\calB$-seed sets
	containing regular nodes. 
	
$(\Longleftarrow)$: Let $T_*'$ be an optimal solution to $\bf J$.
By the same domination argument, we can 
	assume w.l.o.g. that $T_*'$ consists of only dummy nodes.
Then, the optimality of $T_*$ for instance $\bf I$ can be argued
	using Eq.~\eqref{eqn:nphard} in an identical manner to the above. 
\end{proof}

This completes the proof that \CIM is NP-hard. 
\end{proof}

From the formulation of \CIM, we can intuitively see that the placement of $\calB$-seeds will
	be heavily dependent on the existing $\calA$-seeds.
For example, if $S_\calA$ and $S_\calB$ are in two different connected components of the graph,
	then evidently the boost is zero.
In contrast, if they are rather close and can influence roughly the same region in the graph,
	the boost is likely to be high.
Indeed, for the special case where $\qbo = 1$ and $k \geq |S_\calA|$, directly ``copying''
	$\calA$-seeds to be $\calB$-seeds will give the optimal boost.
However, this in no way diminishes the value of this problem, as 
	Theorem~\ref{thm:hard} shows the NP-hardness in the general case.
%\weic{I removed the mentioning of the case $\qbo < 1$, since we did not give an NP-hardness proof for this case.
%	Theorem 1 only proves the case when $k < |S_\calA|$.}

\def\theoremCIMOpt{
For \CIM, when $\qbo = 1$ and $k \geq |S_\calA|$, 
	we can solve the problem optimally by setting $S_\calB^*$ to be $S_\calA \cup X$, where
	$X$ is an arbitrary set in $V\setminus S_\calA$ with size $k - |S_\calA|$, that is
	$\sigma_\calA(S_\calA, S_\calA \cup X) = \max_{T\subseteq V, |T| = k} \sigma_\calA(S_\calA, T)$. 
}

\begin{theorem}\label{thm:cimOpt}
{\theoremCIMOpt}
\end{theorem}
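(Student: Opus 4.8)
The plan is to prove optimality by a realization-wise (coupling) domination argument: I will show that for \emph{every} fixed realization of the model's randomness, the set of $\calA$-adopted nodes produced by the configuration $(S_\calA,\, S_\calA\cup X)$ contains the set produced by $(S_\calA,\, S_\calB)$ for every $\calB$-seed set $S_\calB$ with $|S_\calB|=k$. Taking expectations then yields $\sigma_\calA(S_\calA,S_\calA\cup X)\ge \sigma_\calA(S_\calA,S_\calB)$ for all such $S_\calB$. Since the boost objective $\sigma_\calA(S_\calA,T)-\sigma_\calA(S_\calA,\emptyset)$ differs from $\sigma_\calA(S_\calA,T)$ by a term independent of $T$, and since $k\ge|S_\calA|$ makes $S_\calA\cup X$ a feasible size-$k$ set, this establishes the theorem.

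First I would fix the randomness in a coupled form: one independent live/blocked coin per edge (recall each edge is tested at most once, so a single coin serves both $\calA$- and $\calB$-propagation), producing a live-edge graph $L$; and one uniform variable $U_v\in[0,1]$ per node governing its $\calA$-adoption. The initial observation is that $\bQ^+$ forces $\qba\ge\qbo=1$, hence $\qba=\qbo=1$, so every $\calB$-informed node adopts $\calB$ with certainty. Consequently the $\calB$-adopted set equals $R_\calB(S_\calB)$, the set of nodes reachable from $S_\calB$ in $L$, and crucially this set is autonomous of the $\calA$-process. Next I would verify that the single coin $U_v$ faithfully encodes $v$'s eventual $\calA$-adoption: using $\rho_\calA=(\qab-\qao)/(1-\qao)$, one checks that whether $v$ is first $\calA$-informed before or after becoming $\calB$-adopted, $v$ ends up $\calA$-adopted (once $\calA$-informed) exactly when $U_v<\qab$ if $v\in R_\calB(S_\calB)$, and exactly when $U_v<\qao$ otherwise.

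With this coupling the $\calA$-adopted set becomes purely combinatorial: a node $v$ is $\calA$-adopted iff $L$ contains a path from some seed in $S_\calA$ to $v$ such that every node $u$ on the path other than the seed endpoint satisfies $U_u<\theta_u$, where the threshold is $\theta_u=\qab$ on $R_\calB(S_\calB)$ and $\theta_u=\qao$ elsewhere. This adopted set is monotone nondecreasing in the threshold vector, and because complementarity gives $\qao\le\qab$, it is maximized by raising every relevant threshold to $\qab$. The decisive step is to note that only thresholds of nodes reachable from $S_\calA$ in $L$ can matter (no other node is ever $\calA$-informed), and that $S_\calB\supseteq S_\calA$ forces $R_\calB(S_\calB)\supseteq R_\calB(S_\calA)$, which is precisely the set of nodes reachable from $S_\calA$ in $L$. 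Thus, under $(S_\calA,\,S_\calA\cup X)$ every relevant threshold already equals its maximal value $\qab$, so this configuration attains the pointwise-largest $\calA$-adopted set in each realization, giving the claimed optimality.

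The main obstacle is the bookkeeping in the middle step: establishing timing-independence, i.e., that the reconsideration mechanism makes the per-node coin $U_v$ correctly summarize $v$'s adoption outcome regardless of the tie-broken, time-stepped order in which $\calA$-information and $\calB$-adoption reach $v$ (including the subcase where $v$ adopts $\calA$ immediately versus is suspended and later reconsiders). Once this is nailed down, the reduction to monotone reachability in $L$ and the containment $R_\calB(S_\calB)\supseteq R_\calB(S_\calA)$ are routine, and the expectation argument closes the proof.
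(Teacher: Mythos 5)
Your proof is correct and takes essentially the same approach as the paper's: both arguments work pointwise in a fixed realization (possible world), use the fact that $\qbo=1$ together with $\bQ^+$ forces $\qba=1$, so that $\calB$-adoption is pure live-edge reachability independent of the $\calA$-process, and conclude that choosing $S_\calB \supseteq S_\calA$ gives every node that can ever be $\calA$-informed the maximal adoption threshold $\qab$, hence the pointwise-largest $\calA$-adopted set in every realization, before averaging. The differences are only bookkeeping---the paper classifies nodes as $\calA$-adopted/rejected/suspended/potential under $S_\calA$ acting alone and argues that all convertible nodes are converted, whereas you phrase the same fact as monotonicity of node-pruned reachability in the threshold vector---and your ``timing-independence'' step is precisely the possible-world equivalence and tie-breaking irrelevance that the paper has already established (Lemma~\ref{lemma:pw} and Lemma~\ref{lemma:tie}).
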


The proof of this theorem relies on the possible world model, which is defined in the next section.
Hence, we defer to the proof to the appendix.

\section{Properties of \model}\label{sec:submod}

%In this subsection, we explore whether desired set function properties such as monotonicity and submodularity hold for the spread functions $\sigma_\calA$ and $\sigma_\calB$ under the Com-IC model.
%For simplicity and symmetry, we focus on $\sigma_\calA$ in our analysis, unless otherwise noted.
%Similar results apply to $\sigma_\calB$.

%For a set function $f: 2^V \rightarrow R$, $f$ is {\em monotonically increasing (resp. decreasing)} if for any subsets $S \subseteq T \subseteq V$,
%	$f(S) \le f(T)$ (resp. $f(S) \ge f(T)$); 
%	and $f$ is {\em submodular} if for any $S \subseteq T \subseteq V$ and any $u \in V \setminus T$, 
%	$f(S \cup \{u\}) - f(S) \ge f(T \cup \{u \}) - f(T)$.
Since neither of \SIM and \CIM can be solved in PTIME unless P = NP, we explore approximation algorithms by studying submodularity and monotonity for \model,
% it is natural to ask whether the spread functions $\sigma_\calA, \sigma_\calB$ satisfy the properties of submodularity and monotonicity since 
which may pave the way for designing approximation algorithms. % for \SIM and \CIM.
Note that $\sigma_\calA$ is a bi-set function taking arguments $S_\calA$ and $S_\calB$, so we analyze the properties w.r.t.\ each of the two arguments.
As appropriate, we refer to the properties of $\sigma_\calA$ w.r.t.\ $S_\calA$ ($S_\calB$) as {\em self-monotonicity} (resp., {\em cross-monotonicity}) and {\em self-submodularity} (resp., {\em cross-submodularity}).

\subsection{An Equivalent Possible World Model}\label{sec:pw}

To facilitate a better understanding of \model and our analysis on 
	submodularity, we define a {\em Possible World (PW) model} that provides an equivalent view of the \comic model.
%It will also help us study submodularity and monotonicity later in \textsection\ref{sec:submod}.
%Given an instance of the Com-IC model, a random possible world is generated as follows.
Given a graph $G=(V,E,p)$ and a diffusion model, a possible world consists of a {\em deterministic graph} sampled from a probability distribution over all subgraphs of $G$.
%For \model, $\Pr[(u,v) \in E'] = p_{u,v}$, independent of all other edges.
%The existence of each edge is independent and the probability of $(u,v)$ existing is $p_{u,v}$.
For % the \comic diffusion model, 
\comic, we also need some variables for each node to fix the outcomes of random events in relation to the NLA (adoption, tie-breaking, and reconsideration), so that {influence cascade is fully deterministic in a single possible world.}
%Let $\mathbf{W}$ be the set of all possible worlds and $W\in \mathbf{W}$ be any one of them.

\spara{Generative Rules}
%To generate a possible world $W$ with deterministic graph $G_W = (V,E_W)$, first, each edge $(u,v)$ is independently retained in $E_W$ w.p.\ $p_{u,v}$ -- a retained edge is a \emph{live} edge. 
Retain each edge $(u,v)\in E$ w.p. $p_{u,v}$ (\emph{live} edge) and drop it w.p. $1-p_{u,v}$ (\emph{blocked} edge).
This generates a possible world $W$ with $G_W=(V,E_W)$, $E_W$ being the set of live edges.
Next, for every node $v\in V$, we
\begin{enumerate}
\item choose ``thresholds'' $\alpha_\calA^{v,W}$ and $\alpha_\calB^{v,W}$ independently and uniformly at random from $[0,1]$, for comparison with GAPs in adoption decisions (when $W$ is clear from context, we write $\alpha_\calA^v$ and $\alpha_\calB^v$);

\item generate a random permutation $\pi_v^W$ of all in-neighbors $u \in N^-(v)$ (for tie-breaking);
\item sample a discrete value $\tau_v^W\in \{\calA, \calB\}$, where each value has a probability of $0.5$ (used for tie-breaking in case $v$ is a seed of both $\calA$ and $\calB$).
\end{enumerate}

\spara{Deterministic cascade in a PW}
At time step $0$, nodes in $S_\calA$ and $S_\calB$ first become
	$\calA$-adopted and $\calB$-adopted, respectively (ties,
	if any, are broken based on $\tau_v$).
Then, iteratively for each step $t\ge 1$, 
	a node $v$ becomes ``reachable'' by $\calA$ at time step $t$ 
	if $t$ is the length of a shortest path from any seed $u\in S_\calA$
	to $v$ consisting entirely of live edges and $\calA$-adopted nodes.
Node $v$ then becomes $\calA$-adopted at step $t$ 
	if $\alpha_\calA^v \leq x$, where $x = \qao$ if $v$ is not $\calB$-adopted, otherwise $x = \qab$.
	 %. (when $v$ is not yet $\calB$-adopted) or if $\alpha_\calA^v \leq \qab$ (when $v$ is already $\calB$-adopted).
For re-consideration, suppose $v$ just becomes $\calB$-adopted at step $t$, while being $\calA$-suspended (i.e., $v$ was reachable by $\calA$ before $t$ steps but $\alpha_\calA^v > \qao$).
Then, $v$ adopts $\calA$ if $\alpha_\calA^v \leq \qab$.
The reachability and reconsideration tests of $\calB$ are symmetric. 
For {\em tie-breaking}, if $v$ is reached by both $\calA$ and $\calB$
	at $t$, the permutation $\pi_v$ is used to determine
	the order in which $\calA$ and $\calB$ are considered.
In addition, if $v$ is reached by $\calA$ and $\calB$ from the same in-neighbor, e.g., $u$, then the informing order follows the order in which $u$ adopts $\calA$ and $\calB$.

The following lemma establishes the equivalence between this possible world model
	and \comic.
This allows us to analyze monotonicity and submodularity using the PW model only.
%All proofs can be found in	the full version~\cite{comicTR} of this paper.

\def\lemmaPW{
For any fixed $\calA$-seed set $S_\calA$ and $\calB$-seed set $S_\calB$, the joint distributions of the sets of $\calA$-adopted nodes and $\calB$-adopted nodes obtained  $(i)$ by running a \comic diffusion from $S_\calA$ and $S_\calB$ and $(ii)$  by randomly sampling a possible world $W$ and running a deterministic cascade from $S_\calA$ and $S_\calB$ in $W$,
	are the same.
}
\begin{lemma}\label{lemma:pw}
{\lemmaPW}
\end{lemma}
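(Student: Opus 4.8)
The plan is to prove the equivalence by the \emph{principle of deferred decisions}: a \comic run reveals its random choices lazily (edge coin-flips, adoption coin-flips, tie-breaking permutations, and reconsideration coin-flips), whereas the PW model commits to all of them up front when sampling $W$. Since every such random choice in a \comic run is consulted \emph{at most once}, and the way a choice is consulted depends only on choices already revealed, the two sampling orders must yield identical joint distributions of the $\calA$-adopted and $\calB$-adopted node sets. Concretely, I would set up an explicit coupling identifying the randomness of the two processes, and then argue by induction on the time step $t$ that the pair (set of $\calA$-adopted nodes, set of $\calB$-adopted nodes), together with each node's state w.r.t.\ both items, takes the same value under the coupling at every $t$.

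First I would match the edge randomness: the model stipulates that each edge $(u,v)$ is tested at most once in an entire \comic run, so committing each edge to \emph{live}/\emph{blocked} in advance (as the PW does, w.p.\ $p_{u,v}$) is distributionally identical to testing it lazily. Next I would match the adoption randomness. The crucial observation is that a single uniform threshold $\alpha_\calA^v\in[0,1]$ simultaneously encodes both the initial adoption test and the later reconsideration test for node $v$: in the PW, $v$ becomes $\calA$-adopted upon first being reached if $\alpha_\calA^v\le\qao$ (when not $\calB$-adopted) and is otherwise $\calA$-suspended, matching the \comic coin of bias $\qao$; if $v$ is already $\calB$-adopted when reached, the threshold $\qab$ is used, matching the coin of bias $\qab$. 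The symmetric statements hold for $\calB$ and $\alpha_\calB^v$.

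The main obstacle is reconsideration, so I would verify it separately. In \comic a suspended node re-adopts $\calA$ with probability $\rho_\calA=\max\{\qab-\qao,0\}/(1-\qao)$, which must be reproduced by \emph{reusing} the same $\alpha_\calA^v$. Conditioned on suspension, i.e.\ $\alpha_\calA^v>\qao$, the PW re-adopts iff $\alpha_\calA^v\le\qab$, an event of conditional probability $(\qab-\qao)/(1-\qao)=\rho_\calA$ (and $0$ when $\qab\le\qao$); since $\alpha_\calA^v$ is consulted for nothing else, this correctly couples the two reconsideration coins while preserving independence across nodes. The remaining care is in \emph{tie-breaking} and \emph{timing}: I would check that the PW permutation $\pi_v^W$ and the seed tie-break value $\tau_v^W$ replicate, respectively, the random in-neighbour ordering and the fair coin of \comic, and that when $v$ is reached by $\calA$ and $\calB$ from a common in-neighbour the informing order follows that neighbour's adoption order in both models. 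Finally I would confirm that the PW's iterative definition of ``reachable by $\calA$ at step $t$'' (shortest live-edge path through $\calA$-adopted nodes) is well-defined despite its apparent circularity, since at step $t$ it refers only to adoptions settled by step $t-1$. This closes the induction and yields equality of the two joint distributions.
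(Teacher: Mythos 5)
Your proposal is correct and takes essentially the same route as the paper's proof: both invoke the principle of deferred decisions, noting that each random choice is consulted at most once, and then match every source of randomness piecewise --- edge coin-flips versus pre-committed live/blocked status, adoption coins versus uniform thresholds via $\Pr[\alpha_\calA^v \le q] = q$, reconsideration via the conditional probability $(\qab - \qao)/(1-\qao) = \rho_\calA$ (and $0$ when $\qab \le \qao$), and the tie-breaking permutation and seed coin. Your added formalities (an explicit coupling with induction on $t$, and the check that the PW reachability definition is non-circular) merely make the same argument more explicit.
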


\spara{Equivalence Classes of Possible Worlds}
Notice that in a possible world, $\alpha_\calA^v$ and $\alpha_\calB^v$ are both
	real values in the interval $[0,1]$.
Thus in theory, the total number of possible worlds is infinite.
In what follows, we establish a finite number of {\em equivalence classes} of possible worlds.
This facilitates theoretical analysis of the model, e.g., the proof of Theorem~\ref{thm:cimOpt}
	makes use of this property (see appendix).

From the perspective of influence propagation dynamics under \comic,
	especially the final states of each node w.r.t.\ both products,
	\textsl{the exact values of $\alpha_\calA^v$ and $\alpha_\calB^v$ do not matter.
Instead, we only need to know the ranges that $\alpha_\calA^v$ and $\alpha_\calB^v$ fall in.
For $\alpha_\calA^v$, the ranges are
\begin{itemize}
\item $[0, \qao)$, $[\qao, \qab)$, $[\qab,1]$ when $\qao \le \qab$,
\item  $[0, \qab)$, $[\qab, \qao)$, $[\qao,1]$ when $\qab < \qao$.
\end{itemize}
Likewise for $\alpha_\calB^v$, they are:
\begin{itemize}
\item $[0, \qbo)$, $[\qbo, \qba)$, $[\qba,1]$ when $\qbo \le \qba$,  
\item $[0, \qba)$, $[\qba, \qbo)$, $[\qbo,1]$ when $\qba < \qbo$.
\end{itemize}
}

Given two possible worlds $W_1$ and $W_2$, we say that are {\em equivalent} iff each node $v\in V$ satisfies all of the following conditions:
\begin{enumerate}
\item $\alpha_\calA^{v, W_1}$ and $\alpha_\calA^{v, W_2}$ fall into the same range;
\item $\alpha_\calB^{v, W_1}$ and $\alpha_\calB^{v, W_2}$ also fall into the same range;
\item $\pi_v^{W_1} = \pi_v^{W_2}$ (same order of in-neighbours in both worlds);
\item $\tau_v^{W_1} = \tau_v^{W_2}$.
\end{enumerate}

Clearly, for any two possible worlds $W_1$ and $W_2$ in the same equivalence class, by model definition we have
\[
\sigma_\calA^{W_1} (S_\calA, S_\calB) = \sigma_\calA^{W_2} (S_\calA, S_\calB), \text{ and } 
\sigma_\calB^{W_1} (S_\calA, S_\calB) = \sigma_\calB^{W_2} (S_\calA, S_\calB).
\]
for any $\calA$-seed set $S_\calA$ and $\calB$-seed set $S_\calB$.

In view of the above, for any equivalence class ${\bf W}$ of possible worlds, by $\sigma_\calA^{{\bf W}}(S_\calA, S_\calB)$, without any ambiguity we mean $\sigma_\calA^{W}(S_\calA, S_\calB)$, where $W\in{\bf W}$ is any possible world in the equivalence class ${\bf W}$. 
It is straightforward to verify that the total number of equivalence classes is finite, as opposed to the total number of possible worlds, which is uncountable.

Let $\Pr[{\bf W}]$ denote the total probability mass of all the possible worlds that belong to $\bf W$.
%Equivalently, for an arbitrary possible world $W$, $\Pr[{\bf W}]$ is the probability that $W \in \bf W$.
In principle, this probability can be computed using integration over $\alpha_\calA^v$'s and $\alpha_\calB^v$'s,
	as there are still an uncountable number of possible worlds in any given equivalence class.
An alternative method that does not involve integration is to look at the ranges in which $\alpha_\calA^v$'s and $\alpha_\calB^v$'s
	fall.
For example, $\Pr[\qao \leq \alpha_\calA^v < \qab] = \qab - \qao$ (assuming complementarity).
This is correct as the $\alpha$-values are sampled uniformly at random from the interval $[0,1]$.
We can then express the expected spread function using a linear combination of all equivalence classes:
\begin{align}
\sigma_\calA(S_\calA, S_\calB) = \sum_{\bf W} \Pr[{\bf W}] \cdot \sigma_\calA^{\bf W} (S_\calA, S_\calB),
\end{align}
where $\sigma_\calA^{\bf W} (S_\calA, S_\calB)$ can be computed deterministically as we mentioned earlier
	in this subsection.

%Because each variable we only care about the three possible ranges, for every node  

\subsection{Monotonicity}
It turns out that when $\calA$ competes with $\calB$ while $\calB$ complements $A$, monotonicity does not hold in general (see Appendix for
	counter-examples).
But note that these cases are rather unnatural.
Hence, we focus on mutual competition
	($\bQ^-$) and mutural complementary cases ($\bQ^+$), for which we can show 
	self- and cross-monotonicity do hold.

\def\theoremMonotone{
For any fixed $\calB$-seed set $S_\calB$,  $\sigma_\calA(S_\calA, S_\calB)$ is monotonically increasing in $S_\calA$ for any set of GAPs in $\bQ^+$ and $\bQ^-$.
%if $\calA$ and $\calB$ either mutually competes with or mutually complements each other. 
Also, $\sigma_\calA(S_\calA, S_\calB)$ is monotonically increasing in $S_\calB$ for any GAPs in $\bQ^+$, and monotonically decreasing in $S_\calB$ for any $\bQ^-$.
}

\begin{theorem} \label{thm:monotone}
{\theoremMonotone}
\end{theorem}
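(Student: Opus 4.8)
The plan is to collapse all randomness into a single possible world and argue monotonicity of the \emph{deterministic} $\calA$-spread there, handling complementarity and competition by different mechanisms. First I would invoke Lemma~\ref{lemma:pw} together with the equivalence-class decomposition $\sigma_\calA(S_\calA,S_\calB)=\sum_{\bf W}\Pr[{\bf W}]\,\sigma_\calA^{\bf W}(S_\calA,S_\calB)$. Since every weight $\Pr[{\bf W}]\ge 0$ and is independent of the seed sets, it suffices to prove each of the four monotonicity claims \emph{pointwise in a fixed possible world} $W$, for the deterministic count $\sigma_\calA^W$. So I fix $W$ with its live edges, thresholds $\alpha_\calA^v,\alpha_\calB^v$, permutations, and tie-break bits; the cascade is now fully determined, and I write $A_\infty(\cdot,\cdot)$ and $B_\infty(\cdot,\cdot)$ for its final $\calA$- and $\calB$-adopted sets.

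\textbf{Complementary case ($\bQ^+$).} Here $\qao\le\qab$ and $\qbo\le\qba$, and I would give a fixed-point characterization. Define an operator $F_W(X,Y)=(X',Y')$ on pairs of node sets that places $v$ in $X'$ iff $v\in S_\calA$, or $v$ has a live in-edge from a node of $X$ and either $\alpha_\calA^v\le\qao$ or ($\alpha_\calA^v\le\qab$ and $v\in Y$); symmetrically for $Y'$. Because $\qao\le\qab$, the threshold ranges split into ``always adopts $\calA$'', ``adopts $\calA$ only once also $\calB$-adopted'' (the conditional complementary adopters, captured exactly by the reconsideration rule $\rho_\calA=(\qab-\qao)/(1-\qao)$), and ``never adopts $\calA$'', so $F_W$ is monotone in $(X,Y)$ under coordinatewise inclusion. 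I would then argue $\bigl(A_\infty(S_\calA,S_\calB),B_\infty(S_\calA,S_\calB)\bigr)$ is precisely the least fixed point of $F_W$: every cascade adoption is justified by $F_W$, and conversely reconsideration guarantees that any node that is both reached by $\calA$ and eventually $\calB$-adopted does receive the $\qab$-test, so nothing in the least fixed point is missed and, in particular, the final sets are independent of tie-breaking. Since enlarging $S_\calA$ or $S_\calB$ enlarges $F_W$ pointwise, its least fixed point is non-decreasing in both seed sets, giving \emph{simultaneously} self-monotonicity (fix $S_\calB$, grow $S_\calA$) and cross-monotonicity (fix $S_\calA$, grow $S_\calB$) in the increasing direction.

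\textbf{Competitive case ($\bQ^-$).} Here $\qao\ge\qab$ and $\qbo\ge\qba$, so $\rho_\calA=\rho_\calB=0$: there is no reconsideration, and each node's $\calA$-status is \emph{locked in at the instant it is first reached by $\calA$}, depending on whether it is already $\calB$-adopted then. Timing now matters, so the fixed-point argument fails and I would instead run a \emph{coupling induction on time} in the fixed world $W$. For cross-monotonicity, compare the cascades from $(S_\calA,S_\calB)$ and $(S_\calA,S_\calB')$ with $S_\calB\subseteq S_\calB'$ and maintain the paired invariant that, at every step, each node is reached (hence adopted) by $\calA$ no later under the \emph{smaller} $\calB$-seed set, while each node is reached by $\calB$ no later under the \emph{larger} $\calB$-seed set. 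Propagating this invariant forward yields $A_\infty(S_\calA,S_\calB)\supseteq A_\infty(S_\calA,S_\calB')$, i.e.\ $\sigma_\calA$ is non-increasing in $S_\calB$; the symmetric coupling between $(S_\calA,S_\calB)$ and $(S_\calA',S_\calB)$ with $S_\calA\subseteq S_\calA'$ gives self-monotonicity in the increasing direction. As before, each pointwise inequality lifts to $\sigma_\calA$ by the weighted sum over worlds.

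\textbf{Main obstacle.} I expect the competitive induction step to be the hard part. Because ``reached by $\calA$'' is itself defined through $\calA$-adopted nodes whose statuses depended on the $\calB$-arrival times, preserving the dual timing invariant requires a case analysis on the threshold range of each newly reached node and on the race between its $\calA$- and $\calB$-arrival times. The delicate point is to confirm that the two halves of the invariant (``$\calA$ faster here, $\calB$ faster there'') reinforce rather than conflict — which is exactly what mutual competition ($\qab\le\qao$ and $\qba\le\qbo$) guarantees, since faster $\calA$ wins more adoption races and slower $\calB$ blocks fewer of them. I would also verify the boundary cases $\qab=\qao$ and $\qba=\qbo$, where the conditional ranges are empty and monotonicity degenerates to plain IC reachability.
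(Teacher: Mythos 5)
Your proposal is correct in substance, and it splits into two halves that relate differently to the paper's own proof. For the competitive case $\bQ^-$ you do essentially what the paper does: after reducing to a fixed possible world via Lemma~\ref{lemma:pw}, the paper runs a joint induction on time steps with exactly your dual invariant --- writing $\Phi_\calA^W(\cdot,t)$ and $\Phi_\calB^W(\cdot,t)$ for the sets adopted by time $t$, it shows $\Phi_\calA^W(S,t)\subseteq\Phi_\calA^W(T,t)$ and $\Phi_\calB^W(T,t)\subseteq\Phi_\calB^W(S,t)$ for $\calA$-seed sets $S\subseteq T$, and obtains cross-monotonicity from the $\calA\leftrightarrow\calB$ symmetric statement proved concurrently. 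What you flag as the ``main obstacle'' is indeed where the paper spends most of its effort: the simultaneous-arrival case is resolved by observing that the nested adopted-in-neighbor sets $U_\calA(S)\subseteq U_\calA(T)$ and $U_\calB(T)\subseteq U_\calB(S)$ force the \emph{same} permutation $\pi_v$ to order $\calB$ ahead of $\calA$ consistently in both runs; your sketch would need that argument (plus the case of an in-neighbor that adopts both items) carried out, and the invariant should be phrased in terms of adoption times rather than ``reached'' times, since under $\bQ^-$ being reached does not imply adopting.

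For the complementary case $\bQ^+$, your route is genuinely different: the paper again runs the coupled induction on time, whereas you characterize the final adopted sets as the least fixed point of a monotone operator $F_W$ and read off monotonicity in both seed sets from Knaster--Tarski. This is sound: the closure direction uses precisely the reconsideration rule (any node that is $\calA$-informed, has $\alpha_\calA^v\le\qab$, and eventually adopts $\calB$ does receive the $\qab$-test), and the justification direction is an induction over adoption events, so the cascade outcome equals the least fixed point regardless of tie-break order. This buys you two things the paper obtains separately: self- and cross-monotonicity in one stroke, and the tie-breaking irrelevance of Lemma~\ref{lemma:tie} as a free byproduct, since the fixed-point description never references $\pi_v$. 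The price is that the argument is intrinsically limited to $\bQ^+$: it is exactly the timing-independence created by reconsideration that validates the order-free characterization, and its failure under competition is what forces both you and the paper back to the timing-sensitive coupled induction there.
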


\subsection{Submodularity in Complementary Setting}

Next, we analyze self-submodularity and cross-submodularity for mutual complementary cases ($\bQ^+$) that has direct impact on \SIM and \CIM.
The analysis for $\bQ^-$ is deferred to the appendix.

For self-submodularity, we show that it is satisfied in the case of {\em ``one-way complementarity''}, i.e.,  $\calB$ complements $\calA$ ($\qao \leq \qab$), but $\calA$ does not affect $\calB$ ($\qbo = \qba$), or vise versa (Theorem~\ref{thm:submod-complement}). %\footnote{One-way complementarity may also include the case of $\calB$ complements $\calA$ but $\calA$ competes with $\calB$. However, this setting is much less worthy for its unnaturalness.}.
%However, if mutual complementary is strict, i.e., there is no indifference, neither self-submodularity nor cross-submodularity is satisfied (counter-examples given).
%For the more general case (symmetric complementarity, where $\calB$ is also complemented by $\calA$), we conjecture that submodularity may not hold.
We will also show the $\sigma_\calA$ is cross-submodular in $S_\calB$ when $\qba = 1$ (Theorem~\ref{thm:cross-submod}).
However, both properties are not satisfied in general (see appendix for counter-examples).
We give two useful lemmas first, and thanks to
	Lemma~\ref{lemma:tie} below, we may assume w.l.o.g. that  tie-breaking always favors $\calA$
	in complementary cases.

\def\lemmaTie{
Consider any \comic instance with $\bQ^+$.
Given fixed $\calA$- and $\calB$-seed sets,
%the distribution of $\ell_\calA(v)$, for any $v\in V$, is independent of tie-breaking orders.
for all nodes $v\in V$, all permutations of $v$'s in-neighbors are equivalent in determining if
	$v$ becomes $\calA$-adopted and $\calB$-adopted, and thus the tie-breaking rule
	is not needed for mutual complementary case.
}

\begin{lemma}\label{lemma:tie}
{\lemmaTie}
\end{lemma}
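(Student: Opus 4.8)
In any Com-IC instance with mutually complementary GAPs ($\bQ^+$), for fixed $\calA$- and $\calB$-seed sets, all permutations of a node's in-neighbors are equivalent in determining whether that node becomes $\calA$-adopted and $\calB$-adopted; hence tie-breaking is unnecessary in the mutual-complementary case.

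Let me think about this carefully.

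**Setup.** We work in the possible-world (PW) model from Section~\ref{sec:pw}, which is equivalent to Com-IC by Lemma~\ref{lemma:pw}. So fix a possible world $W$: live/blocked edges are fixed, and for each node $v$ the thresholds $\alpha_\calA^v, \alpha_\calB^v \in [0,1]$ are fixed, as is $\tau_v$. The only thing that the tie-breaking permutation $\pi_v$ controls is: when a node $v$ is reached by both $\calA$ and $\calB$ at the same time step, which product's adoption test is processed first. I want to show that for $\bQ^+$ (i.e. $\qao \le \qab$ and $\qbo \le \qba$), the *final* set of $\calA$-adopted and $\calB$-adopted nodes is independent of all the $\pi_v$'s.

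**What does the threshold encode?** In a fixed PW, node $v$'s fate is a deterministic function of $\alpha_\calA^v$, $\alpha_\calB^v$, and the *order/timing* in which it gets reached by $\calA$ and $\calB$. The key observation is that under $\bQ^+$, reconsideration is "free": if $v$ is reachable by $\calA$ and later becomes $\calB$-adopted, then $v$ ends up $\calA$-adopted iff $\alpha_\calA^v \le \qab$ — regardless of whether it adopted $\calA$ immediately (if $\alpha_\calA^v \le \qao$) or was suspended and then reconsidered (if $\qao < \alpha_\calA^v \le \qab$). So the *eventual* $\calA$-adoption status of $v$, given that $v$ is reached by $\calA$ at all, depends only on: (a) whether $v$ ever becomes $\calB$-adopted, and (b) the threshold $\alpha_\calA^v$. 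Specifically, $v$ is eventually $\calA$-adopted iff it is reached by $\calA$ and [$\alpha_\calA^v \le \qao$, OR ($\alpha_\calA^v \le \qab$ and $v$ is eventually $\calB$-adopted)]. This is the crucial monotone/order-insensitive characterization.

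**Plan of proof.** The cleanest route is to define, for the fixed PW, a threshold characterization of each node's final state that does *not* reference $\pi_v$ at all, and then argue by induction on time steps that the deterministic cascade reaches exactly these states regardless of tie-breaking order. Concretely:

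1. **Reachability is permutation-independent.** First I'd note that "$v$ is reachable by $\calA$" (there is a live-edge path from an $\calA$-seed through $\calA$-adopted nodes) and "$v$ is reachable by $\calB$" depend on which nodes adopt $\calA$/$\calB$, which in turn I'll characterize. But to avoid circularity, I set up a joint fixpoint. Define the *candidate final state* of each $v$:
   - $v$ is $\calA$-adopted iff $v$ is $\calA$-reachable and ($\alpha_\calA^v \le \qao$ or ($\alpha_\calA^v \le \qab$ and $v$ is $\calB$-adopted));
   - symmetrically for $\calB$ using $\qbo, \qba$.
   Here "$\calA$-reachable" means reachable via live edges through $\calA$-adopted nodes. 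This is a system of monotone constraints.

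2. **Monotone fixpoint / Knaster–Tarski.** The map "current adopted sets $\mapsto$ newly-implied adopted sets" is monotone (more $\calA$- and $\calB$-adopted nodes can only enable more adoptions — this uses $\qao\le\qab$, $\qbo\le\qba$, so adopting the complement never *removes* an adoption). Hence there is a unique *least* fixpoint, independent of any ordering. I claim the deterministic cascade — under *any* tie-breaking $\pi$ — computes exactly this least fixpoint.

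3. **Cascade = least fixpoint, for every $\pi$.** I'd show two inclusions. *Soundness:* every node the cascade adopts satisfies the fixpoint constraints, so the cascade's output is contained in the least fixpoint. *Completeness:* every node in the least fixpoint is eventually adopted by the cascade; here I use that complementarity means a $\calB$-adoption can only *help* a suspended $\calA$-candidate (reconsideration always triggers when $\alpha_\calA^v \le \qab$), so no adoption that "should" happen is permanently blocked by an unlucky ordering. The permutation only affects the *step* at which things happen and possibly intermediate suspended states, never the final membership.

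**The main obstacle.** The crux — and where I'd spend the most care — is the completeness direction together with the subtlety that ordering affects *timing*, and timing affects *reachability* (since a node only propagates $\calA$ after it is $\calA$-adopted). I must rule out a scenario where, under one permutation, $v$ adopts $\calA$ at step $t$ and passes $\calA$ onward, whereas under another permutation $v$ is only $\calA$-suspended at $t$ (adopting $\calB$ first), delaying or blocking $\calA$'s spread to $v$'s out-neighbors. The resolution is that under $\bQ^+$, whenever $v$ is reached by $\calA$ with $\alpha_\calA^v \le \qab$, it *will* become $\calA$-adopted — if not immediately, then via reconsideration once it adopts $\calB$ (and $\alpha_\calA^v \le \qab \le \qba$ type reasoning ensures the $\calB$-adoption does occur when reachable). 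So any such $v$ eventually propagates $\calA$; the downstream reachability set is therefore identical across permutations, only the *timestamps* differ. I would make this rigorous by arguing that the least fixpoint is reached by the cascade under any $\pi$ (ignoring step indices), invoking monotonicity to show intermediate suspended states cannot prevent an eventual adoption. Since both the $\calA$-adopted set and $\calB$-adopted set equal the (permutation-independent) least fixpoint, all permutations yield identical outcomes, proving the lemma and justifying dropping tie-breaking in the $\bQ^+$ case.
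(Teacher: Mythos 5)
Your proof is correct, but it takes a genuinely different route from the paper's. The paper works locally: after invoking the possible-world equivalence (Lemma~\ref{lemma:pw}), it fixes a node $v$ informed of both items at the same step and exhaustively checks all nine combinations of the ranges of $\alpha_\calA^v$ and $\alpha_\calB^v$, verifying that both processing orders of the $\calA$-inform and the $\calB$-inform leave $v$ in the same joint state \emph{by the end of that very step} (e.g., when $\alpha_\calA^v\le\qao$ and $\qbo<\alpha_\calB^v\le\qba$, the order ``$\calB$ first'' routes through $\calB$-suspended and then reconsideration, but ends in the same place). Because everything resolves within a single step, the paper gets the stronger conclusion that the entire time-stamped trajectory of the cascade is permutation-invariant, and the global statement follows by a trivial induction on time. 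You instead argue globally: you characterize the final adopted sets as the \emph{least} fixpoint of a monotone constraint system --- monotone precisely because $\qao\le\qab$ and $\qbo\le\qba$ ensure that a $\calB$-adoption can only help an $\calA$-adoption and vice versa --- and show the cascade computes this fixpoint under every permutation. Your insistence on the least fixpoint is essential and correct: in the configuration $\qao<\alpha_\calA^v\le\qab$, $\qbo<\alpha_\calB^v\le\qba$ the constraints also admit the spurious fixpoint in which $v$ adopts both items, while the cascade realizes the least one (no adoption); and your completeness step correctly isolates where complementarity is used (reconsideration guarantees that a delayed $\calA$-inform is never permanently lost). What the paper's approach buys is brevity, elementarity, and step-by-step timing invariance; what yours buys is an order-free, reusable characterization of the final adoption sets --- essentially the path-plus-threshold reachability description that the paper's later submodularity and RR-set correctness arguments rely on --- at the cost of heavier machinery and a soundness/completeness argument that your write-up leaves at sketch level.
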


\def\lemmaInDiff{
In the \comic model, if $\calB$ is indifferent to $\calA$ (i.e., $\qba = \qbo$), then for any fixed $\calB$ seed set $S_\calB$, the probability distribution over sets of $\calB$-adopted nodes is independent of $\calA$-seed set.
Symmetrically, the probability distribution over sets of $\calA$-adopted nodes is also independent of $\calB$-seed set if $\calA$ is indifferent to $\calB$.
}

\begin{lemma}\label{lemma:indiff}
{\lemmaInDiff}
\end{lemma}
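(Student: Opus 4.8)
The plan is to argue entirely within the Possible World model of \textsection\ref{sec:pw}, using Lemma~\ref{lemma:pw} to reduce the distributional claim to a deterministic, per-world statement. Write $q = \qbo = \qba$ for the common value assumed in the hypothesis. The crux is the observation that, under $\qba = \qbo$, the node-level adoption test for $\calB$ collapses to the single inequality $\alpha_\calB^v \le q$, regardless of whether $v$ is $\calA$-adopted: in the PW cascade a $\calB$-reachable node $v$ becomes $\calB$-adopted iff $\alpha_\calB^v \le \qbo$ (when not $\calA$-adopted) or $\alpha_\calB^v \le \qba$ (when $\calA$-adopted), and these two thresholds now coincide. So the only way $\calA$ could still influence $\calB$'s outcome is through reconsideration and tie-breaking, which I would dispose of next.

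First I would show the reconsideration rule for $\calB$ is vacuous. A node is eligible to reconsider $\calB$ only if it is $\calB$-suspended, i.e.\ $\alpha_\calB^v > \qbo = q$; its reconsideration test then asks whether $\alpha_\calB^v \le \qba = q$, which fails. Hence reconsideration never converts a $\calB$-suspended node into a $\calB$-adopted one. Likewise, the permutation $\pi_v^W$ and coin $\tau_v^W$ affect only the \emph{order} in which $\calA$ and $\calB$ are considered at a node reached by both at the same step; since the $\calB$-test is $\alpha_\calB^v \le q$ irrespective of the $\calA$-outcome, this order is immaterial to whether $v$ becomes $\calB$-adopted.

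With these interactions removed, I would prove by induction on the time step $t$ that the set $B_t$ of $\calB$-adopted nodes at the end of step $t$ is a deterministic function of the live-edge set $E_W$, the thresholds $\{\alpha_\calB^v\}_{v\in V}$, and $S_\calB$ alone, carrying no dependence on $S_\calA$, on $\{\alpha_\calA^v\}$, or on $\pi, \tau$. The base case is $B_0 = S_\calB$. For the inductive step, $\calB$-reachability at step $t$ is defined purely through shortest live-edge paths from $S_\calB$ via previously $\calB$-adopted nodes (the definition never references $\calA$), so the collection of nodes tested at step $t$ is fixed by $B_{t-1}$ and $E_W$; each such $v$ then joins $B_t$ iff $\alpha_\calB^v \le q$. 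This closes the induction, and iterating to quiescence yields the final $\calB$-adopted set as a function of $(E_W, \{\alpha_\calB^v\}, S_\calB)$ only.

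Finally, I would lift this to the distributional statement. The distribution of the $\calB$-adopted set under seeds $(S_\calA, S_\calB)$ is the pushforward, through the deterministic map just established, of the joint law of $(E_W, \{\alpha_\calB^v\})$; since this map is independent of $S_\calA$ and the remaining PW randomness is irrelevant, the pushforward is identical for every choice of $S_\calA$. Invoking Lemma~\ref{lemma:pw} identifies it with the $\calB$-adopted distribution of the genuine \comic process, proving the claim; the symmetric statement for $\calA$ under $\qab = \qao$ follows by interchanging the two products. I expect the only delicate point to be the careful verification that reconsideration is vacuous and that tie-breaking cannot reintroduce a dependence on $\calA$; the inductive decoupling itself is then routine.
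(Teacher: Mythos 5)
Your proposal is correct and follows essentially the same route as the paper's proof: both reduce to the Possible World model via Lemma~\ref{lemma:pw} and then observe that, since $\qbo=\qba$, a node's $\calB$-adoption in a fixed world is characterized purely by live-edge paths from $S_\calB$ and the thresholds $\alpha_\calB^v$, with no reference to $\calA$. Your explicit verification that reconsideration is vacuous and tie-breaking is immaterial, together with the time-step induction, simply fills in details the paper's terser argument leaves implicit.
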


%We now show that self-submodularity holds for one-way $\bQ^+$. 
%Assume w.lo.g.\ (thanks to Lemma~\ref{lemma:tie}) that tie-breaking always favors $\calA$.

\begin{theorem}\label{thm:submod-complement}
For any instance of Com-IC model with $\qao \leq \qab$ and $\qbo = \qba$,
$(i)$.\ $\sigma_\calA$ is {\em self-submodular} w.r.t.\ $\calA$ seed set $S_\calA$, for any fixed $\calB$-seed set $S_\calB$.
$(ii)$.\ $\sigma_\calB$ is {\em self-submodular} w.r.t.\ $\calB$ seed set $S_\calB$ and is independent of $\calA$-seed set $S_\calA$. 
\end{theorem}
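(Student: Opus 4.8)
The plan is to reduce both parts to the possible-world (PW) model of Lemma~\ref{lemma:pw}. Writing
\[
\sigma_\calA(S_\calA,S_\calB)=\sum_{\mathbf W}\Pr[\mathbf W]\,\sigma_\calA^{\mathbf W}(S_\calA,S_\calB),
\]
and analogously for $\sigma_\calB$, and using that a non-negative linear combination of monotone submodular set functions is again monotone submodular, it suffices to establish self-submodularity (and monotonicity) of the \emph{deterministic} spread in a single fixed world $W$. Since $\qbo=\qba$ places the instance in $\bQ^+$, Lemma~\ref{lemma:tie} lets me ignore tie-breaking throughout, so the final adopted sets in $W$ are order-independent.

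The workhorse is a reachability lemma. Fix the live-edge subgraph of $W$ and a designated ``willing'' set $\Omega\subseteq V$; for a seed set $S$ let $R(S)$ be the set of nodes $v$ for which there is a live-edge path from some $s\in S$ to $v$ whose every non-seed vertex (including $v$, unless $v\in S$) lies in $\Omega$ (seeds always propagate). I claim $|R(\cdot)|$ is monotone and submodular. Monotonicity is immediate. For submodularity I show $v\in R(S)\iff S\cap P(v)\neq\emptyset$, where $P(v)=\{s:v\in R(\{s\})\}$: given any witnessing path for $v\in R(S)$, let $s^\ast$ be the \emph{last} seed on it; every vertex strictly after $s^\ast$ is a non-seed, hence in $\Omega$, so $v\in R(\{s^\ast\})$. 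Therefore $|R(S)|=\sum_{v}\mathbf{1}[S\cap P(v)\neq\emptyset]$ is a sum of coverage indicators, each monotone submodular, whence $|R(\cdot)|$ is monotone submodular.

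For part $(ii)$, since $\qbo=\qba$, Lemma~\ref{lemma:indiff} gives that the distribution of the $\calB$-adopted set, and hence $\sigma_\calB$, does not depend on $S_\calA$. In a fixed world the $\calB$-willing set is $\Omega_\calB=\{v:\alpha_\calB^v<\qbo\}$, and because the $\calB$-adoption test uses the single threshold $\qbo=\qba$ regardless of $\calA$-status, the $\calB$-adopted set is exactly $R(S_\calB)$ for $\Omega_\calB$. The reachability lemma yields self-submodularity and monotonicity of $\sigma_\calB^{\mathbf W}$ in $S_\calB$, and averaging over worlds finishes $(ii)$.

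For part $(i)$, fix a world $W$. By Lemma~\ref{lemma:indiff} the $\calB$-adopted set $B_W$ is fully determined independently of $S_\calA$, so I condition on it. Using $\qao\le\qab$, classify each node by $\alpha_\calA^v$ as \emph{eager} ($\alpha_\calA^v<\qao$: adopts $\calA$ whenever $\calA$-reached), \emph{conditional} ($\qao\le\alpha_\calA^v<\qab$: adopts $\calA$ when reached only if it lies in $B_W$), or \emph{reluctant} (never adopts), and set $\Omega_\calA(B_W)=\{\text{eager}\}\cup(\{\text{conditional}\}\cap B_W)$. Applying the reachability lemma with this $\Omega_\calA(B_W)$ and averaging over $W$ then gives self-submodularity of $\sigma_\calA$ in $S_\calA$. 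The main obstacle is justifying that the final $\calA$-adopted set in $W$ equals $R(S_\calA)$ for $\Omega_\calA(B_W)$ despite the temporal coupling between the two cascades: a conditional node $v\in B_W$ might be $\calA$-reached before it becomes $\calB$-adopted, landing in the $\calA$-suspended state. I would dispatch this by a timing argument showing the final set is cascade-order-independent: such a $v$ eventually becomes $\calB$-adopted (as $v\in B_W$) and then, since $\alpha_\calA^v<\qab$, adopts $\calA$ by reconsideration and propagates it onward; hence every node with a witnessing $\Omega_\calA(B_W)$-path is eventually $\calA$-adopted, and conversely no node outside this reachability set ever adopts $\calA$. This is the crux, since it is precisely what licenses replacing the dynamic, reconsideration-laden $\calA$-process by static reachability through a fixed willing set.
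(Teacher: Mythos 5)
Your proof is correct, and it reaches the result by a different organization than the paper's. The paper also fixes a possible world $W$ and the $\calB$-seed set, but then argues submodularity directly on the \emph{dynamic} cascade: for a marginal node $v \in \Phi_\calA^W(T\cup\{u\})\setminus\Phi_\calA^W(T)$ it takes the live-edge path of $\calA$-adopted nodes ending at $v$ and proves by induction along that path --- using Lemma~\ref{lemma:indiff} to guarantee that the boosting $\calB$-paths persist no matter what the $\calA$-seeds are --- that the path's origin $w_0$ alone suffices as the $\calA$-seed; monotonicity (Theorem~\ref{thm:monotone}) then forces $w_0=u$, which is the singleton-witness property that yields submodularity. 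You instead prove a stronger, \emph{static} statement: the final $\calA$-adopted set equals reachability from $S_\calA$ through the fixed willing set $\Omega_\calA(B_W)$, and then invoke a generic coverage-function lemma (your ``last seed on the path'' step is the same trick as the paper's claim about $w_0$). Your route buys reusability and uniformity: the dynamic-to-static equivalence you prove is essentially what the paper must prove separately as Theorem~\ref{thm:rrsim-correct} to justify the \timsim RR-set construction, it makes the monotone-plus-submodular structure of Lemma~\ref{lemma:generic-rr-2} (properties (P1)/(P2)) transparent, and it disposes of part $(ii)$ with the same lemma rather than deferring, as the paper does, to ``a straightforward extension'' of Kempe et al. The paper's route buys brevity: it needs only one direction of your equivalence (that a single adopted path survives under a singleton seed), so it avoids arguing order-independence of the entire cascade. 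One small repair to yours: you classify nodes with strict inequalities (eager means $\alpha_\calA^v<\qao$, etc.), whereas the paper's possible-world model adopts on $\alpha_\calA^v\le\qao$; as written your per-world characterization is wrong on boundary events. Since the $\alpha$'s are continuous this is measure-zero and harmless in expectation, but you should use $\le$ throughout so the world-by-world claim is exact.
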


\begin{proof}
First, $(ii)$ holds trivially.
By Lemma~\ref{lemma:indiff}, $\calA$ does not affect $\calB$'s diffusion in any sense.
Thus, $\sigma_\calB(S_\calA, S_\calB) = \sigma_\calB(\emptyset, S_\calB)$.
It can be shown that
	the function $\sigma_\calB(\emptyset, S_\calB)$ is both monotone and submodular w.r.t.\ $S_\calB$,
	for any $\qbo$, through a straightforward extension to the proof of Theorem 2.2 in  Kempe et al.~\cite{kempe03}. 
%\weil{Just to mention that the proof for submod of IC cannot be used per se, since we have $\qbo$ here.}

For $(i)$, first we fix a possible world $W$ and a $\calB$-seed set $S_\calB$.
Let $\Phi_\calA^W(S_\calA)$ be the set of $\calA$-adopted nodes in possible world $W$ with
	$\calA$-seed set $S_\calA$ ($S_\calB$ omitted when it is clear from the context).
%For self-submodularity, 
Consider two sets $S\subseteq T \subseteq V$, some node $u\in V\setminus T$, and finally a node $v \in \Phi_\calA^W(T \cup \{u\}) \setminus \Phi_\calA^W(T)$.
There must exist a live-edge path $\PP_\calA$ from $T \cup \{u\}$ consisting entirely of $\calA$-adopted nodes. %when $S_\calA= T \cup \{u\}$.
We denote by $w_0 \in T \cup \{u\}$ the origin of $\PP_\calA$.
%We next prove the following two claims:

%\begin{claim}\label{claim:submod-complement}
%$\PP_\calA$ remain $\calA$-adopted when $S_\calA = \{w_0\}$.
%\end{claim}
%\begin{proof} 
%\textsc{Proof of Claim~\ref{claim:submod-complement}}.
We first prove a key claim: $\PP_\calA$ remains $\calA$-adopted when $S_\calA = \{w_0\}$.
Consider any node $w_i \in \PP_\calA$.
In this possible world, if $\alpha_\calA^{w_i} \leq \qao$, then regardless of the diffusion of $\calB$, $w_i$ will adopt $\calA$ as long as its predecessor $w_{i-1}$ adopts $\calA$.
If $\qao < \alpha_\calA^{w_i} \leq \qab$, then there must also be a live-edge path $\PP_\calB$ from $S_\calB$ to $w_i$ that consists entirely of $\calB$-adopted nodes, and it boosts $w_i$ to adopt $\calA$.
Since $\qbo = \qba$, $\calA$ has no effect on $\calB$-propagation (Lemma~\ref{lemma:indiff}), and $\PP_\calB$ always exists and all nodes on $\PP_\calB$ would still be $\calB$-adopted through $S_\calB$ (fixed) irrespective of $\calA$-seeds. 
Thus, $\PP_\calB$ always boosts $w_i$ to adopt $\calA$ as long as $w_{i-1}$ is $\calA$-adopted.
Hence, the claim holds by a simple induction on $\PP_\calA$ starting from $w_0$. % \qed
%\end{proof}

Then, it is easy to see $w_0$ = $u$.
Suppose otherwise, then $w_0 \in T$ must be true.
By  the claim above %Claim~\ref{claim:submod-complement} 
and self-monotonicity of $\sigma_\calA$ (Theorem~\ref{thm:monotone}), $v \in \Phi_\calA^W(\{w_0\})$ implies $w\in \Phi_\calA^W(T)$, a contradiction. 
Therefore, we have $v \not\in \Phi_\calA^W(S)$ and $v \in \Phi_\calA^W(S \cup \{u\})$.
This by definition implies $|\Phi_\calA^W(\cdot)|$ is submodular for any $W$ and $S_\calB$, which
	is sufficient to show that $\sigma_\calA(S_\calA, S_\calB)$ is submodular in $S_\calA$.
\end{proof}
%Since $\sigma_\calA(S_\calA) = \sum_W \Pr[W] |\Phi_\calA^W(\cdot)|$, it is itself submodular.
%\weil{Need to replace $\sum$ with integral since the number of possible worlds are infinite due to $\alpha$}
%\qed 

\def\lemmaClaimBA{
On any $\calA$-path $\PP_\calA$, if some node $w$ adopts $\calB$ %(perhaps under some other $\calB$ seed set) 
and all nodes before $w$ on $\PP_\calA$ are $\calA$-ready, then every node following $w$ on $P_\calA$ adopts both $\calA$ and $\calB$, regardless of the actual $\calB$-seed set.
}

\def\claimBPath{
There is a $\calB$-path $\PP_\calB$ from some $\calB$-seed $x_0 \in T \cup\{u\}$ to $w$, such that
	even if $x_0$ is the only $\calB$-seed, $w$ still adopts $\calB$.
}

\begin{theorem} \label{thm:cross-submod}
In any instance of Com-IC with mutual complementarity $\bQ^+$, %, i.e., $\qao \leq \qab$ and $\qbo \le \qba$, 
	%if $\qba = 1$,  then
	$\sigma_\calA$ is {\em cross-submodular} w.r.t.\ $\calB$-seed set $S_\calB$, for any fixed
	$A$-seed set, as long as $\qba = 1$.
\end{theorem}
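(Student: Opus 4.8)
The plan is to work entirely inside the Possible World (PW) model of Section~\ref{sec:pw} and to prove the stronger, per-world statement: for every fixed possible world $W$ and every fixed $\calA$-seed set, the set function $S_\calB \mapsto \Phi_\calA^W(S_\calB)$, where $\Phi_\calA^W(S_\calB)$ denotes the set of $\calA$-adopted nodes in $W$, is submodular in the $\calB$-seed set. Since $\sigma_\calA(S_\calA, S_\calB) = \sum_{\bf W} \Pr[{\bf W}]\cdot |\Phi_\calA^{\bf W}(S_\calB)|$ is a nonnegative combination of these per-world coverage functions, submodularity of each term transfers to $\sigma_\calA$. Throughout I would invoke Lemma~\ref{lemma:tie} to discard tie-breaking and Theorem~\ref{thm:monotone} for cross-monotonicity in $\bQ^+$.

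To establish per-world submodularity I would fix $S \subseteq T \subseteq V$ and $u \in V \setminus T$ and prove the set inclusion $\Phi_\calA^W(T\cup\{u\}) \setminus \Phi_\calA^W(T) \subseteq \Phi_\calA^W(S\cup\{u\}) \setminus \Phi_\calA^W(S)$, which is the standard route to submodularity of a coverage function. So take $v \in \Phi_\calA^W(T\cup\{u\}) \setminus \Phi_\calA^W(T)$. Cross-monotonicity immediately gives $v \notin \Phi_\calA^W(S)$, so it remains to show $v \in \Phi_\calA^W(S\cup\{u\})$; since $\{u\}\subseteq S\cup\{u\}$, by cross-monotonicity it in fact suffices to prove the stronger statement $v \in \Phi_\calA^W(\{u\})$.

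The heart of the argument is to track why $u$ is essential for $v$'s adoption. As $v$ is $\calA$-adopted under $(S_\calA, T\cup\{u\})$, there is a live-edge $\calA$-path $\PP_\calA = (w_0, \dots, w_m = v)$ with $w_0 \in S_\calA$, all of whose nodes are $\calA$-adopted. If no node of $\PP_\calA$ ever adopted $\calB$, every node on it would have adopted $\calA$ without a $\calB$-boost, i.e.\ be $\calA$-live ($\alpha_\calA < \qao$), so the path would already be realized under $(S_\calA, \emptyset)$, forcing $v \in \Phi_\calA^W(\emptyset) \subseteq \Phi_\calA^W(T)$, a contradiction. Hence a first $\calB$-adopting node $w$ on $\PP_\calA$ exists, and every node preceding it is $\calA$-live, in particular $\calA$-ready. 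I would then prove two auxiliary claims. The first is a single-seed reduction for $w$'s adoption of $\calB$: {\claimBPath} The second, where the hypothesis $\qba = 1$ is essential, is a downstream-propagation claim: {\lemmaClaimBA} The intuition for the second is that, once $w$ is both $\calA$- and $\calB$-adopted, it pushes $\calB$ along $\PP_\calA$, and each subsequent $\calA$-adopted node, on being informed of $\calB$, adopts $\calB$ with certainty because $\qba = 1$, which keeps the complementary $\calA$-boost alive all the way to $v$.

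Combining the two claims closes the proof. If the responsible seed $x_0$ lay in $T$, then running the cascade under $(S_\calA, T)$ would already yield $v$: the $\calA$-live prefix $w_0,\dots,w_{j-1}$ adopts $\calA$ independently of the $\calB$-seeds, $x_0$ alone makes $w$ adopt $\calB$, and the downstream-propagation claim carries $\calA$ to $v$, forcing $v \in \Phi_\calA^W(T)$ --- a contradiction. Hence $x_0 = u$, and re-running the identical cascade under the single $\calB$-seed $u$ gives $v \in \Phi_\calA^W(\{u\})$, as needed; averaging over worlds finishes the theorem. The step I expect to be the main obstacle is a rigorous proof of the single-seed reduction: because $\qba = 1$ makes $\calB$'s own propagation depend on $\calA$-adoptions (a $\calB$-ready node needs an $\calA$-boost to adopt $\calB$), the $\calB$-cascade feeding $w$ can interleave with $\calA$-adoptions, and one must argue carefully that one seed still suffices and that the adoption order creates no circular dependency. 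Here I would rely on Lemma~\ref{lemma:tie}, on the observation that any path node with $\qao\le\alpha_\calA<\qab$ must be $\calB$-live (otherwise it could never obtain the boost it needs), and on the finite equivalence-class structure of possible worlds.
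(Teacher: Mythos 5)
Your proposal follows the paper's proof skeleton exactly: reduce to a fixed possible world and prove the difference-set inclusion $\Phi_\calA^W(T\cup\{u\})\setminus\Phi_\calA^W(T)\subseteq\Phi_\calA^W(S\cup\{u\})\setminus\Phi_\calA^W(S)$; take the $\calA$-path witnessing $v$'s adoption under $T\cup\{u\}$; locate the first node $w$ on it that needed a $\calB$-boost (all earlier nodes being $\calA$-ready); and invoke precisely the paper's two claims --- the downstream-propagation claim (Claim~\ref{clm:bjoina}, where $\qba=1$ enters) and the single-seed reduction (Claim~\ref{clm:bpathcrosssub}) --- combined with per-world cross-monotonicity from Theorem~\ref{thm:monotone} to conclude that the responsible seed must be $u$. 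Your sketch of Claim~\ref{clm:bjoina} is also the paper's argument: a node on an $\calA$-path whose $\alpha_\calA$ lies in $(\qao,\qab]$ must satisfy $\alpha_\calB\le\qbo$, so once $w$ carries both items, each successor adopts $\calA$ and then $\calB$ (or $\calB$ and then $\calA$), with $\qba=1$ keeping the $\calB$-wave, and hence the boost, alive to $v$.

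The genuine gap is the one you flag yourself: Claim~\ref{clm:bpathcrosssub} is asserted, not proved, and the ingredients you propose for it will not close it. The observation you cite (boosted nodes on the \emph{$\calA$-path} are $\calB$-ready) is what Claim~\ref{clm:bjoina} needs; the single-seed reduction instead needs the dual fact about the \emph{$\calB$-path} feeding $w$: any non-$\calB$-ready node on it must have adopted $\calA$ first, hence is $\calA$-ready. Lemma~\ref{lemma:tie} and the finiteness of equivalence classes do not address the circular dependency at all. The paper resolves it with a ``zig-zag'' backward construction that is entirely absent from your plan: trace the $\calB$-path backwards from $w$; if some non-$\calB$-ready node $x_1$ on it has no $\calA$-ready path from $S_\calA$, switch to an $\calA$-path into $x_1$, take its first non-$\calA$-ready node $x_2$ (which must have adopted $\calB$), use Claim~\ref{clm:bjoina} to certify the segment from $x_2$ to $x_1$ re-propagates regardless of the $\calB$-seed set, and recurse from $x_2$; the construction terminates because it strictly follows adoption times backwards, so it must bottom out at a $\calB$-seed $x_0$, and the concatenated certified segments guarantee $w$ adopts $\calB$ when $x_0$ is the sole $\calB$-seed. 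Without this alternating construction and its termination-by-time argument, the reduction to a single seed --- and with it the whole proof --- does not go through.
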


\begin{proof}
We first fix an $\calA$-seed set $S_\calA$.
Consider any possible world $W$.
Let $\Psi^W_\calA(S_\calB)$ be the set of $\calA$-adopted nodes in $W$ with $\calB$ seed-set $S_\calB$ (and $\calA$-seed
	set $S_\calA$).
Consider $\calB$-seed sets $S\subseteq T \subseteq V$ and another $\calB$-seed $u \in V \setminus T$.
It suffices to show that for any $v\in \Psi^W_\calA(T \cup \{u\}) \setminus \Psi^W_\calA(T)$,
	we have $v \in \Psi^W_\calA(S \cup \{u\}) \setminus \Psi^W_\calA(S)$.

Let an {\em $\calA$-path} be a live-edge path from some $\calA$-seed such that
	all nodes on the path adopt $\calA$, and $\calB$-path is defined symmetrically.
If a node $w$ has $\alpha_\calA^w \le \qao$, we say that $w$ is {\em $\calA$-ready}, meaning that $w$ is ready for $\calA$
	and will adopt $\calA$ if it is informed of $\calA$, regardless of its status on $\calB$.
We say a path from $S_\calA$ is an {\em $\calA$-ready path} if all nodes on the path (except the starting $\calA$-seed)
	are $\calA$-ready.
It is clear that all nodes on an $\calA$-ready path would always adopt $\calA$ regardless of $\calB$-seeds.
We define $\calB$-ready nodes and paths symmetrically.
We can show the following claim.

\begin{claim} \label{clm:bjoina}
{\lemmaClaimBA}
\end{claim}

Now consider the case of $S_\calB = T \cup \{u\}$ first.
Since $v\in \Psi^W_\calA(T \cup \{u\})$, there must be an $\calA$-path $\PP_\calA$
	from some node $w_0 \in S_\calA$ to $v$.
If path $\PP_\calA$ is $\calA$-ready, then regardless of $\calB$ seeds,
	all nodes on $\PP_\calA$ would always be $\calA$-adopted,
	but this contradicts the assumption that $v \not \in \Psi^W_\calA(T)$.
Therefore, there exists some node $w$ that is not $\calA$-ready, i.e., $\qao < \alpha^w_\calA \le \qab$.
Let $w$ be the first non-$\calA$-ready node on path $\PP_\calA$.
Then $w$ must have adopted $\calB$ to help it adopt $\calA$, and $\alpha^w_\calB \le \qbo$.
We can show the following key claim.

\begin{claim} \label{clm:bpathcrosssub}
{\claimBPath}
\end{claim}

With the key Claim~\ref{clm:bpathcrosssub}, the rest of the proof follows the standard argument as in the
	other proofs.
In particular, since even when $x_0$ is the only $\calB$-seed, $w$ can still be $\calB$-adopted, then
	by Claim~\ref{clm:bjoina}, $v$ would be $\calA$-adopted in this case.
Thus we know that $x_0$ must be $u$, because otherwise it contradicts our assumption that
	$v \not\in \Psi^W_\calA(T)$ (also relying on the cross-monotonicity proof made for Theorem~\ref{thm:monotone}).
Then again by the cross-monotonicity, we know that
	$v \in \Psi^W_\calA(S \cup \{u\})$, but $v \not \in \Psi^W_\calA(S)$.
This completes our proof.
\end{proof}

\section{Approximation Algorithms}\label{sec:rrset}
%We now present our approximation algorithms for \SIM and \CIM.
We first review the state-of-the-art in influence maximization and then derive a general framework (\textsection\ref{sec:generalTIM}) to obtain approximation algorithms for \SIM  (\textsection\ref{sec:rr-sim}) and \CIM  (\textsection\ref{sec:rr-cim}).

\spara{TIM algorithm}
For influence maximization,  %under the IC and LT models, 
Tang et al.~\cite{tang14} proposed the {\em Two-phase Influence Maximization (TIM)} algorithm that produces a $(1-1/e-\epsilon)$-approximation with at least $1-|V|^{-\ell}$ probability in $O((k+\ell)(|E|+|V|)\log|V|/\epsilon^2)$ expected running time.
%It combines theoretical approximation guarantee and efficiency, and is consider one of the state-of-the-art algorithms.
It is based on the concept of  {\em Reverse-Reachable sets (RR-sets)}~\cite{borgs14}, and applies to the Triggering model~\cite{kempe03} that generalizes both IC and LT.
TIM is {\sl orders of magnitude faster than greedy algorithm with Monte Carlo
	simulations~\cite{kempe03}, while still giving approximation solutions with high probability}.
Recently they propose a new improvement~\cite{tang15}, which significantly 
	reduces the number of RR-sets generated using martingale analysis.
To tackle \SIM and \CIM,
%Our extension of the RR-set approach focuses 
we primarily focus on the challenging task of correctly generating RR-sets
	in \model and other more general models, which is orthogonal 
         to the contributions of~\cite{tang15}. 
Hereafter we focus on the framework of \cite{tang14}.  % in this paper.
Due to the much more complex dynamics involved in \model, 
	adapting TIM to solve \SIM and \CIM is far from trivial, as we shall show.

\spara{Reverse-Reachable Sets}
%The notion of {\em Reverse-Reachable sets} (or {\em RR-sets}) was first proposed in \cite{borgs14}.
In a deterministic (directed) graph $G' = (V', E')$, for a fixed $v \in V'$, all nodes that can reach $v$ form an {\em RR-set} rooted at $v$~\cite{borgs14}, denoted $R(v)$.
A {\em random} RR set encapsulates two levels of randomness:
($i$) a ``root'' node $v$ is randomly chosen from the graph, and
($ii$) a deterministic graph is sampled according to a certain probabilistic rule that retains a subset of edges from the graph.
E.g., for the IC model, each edge $(u,v)\in E$ is removed w.p.\ $(1-p_{u,v})$, independently.
TIM first computes a lower bound on the optimal solution value and uses this bound to derive the number of random RR-sets to be sampled, denoted $\theta$.
To guarantee approximation solutions, $\theta$ must satisfy:
%, so that approximation solutions are guaranteed. Specifically,
%a parameter $\theta$.
%Then, it samples $\theta$ RR-sets and selects seeds greedily: starting from $S = \emptyset$, it adds to $S$ the node that covers the most additional RR sets, until $|S| = k$.
%For accurate estimation of influence spread, $\theta$ must satisfy:
\begin{align}\label{eqn:opt}
\theta \geq \epsilon^{-2} (8+2\epsilon)|V| \cdot \frac{\ell \log|V| + \log\binom{|V|}{k} + \log 2} {\OPT_k},
\end{align}
where $\OPT_k$ is the optimal influence spread achievable amongst all size-$k$ sets, and $\epsilon$ represents the trade-off between efficiency and quality: a smaller $\epsilon$ implies more RR-sets (longer running time), but gives a better approximation factor.
The approximation guarantee of TIM relies on a key result from \cite{borgs14}, re-stated here:

\begin{proposition}[Lemma 9 in \cite{tang14}]
\label{prop:tang}
Fix a set $S\subseteq V$ and a node $v \in V$.
Under the Triggering model, let $\rho_1$ be the probability that $S$ activates $v$ in a cascade, and $\rho_2$ be the probability that $S$ overlaps with a random RR-set $R(v)$ rooted at $v$.
Then, $\rho_1 = \rho_2$.
\end{proposition}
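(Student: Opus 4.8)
The plan is to prove a \emph{pointwise} equality of two events over a single shared source of randomness---the \emph{live-edge graph} associated with the Triggering model---and then conclude equality of the two probabilities by averaging over that randomness. Recall that under the Triggering model each node $w$ independently samples a triggering set $T(w)\subseteq N^-(w)$ from its model-specific distribution, and declaring an edge $(u,w)$ to be \emph{live} exactly when $u\in T(w)$ yields a random deterministic graph $g$. By the possible-world equivalence of the Triggering model (Kempe et al.~\cite{kempe03}), the set of nodes activated by $S$ in the dynamic cascade coincides, in each realization $g$, with the set of nodes reachable from $S$ along live-edge paths in $g$. I would therefore first fix such a realization $g$, argue entirely deterministically, and only then take expectations.

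For a fixed $g$, the analysis of both quantities reduces to the same reachability condition. On the one hand, $v$ is activated by $S$ if and only if there is a directed path from some node of $S$ to $v$ using only live edges, which is exactly the live-edge characterization above. On the other hand, the RR-set $R(v)$ is built by a reverse traversal from $v$ in the \emph{same} $g$: starting at $v$ and repeatedly adding, for each already-reached node $w$, every in-neighbor $u$ with $(u,w)$ live. Hence $R(v)$ is precisely the set of nodes $u$ from which $v$ is reachable via live edges, so $S\cap R(v)\neq\emptyset$ holds if and only if some node of $S$ reaches $v$ via live edges in $g$---the identical condition. Thus, for every realization $g$, the event ``$S$ activates $v$'' and the event ``$S\cap R(v)\neq\emptyset$'' are the same event. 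Since $\rho_1$ and $\rho_2$ are the probabilities of these two events under the common distribution over $g$, summing $\Pr[g]$ over the shared set of favorable realizations gives $\rho_1=\rho_2$.

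The main obstacle, and the only nontrivial step, is justifying the live-edge/possible-world equivalence for the \emph{general} Triggering model rather than for IC alone. In the forward cascade a node's triggering set is effectively revealed lazily, as the propagation front reaches it, whereas the RR-set construction reveals it in reverse; one must verify that these two ways of sampling induce the same distribution over realized graphs. The standard resolution is a coupling argument: because each $T(w)$ is drawn once, independently, from a fixed distribution that does not depend on the propagation order or on which nodes happen to be active, sampling all triggering sets up front agrees in distribution with lazy sampling. I would discharge this point by invoking the equivalence theorem of Kempe et al.~\cite{kempe03}, after which the deterministic per-realization argument above closes the proof.
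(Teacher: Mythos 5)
Your proof is correct. Note, however, that the paper never proves Proposition~\ref{prop:tang} itself: it is quoted as Lemma~9 of~\cite{tang14} and used as a black box. The closest thing to an in-paper proof is the general machinery of Definition~\ref{def:generalRRset} together with Lemmas~\ref{lemma:generic-rr-2} and~\ref{lemma:generic-rr}, which establish the same conclusion for \emph{any} diffusion model whose equivalent possible-world model satisfies (P1) (per-world activation is monotone in the seed set) and (P2) (if $S$ activates $v$ in a world, some singleton $\{u\}\subseteq S$ already does). The proof of Lemma~\ref{lemma:generic-rr} is exactly your argument in abstract form: fix a possible world, show the events ``$S$ activates $v$'' and ``$S$ meets $R_W(v)$'' coincide, then average over worlds. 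Your proposal is the specialization of this to the Triggering model, where the live-edge realization $g$ is the possible world and reachability supplies both properties for free --- monotonicity of reachability gives (P1), and the fact that any live path from $S$ to $v$ originates at a single node of $S$ gives (P2). One subtlety you implicitly resolve: the paper defines $R_W(v)$ as the set of $u$ such that $\{u\}$ alone activates $v$, whereas you define it by reverse traversal of live edges; these agree for the Triggering model precisely because singleton activation equals reachability there, but they need not agree in richer models such as \model, which is why the paper isolates (P1)/(P2) rather than arguing via reachability. What your concrete route buys is an elementary, self-contained proof (including the lazy-versus-upfront sampling coupling, correctly discharged via the equivalence theorem of~\cite{kempe03}); what the paper's abstract route buys is applicability to models where activation is governed by node-level automata rather than pure reachability.
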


\subsection{A General Solution Framework}\label{sec:generalTIM}

%To incorporate the TIM algorithm for our new Com-IC model, we need to generalize the 
%	definition of RR-set and the TIM approach to a general diffusion model.
%We take a top-down approach to adapt the nice solution framework of TIM and use the adaptation to tackle \SIM and \CIM problems under the \model model.
We use Possible World (PW) models to generalize the theory in \cite{borgs14, tang14}.
For a generic stochastic diffusion model $M$, an \emph{equivalent} PW
	model $M'$ is a model that specifies a distribution over $\mathcal{W}$,  the set of all
	possible worlds, where influence diffusion in each possible world in $\cal W$ is
	deterministic. 
%, and in all $W\in \mathcal{W}$.
%and possible world $W$ is randomly drawn from a distribution.
%By equivalence we mean that 
Further, given a seed set (or two seed sets $S_\calA$ and $S_\calB$ as in
	\model), the distribution of the sets of active nodes
	(or $\calA$- and $\calB$-adopted nodes in \comic) in $M$ is the
	same as the corresponding distribution in $M'$.
Then, we define a generalized concept of {\em RR-set} through the PW model:

\begin{definition}[General  RR-Set] \label{def:generalRRset}
For each possible world $W \in \mathcal{W}$ and a given node $v$ (a.k.a.\ root), the {\em reverse reachable set
	(RR-set) of $v$ in $W$}, denoted by $R_W(v)$, 
	consists of all nodes $u$ such that the singleton
	set $\{u\}$ would activate $v$ in $W$.
A {\em random} {RR-set of $v$} is a set $R_W(v)$ where $W$ is randomly sampled from $\mathcal{W}$ using the probability distribution given in $M'$.
\end{definition}

It is easy to see that Definition~\ref{def:generalRRset}  encompasses
	the RR-set definition in~\cite{borgs14,tang14} for IC, LT, and Triggering models as special cases.
For the entire solution framework to work, 
	the key property that RR-sets need to satisfy is the following:

\begin{definition}[Activation Equivalence Property] \label{def:RRsetProperty}
%Given a graph $G=(V,E,p)$
%	with an influence diffusion model $M$ and an equivalent possible world model $M'$.
Let $M$ be a stochastic diffusion model and $M'$ be its equivalent possible world model.
Let $G=(V,E,p)$ be a graph.
Then, RR-sets have the {\em Activation Equivalence Property} if
	for any fixed $S\subseteq V$ and any fixed $v \in V$, 
	the probability that $S$ activates $v$ according to %the diffusion model 
	$M$
	is the same as the probability that $S$ overlaps with a random RR-set generated
	from $v$ in a possible world in $M'$.
\end{definition}
\vspace*{-2mm}

As shown in~\cite{tang14}, the entire correctness and complexity analysis is based
on the above property, and in fact in their latest improvement~\cite{tang15}, they
	directly use this property as the definiton of general RR-sets.
Proposition~\ref{prop:tang} shows that the activation equivalence property holds for
	the triggering model.
We now provide {\sl a more general sufficient condition for the activation equivalence
	property to hold} (Lemma~\ref{lemma:generic-rr}), 
	which gives {\sl concrete conditions on when the RR-set based framework would work}.
More specifically, we show that for any diffusion model $M$, if there is an  equivalent PW model $M'$ of which all possible worlds satisfy the following two properties, then RR-sets
	have the activation equivalence property.
%By saying $M$ and $M'$ are equivalent, we mean $M'$ specifies a probability distribution over the space of all possible worlds, such that (1) in every possible world $W$, influence propagates deterministically and (2) for any given seed set, the distributions over sets of active nodes under $M$ and $M'$ are the same. 

%Let $M$ be an influence propagation model consisting of a graph $G$ with edge weights and a set of randomized diffusion rules.
%Let $\mathcal{W}$ be the set of all possible worlds, i.e., deterministic graphs on which influence propagates in a deterministic manner.
%We assume that the distribution over sets of active nodes under the original model $M$ is the same as that under an equivalent model that samples possible worlds from $\mathcal{W}$.
%Furthermore, we require the diffusion in any possible world $W \in \mathcal{W}$ to satisfy two properties:

%\vspace*{-1mm}
%\begin{property}\label{def:p1}
%Given two seed sets $S$ and $T$ such that $S \subseteq T$, if a node $v$ can be activated by $S$ in a possible world $W$, then $v$ shall also be activated by $T$ in $W$.
%\end{property}
%\vspace*{-4mm}
%\begin{property}\label{def:p2}
%If a node $v$ can be activated by $S$ in a possible world $W$, then there exists $u\in S$ such that the singleton seed set $\{u\}$ can also activate $v$ in $W$.
%\end{property}
\subsubsection*{Possible World Properties}
\begin{itemize}
\item {\it (P1):} Given two seed sets $S \subseteq T$, if a node $v$ can be activated by $S$ in a possible world $W$, then $v$ shall also be activated by $T$ in $W$.
\item {\it (P2):} If a node $v$ can be activated by $S$ in a possible world $W$, then there exists $u\in S$ such that the singleton seed set $\{u\}$ can also activate $v$ in $W$.
In fact, {\it (P1)} and {\it (P2)} are equivalent to monotonicity and submodularity, as we formally state below.
\end{itemize}

\def\lemmaGenericRRSubmod{
Let $W$ be a fixed possible world.
Let $f_{v,W}(S)$ be an indicator function that takes on $1$ if $S$ can activate $v$ in $W$, and $0$ otherwise.
Then, $f_{v,W}(\cdot)$ is {\em monotone} and {\em submodular} for all $v \in V$ {if and only if} both (P1) and (P2) are satisfied in $W$.
}

\begin{lemma}\label{lemma:generic-rr-2}
{\lemmaGenericRRSubmod}
\end{lemma}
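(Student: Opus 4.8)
The plan is to prove the biconditional in Lemma~\ref{lemma:generic-rr-2} by establishing the two directions separately, observing throughout that $f_{v,W}(\cdot)$ is a $\{0,1\}$-valued set function, so submodularity and monotonicity reduce to combinatorial statements about which sets activate $v$. First I would record the key elementary fact: since $f_{v,W}$ is an indicator function, monotonicity is exactly the statement that if $S$ activates $v$ then so does every superset $T \supseteq S$, which is verbatim property (P1). This disposes of the monotonicity equivalence immediately, so the real content is matching submodularity with (P2) in the presence of (P1).

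For the direction \emph{(P1) and (P2) $\Rightarrow$ submodular}, I would verify the marginal-gain inequality $f_{v,W}(S\cup\{x\}) - f_{v,W}(S) \ge f_{v,W}(T\cup\{x\}) - f_{v,W}(T)$ for $S\subseteq T$ and $x\notin T$. Because the values lie in $\{0,1\}$ and $f_{v,W}$ is monotone (by P1), each marginal gain is either $0$ or $1$, so the inequality can only fail when the right-hand side is $1$ and the left-hand side is $0$; that is, $f_{v,W}(T\cup\{x\})=1$, $f_{v,W}(T)=0$, and $f_{v,W}(S\cup\{x\})=0$. I would derive a contradiction: from $f_{v,W}(T\cup\{x\})=1$, property (P2) yields a single seed $u\in T\cup\{x\}$ that activates $v$. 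Since $f_{v,W}(T)=0$, monotonicity (P1) forces $u\notin T$, hence $u=x$; but then $\{x\}$ alone activates $v$, so $\{x\}\subseteq S\cup\{x\}$ together with (P1) gives $f_{v,W}(S\cup\{x\})=1$, the desired contradiction.

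For the converse, \emph{monotone and submodular $\Rightarrow$ (P1) and (P2)}, monotonicity gives (P1) directly as noted. To obtain (P2), I would assume $f_{v,W}(S)=1$ and show some singleton in $S$ already activates $v$. The natural approach is to add the elements of $S$ one at a time to the empty set and track the marginal gains: submodularity says the first element added produces a marginal gain at least as large as the gain from adding that same element to any larger prefix. Since the total value rises from $f_{v,W}(\emptyset)\in\{0,1\}$ to $1$ and all gains are in $\{0,1\}$, I would argue that if no singleton activated $v$ — i.e.\ every first-step marginal gain were $0$ — then submodularity would force all later marginal gains to be $\le 0$ as well, contradicting the eventual rise to $1$. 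Concretely, pick $x\in S$ with $f_{v,W}(\{x\})=1$ if one exists; otherwise every marginal gain $f_{v,W}(\{x\})-f_{v,W}(\emptyset)$ is $\le 0$, and submodularity propagates this to show $f_{v,W}(S)\le f_{v,W}(\emptyset)=0$, contradicting $f_{v,W}(S)=1$.

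The main obstacle I anticipate is the converse direction, specifically making the ``propagate the zero marginal gain'' argument fully rigorous rather than hand-wavy. The cleanest way is probably an inductive telescoping argument: order $S=\{x_1,\dots,x_m\}$, write $f_{v,W}(S)-f_{v,W}(\emptyset)$ as the telescoping sum $\sum_i \bigl(f_{v,W}(\{x_1,\dots,x_i\})-f_{v,W}(\{x_1,\dots,x_{i-1}\})\bigr)$, and bound each term above by the corresponding singleton gain $f_{v,W}(\{x_i\})-f_{v,W}(\emptyset)$ via submodularity applied with $S'=\emptyset$, $T'=\{x_1,\dots,x_{i-1}\}$. If no singleton activates $v$ while $f_{v,W}(\emptyset)=0$, every singleton gain is $0$, so the whole sum is $\le 0$, forcing $f_{v,W}(S)=0$ and the contradiction; the edge case $f_{v,W}(\emptyset)=1$ is handled trivially since then $\emptyset$ (vacuously) or any singleton already witnesses activation. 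Getting the indices and the direction of the submodular inequality exactly right is the only place where care is needed.
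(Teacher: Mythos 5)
Your proposal is correct in both directions, and its first half coincides with the paper's own proof: the paper likewise dispatches monotonicity as a restatement of (P1), and its submodularity argument is the same contrapositive you give --- use (P2) on $T\cup\{x\}$ to extract a single activator $u$, rule out $u\in T$ via (P1), conclude $u=x$, and then push $\{x\}$ up to $S\cup\{x\}$ with (P1). Where you genuinely diverge is the converse derivation of (P2) from submodularity. The paper takes a minimal activating subset $S'\subseteq S$ (minimality gives $f_{v,W}(S'\setminus\{u\})=0$ for any $u\in S'$, and the no-singleton assumption gives $|S'|\ge 2$ and $f_{v,W}(\{u\})=0$), then applies submodularity once to the quadruple $\emptyset\subseteq S'\setminus\{u\}$ with added element $u$ to get $0\ge 1$. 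You instead telescope $f_{v,W}(S)-f_{v,W}(\emptyset)$ along an arbitrary ordering of $S$ and bound each increment by the corresponding singleton gain, which is the standard subadditivity consequence of submodularity. Both are sound; the paper's minimal-set argument is shorter and exploits the $\{0,1\}$-valued structure with a single inequality, while your telescoping argument is more mechanical to verify and would survive unchanged for real-valued set functions. One shared blemish, not a gap relative to the paper: both proofs implicitly assume $f_{v,W}(\emptyset)=0$ (you at least flag the edge case, though your parenthetical ``vacuously'' for $S=\emptyset$ is not quite right, since (P2) asserts an element of $S$ exists); in the diffusion setting this assumption is harmless because no node activates without seeds.
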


\def\lemmaGenericRR{
Let $M$ be a stochastic diffusion model and $M'$ be its equivalent possible world model.
If $M'$ satisfies Properties {\it (P1)} and {\it (P2)}, then the RR-sets
	as defined in Definition~\ref{def:generalRRset} have the activation equivalence property
	as in Definition~\ref{def:RRsetProperty}.
}

\begin{lemma}\label{lemma:generic-rr}
{\lemmaGenericRR}
\end{lemma}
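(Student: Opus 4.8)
The plan is to reduce the statement to a pointwise (per-possible-world) equivalence of two events, and then lift that equivalence back to the level of probabilities using the assumed equivalence of $M$ and $M'$. Fix an arbitrary seed set $S \subseteq V$ and an arbitrary root node $v \in V$. Because $M'$ is an \emph{equivalent} PW model of $M$, the probability that $S$ activates $v$ under $M$ equals the probability, over a random possible world $W$ sampled from $\mathcal{W}$ according to $M'$, that $S$ activates $v$ in the deterministic cascade in $W$. Thus it suffices to show that this latter probability equals the probability that $S$ overlaps a random RR-set $R_W(v)$ as defined in Definition~\ref{def:generalRRset}.

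The core of the argument is the claim that, for \emph{every} fixed possible world $W$, the event ``$S$ activates $v$ in $W$'' coincides with the event ``$S \cap R_W(v) \neq \emptyset$.'' First I would prove the forward direction: if $S$ activates $v$ in $W$, then by Property (P2) there is some $u \in S$ whose singleton $\{u\}$ already activates $v$ in $W$; by Definition~\ref{def:generalRRset} this means $u \in R_W(v)$, hence $u \in S \cap R_W(v)$. For the reverse direction, if $S \cap R_W(v) \neq \emptyset$, pick any $u$ in the intersection; by definition of $R_W(v)$ the singleton $\{u\}$ activates $v$ in $W$, and since $\{u\} \subseteq S$, Property (P1) (monotonicity) guarantees that $S$ activates $v$ in $W$ as well. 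Since the two events are identical for each $W$, their probabilities over the random draw of $W$ from $\mathcal{W}$ are equal, and chaining this with the equivalence established in the first step yields exactly the activation equivalence property of Definition~\ref{def:RRsetProperty}.

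I expect the two set-inclusion directions to be essentially bookkeeping once (P1) and (P2) are invoked. The only subtle point I would be most careful about is the very first reduction: ensuring that ``activates $v$'' is interpreted consistently across $M$, $M'$, and the RR-set definition, so that the equivalence of $M$ and $M'$ (which is stated for the joint distribution of activated/adopted node \emph{sets}) legitimately transfers to the single-node activation event for a fixed root $v$. Once that is nailed down, Properties (P1) and (P2) carry the entire load and no model-specific computation is required, which is precisely what makes the condition a clean, model-agnostic sufficient criterion for the RR-set framework to apply.
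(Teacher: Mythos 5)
Your proof is correct and follows essentially the same route as the paper's: a pointwise, per-possible-world equivalence of the events ``$S$ activates $v$'' and ``$S \cap R_W(v) \neq \emptyset$,'' with Property (P2) giving the forward direction and Property (P1) the reverse, then lifting to probabilities via the equivalence of $M$ and $M'$. The only difference is that you make explicit the initial reduction (transferring the activation probability from $M$ to the distribution over possible worlds in $M'$), which the paper leaves implicit in its opening sentence.
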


%\begin{proof}
%It is sufficient to prove that in every possible world $W \in \mathcal{W}$, $S$ activates $v$ if and only if $S$ intersects with $v$'s RR set in $W$, denoted by $R_W(v)$.
%
%
%Suppose $R_W(v) \cap S \neq \emptyset$.
%Without loss of generality, we assume a node $u$ is in the intersection.
%By the definition of RR set, set $\{u\}$ can activate $v$ in $W$.
%Per Property~\ref{def:p1}, $S$ can also activate $v$ in $W$.
%
%Now suppose $S$ activates $v$ in $W$.
%Per Property~\ref{def:p2}, there exists $u\in S$ such that $\{u\}$ can also activate $v$ in $W$.
%Then by the RR-set definition, $u \in R_W(v)$.
%Therefore, $S \cap R_W(v) \neq \emptyset$.
%\end{proof}

%Finally, as long as an influence diffusion model satisfies Lemma~\ref{lemma:generic-rr}, we can apply the TIM algorithm to achieve approximation guarantees for influence maximization.

Comparing with directly using the activation equivalence property as the RR-set 
	definition in~\cite{tang15},
	our RR-set definition provides a more concrete way of constructing RR-sets, and
	our Lemmas~\ref{lemma:generic-rr-2} and~\ref{lemma:generic-rr} provide general
	conditions under which such constructions can ensure algorithm correctness.
Algorithm~\ref{alg:generalTIM}, \generalTIM, outlines a general solution framework based on
	RR-sets and TIM.
It provides a probabilistic approximation guarantee for any diffusion
	models that satisfy {\it (P1)} and {\it (P2)}.
Note that the estimation of a lower bound $LB$ of $\OPT_k$ (line~\ref{line:g1}) is orthogonal to our
	contributions and we refer the reader to \cite{tang14} for details.
Finally, we have:

\begin{theorem}\label{thm:generalTIM}
Suppose for a stochastic diffusion model $M$ with
	an equivalent PW model $M'$, that for every possible world $W$ and every $v\in V$, 
	the indicator function $f_{v,W}$ is monotone \& submodular. Then
	for influence maximization under $M$ with graph $G=(V,E,p)$ and
	seed set size $k$, \generalTIM (Algorithm~\ref{alg:generalTIM}) applied on the general
	RR-sets (Definition~\ref{def:generalRRset}) returns a $(1-1/e-\epsilon)$-approximate
	solution with at least $1-|V|^{-\ell}$ probability.
%, \textcolor{red}{and it runs in $O((k+\ell)(|V|+|E|)\log|V|/\epsilon^2)$ expected time.}
\end{theorem}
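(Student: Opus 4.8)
The plan is to show that the entire TIM correctness and complexity argument from Tang et al.~\cite{tang14} goes through verbatim once we establish the one property their analysis actually depends on, namely the Activation Equivalence Property (Definition~\ref{def:RRsetProperty}). First I would invoke Lemma~\ref{lemma:generic-rr-2}: the hypothesis says that for every possible world $W$ and every $v\in V$ the indicator $f_{v,W}$ is monotone and submodular, which by that lemma is exactly equivalent to asserting that the equivalent PW model $M'$ satisfies properties (P1) and (P2). Having (P1) and (P2) in hand, I would then apply Lemma~\ref{lemma:generic-rr} to conclude that the general RR-sets of Definition~\ref{def:generalRRset} enjoy the Activation Equivalence Property. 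This is the conceptual heart of the argument, but both steps are already packaged as the preceding lemmas, so the real work is done there rather than here.

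The remaining step is to verify that the Activation Equivalence Property is genuinely all that TIM needs, so that its guarantees transfer to our general setting. I would recall that in TIM, for a candidate seed set $S$ the quantity $|V|\cdot \Pr[S \cap R \neq \emptyset]$ (over a random RR-set $R$) is an unbiased estimator of the expected spread $\sigma(S)$; the Activation Equivalence Property is precisely the statement that makes this estimator unbiased, since it equates the overlap probability with the activation probability $\sigma(S)/|V|$. With unbiasedness established for every $S$ simultaneously, the concentration bounds (Chernoff-type) and the greedy maximum-coverage analysis on the sampled RR-sets are purely combinatorial and do not reference the specific diffusion model at all. Consequently, choosing $\theta$ as in Equation~\eqref{eqn:opt} with a valid lower bound $LB \le \OPT_k$ (line~\ref{line:g1}, whose estimation we take as a black box following~\cite{tang14}), the greedy node selection on $\theta$ random RR-sets yields a $(1-1/e-\epsilon)$-approximate solution with probability at least $1-|V|^{-\ell}$.

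The step I expect to require the most care is arguing that the unbiasedness delivered by the Activation Equivalence Property is uniform enough to feed directly into Tang et al.'s union-bound and concentration machinery without reproving it. Concretely, one must confirm that the random RR-sets generated via Definition~\ref{def:generalRRset} are i.i.d.\ samples (each obtained by independently drawing a world $W \sim M'$ and a uniformly random root $v$), so that the Chernoff bounds over $\theta$ samples apply as in the original proof; the Activation Equivalence Property supplies the correct per-sample expectation, and independence of the sampling procedure supplies the rest. I would therefore state that, because our general RR-sets satisfy exactly the property that~\cite{tang14} abstracts in their Lemma~9 (Proposition~\ref{prop:tang}), every lemma in their approximation analysis applies mutatis mutandis, and the theorem follows. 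The only genuine obstacle, already resolved by Lemmas~\ref{lemma:generic-rr-2} and~\ref{lemma:generic-rr}, was translating the model-level monotonicity/submodularity of $f_{v,W}$ into the sampling-level guarantee TIM relies on.
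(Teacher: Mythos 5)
Your proposal is correct and follows exactly the paper's own route: the paper proves Theorem~\ref{thm:generalTIM} by combining Lemma~\ref{lemma:generic-rr-2} (monotonicity/submodularity of $f_{v,W}$ is equivalent to (P1) and (P2)) with Lemma~\ref{lemma:generic-rr} (hence the general RR-sets satisfy the Activation Equivalence Property), and then observing that the TIM analysis of~\cite{tang14} relies only on the Chernoff bound and this property, with no IC-specific ingredients. Your additional remarks on unbiasedness and i.i.d.\ sampling of RR-sets simply spell out in more detail what the paper compresses into that final observation.
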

\vspace*{-1mm}

Theorem~\ref{thm:generalTIM} follows from Lemmas~\ref{lemma:generic-rr-2} and \ref{lemma:generic-rr}, and the fact that all theoretical analysis of TIM relies only on the Chernoff bound and the activation equivalence property, ``without relying on any other results specific to the IC model''~\cite{tang14}.
Next, we describe how to generate RR-sets correctly for \SIM and \CIM under \comic (line~\ref{line:g3} of Algorithm~\ref{alg:generalTIM}), which is much more complicated than IC/LT models~\cite{tang14}.
We will first focus on submodular settings for \SIM % ($\qao \leq \qab$ and $\qbo = \qba$, 
(Theorem~\ref{thm:submod-complement}) and \CIM %($\qao \leq \qab$ and $\qbo \leq \qba = 1$, 
(Theorem~\ref{thm:cross-submod}). 
In \textsection\ref{sec:sandwich}, we propose {\em Sandwich Approximation} to handle general $\bQ^+$ where submodularity does \emph{not} hold.

\subsection{Generating RR-Sets for {\sc\large SelfInfMax}}\label{sec:rr-sim}

%\SetAlgoSkip{-10em}

\begin{algorithm}[t!]
%\begin{small}
\DontPrintSemicolon
\caption{$\mathsf{GeneralTIM}$ ($G=(V,E,p)$, $k$, $\epsilon$, $\ell$)} \label{alg:generalTIM}
$LB \gets $ lower bound of $\OPT_k$ estimated by method in~\cite{tang14}\; \label{line:g1}
compute $\theta$ using Eq.~\eqref{eqn:opt} with $LB$ replacing $\OPT_k$\;
$\mathcal{R} \gets$ generate $\theta$ random RR-sets according to Definition~\ref{def:generalRRset} \tcp{for \SIM, use \timsim or \timsimfast; for \CIM, use \timcim}  \label{line:g3}
\For {\em $i = 1$ to $k$} {
    $v_i \gets$ the node appearing in the most RR-sets in $\mathcal{R}$\;
   $S \gets S \cup \{v_i\}$ \tcp{$S$ was initialized as $\emptyset$}
    remove all RR-sets in which $v_i$ appears\;
}
\textbf{return} $S$ as the seed set\;
%\end{small}
\end{algorithm}

%We next describe our RR-set generation algorithm for \SIM under \model, which we call \timsim.
%In this section, we focus on ``one-way'' $\bQ^+$:
%	$\qao \leq \qab$ and $\qbo = \qba$, for which 
%	self-submodularity holds (Theorem~\ref{thm:submod-complement}).
%For this case 
We present two algorithms, \timsim and \timsimfast, for generating
	random RR-sets per Definition~\ref{def:generalRRset}. %for \SIM.
%We will handle mutually complementary products
%	later in Section~\ref{sec:approx}.
%When plugging \timsim to the general TIM framework, 
The overall algorithm for \SIM can be obtained by plugging \timsim or \timsimfast
	into \generalTIM (Algorithm~\ref{alg:generalTIM}).

According to Definition~\ref{def:generalRRset},
	for \SIM, the RR-set of a root $v$ in
	a possible world $W$, $R_W(v)$, is
	the set of nodes $u$ such that if $u$ is the only $\calA$-seed,
	$v$ would be $\calA$-adopted in $W$, given any fixed $\calB$-seed set $S_\calB$.
By Theorems~\ref{thm:monotone} 
	and~\ref{thm:submod-complement} (%Although these
		%theorems are stated for the influence spread functions,
		whose proofs indeed show that the indicator function
		$f_{v,W}(S)$ is monotone and submodular),
		along with Lemmas~\ref{lemma:generic-rr-2} and~\ref{lemma:generic-rr}, we know that 
		RR-sets following Definition~\ref{def:generalRRset} have the
		activation equivalence property.
We now focus on how to construct RR-sets following Definition~\ref{def:generalRRset}.
Recall that in \comic, adoption decisions for $\calA$ are based on a number
	of factors such as whether $v$ is reachable via a live-edge path from $S_\calA$
	and its state w.r.t.\ $\calB$ when reached by $\calA$.
Note that $\qbo = \qba$ implies that $\calB$-diffusion is
	independent of $\calA$ (Lemma~\ref{lemma:indiff}).
Our algorithms take advantage of this fact, by 
first revealing node states w.r.t. $\calB$, which gives a sound
	basis for generating RR-sets for $\calA$.

\subsubsection{The RR-SIM Algorithm}

Conceptually,
\timsim (Algorithm~\ref{alg:rrset-sim}) proceeds in three phases.
Phase \rom{1} samples a possible world according to 
	\textsection\ref{sec:pw} (omitted from the pseudo-code).
Phase \rom{2} is a {\em forward labeling} process from the
	input $\calB$-seed set $S_\calB$ (lines \ref{line:x0} to \ref{line:x2}): a node
	$v$ becomes $\calB$-adopted if $\alpha_\calB^{v,W} \leq \qbo$
	and $v$ is reachable from $S_\calB$ via a path consisting entirely
	of live edges  and $\calB$-adopted nodes.
In Phase \rom{3} (lines \ref{line:x3} to \ref{line:x5}), we randomly select a node
	$v$ and generate RR-set $R_W(v)$ by running a Breadth-First
	Search (BFS) backwards (following incoming edges).
Note that the RR-set generation for IC and LT models~\cite{tang14}
	is essentially a simpler version of Phase \rom{3}.

\spara{Backward BFS}
Given $W$, % (the PW generated i phases),
	an RR-set $R_W(v)$ includes all nodes explored in the following backward BFS procedure.
Initially, we enqueue $v$ into a FIFO queue $Q$.
We repeatedly dequeue a node $u$ from $Q$ for processing until the queue is empty.

\begin{itemize}
\item {\em Case 1: $u$ is $\calB$-adopted.} There are two sub-cases:
	$(i)$.\ If $\alpha_\calA^u \leq \qab$, then $u$ is able to transit from $\calA$-informed to $\calA$-adopted.
	Thus, we continue to examine $u$'s in-neighbors.
	For all unexplored $w\in N^{-}(u)$, if edge $(w,u)$ is live, then enqueue $w$;
	$(ii)$.\ If $\alpha_\calA^u > \qab$, then $u$ cannot transit from $\calA$-informed to $\calA$-adopted, and thus $u$ has to be an $\calA$ seed to become $\calA$-adopted. In this case, $u$'s in-neighbors will not be examined.

\item {\em Case 2: $u$ is not $\calB$-adopted.} Similarly, if $\alpha_\calA^u \leq \qao$, perform actions as in 1$(i)$; otherwise perform actions as in 1$(ii)$. 
\end{itemize}

\def\theoremRRSIMCorrect{
Under one-way complementarity ($\qao \leq \qab$ and $\qbo = \qba$),
	the RR-sets generated by
	the \timsim algorithm satisfy Definition~\ref{def:generalRRset} for
	the \SIM problem.
As a result, Theorem~\ref{thm:generalTIM} applies to \generalTIM with \timsim
	in this case.
}
	
%\vspace*{-2mm}
\begin{theorem}\label{thm:rrsim-correct}
{\theoremRRSIMCorrect}
\end{theorem}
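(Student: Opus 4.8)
The plan is to fix an arbitrary possible world $W$ and show that the backward BFS procedure started from a root $v$ returns \emph{exactly} the set $R_W(v) = \{u \in V : \{u\} \text{ activates } v \text{ in } W\}$ required by Definition~\ref{def:generalRRset}, where ``activates'' means ``causes $v$ to become $\calA$-adopted'' given the fixed $\calB$-seed set $S_\calB$. Since we already know from Lemma~\ref{lemma:indiff} that $\calB$-diffusion is independent of $\calA$ when $\qbo = \qba$, the set of $\calB$-adopted nodes in $W$ is determined once and for all by Phase~\rom{2}'s forward labeling; I would emphasize that this labeling is legitimate precisely because it does not depend on which $\calA$-seed we later test. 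The correctness claim then reduces to a purely deterministic statement about reachability in $W$: a single $\calA$-seed $u$ makes $v$ adopt $\calA$ if and only if $u$ is enqueued during the backward BFS from $v$.

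I would prove this by a two-directional argument, both via induction along live-edge paths. For the ``if'' direction (membership in the BFS output implies activation), I would argue by induction on the order in which nodes are dequeued: whenever the BFS expands from $u$ to an in-neighbor $w$ along a live edge, the Case~1/Case~2 test guarantees that $u$'s $\calA$-threshold $\alpha_\calA^u$ is below the relevant cutoff ($\qab$ if $u$ is $\calB$-adopted, else $\qao$), so $u$ will adopt $\calA$ the moment its predecessor on a live path does; chaining these tests back from the seed $w$ to $v$ shows $\{w\}$ activates $v$. For the ``only if'' direction, I would take any $u$ such that $\{u\}$ activates $v$ and trace the live-edge $\calA$-adoption path that the singleton cascade produces in $W$ (which exists by the deterministic cascade semantics of \textsection\ref{sec:pw}), then show the BFS necessarily follows this path backwards: each intermediate node passes exactly the threshold test that the BFS checks, so none of them is a ``blocking'' node of type 1$(ii)$/Case~2-reject, and hence each gets enqueued.

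The subtle point I would highlight is the interaction between $\calA$-adoption and the \emph{reconsideration} mechanism: a node $w$ with $\qao < \alpha_\calA^w \le \qab$ is not $\calA$-ready on its own, but adopts $\calA$ once it becomes $\calB$-adopted. The BFS handles this correctly because it consults the \emph{already-computed} $\calB$-status of each node (Case~1 versus Case~2) rather than re-simulating $\calB$; this is exactly where the one-way assumption $\qbo = \qba$ is indispensable, since it certifies that $w$'s $\calB$-status is the same no matter which $\calA$-seed is tested. I would make explicit that under $\qao \le \qab$ (complementarity) the cutoff used in Case~1 is $\qab \ge \qao$, so a $\calB$-adopted node is strictly more permissive toward $\calA$, matching the reconsideration rule of Figure~\ref{fig:model}. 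By Lemma~\ref{lemma:tie}, tie-breaking is irrelevant in the complementary case, so I need not track permutations $\pi_v^W$, which simplifies the path-tracing argument.

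Having established $u \in R_W(v)$ (as produced by the BFS) $\iff$ $\{u\}$ activates $v$ in $W$ for every $W$, the RR-sets satisfy Definition~\ref{def:generalRRset} by construction. I would then invoke Theorem~\ref{thm:monotone} and Theorem~\ref{thm:submod-complement}, whose proofs exhibit the indicator function $f_{v,W}$ as monotone and submodular, together with Lemmas~\ref{lemma:generic-rr-2} and~\ref{lemma:generic-rr}, to conclude that these RR-sets enjoy the Activation Equivalence Property of Definition~\ref{def:RRsetProperty}; the hypotheses of Theorem~\ref{thm:generalTIM} are thereby met, yielding the stated approximation guarantee for \generalTIM equipped with \timsim. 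I expect the main obstacle to be the ``only if'' direction: carefully arguing that the live-edge $\calA$-adoption path created by the singleton seed $\{u\}$ in the full cascade is faithfully retraced by a BFS that only ever looks one hop back and only tests local threshold conditions, especially across reconsideration nodes where the justification for adoption routes through the $\calB$-cascade rather than the $\calA$-path itself.
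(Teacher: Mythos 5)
Your proposal is correct and follows essentially the same route as the paper's own proof: fix a possible world, use Lemma~\ref{lemma:indiff} to pin down the $\calB$-labeling independently of $\calA$-seeds, characterize activation of the root by a live-edge path whose interior nodes satisfy $\alpha_\calA^w \leq \qab$ (with $\calB$-adoption required when $\alpha_\calA^w > \qao$), match this to the Case~1/Case~2 tests of the backward BFS in both directions, and then conclude via Definition~\ref{def:generalRRset}, Lemmas~\ref{lemma:generic-rr-2} and~\ref{lemma:generic-rr}, and Theorems~\ref{thm:monotone} and~\ref{thm:submod-complement}. The only cosmetic difference is that you phrase the equivalence as an induction on the BFS frontier while the paper states it as a path-existence criterion, and you make the appeal to Lemma~\ref{lemma:tie} and the reconsideration subtlety explicit where the paper leaves them implicit.
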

%\vspace*{-2mm}

%\SetAlgoSkip{}
\begin{algorithm}[t!]
%\begin{small}
\DontPrintSemicolon
\caption{\timsim($G=(V,E)$, $v$, $S_\calB$)} \label{alg:rrset-sim}
create an empty FIFO queue $Q$ and empty set $R$\;
enqueue all nodes in $S_\calB$ into $Q$ \tcp{start forward labeling}  \label{line:x0}
\While{\em $Q$ is not empty} {   \label{line:x1}
	$u \gets Q.\mathsf{dequeue}()$ and mark $u$ as $B$-adopted \;
	\ForEach{\em $v \in N^+(u)$ such that $(u,v)$ is live } {
		\If{\em $\alpha_\calB^{v,W} \leq \qbo \; \wedge$ $v$ is not visited} 
		{$Q.\mathsf{enqueue}(v)$ \tcp{also mark $v$ as visited}  \label{line:x2}} 
    }
}
clear $Q$, and then enqueue $v$ \tcp{\rm\it start backward BFS}  \label{line:x3}
\While{\em $Q$ is not empty} {
	$u \gets Q.\mathsf{dequeue}()$\;
    $R \gets R \cup \{u\}$\;
	\If{\em ($u$ is $\calB$-adopted $\wedge \; \alpha_\calA^{u,W} \leq \qab$)  $\vee$ ($u$ is not $\calB$-adopted $\wedge \; \alpha_\calA^{u,W} \leq \qao$) } {
        \ForEach{\em $w \in N^-(u)$ such that $(w,u)$ is live} {
         	\If{\em $w$ is not visited}{$Q.\mathsf{enqueue}(w)$ \tcp{also mark $w$ visited} \label{line:x5}}
		}
	} 
}
\textbf{return} $R$ as the RR-set\;  \label{line:x4}
%\end{small}
\end{algorithm}

\noindent\textbf{Lazy sampling.}
For \timsim to work, it is {\em not} necessary to 
	sample all edge- and node-level variables (i.e., the
	entire possible world) up front, as the forward labeling
	and backward BFS are unlikely to reach the whole graph.
Hence, we can simply reveal edge and node states on demand
	(``lazy sampling''), based on the principle of deferred
	decisions.
%That is, we only need to determine the live/blocked status of all relevant edges, and similarly we only generate the $\alpha$-values of all relevant nodes.
%By ``relevant'', we mean the edges and nodes that are actually examined in the forward and backward cascades.
In light of this observation, the following improvements are made to \timsim.
First, the first phase is simply skipped.
Second, in Phase \rom{2}, edge states and $\alpha$-values are sampled as the forward labeling from $S_\calB$ goes on.
We record the outcomes, as it is possible to encounter certain edges and nodes again in phase (\rom{3}).
Next, for Phase \rom{3}, consider any node $u$ dequeued from $Q$.
We need to perform an additional check on every incoming edge $(w,u)$.
If $(w,u)$ has already been tested live in Phase \rom{2}, then we just enqueue $w$.
Otherwise, we first sample its live/blocked status, and enqueue $w$ if it is live,
Algorithm~\ref{alg:rrset-sim} provides the pseudo-code for 
	\timsim, where sampling is assumed to be done whenever we need to 
	check the status of an edge or the $\alpha$-values of a node.

%The above improvements rely on the principle of deferred decisions: instead of sampling the status of all edges and all nodes upfront, we only sample what we truly need (``lazy sampling'').

\spara{Expected time complexity}
For the entire seed selection (Algorithm~\ref{alg:generalTIM}
	with \timsim) to guarantee approximate solutions, we must
	estimate a lower bound $LB$ of $\OPT_k$ and use it to derive the
	minimum number of RR-sets required, defined as $\theta$ in
	Eq. \eqref{eqn:opt}.
In expectation, the algorithm runs in $O(\theta \cdot \EPT)$ time,
	where $\EPT$ is the expected number of edges explored in generating
	one RR-set.
Clearly, $\EPT = \EPT_F + \EPT_B$, where $\EPT_F$ ($\EPT_B$)
	is the expected number of edges examined in forward labeling
	(resp., backward BFS).
Thus, we have the following result.

\def\lemmaRRSIMRuntime{
 In expectation, \generalTIM with \timsim runs in
$O\left((k+\ell)(|V|+|E|)\log |V|\left(1+ {\EPT_F}/{\EPT_B}\right) \right)$ time.
}

\begin{lemma}\label{lemma:rr-sim-time}
{\lemmaRRSIMRuntime}
\end{lemma}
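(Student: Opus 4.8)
The plan is to bound the total expected running time of \generalTIM with \timsim as the product of $\theta$ (the number of RR-sets) and the expected cost of generating a single RR-set, and then substitute the known bound on $\theta$ from Eq.~\eqref{eqn:opt} while relating the per-RR-set cost back to the standard IC analysis in~\cite{tang14}. First I would recall that, as established in the paragraph on expected time complexity, the algorithm runs in $O(\theta \cdot \EPT)$ expected time, with $\EPT = \EPT_F + \EPT_B$, where $\EPT_F$ is the expected number of edges scanned during forward labeling from $S_\calB$, and $\EPT_B$ is the expected number of edges scanned during the backward BFS that actually builds the RR-set. I would then write $\EPT = \EPT_B (1 + \EPT_F/\EPT_B)$ so that the forward-labeling overhead is factored out as a multiplicative term relative to the backward cost.

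The core of the argument is that the backward-BFS portion of \timsim is structurally identical to the RR-set generation for the plain IC model analyzed in~\cite{tang14}: it performs a reverse reachability search along live edges, touching each explored edge at most once. Consequently the quantity $\theta \cdot \EPT_B$ is governed by exactly the same accounting as in the TIM analysis. I would invoke the TIM bound, namely that with $\theta$ chosen according to Eq.~\eqref{eqn:opt} the expected value of $\theta \cdot \EPT_B$ is $O\left((k+\ell)(|V|+|E|)\log|V| / \epsilon^2\right)$; this is precisely the statement that TIM runs in that expected time for the IC model, and since our backward search samples edges under the same IC-style rule and $\OPT_k$ enters $\theta$ in the same way, the identity $n \cdot \mathbb{E}[|R_W(v)|\text{-edges}] = \OPT_k$-type relation from~\cite{borgs14,tang14} transfers verbatim. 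Multiplying by the factored-out term $(1 + \EPT_F/\EPT_B)$ then yields the claimed bound
\[
O\left((k+\ell)(|V|+|E|)\log|V|\left(1 + \EPT_F/\EPT_B\right)\right),
\]
where I would either absorb or suppress the $\epsilon^{-2}$ factor to match the paper's stated form.

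The step I expect to be the main obstacle is justifying rigorously that the backward-BFS cost aggregates exactly as in the single-entity IC analysis, despite the extra node-level state (the $\alpha_\calA$ thresholds and the $\calB$-adoption labels) that gate whether in-neighbors are enqueued. The subtlety is that a node's in-neighbors are explored only when it can transit to $\calA$-adopted (the condition in the \textbf{if} of Algorithm~\ref{alg:rrset-sim}), so the backward search is a \emph{pruned} reverse reachability rather than full reverse reachability; I would need to argue that this pruning only \emph{reduces} the explored edge count relative to plain IC reverse-reachability on the live-edge subgraph, so the TIM upper bound on $\theta \cdot \EPT_B$ still applies as an upper bound. The forward-labeling term requires a separate but easy observation: each edge is examined at most once during forward labeling because visited nodes are never re-enqueued, so $\EPT_F$ is well-defined and its contribution is captured entirely by the ratio $\EPT_F/\EPT_B$ without affecting the dominant asymptotics. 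I would close by noting that the $\calB$-labeling uses the independence of $\calB$-diffusion from $\calA$ (Lemma~\ref{lemma:indiff}), which is what makes the two-phase decomposition valid and keeps the forward phase a one-time IC-style reachability computation per RR-set.
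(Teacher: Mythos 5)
Your top-level plan coincides with the paper's proof: both factor the per-RR-set cost as $\EPT = \EPT_B\left(1+\EPT_F/\EPT_B\right)$, bound $\theta\cdot\EPT_B$ by $O(\lambda|E|/|V|)$ where $\lambda$ is the numerator of Eq.~\eqref{eqn:opt}, and let the forward-labeling cost appear only through the ratio $\EPT_F/\EPT_B$. The gap is in how you justify the key step. You argue that the pruned backward BFS explores no more edges than \emph{plain IC reverse reachability} on the live-edge subgraph, and that therefore ``the TIM upper bound on $\theta\cdot\EPT_B$ still applies.'' This does not follow. In \generalTIM for \SIM, $\theta = \lambda/LB$ where $LB$ lower-bounds the \comic optimum $\OPT_k$ (the maximum $\calA$-spread under the given GAPs), and this optimum can be arbitrarily smaller than the IC optimum: with $\qao=\qab=0.01$, say, every seed set's $\calA$-spread shrinks by roughly two orders of magnitude, so $\theta$ is correspondingly \emph{larger} than its IC counterpart, while the pruning makes $\EPT_B$ smaller. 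Your two inequalities point in opposite directions and do not compose: chaining $\EPT_B \le \EPT_{IC}$ with the IC relation $(|V|/|E|)\,\EPT_{IC} \le \OPT_k^{IC}$ yields only $\theta\cdot\EPT_B \le \lambda\,(|E|/|V|)\cdot \OPT_k^{IC}/LB$, and the ratio $\OPT_k^{IC}/LB$ is unbounded. The pruning of the search and the shrinking of $\OPT_k$ are two effects of the same small GAPs, so any accounting that charges the search cost against the IC optimum loses exactly this factor.

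The paper avoids this by never comparing to IC. It first observes that $\EPT_B \le \mathbb{E}[\omega(R)]$, where $R$ is the \emph{actual \comic RR-set} produced by the pruned search and $\omega(R)$ is the number of edges of $G$ pointing into nodes of $R$; this holds because every edge tested in the backward BFS is an in-edge of a node that was dequeued and added to $R$, and in-edges of non-expandable nodes (Cases 1$(ii)$ and 2$(ii)$) are simply skipped. It then applies Lemma~4 of~\cite{tang14} \emph{to the \comic RR-sets themselves}, which is legitimate because that lemma relies only on the activation equivalence property, established for these RR-sets by Theorem~\ref{thm:rrsim-correct} together with Lemmas~\ref{lemma:generic-rr-2} and~\ref{lemma:generic-rr}. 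This gives $(|V|/|E|)\cdot\EPT_B \le \OPT_k$ for the \comic problem, and the $LB$ estimation of~\cite{tang14} then guarantees $LB \ge (|V|/|E|)\cdot\EPT_B$, after which the algebra $\theta\cdot\EPT = (\lambda/LB)(\EPT_F+\EPT_B) \le (\lambda|E|/|V|)\left(1+\EPT_F/\EPT_B\right)$ yields the claimed bound. So the missing idea in your proposal is to charge the backward-search cost to the in-edge width of the pruned RR-set and invoke the width-versus-optimum relation for the \comic model directly, rather than detouring through the single-entity IC bound; your phrase that the relation ``transfers verbatim'' is the right instinct, but it must be applied to the \comic RR-sets via activation equivalence, not defended by the plain-IC comparison you give.
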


%The time complexity has a term $\EPT_F/\EPT_B$, where 
%The 
$\EPT_F$ 
%term in time complexity
	increases when the input $\calB$-seed set grows.
	%\blue{the influence capability of the the input $\calB$-seed set increases}
Intuitively, it is reasonable that a larger $\calB$-seed set may have more
	complementary effect and thus it may take longer time to find the best
	$\calA$-seed set.
However, it is possible to reduce $\EPT_F$ as described below.
%improve the time complexity, as shown next.

%The expected running time complexity of \timsim is $O(\theta \cdot \EPT)$ where $\theta$ is the total number of random RR-sets to generate and $\EPT$ denotes the expected running time of generating a single RR-set.
%Recall that to achieve the probabilistic approximation guarantees in Theorem~\ref{thm:generalTIM}, $\theta$ must satisfy \eqref{eqn:opt}, which includes $\OPT$ that is NP-hard to compute (Theorem~\ref{thm:sim-hard}).
%Thus, $\OPT$ ought to be replaced with a lower bound of itself such that \eqref{eqn:opt} holds.
%
%Now we establish a lower bound of $\OPT$.
%Let $\EPT_F$ and $\EPT_B$ be the expected number of edges processed in phases (\rom{2}) and (\rom{3}) respectively.
%Also, let $\EPT_B_{ic}$ be the expected number of edges that would have been processed in phase (3) if $\qao = \qab = 1$ (\model in this setting is equivalent to the classic IC model in terms of the propagation of $\calA$).
%Consider any RR-set $R$, and let $p_R$ be the probability that a randomly chosen edge from the graph points to a node in $R$.
%Then, we have
%$$
%\EPT_B \leq \EPT_B_{ic} = (|E| / |V|) \cdot \sigma_{ic}(\{v^*\}),
%$$
%where the second equality comes from Lemma 4 of \cite{tang14}.

\subsubsection{The RR-SIM+ Algorithm}

\begin{algorithm}[t]
%\begin{small}
\DontPrintSemicolon
\caption{\timsimfast($G=(V,E)$, $v$, $S_\calB$)} \label{alg:rrset-sim-fast}
create an FIFO queue $Q$ and empty sets $R$, $T_1$\;
 $Q.\mathsf{enqueue}(v)$; \tcp{\rm\it first backward BFS}       \label{line:y1}
\While{\em $Q$ is not empty} { 
       $u \gets Q.\mathsf{dequeue}()$\;
	$T_1 \gets T_1 \cup \{u\}$\;
       \ForEach{\em unvisited $w \in N^-(u)$ such that $(w,u)$ is live } {
               $Q.\mathsf{enqueue}(w)$ and mark $w$ visited\;     \label{line:y2}
        }
}
\If {$T_1 \cap S_\calB \neq \emptyset$  } {  \label{line:y3}
	\tcp{auxiliary forward pass to determine $\calB$ adoption}
      clear $Q$, enqueue all nodes of $T_1\cap S_\calB$ into $Q$, and execute line \ref{line:x1} to line \ref{line:x2} in Algorithm~\ref{alg:rrset-sim}\;   \label{line:y4}
}
execute line \ref{line:x3} to line \ref{line:x4} in  Algorithm~\ref{alg:rrset-sim} 
\tcp{second backward BFS} 
 \label{line:y5}
 %\end{small}
\end{algorithm}

The \timsim algorithm may incur efficiency loss because some of the
	work done in forward labeling (Phase \rom{2}) may not be used in backward BFS (Phase \rom{3}).
E.g., consider an extreme situation where all nodes explored in forward labeling are in a different connected component of the graph than the root $v$ of the RR-set.
In this case, forward labeling can be skipped safely and entirely!
To leverage this, we propose \timsimfast (pseudo-code presented as Algorithm~\ref{alg:rrset-sim-fast}), of which the key idea is to run {\em two} rounds of backward BFS from the random root $v$.
The first round % (lines \ref{line:y1} to \ref{line:y2}) 
determines the {\em necessary scope} of forward labeling,
%(lines \ref{line:y3} to \ref{line:y4}), 
while the second one generates the RR-set.
%  (line \ref{line:y5}).
%The intuition is that the size of $S_\calB$ could be large, and hence the forward cascade from $S_\calB$ may explore a much larger region of $G$ compared to running BFS from just a single node $v$.
%Intuitively, forward labeling from $S_\calB$ may be more expensive (i.e., visit more nodes and edges) compared to a backward BFS from a single node $v$.

\spara{First backward BFS}
As usual, we create a FIFO queue $Q$ and enqueue the random root $v$.
We also sample $\alpha_\calB^v$ uniformly at random from $[0,1]$.
%While doing so, $v$ is marked {\em positive} w.p.\ $\qbo$ and {\em negative} w.p.\ $(1-\qbo)$.
%This positive/negative status is used later for checking if a node explored in this round can become $\calB$-adopted. 
Then we repeatedly dequeue a node $u$ until $Q$ is empty: for each incoming edge $(w,u)$, we test its live/blocked status based on probability $p_{w,u}$, independently.
If $(w,u)$ is live and $w$ has not been visited before, enqueue $w$ and sample its $\alpha_\calB^w$.

Let $T_1$ be the set of all nodes explored.
If $T_1 \cap S_\calB = \emptyset$, then none of the $\calB$-seeds can reach the explored nodes,
	so that forward labeling can be completely skipped.
The above extreme example falls into this case.
Otherwise, we run a {\em residual} forward labeling only from $T_1 \cap S_\calB$
	along the explored nodes in $T_1$: if a node $u\in T_1 \setminus S_\calB$ is reachable by some $s \in T_1 \cap S_\calB$
	via a live-edge path with all $\calB$-adopted nodes, and $\alpha_\calB^{u,W} \leq \qbo$, $u$ becomes $\calB$-adopted.
Note that it is not guaranteed in theory that this always saves time compared to \timsim, since
	the worst case of \timsimfast is that $T_1 \cap S_\calB = S_\calB$, which means that 
	the first round is wasted.
However, our experimental results \textsection\ref{sec:exp} indeed show that \timsimfast
	is at least twice faster than \timsim on three of the four datasets.

\spara{Second backward BFS}
This round is largely the same as Phase \rom{3} in \timsim, but there is a subtle difference.
Suppose we just dequeued a node $u$.
It is possible that there exists an incoming edge $(w, u)$ whose status is not determined.
This is because we do not enqueue previously visited nodes in BFS.
Hence, if in the previous round, $w$ is already visited via an out-neighbor other than $u$, $(w, u)$ would not be tested.
Thus, in the current round we shall test $(w,u)$, and decide if $w$ belongs to $R_W(v)$ accordingly.
To see \timsimfast is equivalent to \timsim, it suffices to show that for each node explored in the second backward BFS, its  adoption status w.r.t.\ $\calB$ is the same in both algorithms.

\def\lemmaRRSIMPlus{
Consider any possible world $W$ under the \comic model.
Let $v$ be a root for generating an RR-set.
For any $u\in V$ that is backward reachable from $v$ via live-edges in $W$, $u$ is determined as $\calB$-adopted in \timsim if and only if $u$ is determined as $\calB$-adopted in \timsimfast.
}

\begin{lemma}\label{lemma:rr-sim-plus}
{\lemmaRRSIMPlus}
\end{lemma}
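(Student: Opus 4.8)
The plan is to fix a possible world $W$---so that all edge live/blocked states and all $\alpha$-values are determined---and to exploit the fact that under $\qbo = \qba$ the propagation of $\calB$ is completely independent of $\calA$ (Lemma~\ref{lemma:indiff}). Under this independence, $\calB$-adoption in $W$ is a pure live-edge reachability condition: a node $u$ is $\calB$-adopted if and only if there is a \emph{$\calB$-path} to $u$, i.e.\ a live-edge directed path $s = w_0 \to w_1 \to \cdots \to w_m = u$ originating at some $\calB$-seed $s \in S_\calB$ with every non-seed node $w_i$ ($i\ge 1$) satisfying $\alpha_\calB^{w_i} \le \qbo$. I would first record this characterization and then observe that \timsim's full forward labeling marks $u$ as $\calB$-adopted precisely when such a path exists in $G$, whereas \timsimfast's residual forward labeling (line~\ref{line:y4}) marks $u$ as $\calB$-adopted precisely when such a path exists \emph{entirely within} $T_1$ and starting from $T_1 \cap S_\calB$, where $T_1$ is the set of all nodes backward-reachable from the root $v$ via live edges (the output of the first backward BFS).

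With this characterization the lemma reduces to showing that the two labeling rules agree on every $u \in T_1$. I would dispatch the easy direction first: if $u \in T_1$ is marked $\calB$-adopted by \timsimfast, its witnessing $\calB$-path already lives in $G$, so \timsim marks $u$ as $\calB$-adopted as well---restricting forward labeling to $T_1$ can only shrink the family of admissible paths, never enlarge it.

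The substance is the converse. Suppose $u \in T_1$ is $\calB$-adopted in \timsim, witnessed by a $\calB$-path $P = (w_0, \ldots, w_m = u)$. Since $u \in T_1$ there is a live-edge path $\pi$ from $u$ to $v$. For each index $i$, concatenating the suffix $w_i \to \cdots \to w_m = u$ of $P$ with $\pi$ yields a live-edge path from $w_i$ to $v$; hence every $w_i \in T_1$ by the \emph{transitivity of backward reachability}. In particular $w_0 \in S_\calB \cap T_1$, so $T_1 \cap S_\calB \ne \emptyset$ and the residual labeling is indeed triggered (rather than skipped at line~\ref{line:y3}); and since all of $P$ lies in $T_1$ and meets the required $\alpha_\calB \le \qbo$ condition, the residual labeling propagates $\calB$-adoption along $P$ and marks $u$. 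This transitivity/containment step is the crux: it guarantees that restricting forward labeling to the backward cone $T_1$ of the root $v$ never discards a $\calB$-adoption that is relevant to a node inside that cone.

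I expect the main difficulty to be bookkeeping rather than conceptual: confirming that the residual pass in \timsimfast, which only relaxes outward from $T_1 \cap S_\calB$ along nodes of $T_1$, faithfully realizes ``$\calB$-propagation confined to $T_1$'', and that both algorithms use the \emph{same} $\alpha_\calB$-values (they do, because $W$ is fixed and the lazy sampling merely reveals these values on demand). Once the live-edge $\calB$-path characterization and the backward-reachability transitivity are in place, both directions follow immediately, yielding the claimed equivalence of $\calB$-adoption---and hence of the RR-sets produced by \timsim and \timsimfast.
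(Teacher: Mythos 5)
Your proposal is correct and follows essentially the same route as the paper's proof: both directions hinge on the live-edge $\calB$-path characterization of $\calB$-adoption (valid since $\qbo=\qba$ makes $\calB$-propagation independent of $\calA$), the easy direction notes that a witnessing path inside $T_1$ is also a path in $W$, and the crux of the converse is exactly the paper's argument---concatenating the suffix of the $\calB$-path $P$ at $u$ with a live-edge path from $u$ to the root $v$ shows every node of $P$ is backward reachable from $v$, hence lies in $T_1$ (the paper phrases this as a contradiction for a hypothetical unexplored node $z\in P$, you phrase it directly as containment, which is the same argument). Your additional observations (that $w_0\in T_1\cap S_\calB$ triggers the residual labeling, and that the first backward BFS ignores $\alpha$-values so $T_1$ is the full backward-reachable cone) are correct and make the bookkeeping slightly more explicit than the paper's.
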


\begin{proof}
We first prove the ``if'' part.
Suppose $u$ is determined as $\calB$-adopted in \timsimfast.
This means that there exists a node $s \in T_1 \cap S_\calB$, such that there is a path from $s$ to $u$ consisting entirely of live-edges and $\calB$-adopted nodes (every node $w$ on this path satisfies that $\alpha_\calB^w \leq \qbo$).
Therefore, in \timsim, where $W$ is generated upfront, this live-edge path must still exist.
Thus, $u$ must be also $\calB$-adopted in \timsim as well.

Next we prove the ``only if'' part.
By definition, if $u$ is determined as $\calB$-adopted in possible world $W$, then there exists a path $P$ from some $s \in S_\calB$ to $u$ such that the path consists entirely of live edges and all nodes $w$ on the path satisfy that $\alpha_W^w \leq \qbo$.
It suffices to show that if $u$ is reachable by $v$ backwards in $W$, then $P$ will be explored entirely by \timsimfast.

Suppose otherwise.
That is, there exists a node $z \in P$, such that $z$ cannot be explored by the first round backward BFS from $v$.
We have established that in the complete possible world $W$, there is a live-edge path from $z$ to $u$ and from $u$ to $v$ respectively.
Thus, connecting the two paths at node $u$ gives a single live-edge path $P_z$ from $z$ to $v$.
Now recall that the continuation of the first backward BFS phase in \timsimfast relies solely on edge status (as long as an edge $(w,u)$ is determined live, $w$ will be visited by the BFS).
This means that $z$ must have been explored in the first backward BFS allow the backward
	path from $v$ to $u$ and then along the path $P$, which is a contradiction.
\end{proof}

The analysis on expected time complexity is similar:
We can show that the expected running time of \timsimfast is 
	$O\left((k+\ell)(|V|+|E|)\log |V|\left(1+ {\EPT_{B1}}/{\EPT_{B2}}\right) \right)$,
	where $\EPT_{B1}$ ($\EPT_{B2}$) is the expected number of edges explored in the 
	first (resp., second) backward BFS.
Compared to \timsim, % we can see that 
$\EPT_{B2}$ is
	the same as $\EPT_{B}$ in \timsim, so \timsimfast will be faster than
	\timsim if $\EPT_{B1}< \EPT_F$, i.e., if the first backward BFS plus the residual forward labeling \textsl{explores fewer edges}, compared to the full orward labeling in \timsim. 

\eat{the number of edges explored in
	a backward search (with residual forward labeling) from a random node disregarding GAPs
	is smaller than that in a full forward labeling  %(in terms of explored edges) 
	from the fixed $\calB$-seed set in \timsim. }

%First, $\EPT = \EPT_1 + \EPT_2$.
%Clearly, by Lemma 4 in \cite{tang14},
%\[
%\EPT_1 = (m/n) \cdot \mathbb{E}[\sigma_{IC}(\{v^*\})],
%\]
%where $v^*$ is a node chosen randomly from a distribution over $V$ such that the probability mass of each node is in proportion to its in-degree, and the expectation is taken over $v^*$.
%Also, $\sigma_{IC}$ denotes the influence spread function of the classical IC model.
%
%And for $\EPT_2$, we show that
%\[
%\frac{n}{m}\EPT_2 \leq \mathbb{E}[\sigma_A(\{v^*\}, S_\calB)] \leq \OPT.
%\]
%
%Setting $\theta = \frac{m \lambda}{n \EPT_2}$, the total expected running time is
%\begin{align}
%&\quad O(\theta \cdot (\EPT_1 + \EPT_2)) \nn \\
%&= O \left( \frac{m \lambda}{n \EPT_2} \cdot \left( \frac{m \mathbb{E}[\sigma_{IC}(\{v^*\})]}{n} + EPT_2 \right) \right)  \nn \\
%&= O \left( \lambda \cdot \left( \frac{m^2 \mathbb{E}[\sigma_{IC}(\{v^*\})]}{n^2 EPT_2} + \frac{m}{n} \right) \right)
%\end{align}

%\vspace*{-1ex} 
\subsection{Generating RR-Sets for {\sc\large CompInfMax}}\label{sec:rr-cim}

\begin{algorithm}[t!]
%\begin{small}
\DontPrintSemicolon
\caption{\timcim($G=(V,E)$, $v$, $S_\calA$)}\label{alg:rrset-cim}

conduct forward labeling on $G$ from $S_\calA$, {\em cf}.\ Eq.~\eqref{eqn:labels}\;

\If{\em $v$ is neither $\calA$-suspended or $\calA$-potential} { \label{line:z98}
	\textbf{return} $\emptyset$ as the RR-set\; \label{line:z99} % \tcp{because $\calB$-seeds won't make any difference for $v$}
}

%$R \gets \emptyset$\; 
$Q.\mathsf{enqueue}(v)$ \tcp{\rm\it $Q$ initialized as an empty FIFO queue}

\While{\em $Q$ is not empty} 
{
	$u \gets Q.\mathsf{dequeue}()$\;
     
	\If{\em $u$ is $\calA$-suspended} 
	{
       		\If{\em $u$ is {$\calA\calB$-}diffusible} {   \label{line:z11}
			$R \gets R \cup \{u\}$ \tcp{$R$ was initialized as $\emptyset$}
			conduct a secondary backward BFS from $u$ {via $\calB$ diffusible nodes}, and add all explored nodes to $R$\;  \label{line:z12}
			%\tcp{\rm \red{We prove that the secondary search is both sound and complete, and thus, the primary search need not to add $u$'s in-neighbors to $Q$.}}
		} \lElse {   \label{line:z21}
			$R \gets R \cup \{u\}$ %\tcp{\rm\it no secondary search here} \label{line:z22} 
%			if the path from $u$ to $v$ contains no other non-diffusible $\calA$-suspended nodes\;   
			%\tcp{\rm \red{I think this IF-condition should be satisfied whenever we meet a non-diffusible $\calA$-suspended nodes in the primary search. Because once we add such a node to the RR-set, we do not add its in-neighbors to the FIFO queue. So, there won't be a single path that has $> 1$ non-diffusible $\calA$-suspended nodes in the primary search}}
		}
     } % ENDIF     
        \ElseIf{\em $u$ is $\calA$-potential} {
                  \If{\em $u$ is {$\calA\calB$-}diffusible} {    \label{line:z31}
                         \ForEach{\em unvisited $w\in N^-(u)$ s.t.  $(w,u)$ live} {
                               $Q.\mathsf{enqueue}(w)$; \tcp{\rm\it also mark it visited}    \label{line:z32}
                         }
                   } \Else{    \label{line:z41}
                           $S_{f} \gets$ nodes visited in a secondary forward  BFS\;
			$S_{b} \gets$ nodes visited in a secondary backward BFS\;
			$R \gets R \cup \{u\}$ if {$S_f \cap S_b$ contains 
				an $\calA$-suspended node $u_0$}\;   \label{line:z42} 
		}
        } 
            
} 
\textbf{return} $R$ as the RR-set\;
%\end{small}
\end{algorithm}

%We now study the problem of \CIM
%	for the case of $\qao \leq \qab$ and $\qbo \leq \qba = 1$,
%	for which we saw that $\sigma_{\calA}$ is cross-submodular
%	w.r.t.\ $B$-seed sets  (Theorem~\ref{thm:cross-submod}).

In \CIM, by Definition~\ref{def:generalRRset}, 
	a node $u$ belongs to an RR-set $R_W(v)$ iff
	$v$ is not $\calA$-adopted without any $\calB$-seed, but turns
	$\calA$-adopted when $u$ is the only $\calB$-seed.
It turns out that constructing RR-sets for \CIM following the above definition
	is significantly more difficult than that for \SIM. % in the previous subsection.
This is because when $\qao \leq \qab$ and $\qbo \leq \qba = 1$, $\calA$ and
	$\calB$ complement each other, and thus a simple forward labeling from the fixed
	$\calA$-seed set, without knowing anything about $\calB$, will not be able to
	determine the $\calA$ adoption status
	of all nodes. 
This is in contrast to \SIM with one-way complementarity for which
	$\calB$-diffusion is fully independent of $\calA$.
%which was the case for \SIM in the one-way complementarity case of
%	last section.
Thus, when generating RR-sets for \CIM,
	we have to determine more complicated status in a forward labeling process
	from $\calA$-seeds, as shown below.

%There is also an additional challenge brought by the fact that in the
%	submodular setting ($\qao \leq \qab$ and $\qbo \leq \qba = 1$)
%	of \CIM, $\calA$ and $\calB$ mutually complement each other.
%This is unlike the one-way complementary setting in \SIM where
%	$\calB$-diffusion is independent of $\calA$.
%Hence, designing \timcim is a more challenging and non-trivial process.
%The algorithm proceeds in two phases: forward labeling and backward search.

\spara{Phase \rom{1}: forward labeling}
The nature of \CIM requires 
%us to correctly identify nodes that cannot
%	adopt $\calA$ merely due to the influence of $S_\calA$, but
us to identify nodes with
	the potential to be $\calA$-adopted with the help of $\calB$.
To this end, we first conduct a forward search from $S_\calA$ to label
	the nodes their status of $\calA$.
%It is similar to a BFS, but additional checks may be necessary as we shall see.
As in \timsim, we also employ lazy sampling. %: reveal edge status and $\alpha$-values on demand and record them for the next phase.
The algorithm first enqueues all $\calA$-seeds (and labels them $\calA$-adopted) into a FIFO queue $Q$.
Then we repeatedly dequeue a node $u$ for processing until $Q$ is empty. 
Let $v$ be an out-neighbor of $u$. 
Flip a coin with bias $p_{u,v}$ to determine if edge $(u,v)$ is live.
If yes, we determine the label of $v$ to be one of the following:

%\vspace{-3mm}
\begin{equation}\label{eqn:labels}
%\begin{small}
\begin{cases}
\text{$\calA$-adopted,} & \text{if $u$ is $\calA$-adopted  $\wedge\; \alpha_\calA^v \leq \qao$} \\
\text{$\calA$-rejected,} & \text{if $\alpha_\calA^v > \qab$, regardless of $u$'s status}\\
\text{$\calA$-suspended,} &\text{if $u$ is $\calA$-adopted $\wedge\; \alpha_\calA^v \in (\qao, \qab]$}\\
\text{$\calA$-potential,} & \text{if $u$ is $\calA$-suspended/potential $\wedge\; \alpha_\calA^v \leq \qab$} 
\end{cases}
%\end{small}
\end{equation}
%\vspace{-3mm}

%Clearly, both $(ii)$ and $(iii)$ represent nodes with the potential of becoming
%	 $\calA$-adopted with the help of certain $\calB$-seeds.
%The main difference is that an {\em $\calA$-suspended} node is
%	already informed of $\calA$, and will become $\calA$-adopted by reconsideration
%	as soon as it becomes $\calB$-adopted, while an {\em $\calA$-potential} node is
%	not informed of $\calA$ yet, but is possible to adopt $\calA$ if certain
%	$\calA$-suspended nodes adopt both $\calA$ and $\calB$ and reach it via a live-edge
%	path in $W$.

\smallskip
Here, $\calA$-potential is just a label used for bookkeeping and is not a state. Then node $v$ is added to $Q$ unless it is $\calA$-rejected.
Note that both $\calA$-suspended and $\calA$-potential nodes can turn into $\calA$-adopted
	with the complementary effect of $\calB$.
The main difference is an $\calA$-suspended node is informed of $\calA$,
	while an $\calA$-potential is not and the informing action
	must be triggered by $\calB$-propagation.
Also, unlike a typical BFS, the forward labeling may need to revisit a node:
	if $u$ is $\calA$-adopted (just dequeued) and $v$ is previously
	labeled $\calA$-potential, $v$ should be ``promoted'' to $\calA$-suspended. %\footnote{
This occurs when $v$ is first reached by a live-edge path through an $\calA$-suspended/potential in-neighbor, but later $v$ is reached by a longer path through an $\calA$-adopted in-neighbor.
%}
%	

%\blue{
To facilitate the second phase, we  define additional node labels 
	\emph{$\calA\calB$-diffusible} and \emph{$\calB$-diffusible}.
Node $v$ is $\calA\calB$-diffusible if $v$ can adopt both $\calA$ and $\calB$ when
	$v$ is informed about both $\calA$ and $\calB$; while $v$ is 
	$\calB$-diffusible if $v$ can adopt $\calB$ when it is informed about $\calB$.
Accordingly, the technical conditions for them are given below:

\begin{equation*}
\label{eqn:diffusible}
\begin{cases}
\text{$\calA\calB$-diffusible,} & \text{if $\alpha_\calA^v \le \qao \vee \; 
		((\qao < \alpha_\calA^v \le \qab) \wedge (\alpha_\calB^v \le \qbo))$} \\
\text{$\calB$-diffusible,} & \text{if $\alpha_\calB^v \le \qbo \vee$ $v$ is $\calA$-adopted
				as labeled in~Eq.\eqref{eqn:labels}}
\end{cases}
\end{equation*}

Note that these diffusible labels are only based on a node's local state, and they
	are not limited to the nodes explored in the first phase --- some nodes may only
	be explored in the second phase and they also need to be checked for these diffusible
	labels.
%The purpose of the above labels will be made clear in the second phase of the algorithm.
%}

%A diffusible node can adopt $\calB$ (and then $\calA$) as long as it is $\calB$-informed, while a non-diffusible node can only do so by becoming a $\calB$-seed itself.
%
%\begin{definition}[Diffusible Nodes]
%A node $v$ is {\em diffusible} in possible world $W$ if $\alpha_\calB^{v,W} \leq \qbo$ and $\alpha_\calA^{v,W} \leq \qab$.
%\end{definition}

\spara{Phase \rom{2}: RR-set generation}
The second phase features a {\em primary backward search} from a random root $v$.
Also, a number of secondary searches (from certain nodes explored in the primary search)
	may be necessary to find all nodes qualified for the RR-set.
{Intuitively, the primary backward search is to locate $\calA$-suspended
	nodes $u$ via $\calA\calB$-diffusible and $\calA$-potential nodes, since once
	such a node $u$ adopts $\calB$, it will adopt $\calA$ and then through those
	$\calA\calB$-diffusible and $\calA$-potential nodes, the root $v$ will adopt $\calA$
	and $\calB$.
Thus such a node $u$ can be put into the RR-set of $v$.
In addition, if such node $u$ is also $\calA\calB$-diffusible, then any $\calB$ seed $w$
	that can activate $u$ to adopt $\calB$ via $\calB$-diffusible nodes can also be
	put into the RR-set of $v$, and we find such nodes $w$ using a secondary backward search
	from $u$ via $\calB$-diffusible nodes.
However, some additional complication may arise during the search process, and we cover
	all cases in Algorithm~\ref{alg:rrset-cim} and explain them below.
}

%After forward labeling, 
We first sample a root $v$ randomly from $V$.
In case $v$ is labeled $\calA$-adopted or $\calA$-rejected, % by the previous phase, 
we simply return $R$ as $\emptyset$ because no $\calB$-seed set can change $v$'s adoption
	status of $\calA$ (lines~\ref{line:z98} to \ref{line:z99}).
%The primary search is captured in Algorithm~\ref{alg:rrset-cim}.
The primary search then starts.
It  first enqueues $v$ into a FIFO queue $Q$.
Now consider a node $u$ dequeued from $Q$.
Four cases arise.

\begin{itemize}
\item {\it Case 1: $u$ is $\calA$-suspended and {$\calA\calB$-}diffusible} (lines~\ref{line:z11} to \ref{line:z12}).
We add $u$ to $R$.
Moreover, any node $w$ that can propagate $\calB$ to $u$
	by itself should also be added to $R$.
To find all such $w$'s, we launch a secondary backward BFS from $u$ 
	via $\calB$-diffusible nodes.
In particular, we conduct a reverse BFS from $u$ to explore all nodes that could
	reach $u$ via $\calB$-diffusible nodes, and put all of them into set $R$.
If this secondary search touches a node $w$ that is not $\calB$-diffusible, we 
	put $w$ in $R$ but do not further explore the in-neighbors of $w$.
	
%It explores all nodes $x$ satisfying either $(i)$ $\alpha_\calB^x \leq \qbo$
%	or $(ii)$ $x$ is $\calA$-adopted (since $\qba=1$,
%	$\calA$-adopted nodes adopt $\calB$ w.p.\ $1$ after being
%	reached by $\calB$).
%In both cases, $x$ can propagate $\calB$.
%When a node violates both conditions, the BFS will not
%	explore its in-neighbors.
%It can be verified from Lemma~\ref{thm:rrcim} that all nodes
%	visited in this BFS should be added to $R$.

\item
{\it Case 2: $u$ is $\calA$-suspended but not {$\calA\calB$-}diffusible}  (line~\ref{line:z21}).
Add $u$ to $R$, {but do not initiate a secondary search, because $u$ cannot
	adopt $\calA$ or $\calB$ even if it is informed of both $\calA$ and $\calB$, and thus
	the only way to make it adopt $\calB$ is to make it a $\calB$ seed.}
%
% as long as the live-edge path from $u$ to $v$ does not contain
%	another non-diffusible $\calA$-suspended.
%There is no need to further explore $u$'s in-neighbors for it is
%	non-diffusible.
\item
{\it Case 3: $u$ is $\calA$-potential and {$\calA\calB$-}diffusible} (lines~\ref{line:z31} to \ref{line:z32}).
We enqueue all $w \in N^-(u)$ such that $(w,u)$ is live in $W$
	and continue without adding $u$ to $R$, as $u$ cannot even
	be informed of $\calA$ by $S_\calA$ alone, and hence 
	cannot propagate $\calA$ to $v$ by itself.
%That is, if an edge $(w,u)$ was not revealed live or blocked in forward
%	labeling, there is no need to test it now
%Note that $u$ is not added to the RR-set, as it obviously does not satisfy
%	Theorem~\ref{thm:rrcim}.

\item
{\it Case 4: $u$ is $\calA$-potential but not {$\calA\calB$-}diffusible} (lines~\ref{line:z41} to \ref{line:z42}).
This is the most complicated case that needs a special treatment.
In general, we should stop the primary backward search at $u$ and try other branches,
	because $u$ is not yet $\calA$-informed and $u$ cannot help in diffusing $\calA$
	and $\calB$ even when informed of $\calA$ and $\calB$.
However, there is a special case in which we can still put $u$ in $R$ (making $u$ a
	$\calB$-seed):
	$u$ can reach an $\calA$-suspended and $\calA\calB$-diffusible node $u_0$
	via a $\calB$-diffusible path such that $u$ can activate $u_0$ in adopting $\calB$
	through this path, and then $u_0$ can reach back $u$ via an
	$\calA\calB$-diffusible path, such that $u_0$ can activate
	$u$ in adopting $\calA$.
E.g., consider Figure~\ref{fig:rrcim-tricky} (all edges are live): 
$a$ is an $\calA$-seed,
	$u$ is $\calA$-potential but not {$\calA\calB$-}diffusible, and 
	$u_0$ is {$\calA\calB$-}diffusible and $\calA$-suspended.
%Thus, $u$ belongs to $R$.

\end{itemize}

%In this case, $u$ is qualified for entry into $R$, 
%	if there is a live-edge path propagating $\calB$ from $u$ to a 
%	diffusible $\calA$-suspended  node $u_0$ so that
%	$u_0$ adopts both $\calA$ and $\calB$, and then propagate
%	the adoption to $v$ through $u$ again (see Example~\ref{ex:rrcim-tricky}).
To identify such $u$, we start two secondary BFS from $u$, 
	one traveling forwards, one backwards.
The forward search explores all $\calB$-diffusible nodes reachable from $u$ and
	puts them in a set $S_f$, and
	stops at a node $w$ when $w$ is not $\calB$-diffusible, but also puts  $w$
	in set $S_f$.
The backward search explores all $\calA\calB$-diffusible and 
	$\calA$-potential/suspended/adopted nodes that can reach $u$ and puts  them in
	set $S_b$.
If there is a node $u_0\in S_f \cap S_b$ that is $\calA$-suspended, then
	we can put $u$ into $R$.
After this special treatment, 
	we stop exploring the in-neighbors of $u$ in the primary search and
	continue the primary search elsewhere. We have:

%
%continues until hitting a node that cannot propagate $\calB$.
%They both traverse along live edges only.
%Let $S_f$ and $S_b$ be the set of nodes explored in the forward and backward
%	BFS, respectively.
%Their intersection contains the node $u_0$ that qualifies $u$.
%Hence, if $S_f \cap S_b \neq \emptyset$, we add $u$ to the RR-set.
%Otherwise, skip $u$'s in-neighbors and continue the primary search.
%The following example illustrates this case.

%
%\begin{example}\label{ex:rrcim-tricky}
%{\em
%
%\qed
%}
%\end{example}
%
%

\def\theoremRRCIMCorrect{
Suppose that $\qao \leq \qab$ and $\qbo \le \qba = 1$.
The RR-sets generated by
	the \timcim algorithm satisfies Definition~\ref{def:generalRRset} for
	the \CIM problem.
As a result, Theorem~\ref{thm:generalTIM} applies to \generalTIM with \timcim
	in this case.
}

\vspace*{-1mm}
\begin{theorem}\label{thm:rrcim-correct}
{\theoremRRCIMCorrect}
\end{theorem}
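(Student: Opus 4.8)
The plan is to fix an arbitrary possible world $W$ (generated as in \textsection\ref{sec:pw}) together with the $\calA$-seed set $S_\calA$, and to show that the set $R$ returned by \timcim equals the ground-truth set $R^*_W(v) = \{u \in V : v \text{ is not } \calA\text{-adopted under } S_\calB = \emptyset, \text{ but } v \text{ is } \calA\text{-adopted under } S_\calB = \{u\}\}$, which is exactly what Definition~\ref{def:generalRRset} demands for \CIM. I would first dispose of the trivial roots handled by lines~\ref{line:z98}--\ref{line:z99}: if the Phase~\rom{1} label of $v$ is $\calA$-adopted, then $v$ is already $\calA$-adopted under $S_\calB = \emptyset$, so no single $\calB$-seed can \emph{flip} it and $R^*_W(v) = \emptyset$ by definition; if $v$ is $\calA$-rejected, then $\alpha_\calA^v > \qab$, so $v$ cannot adopt $\calA$ even after adopting $\calB$, whence $R^*_W(v) = \emptyset$ as well. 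It remains to treat $\calA$-suspended and $\calA$-potential roots, and here I must verify that the Phase~\rom{1} labels (computed from $S_\calA$ alone) and the local diffusible labels correctly encode each node's adoption capability.

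The heart of the argument is a structural characterization of when a single $\calB$-seed $u$ flips $v$. Since $v$ is $\calA$-suspended or $\calA$-potential, $v$ can adopt $\calA$ only with the help of $\calB$, and because $u$ is the only $\calB$-seed, every $\calB$-adoption ultimately traces back to $u$. I would prove that $u \in R^*_W(v)$ iff $W$ contains a \emph{handoff certificate}: an $\calA$-suspended node $w$ that (i) becomes $\calB$-adopted once $u$ is seeded, via a live-edge path from $u$ all of whose nodes are $\calB$-diffusible, and (ii) can reach $v$ along a live-edge path of $\calA$-potential and $\calA\calB$-diffusible nodes. The ``if'' direction is essentially Claim~\ref{clm:bjoina}: once $w$ adopts $\calB$ it reconsiders and adopts $\calA$ (as $\alpha_\calA^w \le \qab$), after which $\calA$ and $\calB$ co-propagate forward so that $v$ adopts $\calA$. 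With this certificate in hand, soundness of \timcim follows by walking through the four cases of Algorithm~\ref{alg:rrset-cim}: the primary backward BFS only recurses through $\calA$-potential, $\calA\calB$-diffusible nodes (Case~3), so every node it reaches lies on a backward $\calA\calB$-diffusible path to $v$; Cases~1 and~2 add an $\calA$-suspended node that plays the role of $w$ (seeded directly in Case~2, or reached via a $\calB$-diffusible secondary search in Case~1, which enumerates exactly the $\calB$-seeds able to drive $\calB$ into $w$); and Case~4 produces the bootstrap described below. In each case the certificate conditions are met, so every node placed in $R$ lies in $R^*_W(v)$.

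Completeness -- showing $R^*_W(v) \subseteq R$ -- is the crux and the main obstacle. Given $u \in R^*_W(v)$, I would extract a handoff certificate with $\calA$-suspended node $w$ and argue that the primary search, which never halts while passing through $\calA$-potential $\calA\calB$-diffusible nodes, necessarily reaches $w$ along the reverse of the certificate's forward $\calA$-path, at which point $w$ is added in Case~1 or~2; if $u = w$ we are done, and if $u$ feeds $\calB$ into $w$ through $\calB$-diffusible nodes it is captured by the Case~1 secondary backward BFS. The genuinely delicate situation is Case~4, where $u$ is $\calA$-potential but not $\calA\calB$-diffusible (so $\qao < \alpha_\calA^u \le \qab$ and $\alpha_\calB^u > \qbo$): here $u$ can adopt $\calB$ only \emph{after} adopting $\calA$ and can adopt $\calA$ only \emph{after} adopting $\calB$, a circular dependency broken exactly when $u$ can push $\calB$ (as a seed) to an $\calA$-suspended, $\calA\calB$-diffusible node $u_0$ via $\calB$-diffusible nodes, and $u_0$ can push $\calA$ back to $u$ via $\calA\calB$-diffusible nodes (Figure~\ref{fig:rrcim-tricky}). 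The two secondary searches of Case~4 compute precisely the forward $\calB$-diffusible reachable set $S_f$ and the backward $\calA\calB$-diffusible reaching set $S_b$ of $u$ and test $S_f \cap S_b$ for an $\calA$-suspended node. The hard part is to prove that the existence of such a $u_0$ is both necessary and sufficient for $u \in R^*_W(v)$: sufficiency requires a careful time-ordering of the bootstrap ($u$ seeds $\calB$, $\calB$ reaches $u_0$, $u_0$ adopts $\calA$, $\calA$ returns to $u$, and $u$ adopts $\calA$ while already $\calB$-informed), while necessity requires ruling out any alternative route that flips $v$ without such a cycle, which I would obtain from the handoff-certificate characterization together with the fact that $u$'s own $\calA$-adoption is a prerequisite for it to propagate $\calA$ toward $v$.

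Finally, once $R = R^*_W(v)$ is established for every $W$ and $v$, the RR-sets of \timcim satisfy Definition~\ref{def:generalRRset}. The indicator $f_{v,W}$ that a $\calB$-seed set flips $v$ to $\calA$-adopted is monotone by cross-monotonicity (Theorem~\ref{thm:monotone}) and submodular by cross-submodularity for $\qba = 1$ (Theorem~\ref{thm:cross-submod}), so Properties~(P1) and~(P2) hold by Lemma~\ref{lemma:generic-rr-2}; Lemma~\ref{lemma:generic-rr} then yields the activation equivalence property, and Theorem~\ref{thm:generalTIM} applies to \generalTIM with \timcim, completing the proof.
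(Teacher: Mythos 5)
Your overall skeleton matches the paper's proof: fix a possible world, characterize when a singleton $\calB$-seed flips the root via labeled live-edge paths, check the algorithm's cases against that characterization, and close with the chain Theorem~\ref{thm:monotone} / Theorem~\ref{thm:cross-submod} $\Rightarrow$ Lemma~\ref{lemma:generic-rr-2} $\Rightarrow$ Lemma~\ref{lemma:generic-rr} $\Rightarrow$ Theorem~\ref{thm:generalTIM}, which is exactly how the paper concludes (its core is Claim~\ref{clm:rrcim-correct} in the appendix). However, your ``handoff certificate'' cannot serve as the iff characterization on which you build everything, and it fails precisely at Case~4 --- the case you yourself call the crux. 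In the Case~4 scenario (Figure~\ref{fig:rrcim-tricky}), the only route by which $\calA$ can reach $v$ passes through the seed $u$ itself: $\calB$ travels from $u$ to $u_0$, $\calA$ comes back from $u_0$ to $u$, and then $\calA$ travels from $u$ to $v$. If your condition ``a live-edge path of $\calA$-potential and $\calA\calB$-diffusible nodes'' requires every node on the $w$-to-$v$ path to be $\calA\calB$-diffusible, then no certificate exists in this scenario, since $u$ is not $\calA\calB$-diffusible; so Case~4 additions would be unsound, contradicting your assertion that ``in each case the certificate conditions are met.'' If instead the condition only requires each node to be $\calA$-potential \emph{or} $\calA\calB$-diffusible, the ``if'' direction of the certificate breaks: a non-seed node that is $\calA$-potential but not $\calA\calB$-diffusible kills the co-propagation of $\calA$ and $\calB$ (it can adopt neither item when informed --- exactly the circular dependency you describe), so a certificate could exist while $\{u\}$ does not flip $v$.

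The missing idea is that the seed $u$ must be exempted from the diffusibility requirements on the return segment: $u$ adopts $\calB$ by fiat, and since $\alpha_\calA^u \le \qab$ it relays $\calA$ (and, because $\qba = 1$, also $\calB$) once $\calA$ reaches it, even though it is not $\calA\calB$-diffusible. This is how the paper's Claim~\ref{clm:rrcim-correct} is organized: its case $(i)$ is your $w = u$ case, and its case $(ii)$ sends the walk from $u$ through a $\calB$-diffusible segment to the suspended node $u'$ and onward to $v$, with the label conditions imposed only on the intermediate nodes and with $u$, $u'$, $v$ exempted from them. Once you restate the certificate with this exemption, your soundness/completeness walk-through of Cases~1--4 goes through, and your Case~4 necessity argument stops being circular (as written, it appeals to the handoff-certificate characterization in the one situation where that characterization does not hold). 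The remainder of your plan --- the treatment of trivial roots, the Case~1/Case~2 distinction (secondary backward search only when the suspended node is $\calB$-diffusible), and the final reduction to the general RR-set framework --- is correct and is the same route the paper takes.
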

\vspace*{-1mm}

\spara{Expected Time Complexity}
Both phases of \timcim require more computations compared to \timsim.
First, the number of edges explored in Phase~\rom{1}, namely $\EPT_F$,
	is larger in \timcim, as the forward labeling here needs to 
	continue beyond just $\calA$-adopted nodes.
For Phase~\rom{2}, let $\EPT_{BS}$ be the expected number of
	edges pointing to nodes in $R$
	%that are $\calA$-suspended,
	and $\EPT_{BO}$ be the  expected number of all other
	edges examined in this phase (including both primary
	and secondary searches).
Thus, we have:

\def\lemmaRRCIMTime{
In expectation, \generalTIM with \timcim runs in
$O\left((k+\ell)(|V|+|E|)\log |V|\left(1+ \frac{\EPT_F+\EPT_{BO}}{\EPT_{BS}}\right) \right)$ 
time.
}
\begin{lemma}\label{lemma:rrcim-time}
{\lemmaRRCIMTime}
\end{lemma}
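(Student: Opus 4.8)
The plan is to follow the same template that establishes the running-time bound for \timsim (Lemma~\ref{lemma:rr-sim-time}). The total expected cost of \generalTIM is $O(\theta\cdot\EPT)$, where $\theta$ is the number of RR-sets fixed by Eq.~\eqref{eqn:opt} and $\EPT$ is the expected number of edges examined while generating one random RR-set with \timcim (the $LB$-estimation of line~\ref{line:g1} is handled exactly as in~\cite{tang14} and stays within the same bound). The first step is to decompose $\EPT$ according to the three categories named in the statement: $\EPT=\EPT_F+\EPT_{BS}+\EPT_{BO}$, where $\EPT_F$ counts edges touched in the Phase~\rom{1} forward labeling, $\EPT_{BS}$ counts edges pointing to nodes that actually end up in the returned set $R$, and $\EPT_{BO}$ counts every other edge examined in Phase~\rom{2} (the primary and secondary searches that do not contribute a node to $R$). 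Since per-RR-set work is dominated by these edge examinations, the total time is $O\!\left(\theta(\EPT_F+\EPT_{BS}+\EPT_{BO})\right)$, and factoring out $\EPT_{BS}$ yields exactly the claimed form $O\!\left(\theta\cdot\EPT_{BS}\bigl(1+\tfrac{\EPT_F+\EPT_{BO}}{\EPT_{BS}}\bigr)\right)$.

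The crux is therefore to prove $\theta\cdot\EPT_{BS}=O((k+\ell)(|V|+|E|)\log|V|)$, i.e.\ that the ``width'' term behaves exactly as in the classical TIM analysis. I would identify $\EPT_{BS}$ with the expected width of a random RR-set, $\mathbb{E}[w(R)]=\mathbb{E}\bigl[\sum_{u\in R}|N^-(u)|\bigr]$, the number of edges pointing into $R$. Writing $R=R_W(v)$ for a uniformly random root $v$ and a random world $W$, linearity of expectation gives $\EPT_{BS}=\tfrac{1}{|V|}\sum_{u\in V}|N^-(u)|\cdot\beta(u)$, where $\beta(u)=\sum_{v\in V}\Pr[u\in R_W(v)]$. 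Here the Activation Equivalence Property for \timcim, guaranteed by Theorem~\ref{thm:rrcim-correct}, is essential: it identifies $\beta(u)$ with the \CIM objective value of the singleton $\{u\}$ (the expected $\calA$-boost when $u$ is the only $\calB$-seed), which is at most $\OPT_k$ by cross-monotonicity of the boost function (Theorem~\ref{thm:monotone}, applicable since we are in $\bQ^+$). As $\sum_{u}|N^-(u)|=|E|$, this gives $\EPT_{BS}\le |E|\,\OPT_k/|V|$. Substituting $\theta$ from Eq.~\eqref{eqn:opt} cancels $\OPT_k$, and, using $\log\binom{|V|}{k}\le k\log|V|$ and treating $\epsilon$ as a constant, produces $\theta\cdot\EPT_{BS}=O((k+\ell)|E|\log|V|)$; absorbing the $O(|V|)$ bookkeeping overhead of the seed-selection loop yields the stated $O((k+\ell)(|V|+|E|)\log|V|)$.

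The main obstacle is conceptual rather than computational. In the vanilla IC/LT setting every edge examined while building an RR-set points to a node that joins the RR-set, so the examined-edge count and the width coincide and the TIM width bound applies directly. In \timcim this is no longer true: the forward labeling of Phase~\rom{1} and the secondary forward/backward probes of Cases~1 and~4 examine many edges that never add a node to $R$. The delicate part is precisely the separation of the examined edges into the width part $\EPT_{BS}$, to which the Proposition~\ref{prop:tang}-style reasoning (via Theorem~\ref{thm:rrcim-correct}) applies verbatim, and the overhead part $\EPT_F+\EPT_{BO}$, which is carried only as the multiplicative factor $1+(\EPT_F+\EPT_{BO})/\EPT_{BS}$ and is deliberately left unbounded here (a sharper bound would require model-specific assumptions on $S_\calA$, paralleling the discussion of $\EPT_F$ for \timsim). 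Finally, I would verify that no node is processed more than $O(1)$ times in the primary search despite the ``promotion'' of $\calA$-potential nodes to $\calA$-suspended, so that the linear-in-edges accounting underlying $O(\theta\cdot\EPT)$ remains valid.
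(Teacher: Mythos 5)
Your proof is correct and follows essentially the same route as the paper's: decompose the per-RR-set cost as $\EPT_F+\EPT_{BS}+\EPT_{BO}$, bound the width term via $|V|\cdot\EPT_{BS}/|E|\le \OPT_k$, substitute $\theta$ from Eq.~\eqref{eqn:opt}, and factor out $\EPT_{BS}$ exactly as in Lemma~\ref{lemma:rr-sim-time}. The only difference is that where the paper simply cites Lemma~4 of \cite{tang14} (valid here because of the activation equivalence property guaranteed by Theorem~\ref{thm:rrcim-correct}), you re-derive that bound by identifying $\beta(u)$ with the singleton boost and invoking cross-monotonicity, which is precisely the argument underlying that citation.
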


\begin{figure}
 \centering
   \includegraphics[width=0.45\textwidth]{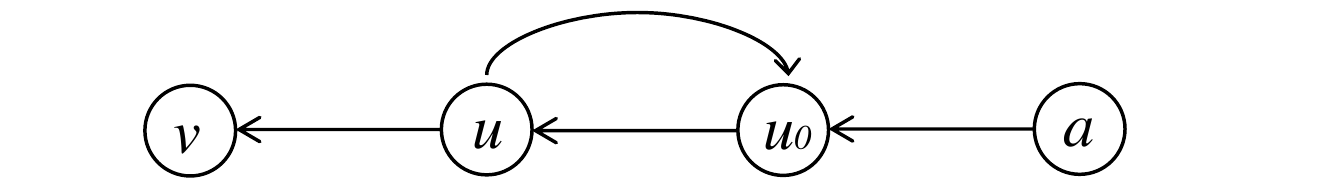}
 \caption{Case 4 of primary backward search in \timcim}\label{fig:rrcim-tricky}
\end{figure}

\subsection{The Sandwich Approximation Strategy}\label{sec:sandwich}
%
%First of all, based on Theorem~\ref{thm:submod-complement} and the seminal results in Nemhauser et al.~\cite{submodular}, the special case of SIM with one-way complementarity is approximable in polynomial time by a simple greedy hill-climbing algorithm (pseudo-code in Algorithm~\ref{alg:greedy-simple}) with a factor of $1-1/e-\epsilon$, for any $\epsilon > 0$.
%Note that $\epsilon$ stems from 
%	approximating $\sigma_A(S_\calA, S_\calB)$ and $\sigma_B(S_\calA, S_\calB)$ due to 
%	the $\#P$-hardness of their exact computation.
% 
%More generally, SIM with mutual complementarity does not enjoy the above result due to lack of self-submodularity, and neither does CIM for lack of cross-submodularity.
%However, the situation is not hopeless.
%In fact, we are able to devise an intelligent strategy that lead to a data-dependent approximation ratio.
%As we shall see shortly, our solution strategy not only applies to CIM and SIM problems studied here, \textsl{but is also generally applicable to any set function maximization problem that does not enjoy submodularity.}

We present the {\em Sandwich Approximation} (SA) strategy that leads to  algorithms with data-dependent approximation factors for \SIM and \CIM in the general mutual complement case of \comic ($\qao \leq \qab$ and $\qbo \leq \qba$) when submodularity may \emph{not} hold.
In fact, SA can be seen as a general strategy, applicable to any non-submodular maximization problems for which we can find submodular upper or lower bound
	functions.

Let $\sigma: 2^V \to \mathbb{R}_{\geq 0}$ be non-submodular.
Let $\mu$ and $\nu$ be submodular and defined on the same ground set $V$ such that $\mu(S) \leq \sigma(S) \leq  \nu(S)$ for all $S\subseteq V$.
That is, $\mu$ ($\nu$) is a lower (resp., upper) bound on $\sigma$ everywhere.
Consider the problem of maximizing $\sigma$ subject to a cardinality
	constraint $k$.
Notice that if the objective function were $\mu$ or $\nu$, the problem would be
	approximable within $1-1/e$ (e.g., max-$k$-cover) or $1-1/e-\epsilon$ (e.g.,  influence maximization)
	by the greedy algorithm~\cite{submodular, kempe03}.  
A natural question is: \textsl{Can we leverage the fact that $\mu$ and $\nu$ ``sandwich'' $\sigma$
	to derive an approximation algorithm for maximizing $\sigma$?}
The answer is ``yes''.

\spara{Sandwich Approximation}
First, run the greedy algorithm on all three functions. It produces an approximate solution for $\mu$ and $\nu$. 
Let $S_\mu$, $S_\sigma$, $S_\nu$ be the solution obtained for $\mu$, $\sigma$, and $\nu$ respectively.
Then, select the final solution to $\sigma$ to be
\begin{align}\label{eqn:sand}
S_{\mathit{sand}} = \argmax_{S \in \{S_\mu, S_\sigma, S_\nu\}} \sigma(S).  
\end{align}

%Consider the problem of maximizing $f(\cdot)$ subject to cardinality constraint $k$, i.e., finding a set $S^* \in \argmax_{S\subseteq U, |S| = k} f(S)$.
%%We run the following greedy-based algorithm on $f(\cdot)$, which as we show (Theorem~\ref{thm:sandwich} below) is in fact an approximation algorithm.
%Since $f(\cdot)$ is not submodular, Algorithm~\ref{alg:greedy-simple} {\em per se} does not provide approximation guarantee.
%However, we could run it on all of the three functions: $f(\cdot)$ itself, upper bound $h(\cdot)$, and lower bound $g(\cdot)$, and return the best solution set.
%Since $f(\cdot)$ is ``sandwiched'' by the other two functions, we call this algorithm ``Sandwich-Greedy'' (Algorithm~\ref{alg:greedy-sandwich}).
%Importantly, \textsl{Sandwich-Greedy produces approximation guarantees}.

%\begin{algorithm}[t!]
%\caption{\textsc{Sandwich-Greedy}}\label{alg:greedy-sandwich}
%Run Algorithm~\ref{alg:greedy-simple} with input $U$, $h$, $k$: output $S_\nu$\;
%Run Algorithm~\ref{alg:greedy-simple} with input $U$, $g$, $k$: output $S_g$\;
%Run Algorithm~\ref{alg:greedy-simple} with input $U$, $f$, $k$: output $S_\sigma$\;
%Output $T = \argmax_{S\in \{S_\sigma, S_\nu, S_g\}} f(S)$
%\end{algorithm}

%\def\theoremSA{
%Sandwich Approximation solution gives:
%\begin{align} \label{eqn:sandapprox}
%\sigma(S_{\mathit{sand}}) \ge \max \Big\{\frac{\sigma(S_\nu)}{\nu(S_\nu)}, \frac{\mu(S_\sigma^*)}{\sigma(S_\sigma^*)} \Big\}
%  \cdot (1-1/e) \cdot  \sigma(S_\sigma^*),
%\end{align}
%where $S_\sigma^*$ is the optimal solution maximizing $\sigma$ (subject to  cardinality constraint $k$).
%}

\begin{theorem}\label{thm:sandwich}
Sandwich Approximation solution gives:
\begin{align} \label{eqn:sandapprox}
\sigma(S_{\mathit{sand}}) \ge \max \Big\{\frac{\sigma(S_\nu)}{\nu(S_\nu)}, \frac{\mu(S_\sigma^*)}{\sigma(S_\sigma^*)} \Big\}
  \cdot (1-1/e) \cdot  \sigma(S_\sigma^*),
\end{align}
where $S_\sigma^*$ is the optimal solution maximizing $\sigma$ (subject to  cardinality constraint $k$).
\end{theorem}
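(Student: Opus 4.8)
The plan is to establish the two lower bounds inside the maximum separately, one arising from the submodular upper bound $\nu$ and the other from the submodular lower bound $\mu$, and then to combine them using the fact that $S_{\mathit{sand}}$ is chosen to maximize $\sigma$ over $\{S_\mu, S_\sigma, S_\nu\}$. Throughout I write $S_\mu^*$, $S_\nu^*$, and $S_\sigma^*$ for the size-$k$ maximizers of $\mu$, $\nu$, and $\sigma$ respectively, and I invoke the standard greedy guarantee $\mu(S_\mu)\ge(1-1/e)\mu(S_\mu^*)$ and $\nu(S_\nu)\ge(1-1/e)\nu(S_\nu^*)$, which holds because $\mu$ and $\nu$ are monotone and submodular.

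First I would handle the $\nu$-branch, whose key chain \emph{descends} from the greedy value of $\nu$ down to $\sigma(S_\sigma^*)$ and then returns to $\sigma$ evaluated at $S_\nu$. Since $S_\nu^*$ maximizes $\nu$ and $\nu$ dominates $\sigma$ pointwise, I obtain $\nu(S_\nu)\ge(1-1/e)\nu(S_\nu^*)\ge(1-1/e)\nu(S_\sigma^*)\ge(1-1/e)\sigma(S_\sigma^*)$. Multiplying through by the data-dependent ratio $\sigma(S_\nu)/\nu(S_\nu)$ then yields
\[
\sigma(S_\nu)=\frac{\sigma(S_\nu)}{\nu(S_\nu)}\,\nu(S_\nu)\ \ge\ \frac{\sigma(S_\nu)}{\nu(S_\nu)}\,(1-1/e)\,\sigma(S_\sigma^*),
\]
which is precisely the first term of the claimed maximum.

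Next I would handle the $\mu$-branch, whose chain instead \emph{climbs up} from $\mu$ to $\sigma$. Using $\sigma\ge\mu$ pointwise, the greedy guarantee for $\mu$, and the optimality of $S_\mu^*$ against the feasible candidate $S_\sigma^*$, I get
\[
\sigma(S_\mu)\ \ge\ \mu(S_\mu)\ \ge\ (1-1/e)\,\mu(S_\mu^*)\ \ge\ (1-1/e)\,\mu(S_\sigma^*)=\frac{\mu(S_\sigma^*)}{\sigma(S_\sigma^*)}\,(1-1/e)\,\sigma(S_\sigma^*),
\]
the second term of the maximum.

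Finally, since $S_{\mathit{sand}}=\argmax_{S\in\{S_\mu,S_\sigma,S_\nu\}}\sigma(S)$, we have $\sigma(S_{\mathit{sand}})\ge\max\{\sigma(S_\mu),\sigma(S_\nu)\}$, and taking the larger of the two bounds just derived gives inequality~\eqref{eqn:sandapprox}. I do not expect a genuinely hard step: the whole argument is a careful threading of the sandwich relation $\mu\le\sigma\le\nu$ with the greedy optimality guarantees, and the greedy solution $S_\sigma$ for the non-submodular $\sigma$ plays no role in the bound (it only enlarges the candidate set and can never hurt). The one point that demands attention is bookkeeping --- making sure each greedy bound is compared against the correct reference optimum ($S_\nu^*$ for the $\nu$-branch, $S_\mu^*$ for the $\mu$-branch) and that pointwise domination is applied at the matching argument, so that the two residual factors land exactly as $\sigma(S_\nu)/\nu(S_\nu)$ and $\mu(S_\sigma^*)/\sigma(S_\sigma^*)$.
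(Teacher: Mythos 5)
Your proposal is correct and follows essentially the same argument as the paper: the identical two chains of inequalities (greedy guarantee, optimality of $S_\nu^*$ resp.\ $S_\mu^*$ against $S_\sigma^*$, and pointwise domination $\mu\le\sigma\le\nu$), combined via the definition of $S_{\mathit{sand}}$. There is nothing to add.
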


\noindent\textbf{Remarks.}
While the factor in Eq. \eqref{eqn:sandapprox} involves $S_\sigma^*$, generally not computable in polynomial time, the first term inside max\{.,.\}, involves $S_\mu$ can be computed efficiently and can be of practical value (see Table~\ref{tab:SA-factor} in \textsection\ref{sec:exp}). 
%can still be meaningful, especially when $\sigma(S_\mu) > \max\{\sigma(S_\sigma), \sigma(S_\nu)\}$.
We emphasize that SA is much more general, not restricted to cardinality constraints.
E.g., for a general matroid constraint, simply replace $1-1/e$ with $1/2$ in \eqref{eqn:sandapprox}, as the greedy algorithm is a $1/2$-approximation in this case~\cite{submodular}.
Furthermore, monotonicity is not important, as maximizing general submodular functions can be approximated within a factor of $1/2$~\cite{buchbinder12}, and thus SA applies regardless of monotonicity.
On the other hand, the true effectiveness of SA depends on how close $\nu$ and $\mu$ are to $\sigma$: e.g., a constant function can be a trivial submodular upper bound function but would only yield trivial data-dependent approximation factors. Thus, an interesting question is how to derive $\nu$ and $\mu$ that are as close to $\sigma$ as possible while maintaining submodularity.
%On the other hand, we also caution that trivial data-dependent bound can be obtained through SA, e.g., a constant function is also submodular, and needless to say, using such bounds are meaningless.
%Thus, the true effectiveness of SA depends on the tightness of the bound, i.e., how to derive $\nu$ and $\mu$ that are as close to $\sigma$ as possible, while maintaining submodularity.
%This is certainly an interesting direction for future research.

Next, we apply SA to both \SIM and \CIM in the general mutual complementarity case ($\bQ^+$).
%The following theorem lays the foundation.

\spara{\SIM}
$\mathsf{GeneralTIM}$ with \timsim %(Algorithm~\ref{alg:rrset-sim}) 
	or \timsimfast %(Algorithm~\ref{alg:rrset-sim-fast}) 
	provides a $(1-1/e-\epsilon)$-approximate
	solution with high probability, when $\qao \leq \qab$ and $\qbo = \qba$.
When $\qbo < \qba$, function $\nu$ (upper bound) can be obtained by
	increasing $\qbo$ to $\qba$, while $\mu$ (lower bound) can be obtained
	by decreasing $\qba$ to $\qbo$.

\spara{\CIM}
$\mathsf{GeneralTIM}$ with \timcim provides a $(1-1/e-\epsilon)$-approximate
	solution with high probability, when $\qao \leq \qab$ and $\qbo \leq \qba = 1$.
When $\qba$ is not necessarily $1$, we obtain an upper bound
	function by increasing $\qba$ to $1$.
%There will be no lower bound function here, but SA still works, due to Eq. \eqref{eqn:Snu}.

%The correctness of the above approaches is ensured by the following theorem.
\def\theoremMonoQ{
Suppose $\qao \leq \qab$ and $\qbo \leq \qba$.
Then, under the \comic model, for any fixed $\calA$ and $\calB$ seed sets $S_\calA$ and
	$S_\calB$,
	$\sigma_\calA(S_\calA, S_\calB)$ is monotonically increasing w.r.t.\
	any one of $\{\qao, \qab, \qbo, \qba\}$ with other three GAPs fixed, as
	long as after the increase the parameters are still in $\bQ^+$.
}

%\vspace*{-2mm}
\begin{theorem}\label{thm:mono-q}
{\theoremMonoQ}
\end{theorem}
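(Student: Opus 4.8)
The plan is to prove the statement world-by-world using the Possible World (PW) model of \textsection\ref{sec:pw}, exploiting the crucial fact that the distribution over possible worlds $W$ (live/blocked edges, the thresholds $\alpha_\calA^v,\alpha_\calB^v$ drawn uniformly from $[0,1]$, the permutations $\pi_v$, and the tie-breakers $\tau_v$) does \emph{not} depend on $\bQ$: the GAPs enter only through the threshold comparisons made during the deterministic cascade. By Lemma~\ref{lemma:pw}, the spread equals $\mathbb{E}_W\big[|\Phi_\calA^W(\bQ)|\big]$, where $\Phi_\calA^W(\bQ)$ is the set of $\calA$-adopted nodes in $W$ under GAPs $\bQ$. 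Since the law of $W$ is $\bQ$-independent, I would couple the two runs by using the \emph{same} $W$ and show that, for every fixed $W$, raising a single GAP within $\bQ^+$ can only enlarge $\Phi_\calA^W$ (in fact, enlarge both $\Phi_\calA^W$ and $\Phi_\calB^W$ simultaneously). Monotonicity of $\sigma_\calA$ then follows by taking expectations.

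The key device is a fixed-point description of the final adopted sets. Fix $W$ and $\bQ\in\bQ^+$, and define a map $F=F_{\bQ,W}$ on pairs $(A,B)\in 2^V\times 2^V$ by $F(A,B)=(A',B')$, where $A'$ consists of $S_\calA$ together with every $v$ having a live in-edge $(u,v)$ with $u\in A$ and satisfying $\alpha_\calA^v\le\qao$ or ($\alpha_\calA^v\le\qab$ and $v\in B$); symmetrically, $B'$ consists of $S_\calB$ together with every $v$ with a live in-edge from $B$ satisfying $\alpha_\calB^v\le\qbo$ or ($\alpha_\calB^v\le\qba$ and $v\in A$). The first step is to argue that the final $\calA$- and $\calB$-adopted sets in $W$ equal the least fixed point of $F$ reached by iterating from $(S_\calA,S_\calB)$. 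This is where the complementary structure is used: because $\qao\le\qab$ and $\qbo\le\qba$, reconsideration guarantees that a node informed of $\calA$ with $\alpha_\calA^v\le\qab$ adopts $\calA$ as soon as it is ever $\calB$-adopted, regardless of the order in which the two pieces of information arrive; and by Lemma~\ref{lemma:tie} the tie-breaking permutation is irrelevant in $\bQ^+$. Hence the timing in the deterministic cascade washes out, the one-edge condition composes under iteration into full live-edge reachability through adopters, and the final state is exactly the $\bQ$-dependent closure described by $F$.

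With this in hand, the comparison is routine. The operator $F_{\bQ,W}$ is monotone in $(A,B)$ (enlarging $A$ supplies more live in-edges from adopters, and enlarging $B$ relaxes the complementary adoption clause), so its least fixed point exists and is obtained by bottom-up iteration from $(S_\calA,S_\calB)$. Next, if $\bQ'$ is obtained from $\bQ$ by increasing exactly one of $\{\qao,\qab,\qbo,\qba\}$ while staying in $\bQ^+$, a case check shows $F_{\bQ,W}(A,B)\subseteq F_{\bQ',W}(A,B)$ componentwise for every $(A,B)$: each of the four increases only enlarges one of the threshold sets $\{\alpha_\calA^v\le\qao\}$, $\{\alpha_\calA^v\le\qab\}$, $\{\alpha_\calB^v\le\qbo\}$, $\{\alpha_\calB^v\le\qba\}$ and leaves the rest of the definition untouched. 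A standard comparison of monotone operators (induct on $X_n=F_{\bQ,W}^n(S_\calA,S_\calB)$ and $Y_n=F_{\bQ',W}^n(S_\calA,S_\calB)$, using $X_n\subseteq Y_n\Rightarrow F_{\bQ,W}(X_n)\subseteq F_{\bQ,W}(Y_n)\subseteq F_{\bQ',W}(Y_n)$) yields $\mathrm{lfp}(F_{\bQ,W})\subseteq\mathrm{lfp}(F_{\bQ',W})$, i.e. $\Phi_\calA^W(\bQ)\subseteq\Phi_\calA^W(\bQ')$ (and likewise for $\calB$) in every $W$. Taking expectations over the $\bQ$-independent law of $W$ gives the desired inequality between the spreads under $\bQ$ and $\bQ'$.

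I expect the main obstacle to be the first step: rigorously justifying that the final adopted sets coincide with the least fixed point of $F$. One must verify that the positive feedback between $\calA$- and $\calB$-adoptions is captured exactly: a node may be suspended on $\calA$, later adopt $\calB$, reconsider and adopt $\calA$, which in turn pushes $\calB$ onto further nodes, and so on. Confirming that iterating $F$ reproduces precisely this chain of events (neither over- nor under-counting, thanks to $\bQ^+$ and Lemma~\ref{lemma:tie}) is the delicate part; the monotone-operator comparison that follows is essentially mechanical.
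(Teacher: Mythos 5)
Your proposal is correct, and it rests on the same foundation as the paper's argument --- but note that the paper's own ``proof'' of Theorem~\ref{thm:mono-q} is only a two-sentence sketch: redo, per possible world, the time-step induction used for Theorem~\ref{thm:monotone}, showing inductively that after increasing one GAP both the $\calA$-adopted and the $\calB$-adopted sets at every step can only grow. You share the crucial enabling observation (left implicit in the paper): the possible-world distribution of \textsection\ref{sec:pw} --- live edges, thresholds $\alpha_\calA^v,\alpha_\calB^v$, permutations $\pi_v$, tie-breakers $\tau_v$ --- does not depend on $\bQ$, so the two parameter settings can be coupled on the same world and compared deterministically. Where you genuinely diverge is in how the per-world comparison is carried out. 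The paper mirrors the temporal induction of Theorem~\ref{thm:monotone}, tracking the adopted sets $\Phi^W(\cdot,t)$ step by step and redoing the within-step case analysis (informing order, suspension, reconsideration) with thresholds rather than seed sets as the varying quantity. You instead prove once and for all that in $\bQ^+$ timing is irrelevant: the final pair of adopted sets is the least fixed point of the monotone closure operator $F_{\bQ,W}$, which holds because reconsideration makes the outcome order-independent (with Lemma~\ref{lemma:tie} disposing of ties), after which GAP-monotonicity is a mechanical comparison of two monotone operators satisfying $F_{\bQ,W}\le F_{\bQ',W}$ pointwise. This buys a cleaner, reusable argument --- the same fixed-point lemma immediately re-derives the $\bQ^+$ half of Theorem~\ref{thm:monotone} (seed monotonicity, by enlarging the iteration's starting point) and makes explicit why $\calA$- and $\calB$-adoptions must be tracked jointly --- at the cost of proving the order-independence lemma, whose difficulty roughly matches that of the paper's temporal induction; it is also intrinsically limited to $\bQ^+$ (under competition, arrival order matters and no such static closure exists), which is exactly the regime the theorem assumes. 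Your two-direction justification of the fixed-point claim (cascade contained in the least fixed point, by induction on adoption events; least fixed point contained in the cascade, by induction on iterations, using reconsideration to absorb whichever of ``informed of $\calA$'' and ``adopted $\calB$'' arrives first) is sound, so I see no gap.
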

%\vspace*{-3mm}

\smallskip
Putting it all together, the final algorithm for \SIM is $\mathsf{GeneralTIM}$ with \timsim/\timsimfast and SA.
Similarly, the final algorithm for \CIM is Algorithm~\ref{alg:generalTIM} with \timcim and SA.
It is important to see how useful and effective SA is in practice. We address this question head on in \textsection\ref{sec:exp}, where we ``stress test'' the idea behind SA. Intuitively, if the GAPs are such that $\qbo$ and $\qba$ are close, the upper and lower bounds ($\nu$ and $\mu$) obtained for \SIM can be expected to be quite close to $\sigma$. Similarly, when $\qba$ is close to $1$, the corresponding upper bound for \CIM should be quite close to $\sigma$. We consider settings where $\qbo$ and $\qba$ are separated apart and similarly $\qba$ is not close to $1$ and measure the effectiveness of SA (see Table~\ref{tab:SA-factor}).
\eat{
\blue{Intuitively, the above SA schemes for \SIM and \CIM would be effective when
	(a) $\qao$ and $\qab$ are close to each other, since both problems are targeted at
	influence spread of $\calA$ and thus if $\calA$'s dependency on $\calB$ is small,
	then the upper and lower bounds $\nu$ and $\mu$  after changing $\qbo$ or
	$\qba$ should be close to the true influence spread $\sigma$; or
	(b) $\qba$ is close to $\qbo$ for \SIM or close to $1$ for \CIM, since it
	means the difference in $\calB$ adoption in upper and lower bound cases do not vary
	much; or
	(c) the edge probabilities are relatively small, and thus the difference of upper
	and lower bound cases will not be amplified through long chain of cascades.
In \textsection\ref{sec:realexp} we empirically demonstrate that SA is highly 
	effective for both problems in real networks with parameters learned from
	real trace data.
}
}

\section{Experiments}\label{sec:exp}

\begin{table*}[t!]
	%\scriptsize
	\centering
	\begin{tabular}{rcccc}
	 &  \dbBook & \dbMovie & \flix & \lastfm \\ \hline
	{ \# nodes}			& $23.3$K	& $34.9$K & $12.9$K	& $61$K  \\ 
	{ \# edges} 		&  $141$K	& $274$K & $192$K & $584$K   \\ 
	 { avg. out-degree} & $6.5$ & $7.9$	& $14.8$ & $9.6$  \\ 
	 { max. out-degree} & $1690$  & $545$ & $189$  &  $1073$   \\ \hline
	\end{tabular}
%\vspace{-2mm}
	\caption{Statistics of graph data (all directed)}
%\vspace{-5mm}
	 \label{tab:datasets}
\end{table*}

We perform extensive experiments on three real-world social
	networks.
	% to evaluate our algorithms: \flix, \douban, and \lastfm. 
% and also conduct scalability tests on larger synthetic graphs.
We first present results with synthetic GAPs (\textsection\ref{sec:synGAPexp});
	then we propose a method for learning GAPs using action log
	data (\textsection\ref{sec:learn}), and conduct experiments using
	learnt GAPs (\textsection\ref{sec:expReal}).

\spara{Datasets}
\flix is collected from a social movie site %\footnote{\url{http://www.cs.ubc.ca/~jamalim/datasets/}},
	and we extract a strongly connected component.
\douban is collected from a Chinese social network~\cite{yuan2013we}, where users
	rate  books, movies, music, etc.
We crawl all movie \& book ratings of the users in the
	graph, and derive two datasets
	from book and movie ratings: \dbBook and \dbMovie.
\lastfm is taken from the popular music website with social networking features.
Table~\ref{tab:datasets} presents basic stats of the datasets.
For all graphs, we learn influence probabilities on edges using the method proposed in~\cite{amit2010},
	which is widely adopted in prior work~\cite{infbook}.
Links in Flixster and Last.fm networks are undirected, and we direct them in both directions.
Links in Douban network are derived from follower-followee relationships and in our dataset,
	there is an edge from $u$ to $v$ if $v$ follows $u$ on Douban.

\subsection{Experiments with Synthetic GAPs}\label{sec:synGAPexp}

\begin{table*}
\centering
\begin{tabular}{rccc|ccc}
		\cline{2-7}
		\underline{\bf\SIM} & \multicolumn{3}{|c|}{\vanillaIC} & \multicolumn{3}{c|}{\copying} \\ \cline{2-7}
		$\qao$ & \multicolumn{1}{|c}{$0.1$} & $0.3$ & $0.5$ & $0.1$ & $0.3$ & \multicolumn{1}{c|}{$0.5$} \\ \cline{2-7}
		\dbBook & \multicolumn{1}{|c}{$5.89\%$} & $0.93\%$ & $0.50\%$ & $85.7\%$ & $207\%$ & \multicolumn{1}{c|}{$301\%$} \\
		\dbMovie & \multicolumn{1}{|c}{$24.7\%$} & $3.30\%$ & $1.72\%$ & $13.3\%$ & $68.8\%$ & \multicolumn{1}{c|}{$122\%$} \\
		\flix & \multicolumn{1}{|c}{$35.5\%$} & $11.3\%$ & $5.15\%$ & $16.7\%$ & $48.0\%$ & \multicolumn{1}{c|}{$84.8\%$} \\
		\lastfm & \multicolumn{1}{|c}{$31.5\%$} & $2.75\%$ & $0.70\%$ & $22.6\%$ & $88.5\%$ & \multicolumn{1}{c|}{$168\%$} \\	
		\cline{2-7}
		\cline{2-7}
		\underline{\bf\CIM}& \multicolumn{3}{|c|}{\vanillaIC} & \multicolumn{3}{c|}{\copying} \\ \cline{2-7}
		$\qbo$ & \multicolumn{1}{|c}{$0.1$} & $0.5$ & $0.8$ & $0.1$ & $0.5$ & \multicolumn{1}{c|}{$0.8$} \\ \cline{2-7}
		\dbBook & \multicolumn{1}{|c}{$13.4\%$} & $31.2\%$ & $25.6\%$ & $49.9\%$ & $32.9\%$ & \multicolumn{1}{c|}{$30.7\%$} \\
		\dbMovie & \multicolumn{1}{|c}{$135\%$} & $151\%$ & $101\%$ & $14.4\%$ & $7.46\%$ & \multicolumn{1}{c|}{$3.84\%$} \\
		\flix & \multicolumn{1}{|c}{$81.7\%$} & $58.5\%$ & $24.9\%$ & $13.6\%$ & $8.21\%$ & \multicolumn{1}{c|}{$10.7\%$} \\
		\lastfm & \multicolumn{1}{|c}{$140\%$} & $110\%$ & $48.3\%$ & $10.6\%$ & $9.12\%$ & \multicolumn{1}{c|}{$6.77\%$} \\
		\cline{2-7}
\end{tabular}
\caption{Percentage improvement of \generalTIM over \vanillaIC \& \copying, where the other seed set is chosen to be the 101st -- 200th from the \vanillaIC order}\label{tab:synGAP}
\end{table*}

%%%%%%%%%%%%%%%%%%%%%%%%%%%%%%%%%%%%%%%%%%%%%%%%%%
\begin{table*}[h!t!]
\centering
	\begin{tabular}{rccc|ccc}
		\cline{2-7}
		\underline{\bf\SIM} & \multicolumn{3}{|c|}{\vanillaIC} & \multicolumn{3}{c|}{\copying} \\ \cline{2-7}
		$\qao$ & \multicolumn{1}{|c}{$0.1$} & $0.3$ & $0.5$ & $0.1$ & $0.3$ & \multicolumn{1}{c|}{$0.5$} \\ \cline{2-7}
		\dbBook & \multicolumn{1}{|c}{$2.16\%$} & $1.12\%$ & $0.71\%$ & $133\%$ & $419\%$ & \multicolumn{1}{c|}{$676\%$} \\
		\dbMovie & \multicolumn{1}{|c}{$4.38\%$} & $1.49\%$ & $0.87\%$ & $236\%$ & $737\%$ & \multicolumn{1}{c|}{$1283\%$} \\
		\flix & \multicolumn{1}{|c}{$10.6\%$} & $0\%$ & $0\%$ & $134\%$ & $352\%$ & \multicolumn{1}{c|}{$641\%$} \\
		\lastfm & \multicolumn{1}{|c}{$3.76\%$} & $2.65\%$ & $1.65\%$ & $398\%$ & $1355\%$ & \multicolumn{1}{c|}{$2525\%$} \\	
		\cline{2-7}
		\cline{2-7}
		\underline{\bf\CIM}& \multicolumn{3}{|c|}{\vanillaIC} & \multicolumn{3}{c|}{\copying} \\ \cline{2-7}
		$\qbo$ & \multicolumn{1}{|c}{$0.1$} & $0.5$ & $0.8$ & $0.1$ & $0.5$ & \multicolumn{1}{c|}{$0.8$} \\ \cline{2-7}
		\dbBook & \multicolumn{1}{|c}{$548\%$} & $644\%$ & $717\%$ & $834\%$ & $238\%$ & \multicolumn{1}{c|}{$252\%$} \\
		\dbMovie & \multicolumn{1}{|c}{$1060\%$} & $994\%$ & $713\%$ & $637\%$ & $106\%$ & \multicolumn{1}{c|}{$89.2\%$} \\
		\flix & \multicolumn{1}{|c}{$956\%$} & $152\%$ & $102\%$ & $888\%$ & $119\%$ & \multicolumn{1}{c|}{$95\%$} \\
		\lastfm & \multicolumn{1}{|c}{$1361\%$} & $982\%$ & $710\%$ & $489\%$ & $65.3\%$ & \multicolumn{1}{c|}{$63.6\%$} \\
		\cline{2-7}
	\end{tabular}
\caption{Percentage improvement of \generalTIM over \vanillaIC \& \copying, where the other seed set is randomly chosen}
\label{tab:synGAP-random}
\end{table*}

%%%%%%%%%%%%%%%%%%%%%%%%%%%%%%%%%%%%%%%%%%%%%%%%%%
\begin{table*}[h!t!]
\centering
	\begin{tabular}{rccc|ccc}
		\cline{2-7}
		\underline{\bf\SIM} & \multicolumn{3}{|c|}{\vanillaIC} & \multicolumn{3}{c|}{\copying} \\ \cline{2-7}
		$\qao$ & \multicolumn{1}{|c}{$0.1$} & $0.3$ & $0.5$ & $0.1$ & $0.3$ & \multicolumn{1}{c|}{$0.5$} \\ \cline{2-7}
		\dbBook & \multicolumn{1}{|c}{$0.34\%$} & $0.34\%$ & $0\%$ & $0.34\%$ & $0.34\%$ & \multicolumn{1}{c|}{$0\%$} \\
		\dbMovie & \multicolumn{1}{|c}{$0.64\%$} & $0.54\%$ & $0.36\%$ & $0.64\%$ & $0.54\%$ & \multicolumn{1}{c|}{$0.36\%$} \\
		\flix & \multicolumn{1}{|c}{$0\%$} & $0\%$ & $0\%$ & $0\%$ & $0\%$ & \multicolumn{1}{c|}{$0\%$} \\
		\lastfm & \multicolumn{1}{|c}{$1.73\%$} & $1.38\%$ & $0.80\%$ & $1.73\%$ & $1.38\%$ & \multicolumn{1}{c|}{$0.80\%$} \\	
		\cline{2-7}
		\cline{2-7}
		\underline{\bf\CIM}& \multicolumn{3}{|c|}{\vanillaIC} & \multicolumn{3}{c|}{\copying} \\ \cline{2-7}
		$\qbo$ & \multicolumn{1}{|c}{$0.1$} & $0.5$ & $0.8$ & $0.1$ & $0.5$ & \multicolumn{1}{c|}{$0.8$} \\ \cline{2-7}
		\dbBook & \multicolumn{1}{|c}{$3.53\%$} & $1.19\%$ & $0.24\%$ & $1.03\%$ & $0.18\%$ & \multicolumn{1}{c|}{$0.05\%$} \\
		\dbMovie & \multicolumn{1}{|c}{$141\%$} & $22.2\%$ & $9.87\%$ & $138\%$ & $20.7\%$ & \multicolumn{1}{c|}{$8.67\%$} \\
		\flix & \multicolumn{1}{|c}{$3.93\%$} & $1.63\%$ & $-1.69\%$ & $3.93\%$ & $1.63\%$ & \multicolumn{1}{c|}{$-1.69\%$} \\
		\lastfm & \multicolumn{1}{|c}{$5.38\%$} & $2.19\%$ & $0.69\%$ & $2.72\%$ & $0.11\%$ & \multicolumn{1}{c|}{$-0.97\%$} \\
		\cline{2-7}
	\end{tabular}
\caption{Percentage improvement of \generalTIM over \vanillaIC \& \copying,
	where the other seed set is chosen to be the top-100 nodes by \vanillaIC}
\label{tab:synGAP-top100}
\end{table*}

%%%%%%%%%%%%%%%%%%%%%%%%%%%%%%%%%%%%%%%%%%%%%%%%%%

We first evaluate our proposed algorithms
	using synthetic GAPs.
We compare with two intuitive baselines:
$(i)$ \vanillaIC: It selects $k$ seeds using TIM algorithm~\cite{tang14} under the
	classic IC model, essentially ignoring the other product and the NLA in \comic model;
$(ii)$ \copying: For \SIM, it simply selects the top-$k$ $\calB$-seeds to be $\calA$-seeds
	%(assuming $|S_\calB| > |S_\calA|$) 
	and vice versa for \CIM.
	% (assuming $|S_\calA| > |S_\calB|$).

In \SIM, we set $\qab = \qba = 0.75$, $\qbo = 0.5$, and $\qao$ is set to 
$0.1$, $0.3$, $0.5$, which represent strong, moderate, and low complementarity.
%The input $\calB$-seeds are the 101st to 200th seeds in the greedy order as
%	selected by running TIM under the classic IC model.
In \CIM, we set $\qao = 0.1$, $\qab = \qba = 0.9$, such that the room for $\calB$
	to complement $\calA$ is sufficiently large to distinguish between algorithms.
We vary $\qbo$ to be $0.1$, $0.5$, and $0.8$.

%The input $\calA$-seeds are also the 101st to 200th seeds in the greedy order.
Lots of possibilities exist for setting the opposite seed set, i.e.,
	$\calB$-seeds for \SIM and $\calA$-seeds for \CIM.
We test three representative cases:
(1) randomly selecting 100 nodes -- this models our complete lack of knowledge;
(2) running \vanillaIC and selecting the top-100 nodes -- this models a situation where we assume the advertiser might use an advanced algorithm such as TIM to target highly influential users;
(3) running \vanillaIC and selecting the 101st to 200th nodes -- this models a situation where we assume those seeds are moderately influential.

Table~\ref{tab:synGAP} shows the percentage improvement of our algorithms over the two baselines, for the case of
	selecting the 101st to 200th nodes of \vanillaIC as the fixed opposite seed set.
As can be seen, \generalTIM performs consistently better than both baselines, and in many cases by a large margin.

Table~\ref{tab:synGAP-random} shows the percentage improvement of
	\generalTIM over \vanillaIC and \copying baselines when the other
	seed set ($\calB$-seeds for \SIM and $\calA$-seeds for \CIM)
	consists of 100 {\em random nodes}.
As can be seen, \generalTIM is significantly better except when
	comparing to \vanillaIC in \SIM.
This is not surprising, as when $\calB$-seeds are chosen randomly, they
	are unlikely to be very influential and hence it is rather
	safe to ignore them when selecting $\calA$-seeds, which is
	essentially how \vanillaIC operates.

Table~\ref{tab:synGAP-top100} shows the results when the other seed
	set is chosen to be the {\em top-100 nodes} from \vanillaIC.
These nodes represent the most influential ones under the IC model
	that can be found efficiently in polynomial time\footnote{Recall that influence maximization is NP-hard under the IC model, and thus the optimal top-100 most influential nodes are difficult to find. The seeds found by \vanillaIC can be regarded as a good proxy.}.
In this case, the advantage of \generalTIM over \vanillaIC and \copying
	is less significant.
On \flix dataset, the three algorithms achieve the same influence spread
	for \SIM, while for \CIM, when $\qbo = 0.8$, we even observe a slight
	disadvantage of \generalTIM in certain cases. 
The reason is that when $\qbo = 0.8$ (quite close to $1$), targeting
	the input $\calA$-seeds to be $\calB$-seeds is itself a good
	strategy (see Theorem~\ref{thm:cimOpt}).
For \SIM, we remark that \vanillaIC and \copying is equivalent.
%They are both very effective as the input $\calB$-seeds are already quite influential.

Overall, considering Tables \ref{tab:synGAP}, \ref{tab:synGAP-random}, and \ref{tab:synGAP-top100}
	all together, we can see that in the vast majority of all test cases,
	\generalTIM outperforms these two baselines, often by a large margin.
This demonstrates that \generalTIM is robust w.r.t. different selection methods of the opposite seed set.
Furthermore, in real-world scenarios, the opposite seed sets may simply consist of ``organic'' early adopters, i.e., users who
	adopt the product spontaneously. %, instead of being targeted by the advertiser.
The robustness of \generalTIM is thus highly desirable as it is often
	difficult to foresee which users would actually become organic early adopters in real life.

Also, in our model the influence probabilities on edges are assumed
	independent of the product; without this assumption \copying and \vanillaIC
	would perform even more poorly.
If we additionally assume that the GAPs are user-dependent, \vanillaIC would
	deteriorate further.
In contrast, our \generalTIM and RR-set generation algorithms can be easily
	adapted to both these scenarios.

\subsection{Learning GAPs from Real Data}\label{sec:learn}

\spara{Finding Signals from Data}
For \flix and \douban, we learn GAPs from timestamped rating data,
	which can be viewed as action logs.
Each entry is a quadruple
	$(u, i, a, t_{u,i,a})$, indicating user 
	$u$ performed action $a$ on item $i$ at time $t_{u,i,a}$. 
We count a rating quadruple as one {\em adoption action} and
	one {\em informing action}: if someone rated an item, she must
	have been informed of it first, as we assume only adopters rate items.
{A key challenge is how to find actions that can be mapped to 
	informing events that do not lead to adoptions.}
Fortunately, %in addition to numeric ratings,
	there are special ratings providing such signals % of informing events
         in \flix and \douban.
The former allows users to indicate if they ``want to see''  a movie,
	or are ``not interested'' in one.
We map both signals to the actions of a user being informed of a movie.
The latter allows users to put items into a wish list.
%\footnote{Once this user later rates this item, it is automatically removed from the wish list}.
Thus, if a book/movie is in a user's wish list, we treat it as an {\em informing action}.
For \douban, we separate actions on books and movies to derive two datasets: \dbBook and \dbMovie.

\spara{Learning Method}
Consider two items $\calA$ and $\calB$ in an action log.
%(and there are two types of actions on each item: adoption and informing).
Let $R_\calA$ and $I_\calA$ be the set of users who rated $\calA$ and who were informed of $\calA$, respectively.
Clearly, $R_\calA \subseteq I_\calA$. % and $R_\calB \subseteq I_\calB$. 
Thus, \[\qao = {|R_\calA \setminus R_{\calB \prec_{\mathit{rate}} \calA}|} \; / \; {|I_\calA \setminus R_{\calB \prec_{\mathit{inform}} \calA}|},\]
where $R_{\calB \prec_{\mathit{rate}} \calA}$ is the set of users who rated both items with $\calB$ rated first,
and $R_{\calB \prec_{\mathit{inform}} \calA}$ is the set of users who rated $\calB$ before being informed of $\calA$. 
Next, $\qab$ is computed as follows: \[\qab = {|R_{\calB \prec_{\mathit{rate}} \calA}|} \;  / \; {|R_{\calB \prec_{\mathit{inform}} \calA}|}.\]
Similarly, $\qbo$ and $\qba$ can be computed in a symmetric way.

%Table~\ref{tab:Q} presents the GAPs learned for a few pairs of popular movies in \flix dataset.
%More examples, including those in \dbBook and \dbMovie and $95\%$-confidence intervals
%	of those estimates, can be found in ~\cite{comicTR}.
	
Tables~\ref{tab:Q-flix} -- \ref{tab:Q-dbMovie} depict selected 
	GAPs learned from \flix, \dbBook, and \dbMovie datasets, using methods in \textsection\ref{sec:exp}.
Here we not only show the estimated value of $\bQ$, but also give 95\% confidence intervals.
By the definition of GAPs (\textsection\ref{sec:model}), we can treat each GAP as the parameter of
	a Bernoulli distribution.
Consider any GAP, denoted $q$, and let $\bar{q}$ be its estimated value from action log data.
The 95\% confidence interval\footnote{See any standard textbooks on probability theory and statistics} of $\bar{q}$ is given by
\[
\left[ \bar{q} - 1.96\sqrt{\bar{q}(1-\bar{q})/n_q}, \; \bar{q} + 1.96\sqrt{\bar{q}(1-\bar{q})/n_q} \right],
\]
where $n_q$ is the number of samples used for estimating $q$.

%\begin{table*}[t!]
%\centering
%	\begin{tabular}{cccccc}
%		$\calA$ & $\calB$ & $\qao$ & $\qab$ & $\qbo$ & $\qba$ \\  \hline
%		{\it Monster Inc.} & {\it Shrek} &  $.88$ & $.92$ & $.92$ & $.96$ \\ 
%		{\it Gone in 60 Seconds} & {\it Armageddon} &  $.63$ & $.77$ & $.67$ & $.82$ \\ 
%		{\it Prisoner of Azkaban} & {\it What a Girl Wants} & $.85$ & $.84$ & $.66$ & $.67$ \\ 
%		{\it Shrek} & {\it Fast and Furious}  & $.92$ & $.94$ & $.80$ & $.79$ \\ \hline
%	\end{tabular}
%	\caption{Selected GAPs learned for movies from \flix}
%	 \label{tab:Q}
%\end{table*}
%

\begin{table*}[t!]
	\centering
	\small
	\begin{tabular}{cccccc}
		$\calA$ & $\calB$ & $\qao$ & $\qab$ & $\qbo$ & $\qba$ \\  \hline
		{\it Monster Inc.} & {\it Shrek} &  $.88 \pm .01$ & $.92 \pm .01$ & $.92 \pm .01$ & $.96 \pm .01$ \\ 
		{\it Gone in 60 Seconds} & {\it Armageddon} &  $.63 \pm .02$ & $.77 \pm .02$ & $.67 \pm .02$ & $.82 \pm .02$ \\ 
		{\it Harry Porter: Prisoner of Azkaban} & {\it What a Girl Wants} & $.85\pm .01$ & $.84\pm .02$ & $.66\pm .02$ & $.67 \pm .02$ \\ 
		{\it Shrek} & {\it The Fast and The Furious}  & $.92\pm .02$ & $.94\pm .01$ & $.80\pm .02$ & $.79\pm .02$ \\ \hline
	\end{tabular}
	\caption{Selected GAPs learned for movies from \flix}
	 \label{tab:Q-flix}
\end{table*}

\begin{table*}[t!]
	\centering
	\scriptsize
	\begin{tabular}{cccccc}
		$\calA$ & $\calB$ & $\qao$ & $\qab$ & $\qbo$ & $\qba$ \\  \hline
		{\it The Unbearable Lightness of Being} & {\it Norwegian Wood (Japanese)} &  $.75 \pm .01$ & $.85\pm .02$ & $.92\pm .01$ & $.97\pm .01$ \\ 
		{\it Harry Potter and the Philosopher's Stone} & {\it Harry Potter and the Half-Blood Prince} &  $.99\pm 1$ & $1.0 \pm 0$ & $.97\pm .01$ & $.98\pm .01$ \\ 
		{\it Stories of Ming Dynasty III (Chinese)} & {\it Stories of Ming Dynasty VI (Chinese)} & $.94\pm .01$ & $1.0\pm 0$ & $.88\pm .03$ & $.98\pm .01$ \\ 
		{\it Fortress Besieged (Chinese)} & {\it Love Letter (Japanese)}  & $.89\pm .01$ & $.91\pm .03$ & $.82\pm .02$ & $.83 \pm .03$ \\ \hline
	\end{tabular}
	\caption{Selected GAPs learned for movies from \dbBook}
	 \label{tab:Q-dbBook}
\end{table*}
\begin{table*}[t!]
	\centering
	\begin{tabular}{cccccc}
		$\calA$ & $\calB$ & $\qao$ & $\qab$ & $\qbo$ & $\qba$ \\  \hline
		{\it Up } & {\it 3 Idiots} &  $.92\pm .01$ & $.94\pm .01$ & $.92 \pm .01$ & $.93 \pm .01$ \\ 
		{\it Pulp Fiction} & {\it Leon} &  $.81\pm .01$ & $.83\pm .01$ & $.95\pm .00$ & $.98\pm .01$ \\ 
		{\it The Silence of the Lambs} & {\it Inception} & $.90 \pm .01$ & $.86 \pm .01$ & $.92\pm .01$ & $.98\pm .01$ \\ 
		{\it Fight Club} & {\it Se7en}  & $.84\pm .01$ & $.89 \pm .01$ & $.89\pm .01$ & $.95\pm .01$ \\ \hline
	\end{tabular}
	\caption{Selected GAPs learned for movies from \dbMovie}
	 \label{tab:Q-dbMovie}
\end{table*}

\subsection{Experiments with Learned GAPs}\label{sec:expReal}

\spara{Baselines}
We compare \generalTIM with several baselines commonly used
	in the literature:
\highdeg: choose the $k$ highest out-degree nodes as seeds;
\pagerank: choose the $k$ nodes with highest PageRank score;
\rand: choose $k$ seeds uniformly at random.
We also include the \greedy algorithm~\cite{kempe03}
	with 10K iterations of MC simulations
	to compute influence spread for the \comic diffusion processes.
%Due to the simulations, it is prohibitively expensive: in one week,
%it only finishes on \flix (both \SIM and \CIM) and \dbBook (\SIM only) .
\vanillaIC and \copying are omitted as the results are
	similar to those in \textsection\ref{sec:synGAPexp} (when the GAPs are close to each other).

%\note[Wei]{There was confusion from Reviewer 1 regarding Greedy,
%	w/ Monte Carlo, and thus I gave a better description.}

\spara{Parameters}
The following pairs of items are tested:
\begin{itemize}
\item \dbBook: {\em The Unbearable Lightness of Being} as $\calA$
	and {\em Norwegian Wood} as $\calB$, and $\bQ = \{0.75, 0.85, 0.92, 0.97\}$.
\item \dbMovie: {\em Fight Club} as $\calA$ and {\em Se7en}
	as $\calB$, and $\bQ = \{0.84, 0.89, 0.89, 0.95\}$. 
\item \flix: {\em Monster Inc} as $\calA$ and {\em Shrek} as $\calB$. 
\item \lastfm: There is no signal in the data to
	indicate informing events, so the learning method in
	\textsection\ref{sec:learn} is not applicable.
As a result, we use synthetic $\bQ = \{0.5, 0.75, 0.5, 0.75\}$.
\end{itemize}

In all four datasets, $\calA$ and $\calB$ are mutually complementary,
	for which self/cross-submodularity does not hold (\textsection\ref{sec:submod}).
Hence, Sandwich Approximation (SA) are used by default for
	\generalTIM and \greedy~\cite{kempe03}. 
Unless otherwise stated, $k=50$.
For \generalTIM,  $\ell=1$ so that a success probability
	of $1-1/|V|$ is ensured~\cite{tang14}.
In \SIM (resp.\ \CIM), the input $\calB$-seeds (resp.\ $\calA$-seeds)
	are chosen to be the 101st to 200th seeds selected by \vanillaIC.
%For \SIM, $\calB$-seeds are the top-10 highest degree nodes so that
%	\greedy can at least finish on the smallest dataset (\flix), and
%	$\epsilon=0.5$ for \timsim and \timsimfast.
%For \CIM, $\calA$-seeds
%	are the same as described in \textsection\ref{sec:synGAPexp}, and
%	$\epsilon = 1$ for \timcim.
We set $\epsilon=0.5$, which is chosen to achieve
	a balance between efficiency (running time) and effectiveness (seed set quality).
In what follows we empirically validate that influence spread is {\sl almost
	completely unaffected} when $\epsilon$ varies from $0.1$
	to $1$.
	
%\note[Wei]{For lack of space, I move those $\epsilon$-related plots to arXiv version.}
	%, due to that
	%$\qao$ and $\qab$ are relatively close
	%in our setting, thus more $\calA$-seeds would allow greater
	%complementary effects (from $\calB$) to be mined.
	 
Algorithms are implemented in C++ and compiled using g++ O3
	optimization.
We run experiments on an openSUSE Linux server with 2.93GHz CPUs and 128GB RAM.

\begin{figure*}
\centering
\begin{tabular}{cccc}
\includegraphics[width=.4\textwidth]{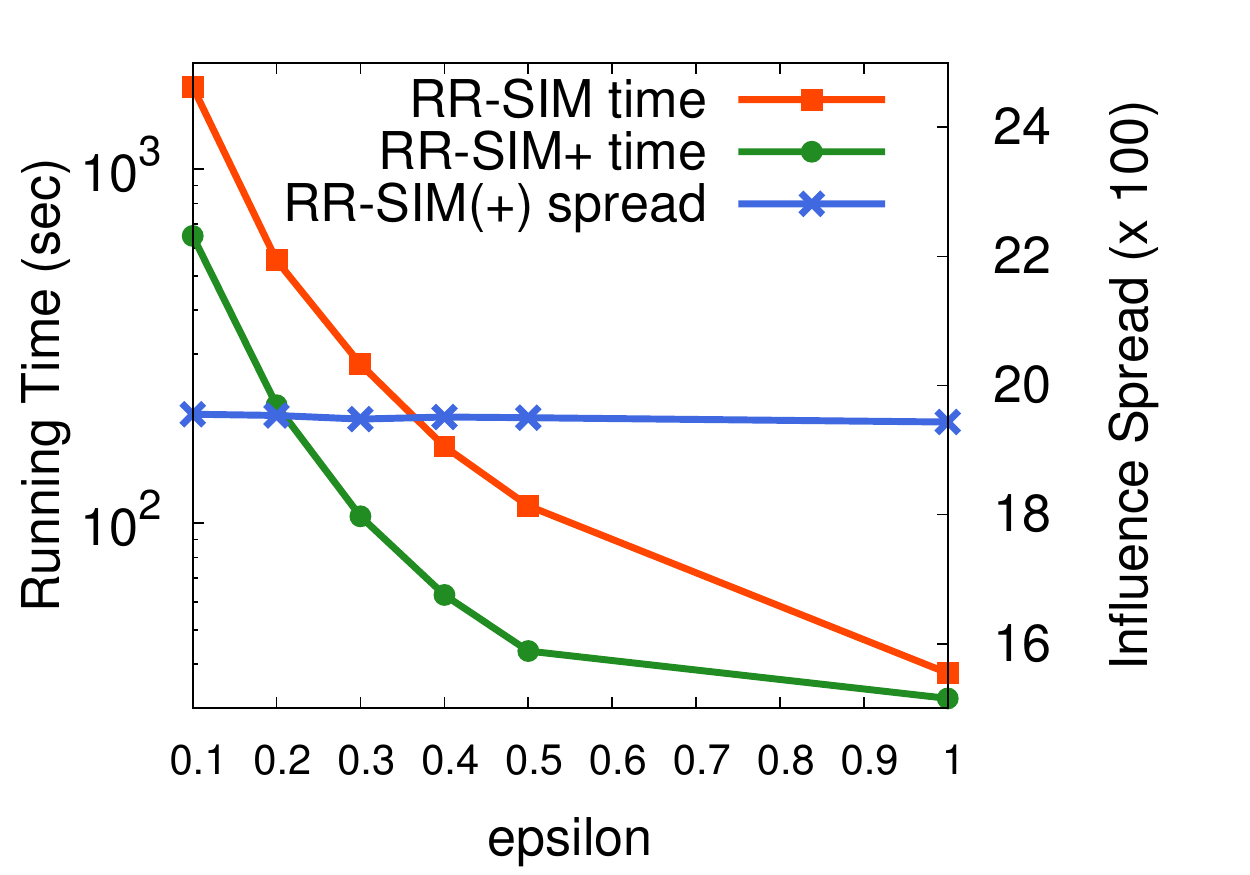}&
\includegraphics[width=.4\textwidth]{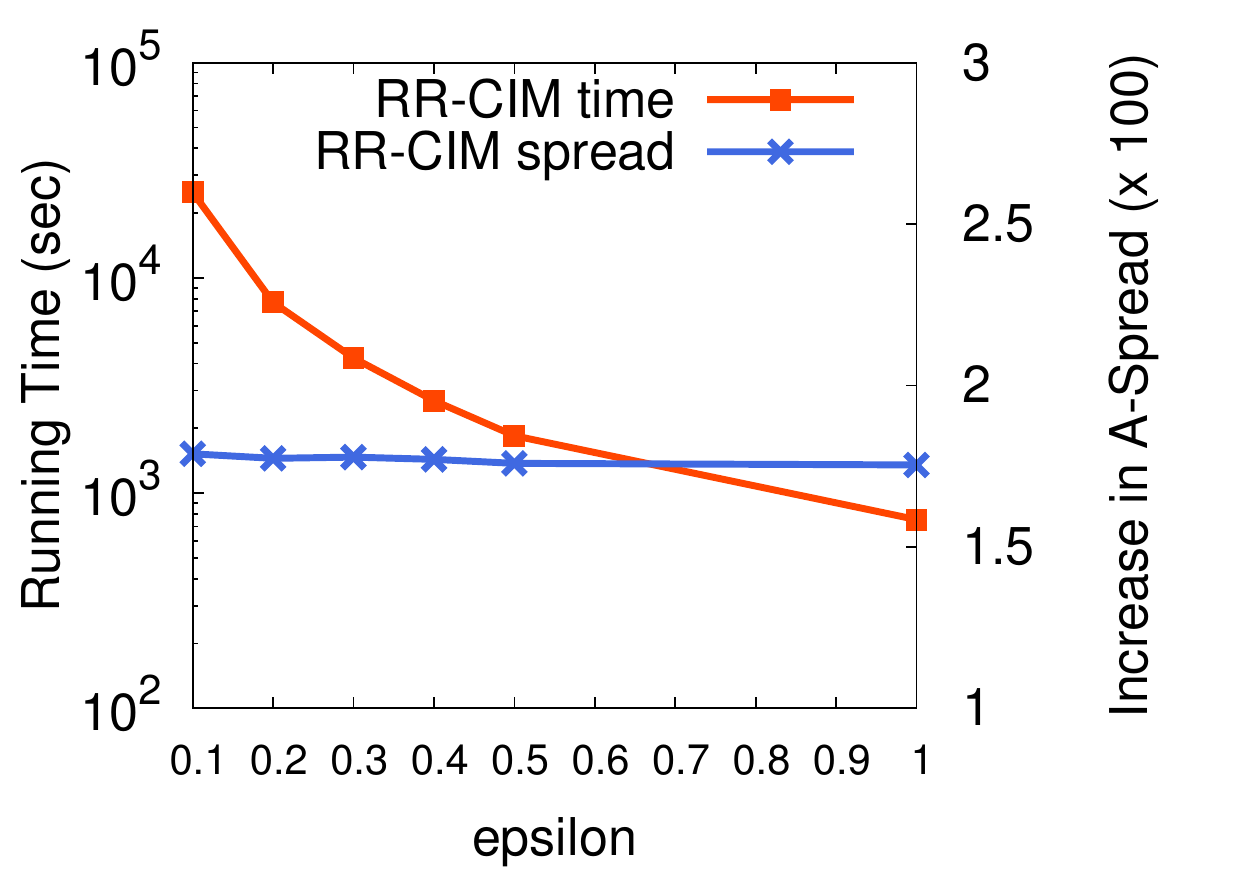}\\
(a) \SIM, \flix  & (b) \CIM, \flix   \\
\includegraphics[width=.4\textwidth]{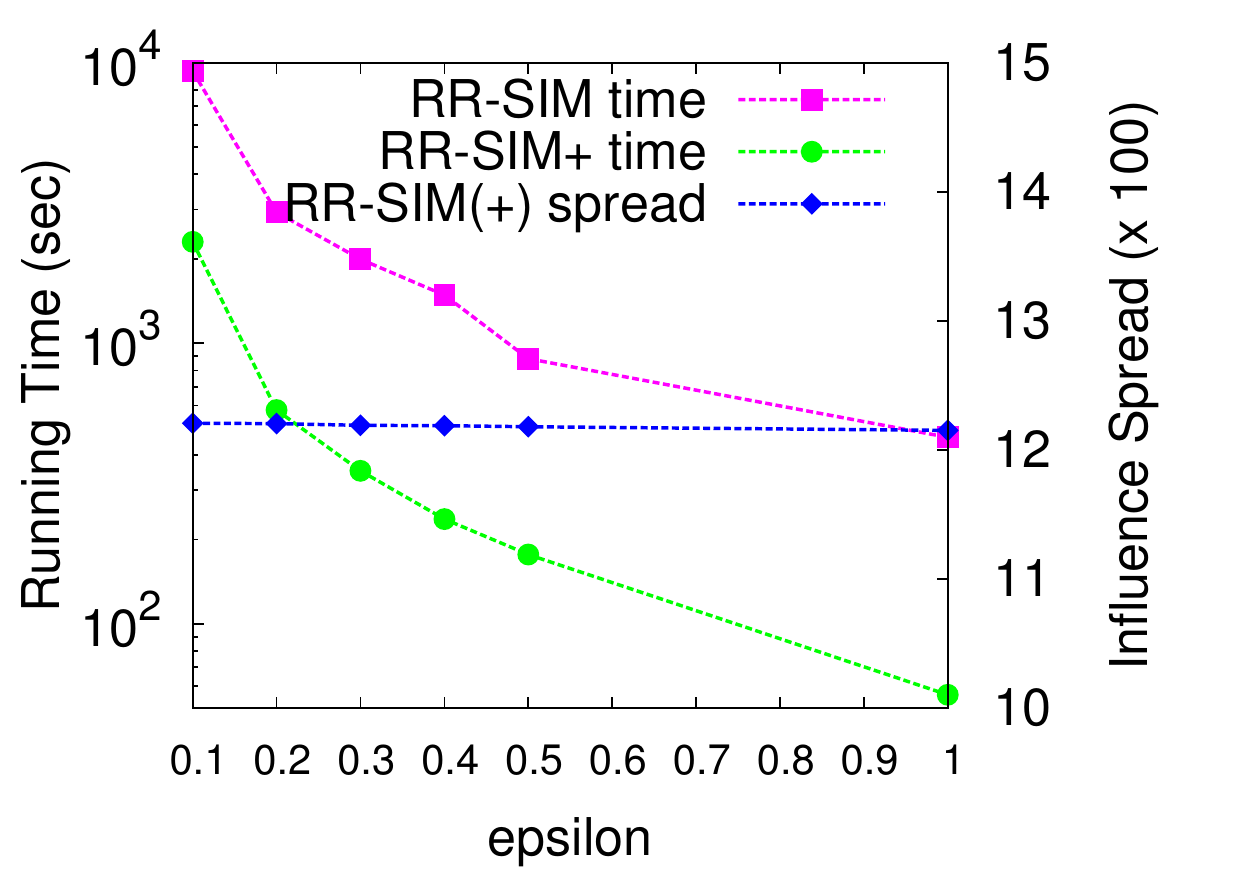}&
\includegraphics[width=.4\textwidth]{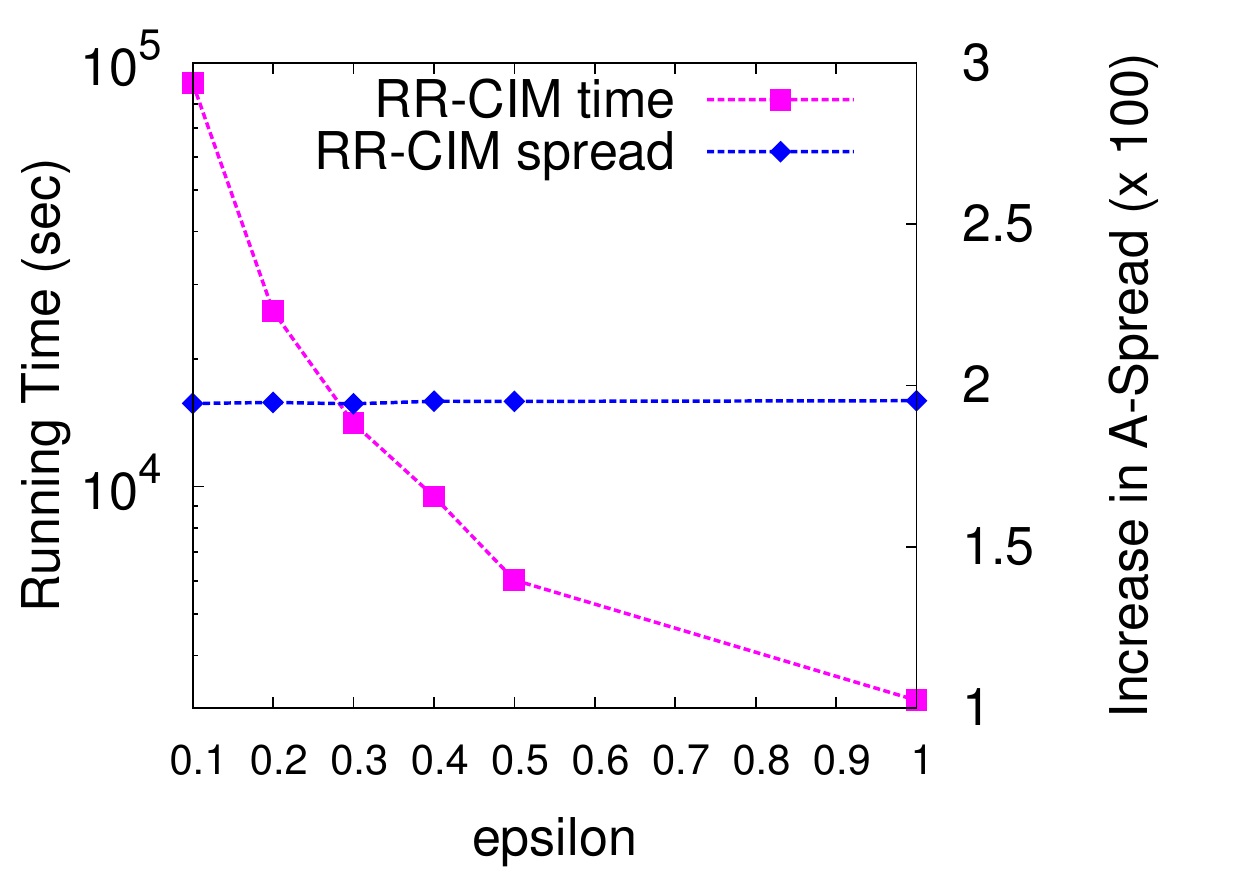}\\
  (c) \SIM, \dbBook & (d) \CIM, \dbBook
\end{tabular}
\caption{Effects of $\epsilon$ on RR-set algorithms (\flix and \dbBook)}
\label{fig:epsilon}
\end{figure*}

\begin{figure*}[h!t!]
\centering
\begin{tabular}{cc}
\includegraphics[width=.4\textwidth]{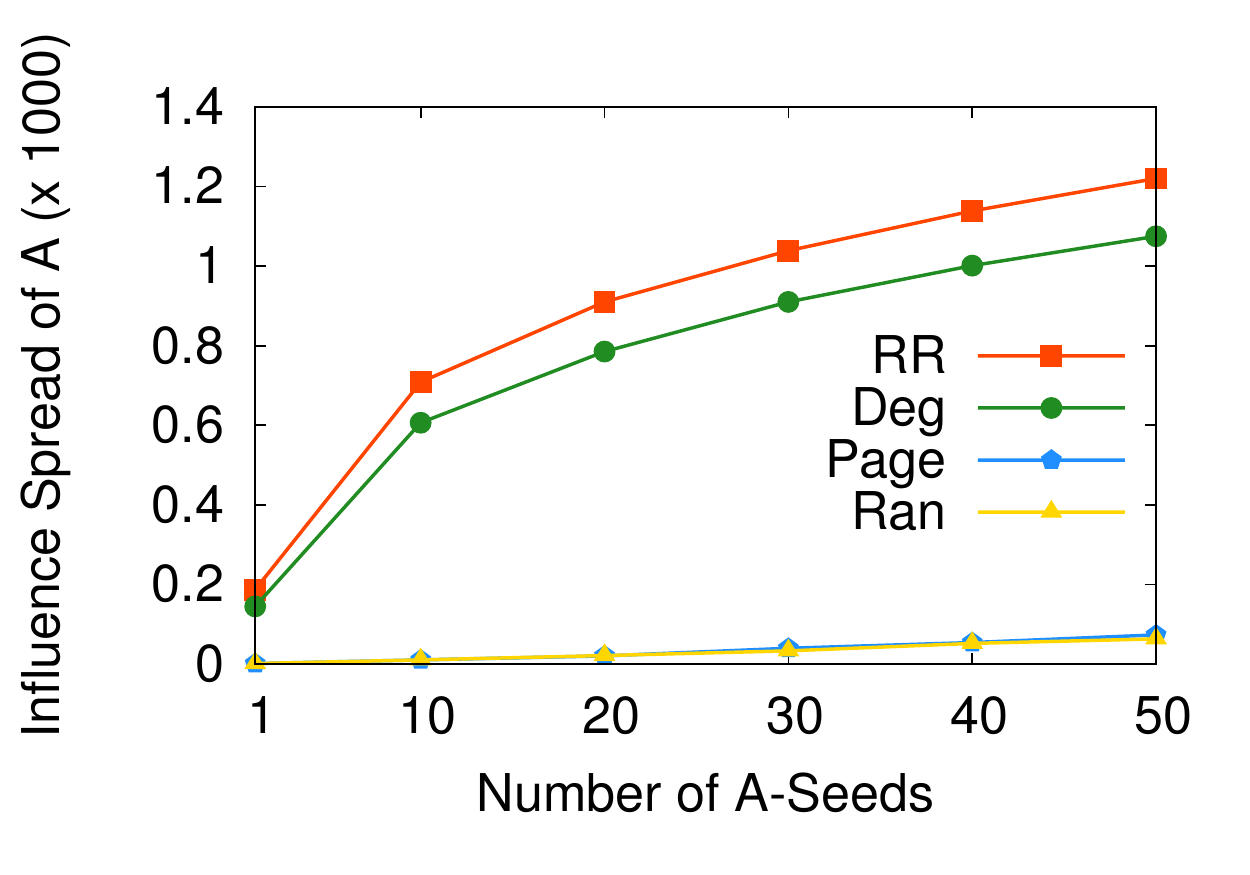} &
\includegraphics[width=.4\textwidth]{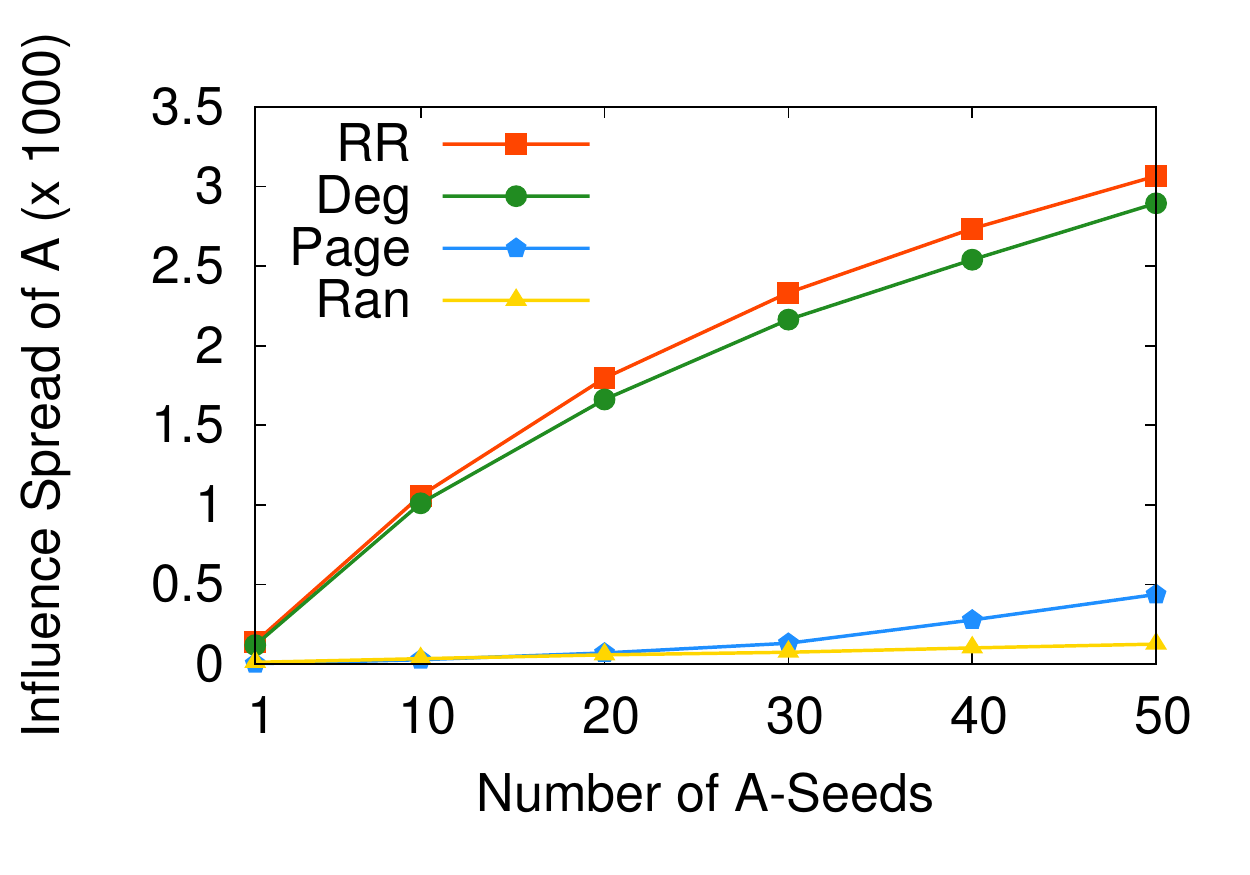} \\
(a) \dbBook & (b) \dbMovie \\ 
\includegraphics[width=.4\textwidth]{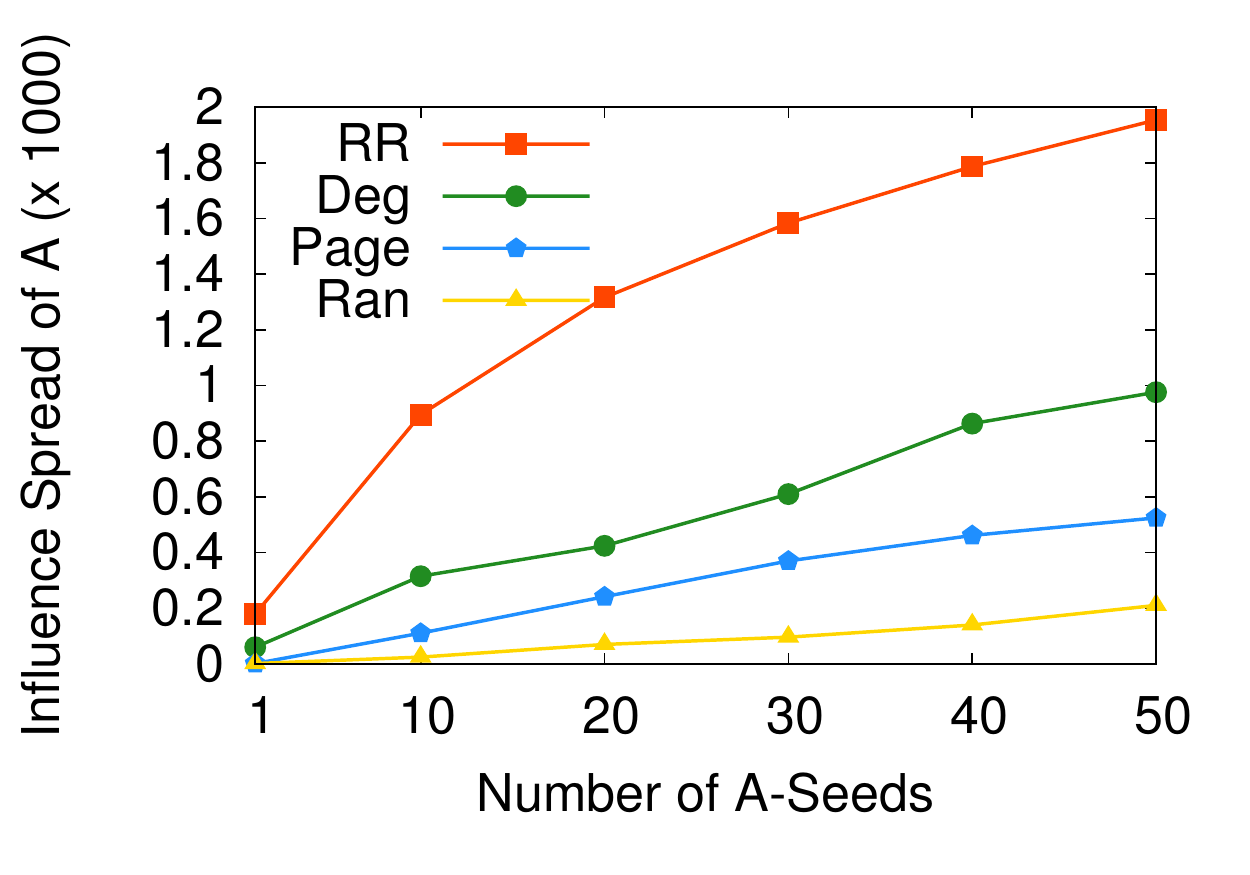} &
\includegraphics[width=.4\textwidth]{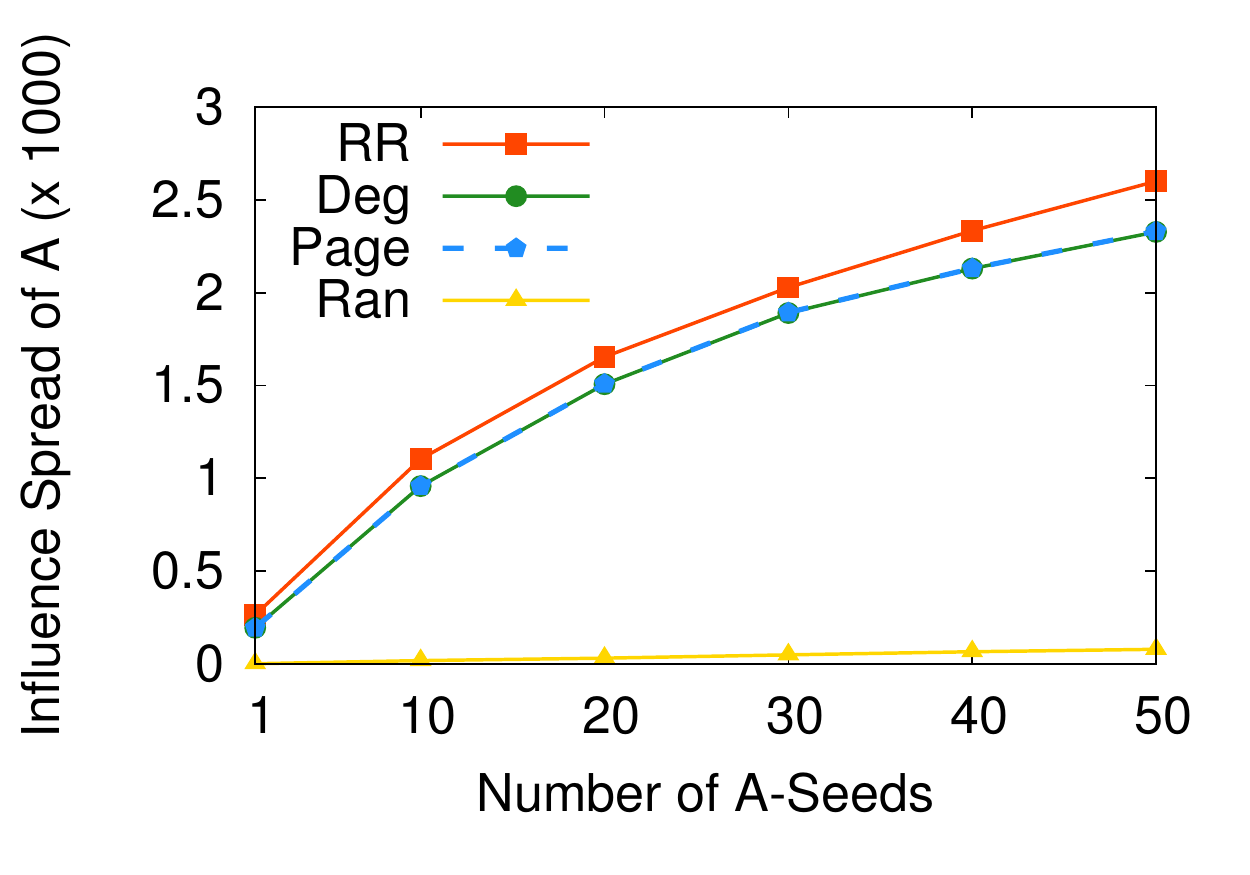} \\
(c) \flix &  (d)  \lastfm
\end{tabular}
\caption{$\calA$-Spread vs.\ $|S_\calA|$ for \SIM (RR -- \generalTIM with \timsimfast, Page -- PageRank; Deg -- High-Degree, Rand -- Random)} \label{fig:sim-spread}
\end{figure*}

\begin{figure*}[h!t!]
\centering
\begin{tabular}{cc}
 \includegraphics[width=.4\textwidth]{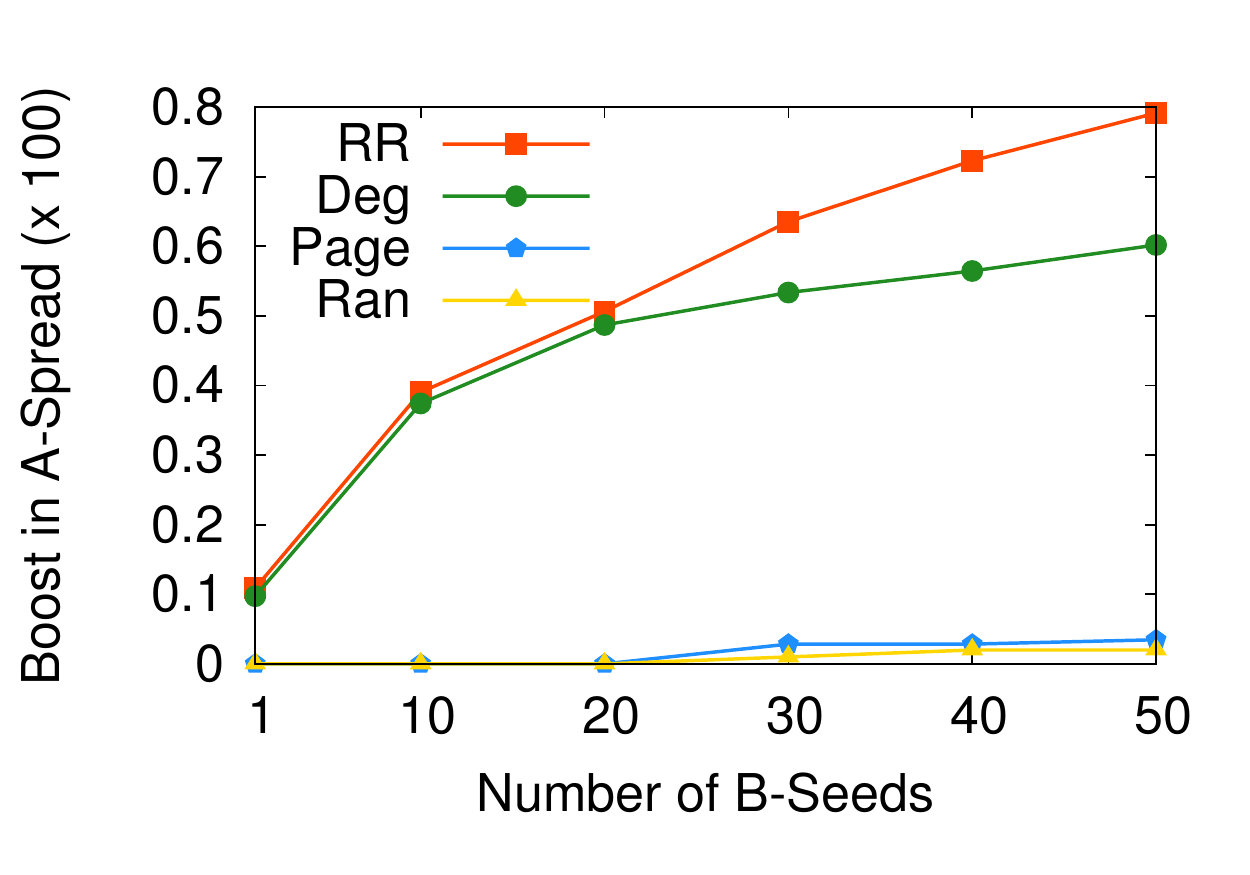} &
 \includegraphics[width=.4\textwidth]{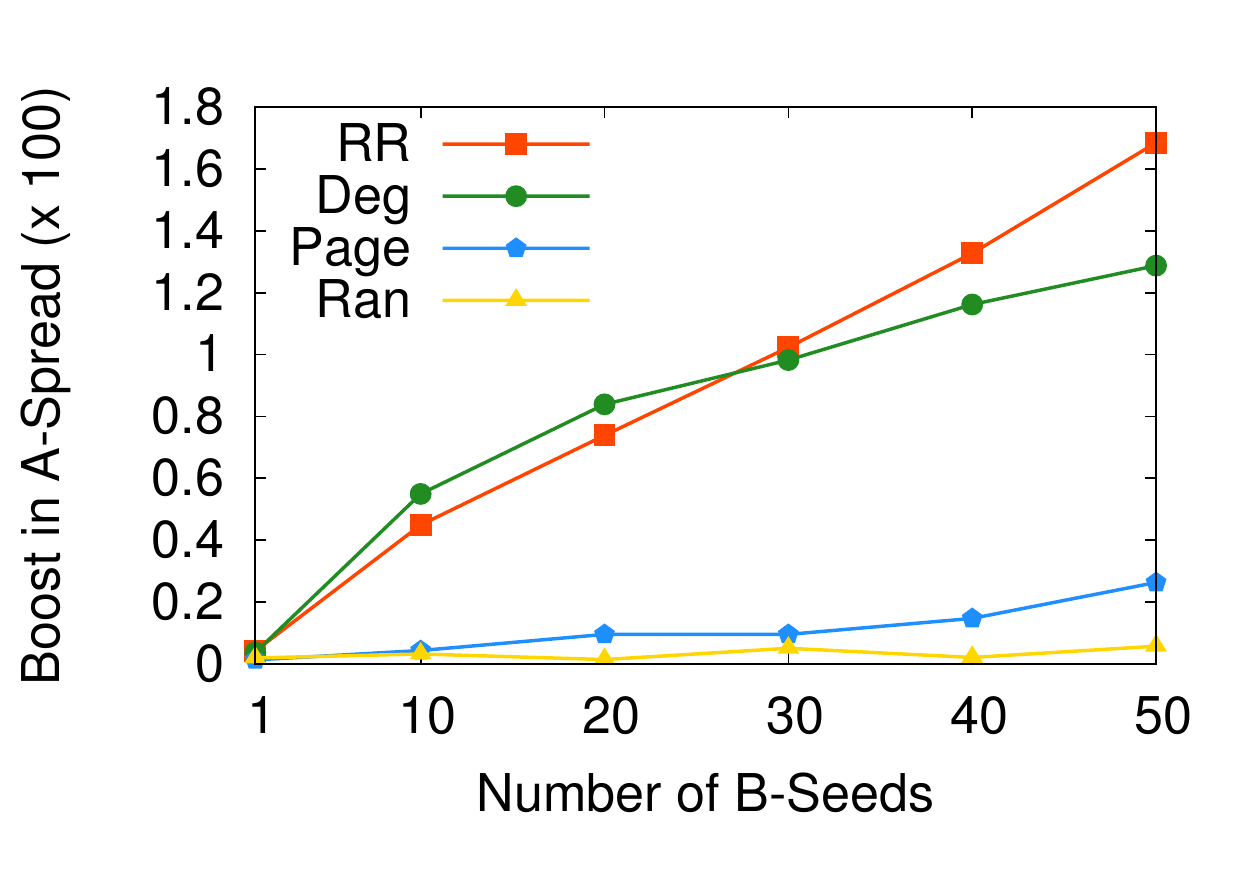} \\
 (a) \dbBook, $\sigma_\calA(S_\calA,\emptyset) = 612$ &  (b) \dbMovie, $\sigma_\calA(S_\calA,\emptyset) = 2357$ \\ 
 \includegraphics[width=.4\textwidth]{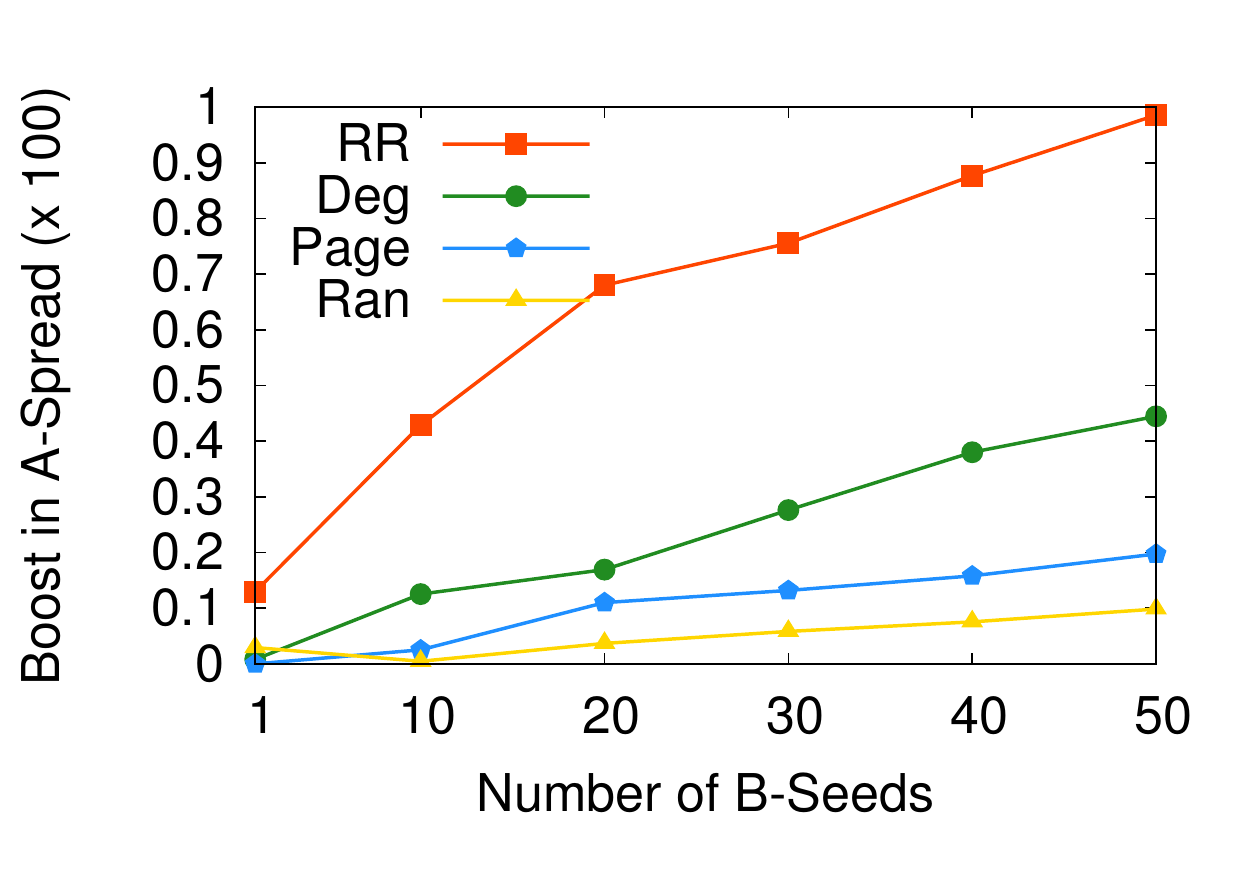} &
 \includegraphics[width=.4\textwidth]{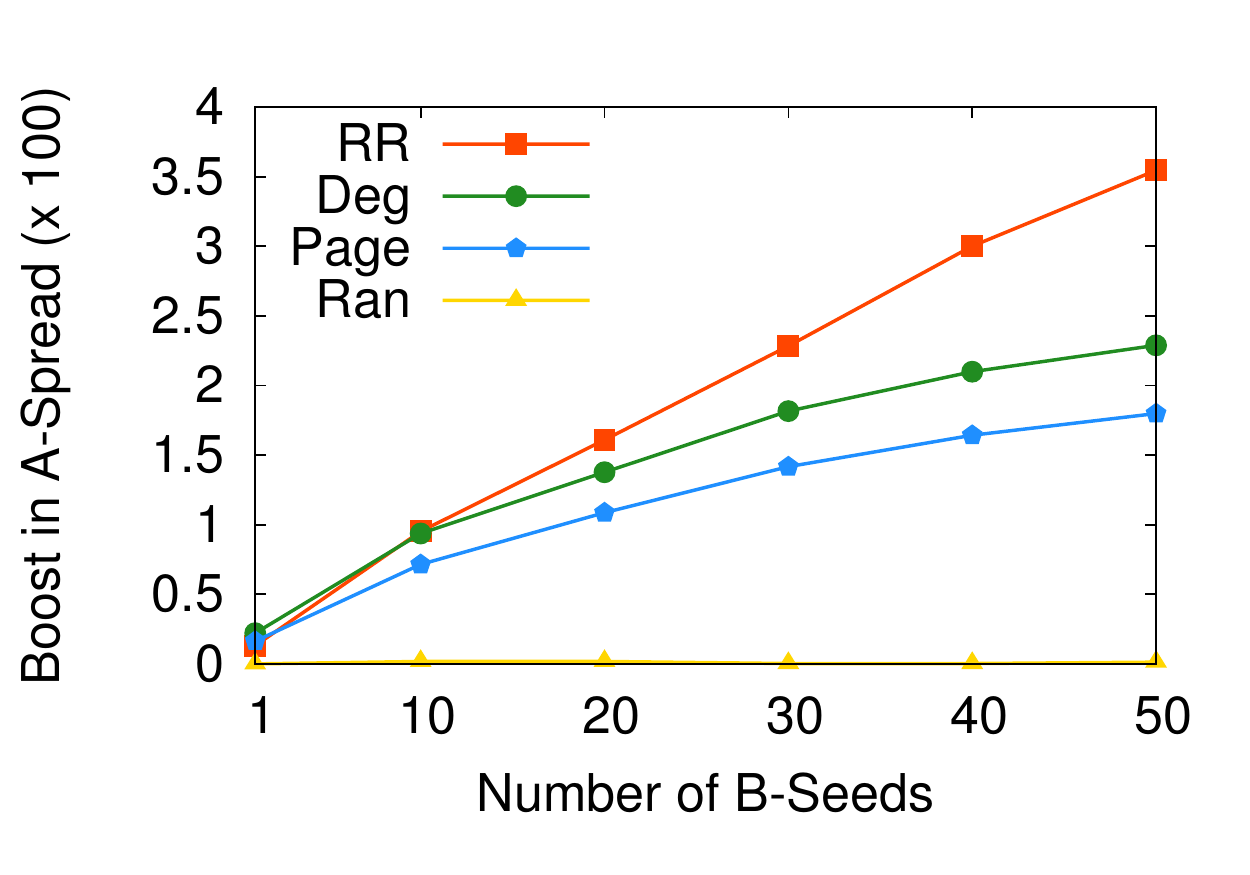} \\
  (c) \flix, $\sigma_\calA(S_\calA,\emptyset) = 1643$ &   (d)  \lastfm, $\sigma_\calA(S_\calA,\emptyset) = 1568$
\end{tabular}
\caption{Boost in $\calA$-Spread vs.\ $|S_\calB|$ for \CIM (RR -- \generalTIM with \timcim)} 
\label{fig:cim-spread}
\end{figure*}

\subsubsection*{Effect of {$\epsilon$} in {\generalTIM}}
%\noindent\textbf{Results \& Analysis: Effects of $\epsilon$.}
%We first evaluate the effects of $\epsilon$ on  \generalTIM.
As mentioned in \textsection\ref{sec:rrset}, $\epsilon$
	controls the trade-off between approximation ratio and efficiency.
Figure~\ref{fig:epsilon} plots influence spread and running time (log-scale)
	side-by-side, as a function of $\epsilon$, on \flix and \dbBook for
	both problems.
The results on other datasets are very similar and thus omitted.
We can see that as $\epsilon$ goes up from $0.1$ to $0.5$
	and $1$ (in fact, $\epsilon = 1 > (1-1/e)$ means theoretical approximation guarantees are lost),
	the running time of all versions of \generalTIM (\timsim, \timsimfast, \timcim)
	{\sl decreases dramatically}, by orders of magnitude.
	while, in practice, influence spread (\SIM) and boost (\CIM) are {\sl almost
	completely unaffected} (the largest difference among all test cases is only $0.45\%$).
%This allows us to set a reasonably large $\epsilon$: unless otherwise stated, we use 
%	$\epsilon=0.5$ for \timsim and \timsimfast, and $\epsilon = 1$ for \timcim.

\subsubsection*{Quality of Seeds}
The quality of seeds is measured by the influence spread or boost achieved.
%Since the ground-truth is \#P-hard to compute~\cite{ChenWW10}, 
We evaluate the spread of seed sets computed by all algorithms by MC simulations with 10K iterations for fair comparison.
As can be seen from Figures~\ref{fig:sim-spread} and \ref{fig:cim-spread}, our RR-set algorithms are consistently
	the best in almost all test cases, often leading by a significant margin.
The results of \greedy are omitted, since the spread it achieves is almost identical to \generalTIM, matching the observations in prior work \cite{tang14}.
\timsim results are identical to \timsimfast, and thus also omitted.

For \SIM, \generalTIM with \timsimfast is $13\%$, $2.7\%$, $100\%$, and $13\%$
	better than the next best algorithm on \dbBook, \dbMovie, \flix and \lastfm respectively, while for \CIM, \generalTIM with \timcim is $31\%$, $31\%$, $122\%$, and $51\%$ better. % (same order of datasets).
The boost in $\calA$-spread provided by $\calB$-seeds (\generalTIM with \timcim) is at least $6\%$ to $15\%$ of the original $\calA$-spread by $S_\calA$ only.
\highdeg performs well, especially in graphs with many nodes having large out-degrees (\dbMovie, \lastfm), while \pagerank produces good quality seeds only on \lastfm.
\rand is consistently the worst.
The performances of baselines are generally consistent with observations in prior works~\cite{ChenWW10, ChenWW10b,tang14} albeit for different diffusion models.

\begin{figure*}
\centering
\begin{tabular}{cc}
\includegraphics[width=.35\textwidth]{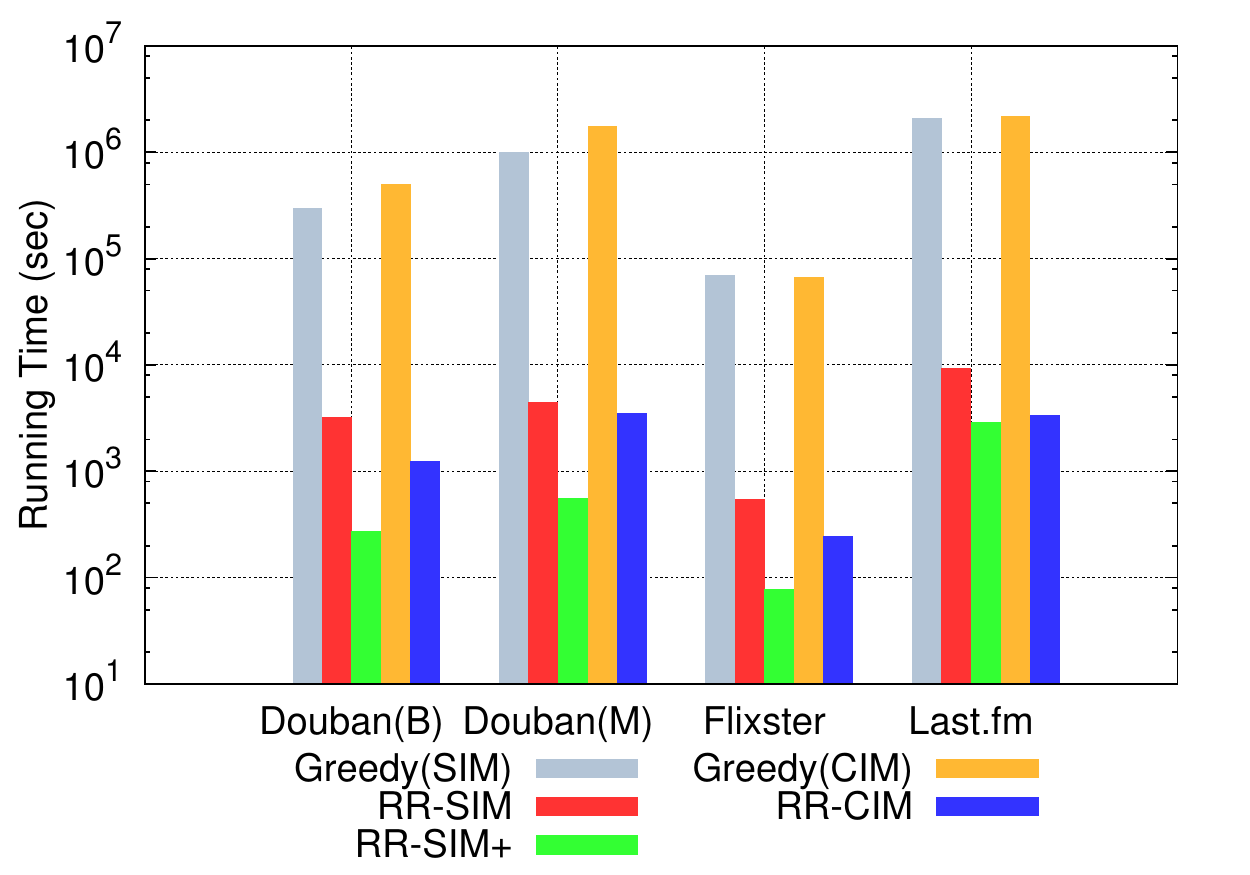}&
\includegraphics[width=.35\textwidth]{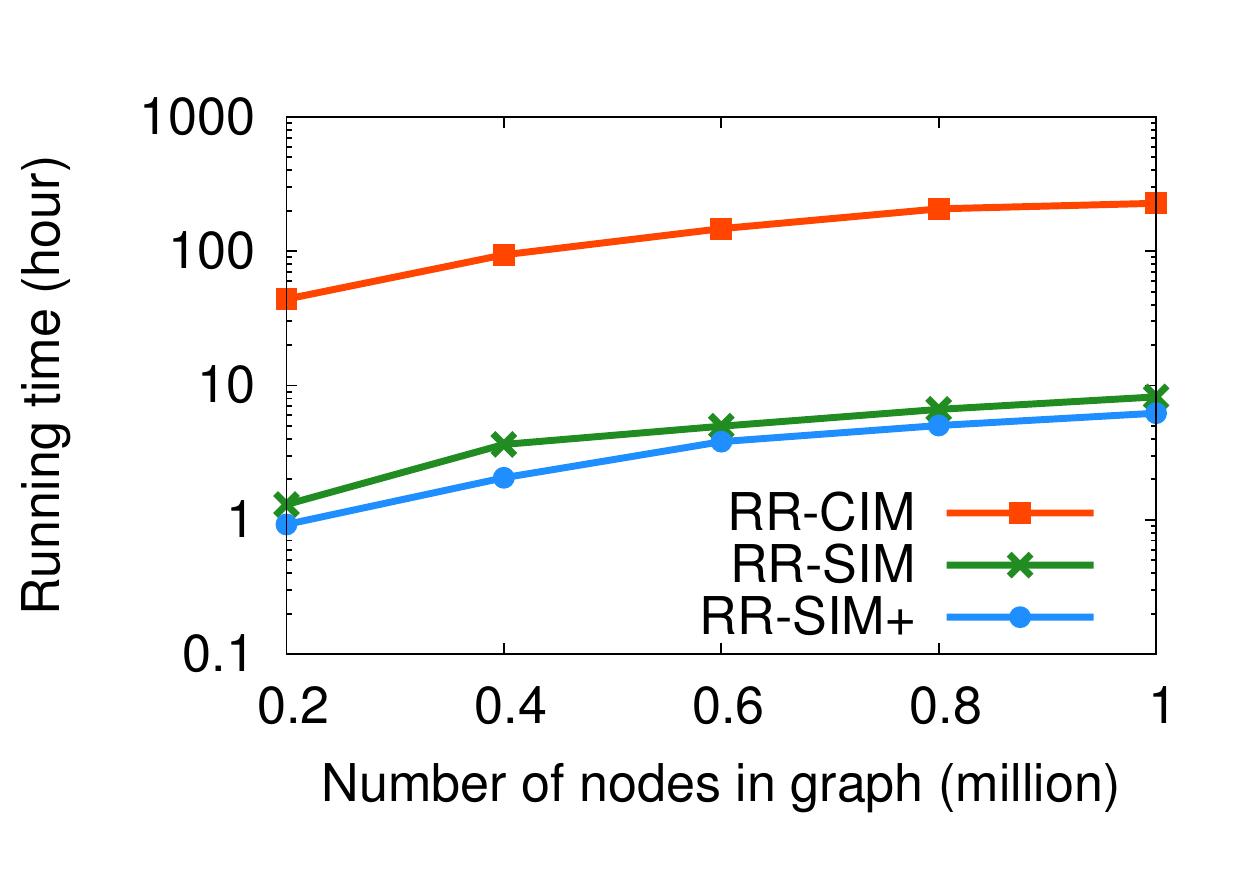}\\
%\hspace{0mm}\includegraphics[width=.285\textwidth]{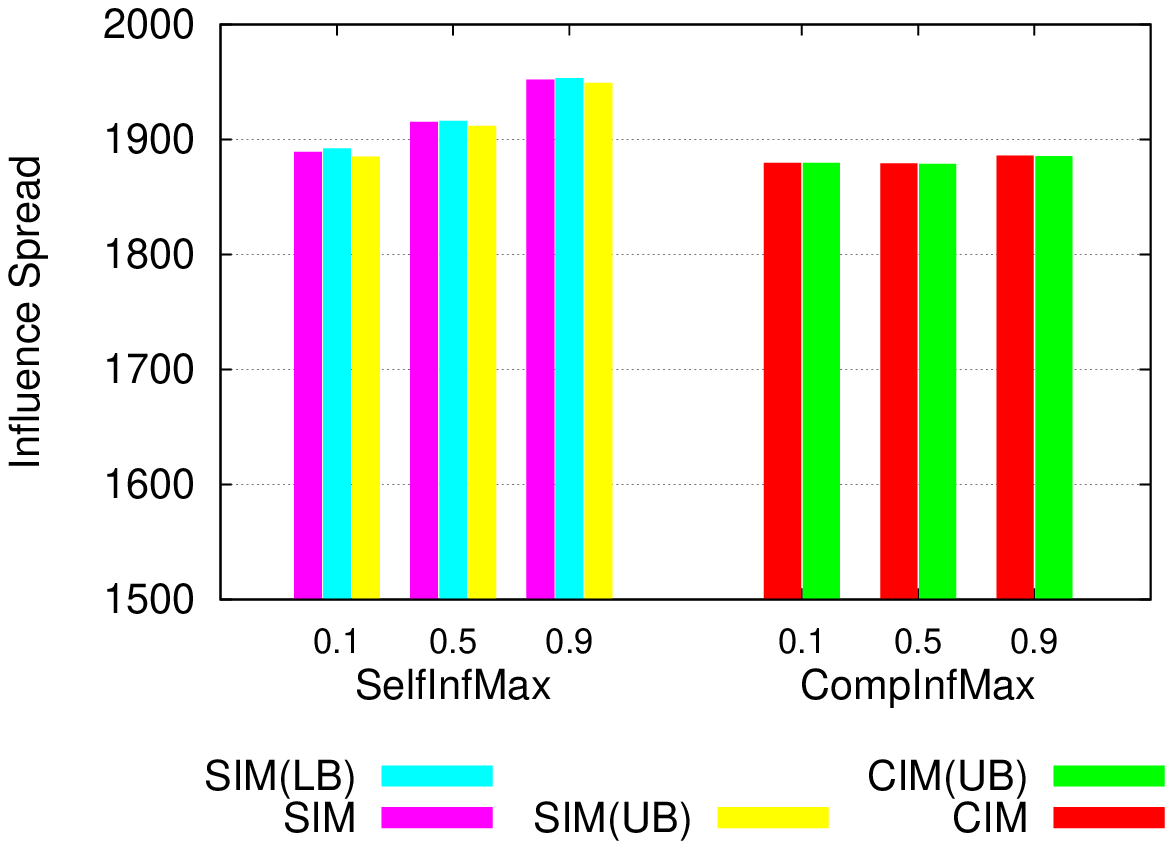}\\
 (a) Real networks  & (b) Synthetic graphs 
\end{tabular}
\caption{Running time}
\label{fig:time}
\end{figure*}

\begin{table*}
%\scriptsize
\centering
\begin{tabular}{lcccc}
						&\dbBook 	&\dbMovie	& \flix 	& \lastfm \\ \hline
$\mathsf{SIM_{learn}}$	& $0.996$	& $0.999$	& $0.996$	& $0.999$  \\ 
$\mathsf{SIM_{0.1}}$	& $0.652$	& $0.962$	& $0.492$ 	& $0.519$   \\ 
$\mathsf{SIM_{0.5}}$	& $0.770$ 	& $0.969$ 	& $0.633$	& $0.628$  \\ 
$\mathsf{SIM_{0.9}}$	& $0.946$	& $0.985$ 	& $0.926$   & $0.879$   \\ \hline

$\mathsf{CIM_{learn}}$	& $0.973$	& $0.918$	& $0.950$ 	& $0.825$   \\ 
$\mathsf{CIM_{0.1}}$	& $0.913$	& $0.832$	& $0.933$ 	& $0.772$ \\ 
$\mathsf{CIM_{0.5}}$	& $0.936$ 	& $0.885$	& $0.969$	& $0.857$ \\ 
$\mathsf{CIM_{0.9}}$	& $0.956$  	& $0.976$	& $0.993$ 	& $0.959$   \\ \hline
\end{tabular}
\caption{Sandwich approximation: $\sigma(S_\nu) / \nu(S_\nu)$}
\label{tab:SA-factor}
\end{table*}

\subsubsection*{Running Time and Scalability} 
We compare the running time of \generalTIM to \greedy, shown in Figure~\ref{fig:time}(a).
%Note that the top of $Y$-axis is set to be 604,800 seconds (one week), and touching it means
%	the algorithm runs beyond this time limit.
As can be seen, for \SIM, \generalTIM with \timsim, \timsimfast is about {\sl two to three orders of magnitude} faster than \greedy; % (shown as {\sf MC-SIM});
for \CIM, \generalTIM with \timcim is also about {\sl two orders of magnitude} faster than \greedy. % (shown as {\sf MC-CIM}).
%In addition, we observe that \timsimfast is 2.6, 5.5, and 2.2 times as fast as \timsim on \flix, \dbBook, and \lastfm respectively, while on \dbMovie, \timsim is 3.4 times faster.
%This is not surprising: As mentioned in \textsection\ref{sec:rrset}, though the adjustment made in \timsimfast
%	is intuitive and confirmed effective in three of the datasets, there is no theoretical guarantee
%	that \timsimfast is always faster than \timsim.
In addition, we observe that \timsimfast is 12, 8, 7, and 2 times as fast as \timsim on \dbBook, \dbMovie, \flix, and \lastfm respectively.
The running time of \highdeg, \pagerank, and \rand baselines are omitted since they are typically very efficient~\cite{ChenWW10, ChenWW10b, infbook}.

We then use larger synthetic networks to test the scalability of \generalTIM with our RR-set generation algorithms.
We generate power-law random graphs of 0.2, 0.4, ..., up to 1 million nodes with a power-law degree exponent of 2.16~\cite{ChenWW10}.
These graphs have an average degree of about 5.
%Settings and parameters are set per \textsection\ref{sec:exp-setting}, and 
We use the GPAs from \flix. %\weic{I do not understand this sentence. What does it mean we use $\bQ^+$?}
%The difference between \timsimfast and \timsim is minimal so we only show \timsimfast.
%We found that typically high-degree nodes in power-law graphs, chosen to be $\calB$-seeds, have much larger degrees compared to real graphs, and hence $\calB$-seeds will reach a significant portion of the graph, making the adjustment of \timsimfast less effective than on real graphs.
We can see that \generalTIM with \timsimfast within 6.2 hours for the 1-million node graph, and its running time grows linearly in graph size, which indicates great scalability.
\timcim is slower %than \timsimfast 
due to the inherent intricacy of \CIM, but it also scales linearly.
To put its running time measures in perspective, \greedy --- the only other known approximation algorithm for \CIM --- takes about 48 hours on \flix (12.9K nodes), while \generalTIM with \timcim is 4 hours faster on a graph 10 times as large.

\subsubsection*{Approximation Factors by Sandwich Approximation}
Recall from \textsection\ref{sec:sandwich} that the
	approximation factor yielded by SA is data-dependent:
$\sigma(S_{\mathit{sand}}) \ge \max \{\frac{\sigma(S_\nu)}{\nu(S_\nu)}, \frac{\mu(S_\sigma^*)}{\sigma(S_\sigma^*)} \} \cdot (1-1/e-\epsilon) \cdot  \sigma(S_\sigma^*).$
To see how good the SA approximation factor is in real-world graphs, we compute ${\sigma(S_\nu)}/{\nu(S_\nu)}$, as\InFullOnly{ although 
$S_\sigma^*$ is unknown due to NP-hardness (Theorem~\ref{thm:hard}),}
	SA is guaranteed to have an approximation factor
	of at least $(1-1/e-\epsilon) \cdot {\sigma(S_\nu)}/{\nu(S_\nu)}$.

In the GAPs learned from data,
	both $\qba-\qbo$ and $\qab-\qao$ are small and thus likely
	``friendly'' to SA, as we mentioned in \textsection\ref{sec:sandwich}.  
	Thus, we further ``stress test'' SA with more adversarial settings:
First, set $\qao = 0.3$ and $\qab = 0.8$;
Then, 
	for \SIM, fix $\qba = 1$ and vary $\qbo$ from $\{0.1, 0.5, 0.9\}$;
	for \CIM, fix $\qbo = 0.1$ and vary $\qba$ from $\{0.1, 0.5, 0.9\}$.
%\pink{
%We remind that to get the upper bound function $\nu(\cdot)$, 
%	for \SIM we increase $\qbo$ to $\qba$ and for \CIM we increase $\qba$ to $1$.}
%\weic{This sentence is optional, can be removed if we need space.}
%\InFullOnly{This ensures mutual complementarity.}

Table~\ref{tab:SA-factor} illustrates the results on all datasets with
	both learned GAPs and artificial GAPs.
We use shorthands $\mathsf{SIM}$ and $\mathsf{CIM}$ for \SIM and \CIM respectively.
Subscript $\mathsf{learn}$ means the GAPs are learned from data.
In stress-test cases (other six rows), e.g., for $\mathsf{SIM}$, subscript $\mathsf{0.5}$ means $\qbo = .5$,
	while for $\mathsf{CIM}$, it means $\qba = .5$.
As can be seen, with real GAPs, the ratio is extremely close to $1$, matching our intuition.
For artificial GAPs, the ratio is not as high, but most of them are still close to $1$.
E.g., in the case of  $\mathsf{SIM_{0.5}}$, $\sigma(S_\nu) / \nu(S_\nu)$
	ranges from $0.628$ (\lastfm) to $0.969$ (\dbMovie), which correspond to
	an approximation factor of $0.40$ and $0.61$ ($\epsilon$ omitted).
Even the smallest ratio ($0.492$ in $\mathsf{SIM_{0.1}}$, \flix) would still yield
	a decent factor at about $0.3$.
This shows that SA is fairly effective and robust for solving non-submodular cases
	of \SIM and \CIM.
%In the appendix, we provide further evidences to support our findings here.

%As discussed in \textsection\ref{sec:sandwich}, 
%SA is likely to be effective for \SIM and \CIM when the adoption probabilities are
%	close, which is indeed the case for the real world data we learn.
We also measure the effectiveness of SA by comparing the spread achieved
	by seed sets $S_\sigma, S_\nu, S_\mu$ obtained
	w.r.t.\ the original, upper bound, and lower bound functions respectively.
Such spread must be computed using the original function $\sigma$, and in our case,
	the unaltered GAPs.
More specifically, we calculate the relative error defined as follows (for \CIM, disregard $S_\mu$).
\[
\mathtt{SA\_error} =\frac{\max\{|\sigma(S_\sigma) - \sigma(S_\mu)|, |\sigma(S_\sigma) - \sigma(S_\nu)|\}} {\sigma(S_\sigma)}.
\]

\begin{figure}
\centering
\includegraphics[width=0.45\textwidth]{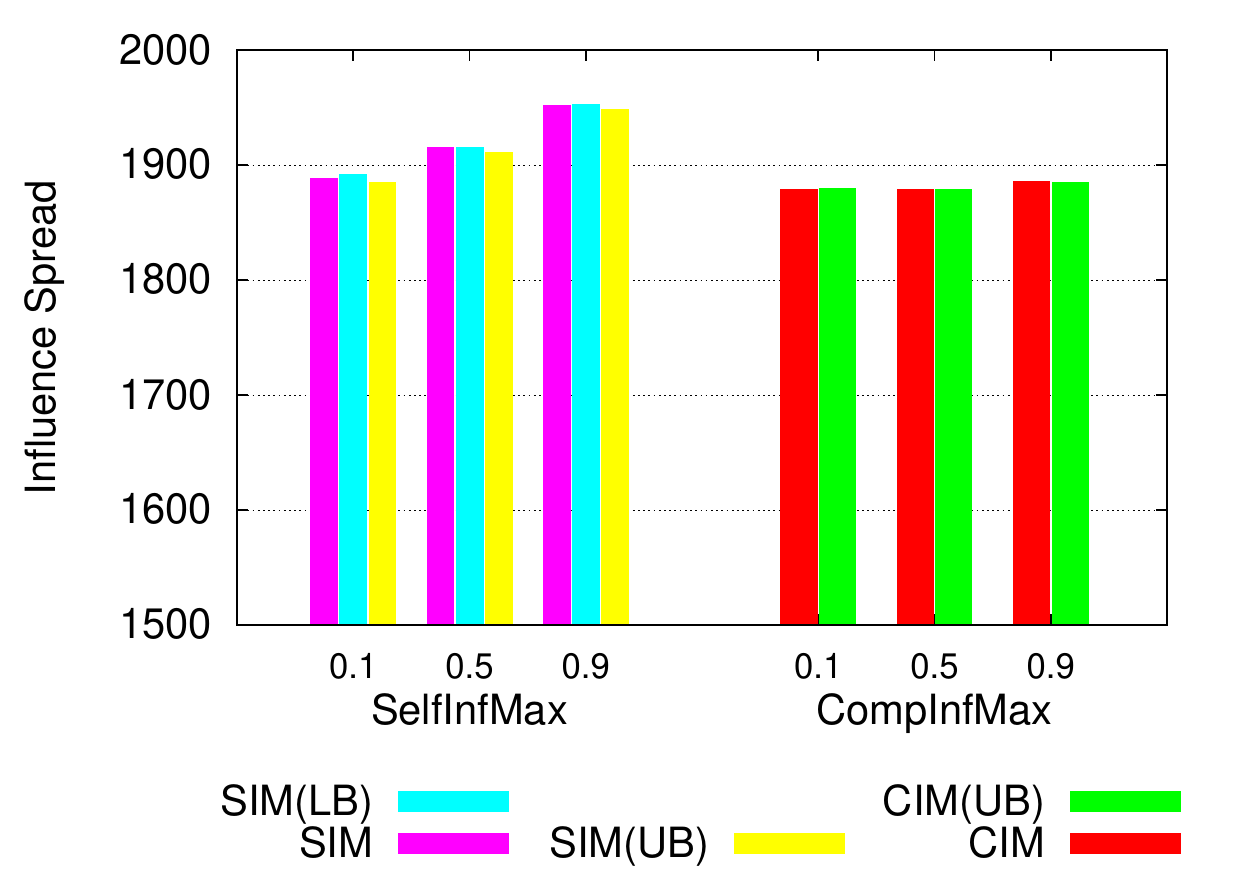}
\caption{Sandwich Approximation on \flix}
\label{fig:sandwich}
\end{figure}

In all four datasets, for both problems, the largest error is only $0.2\%$!
To see if this is due to that $\qbo$ and $\qba$ being close in the GAPs learned
	from action log data, we further ``stress test'' SA with a much more adversarial set-up.
For \SIM, vary $\qbo$ to be $\{.1, .5, .9\}$ and fix $\qba = .96$.
For \CIM, vary $\qba$ to be $\{.1, .5, .9\}$ and fix $\qbo = .1$ (to maintain complementarity).
No change is made to $\qao$ and $\qab$.

Figure~\ref{fig:sandwich} compares $\sigma(S_\sigma)$, $\sigma(S_\mu)$, and $\sigma(S_\nu)$
	on \flix.
As can be seen, even in this adversarial setting, SA is still highly effective for both \SIM and \CIM.
Amongst all test cases, the largest error is $0.4\%$.
The results on other datasets are very similar and hence omitted.
%We further stress test more by separating $\qao$ and $\qab$ to a wider margin, and still achieve
%	similar results (hence omitted).
%This means the SA strategy is highly robust and useful for our problems. 

\section{Conclusions \& Future Work}\label{sec:concl}
\balance

In this work, 
we propose the Comparative Independent Cascade (\comic) model that allows any degree of competition or complementarity between two different propagating items, and study the novel \SIM and \CIM problems for complementary products.
We %identify parameter subspaces of \comic under which \SIM and \CIM enjoy submodularity and monotonicity, and 
develop non-trivial extensions to the RR-set techniques to achieve approximation algorithms.
For non-submodular settings, we propose Sandwich Approximation to achieve data-dependent approximation factors.
Our experiments demonstrate the effectiveness and efficiency of proposed solutions.

%This work opens up a number of interesting avenues for future research.
For future work, one direction is to design more efficient algorithms or
	heuristics for \SIM and (especially) \CIM; 
	e.g., whether near-linear time algorithm is still available for these
	problems is still open.
Another direction is to fully characterize the entire GAP space $\bQ$ in terms of
	monotonicity and submodularity properties.
Moreover, an important direction is to 
	extend the model to multiple items.
Given the current framework, \model can be extended to accommodate $k$ items, 
	if we allow 
	$k\cdot 2^{k-1}$ GAP parameters --- for each item, we specify the probability of 
	adoption for every combination of other items that have been adopted.
However, how to simplify the model and make it tractable, how to reason about
	the complicated two-way or multi-way competition and complementarity, how
	to analyze  monotonicity and submodularity, and how to learn
	GAP parameters from real-world data all remain as
	interesting challenges.
Last, it is also interesting to consider an extended \comic model
	in which influence probabilities on edges are product-dependent.

\section*{Acknowledgments}
This research is supported in part by a Discovery grant and a Discovery Accelerator Supplements grant
	from the Natural Sciences and Engineering Research Council of Canada (NSERC).
We also thank Lewis Tzeng for some early discussions on modeling influence
propagations for partially competing and partially complementary items.

%\clearpage
\appendix
%\section{Remarks on Com-IC Model}
%
%\spara{Special Cases of Com-IC}
%We remark that \comic encompasses previously-studied single-entity and pure-competition models as special cases.
%When $\qao = \qbo = 1$ and $\qab = \qba = 0$, \model reduces to the (purely) competitive Influence Cascade model~\cite{infbook}.
%If, in addition, $\qbo$ is $0$, the model further reduces to the classic IC model.

%\note[Laks]{The appendix looks a bit messy. We should order the material in the same order as the section of the main paper it pertains to. E.g., proofs before additional experiments.} 

\clearpage

\section*{Appendix}

\section{Remarks on Com-IC Model}

\subsection{Unreachable States}
Recall that in the \comic model, before an influence diffusion starts,
	all nodes are in the initial joint state ($\calA$-idle, $\calB$-idle).
According to the diffusion dynamics defined in Figure~\ref{fig:model}, 
	there exist five unreachable joint states, {\sl which are not material
	to our analysis and problem-solving, since none of these
	is relevant to actual adoptions}, the objectives studied in \SIM and \CIM.
For completeness, we list these states here.
\begin{enumerate}
\item ($\calA$-idle, $\calB$-rejected)
\item ($\calA$-suspended, $\calB$-rejected)
\item ($\calA$-rejected, $\calB$-idle)
\item ($\calA$-rejected, $\calB$-suspended)
\item ($\calA$-rejected, $\calB$-rejected)
\end{enumerate}

\begin{lemma}
In any instance of the \comic model (no restriction on GAPs), no node can reach
	the state of ($\calA$-idle, $\calB$-rejected), from its initial state
	of ($\calA$-idle, $\calB$-idle).
\end{lemma}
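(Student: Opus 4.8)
The plan is to argue directly from the node-level automaton (NLA) transition rules in Figure~\ref{fig:model}, by showing that the $\calB$-rejected state can only be entered while a node is \emph{already} $\calA$-adopted, and then invoking the fact that $\calA$-adopted is an absorbing state. Together these rule out the coexistence of $\calB$-rejected with $\calA$-idle.

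First I would enumerate every transition that brings a node $v$ into the $\calB$-rejected state. By the symmetric (swap $\calA \leftrightarrow \calB$) versions of the node-adoption and node-reconsideration rules, there are exactly two such transitions: (i) from the $\calB$-idle state, when $v$ is tested for $\calB$ while it is $\calA$-adopted, $v$ becomes $\calB$-rejected with probability $1-\qba$ (note that a $\calB$-idle node that is \emph{not} $\calA$-adopted can only go to $\calB$-adopted or $\calB$-suspended, never directly to $\calB$-rejected); and (ii) from the $\calB$-suspended state, when $v$ has just adopted $\calA$ and reconsiders $\calB$, $v$ becomes $\calB$-rejected with probability $1-\rho_\calB$. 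Crucially, transition (i) applies precisely \emph{because} $v$ is $\calA$-adopted, and transition (ii) is triggered by $v$ having just become $\calA$-adopted. Hence in either case, at the very moment $v$ enters $\calB$-rejected, its $\calA$-state is already $\calA$-adopted.

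Next I would observe that $\calA$-adopted is absorbing: inspecting Figure~\ref{fig:model}, no adoption or reconsideration rule moves a node out of $\calA$-adopted once it is reached. Therefore, once $v$ becomes $\calB$-rejected---and thus $\calA$-adopted---it remains $\calA$-adopted for the remainder of the diffusion. Consequently, the joint state of any $\calB$-rejected node always has $\calA$-adopted as its $\calA$-component, which is incompatible with $\calA$-idle. This excludes the joint state ($\calA$-idle, $\calB$-rejected), establishing the lemma.

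The argument is a short case analysis rather than a deep result, so there is no serious obstacle; the only point demanding care is the \emph{exhaustiveness} of the in-transition enumeration to $\calB$-rejected. Specifically, I would verify against Figure~\ref{fig:model} that the NLA provides no third route into $\calB$-rejected (in particular, no direct $\calB$-idle-to-$\calB$-rejected edge in the absence of $\calA$-adoption), and that the reconsideration rule (ii) fires only after the $\calA$-adoption has taken effect. Once these are confirmed, the absorbing property of $\calA$-adopted immediately yields the claim.
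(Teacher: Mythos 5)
Your proof is correct and rests on the same case analysis of the NLA as the paper's: both hinge on the fact that the only two routes into $\calB$-rejected (direct rejection when informed of $\calB$ while $\calA$-adopted, and reconsideration triggered by newly adopting $\calA$) require $\calA$-adoption, which cannot occur for a node that is $\calA$-idle and hence never informed of $\calA$. The only difference is the direction of the argument --- the paper reasons forward from a node that stays $\calA$-idle (never informed of $\calA$, so it never adopts $\calA$, so neither route can fire and it is stuck at $\calB$-suspended), whereas you reason backward from the in-transitions to $\calB$-rejected and invoke the absorbing property of $\calA$-adopted; your invariant ($\calB$-rejected implies $\calA$-adopted) has the minor bonus of disposing of the other unreachable joint states involving rejection in one stroke, which the paper handles with a separate lemma.
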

\begin{proof}
Let $v$ be an arbitrary node from graph $G=(V,E,p)$.
Note that for $v$ to reject $\calB$, it must be first be informed of $\calB$ (otherwise it remains $\calB$-idle, regardless of its state w.r.t.\ $\calA$), and then becomes $\calB$-suspended (otherwise it will be $\calB$-adopted, a contradiction).
Now,  note that $v$ is never informed of $\calA$, and hence it will not be triggered  to reconsider $\calB$, the only route
	to the state of $\calB$-rejected, according to the model definition.
Thus, ($\calA$-idle, $\calB$-rejected) is unreachable.
\end{proof}

The argument for ($\calA$-rejected, $\calB$-idle) being unreachable is symmetric,
	and hence omitted.

\begin{lemma}\label{lemma:unreach}
In any instance of the \comic model (no restriction on GAPs), no node can reach
	the state of ($\calA$-suspended, $\calB$-rejected), from its initial state
	of ($\calA$-idle, $\calB$-idle).
\end{lemma}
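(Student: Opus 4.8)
The plan is to argue by contradiction, exploiting the fact that the $\calA$-adopted state is \emph{absorbing} in the NLA (Figure~\ref{fig:nla}): once a node becomes $\calA$-adopted it has no outgoing transition, so it remains $\calA$-adopted for the rest of the diffusion. I will show that reaching $\calB$-rejected forces a node to have been $\calA$-adopted at some point, which is incompatible with ever being $\calA$-suspended.

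First I would record a structural fact about the $\calA$-suspended state. Reading the NLA, a node can enter $\calA$-suspended only from $\calA$-idle, via the branch taken when it is informed of $\calA$ while not $\calB$-adopted (w.p.\ $1-\qao$); there is no edge into $\calA$-suspended from $\calA$-adopted, since the latter has no outgoing edges at all. Hence if a node $v$ is $\calA$-suspended at some time $t$, then $v$ has never been $\calA$-adopted at any time up to and including $t$.

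Next I would enumerate every route by which a node can become $\calB$-rejected, reading them off the (symmetric) NLA together with the reconsideration rule of Figure~\ref{fig:model}. There are exactly two: \num{1} $v$ is informed of $\calB$ while already $\calA$-adopted and fails the corresponding test, entering $\calB$-rejected w.p.\ $1-\qba$; and \num{2} $v$ is $\calB$-suspended, then adopts $\calA$ (thereby becoming $\calA$-adopted), which triggers $\calB$-reconsideration, and $v$ fails it, entering $\calB$-rejected w.p.\ $1-\rho_\calB$. The crucial observation is that each route requires $v$ to be $\calA$-adopted at the instant of the $\calB$-rejecting transition --- in \num{1} by hypothesis, and in \num{2} because adopting $\calA$ is precisely the event that initiates $\calB$-reconsideration.

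Finally I would combine these. Suppose for contradiction that $v$ reaches (\emph{$\calA$-suspended}, \emph{$\calB$-rejected}). By the enumeration above, $v$ was $\calA$-adopted at the moment it became $\calB$-rejected; by absorption of $\calA$-adopted, $v$ stays $\calA$-adopted thereafter and in particular can never be $\calA$-suspended at any later time, contradicting the structural fact that an $\calA$-suspended node was never $\calA$-adopted. Seeds need no separate treatment: an $\calA$-seed is $\calA$-adopted (never suspended), and a $\calB$-seed is $\calB$-adopted, which is itself absorbing and hence never $\calB$-rejected. The main obstacle is the case analysis in the third step --- making sure the two listed routes into $\calB$-rejected are genuinely exhaustive and that \emph{both} necessarily pass through $\calA$-adopted, which is exactly where the symmetry of the NLA and the definition of reconsideration must be invoked with care.
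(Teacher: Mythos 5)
Your proof is correct and takes essentially the same approach as the paper's: both arguments rest on the two NLA facts that every transition into $\calB$-rejected (direct rejection upon being informed of $\calB$ while $\calA$-adopted, or failed reconsideration triggered by adopting $\calA$) requires the node to be $\calA$-adopted at that instant, and that an $\calA$-suspended node can never have been $\calA$-adopted. The paper merely phrases this forward along the trajectory (a node that never adopts $\calA$ becomes $\calB$-suspended and is then stuck, since reconsideration is never triggered), while you phrase it as the contrapositive with an explicit enumeration of entry routes into $\calB$-rejected; the content is identical.
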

\begin{proof}
Let $v$ be an arbitrary node from graph $G=(V,E,p)$.
Note that for $v$ to reject $\calB$, it must be first be informed of $\calB$ (otherwise it remains $\calB$-idle, regardless of its state w.r.t.\ $\calA$), and then becomes $\calB$-suspended (otherwise it will be $\calB$-adopted, a contradiction).
Now,  $v$ transits from $\calA$-idle to $\calA$-suspended, meaning that $v$ does not adopt $\calA$.
This will not further trigger reconsideration, and hence $v$ stays at $\calB$-suspended.
This completes the proof.
\end{proof}

The argument for ($\calA$-rejected, $\calA$-suspended) being unreachable is symmetric,
	and hence omitted.
Finally, it is evident from the proof of Lemma~\ref{lemma:unreach} that, the joint state
	of ($\calA$-suspended, $\calA$-suspended) is a {\em sunken} state, meaning the node
	will not get out it to adopt or reject any product.
This implies that  ($\calA$-rejected, $\calB$-rejected) is also unreachable.

\subsection{Counter-Examples for Submodularity and Monotonicity}

The first two counter-examples show that self-monotonicity and
	cross-monotonicity may not hold in general for the \comic model when there is
	no restriction on GAPs.

\begin{figure}[h!]
 \centering
   \includegraphics[width=0.45\textwidth]{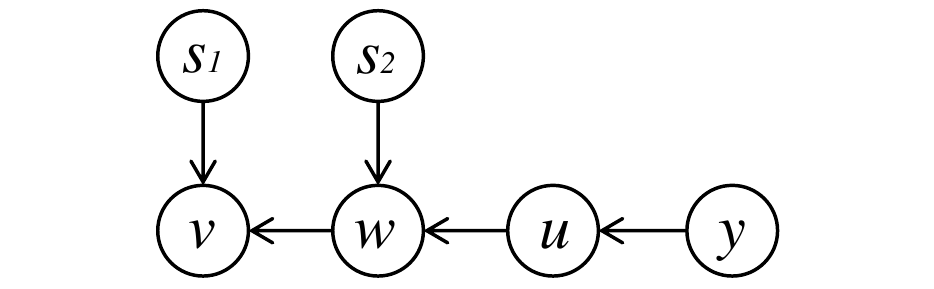}
 \caption{The graph for Example \ref{exp:non-mono}}
\label{fig:counter-mono}
\end{figure}

\begin{example}[Non-Self-Monotonicity] 
\label{exp:non-mono}
{\em
Consider Figure \ref{fig:counter-mono}.
All edges have probability $1$.
GAPs are $\qao = q \in (0,1)$, $\qab = \qbo = 1$, $\qba = 0$, which means that
	$\calA$ competes with $\calB$ but $\calB$ complements $\calA$.
Let $S_\calB = \{y\}$.
If $S_\calA$ is $S = \{s_1\}$, the probability that $v$ becomes $\calA$-adopted is $1$,
	because $v$ is informed of $\calA$ from $s_1$, and even if it does not adopt
	$\calA$ at the time, later it will surely adopt $\calB$ propagated from $y$,
	and then $v$ will reconsider $\calA$ and adopt $\calA$.
If it is $T = \{s_1, s_2\}$, that probability is $1-q+q^2 < 1$: $w$ gets $\calA$-adopted w.p. $q$ blocking $\calB$ and then $v$ gets $\calA$-adopted w.p. $q$; $w$ gets $\calB$-adopted w.p. $(1-q)$ and then $v$ surely gets $\calA$-adopted. 
Replicating sufficiently many $v$'s, all connected to $s_1$ and $w$, will lead to
	$\sigma_\calA(T,S_\calB) < \sigma_\calA(S,S_B)$. %, hence non-monotone. 
The intuition is that the additional $\calA$-seed $s_2$ ``blocks'' $\calB$-propagation
	as $\calA$ competes with $\calB$ ($\qba < \qbo$) but $\calB$ complements
	$\calA$ ($\qab > \qao$).
Clearly $\sigma_\calA$ is not monotonically decreasing in $S_\calA$ either (e.g., in a graph when all nodes are isolated).
Hence, $\sigma_\calA$ is not monotone in $S_\calA$.
\qed
}
\end{example}

\begin{figure}[h!]
 \centering
   \includegraphics[width=0.45\textwidth]{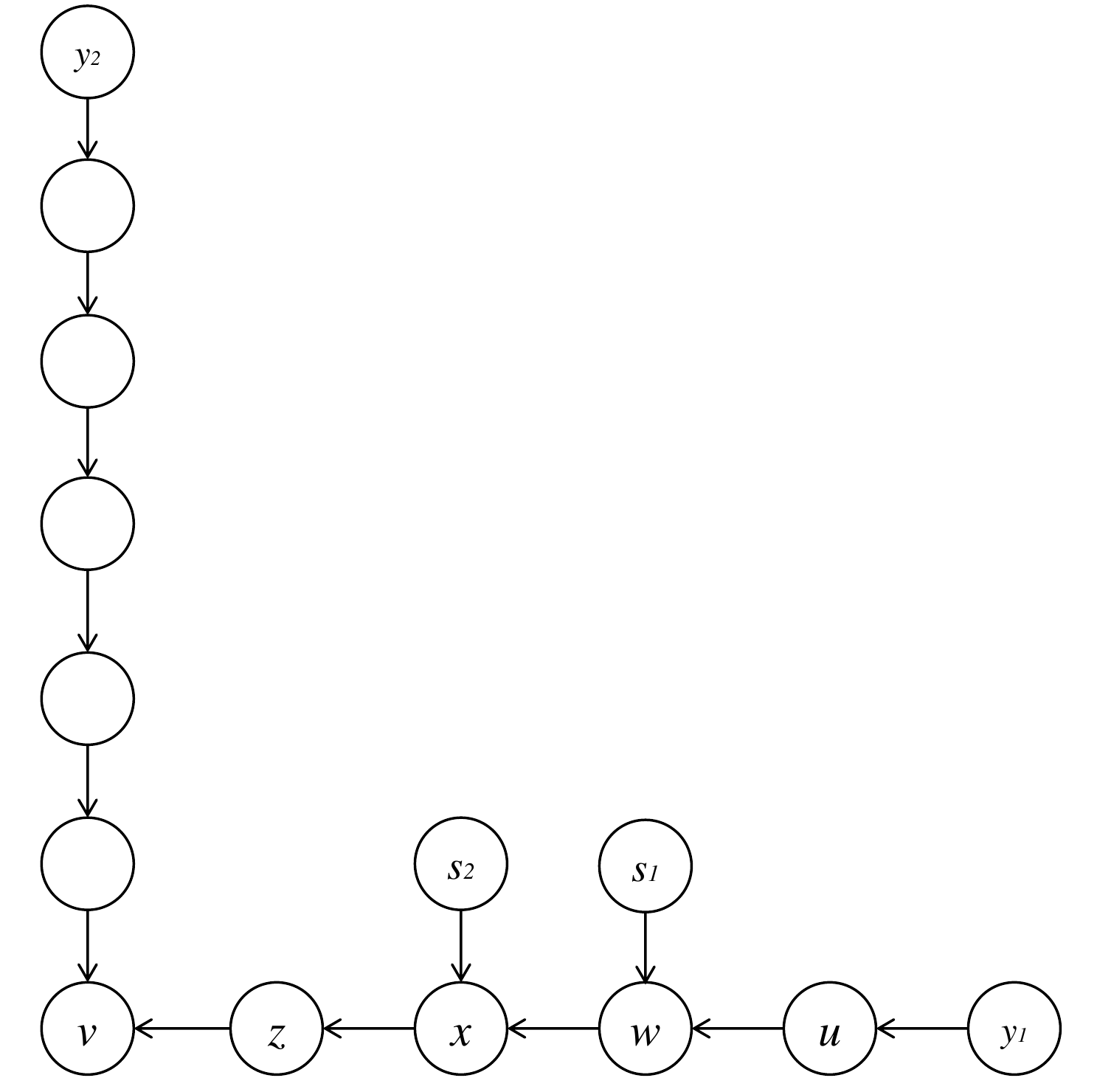}
 \caption{The graph for Example \ref{exp:non-cross-mono}}
\label{fig:counter-cross-mono}
\end{figure}

\begin{example}[Non-Cross-Monotonicity] 
\label{exp:non-cross-mono}
{\em
We use the example shown in Figure \ref{fig:counter-cross-mono}.
All edges have probability $1$.
Nodes $y_1$ and $y_2$ are the two fixed $\calA$ seeds, and we want to grow
	$\calB$ seed set from $S=\{s_1\}$ to $T=\{s_1, s_2\}$.
GAPs satisfy $0 < \qab < \qao < 1$ and $0 < \qbo < \qba < 1$, 
	which means that
	$\calA$ complements $\calB$ but $\calB$ competes with $\calA$.
Since $\calB$ competes with $\calA$, it is straightforward to have examples in which
	the $\sigma_\calA$ decreases when $\calB$ seed set grows.
We use Figure~\ref{fig:counter-cross-mono} to show a possible world
	in which the growth of $\calB$ seed sets leads to $v$ adopting $\calA$,
	indicating that $\calA$ spread may also increase.
Even though we do not have a direct example showing that 
	$\sigma_\calA$ increases when $\calB$ seed set grows, 
	we believe the possible world example is a good indication that
	$\sigma_\calA$ is not cross-monotone in $S_\calB$.

Figure~\ref{fig:counter-cross-mono} uses Figure~\ref{fig:counter-mono}
	as a gadget.
Intuitively, when $\calB$ seed set grows from $S$ to $T$, 
	the probability that $z$ adopts $B$ decreases as shown in Example~\ref{exp:non-mono}.
Then we utilize this and the fact that $\calB$ competes with $\calA$ 
	to show that when $S$ is the $\calB$ seed set, due to $\calB$'s competition from
	$x$ node $v$ will not adopt $\calA$, but when $T$ is the $\calB$ seed set,
	there is no longer $\calB$'s competition from $x$ and thus $v$ will adopt $\calA$.

The node thresholds in the possible world are as follows:
$\qab < \alpha^w_\calA \le \qao$, $\qbo < \alpha^x_\calB \le \qba$, $\qao < \alpha^z_\calA$,
	$\qab < \alpha^v_\calA \le \qao$, and all other non-specified $\alpha$ values
	take value $0$, meaning that they will not block diffusion.

Consider first that $S$ is the $\calB$ seed set.
Since $\qbo < \alpha^x_\calB$, $x$ is informed about $\calB$
	from $s_1$ but will not adopt $\calB$ directly from $s_1$.
From $y_1$, we can see that $\calA$ will pass through $u$, $w$ and reaches $x$.
After $x$ adopts $\calA$, it reconsiders $\calB$, and since $\alpha^x_\calB \le \qba$,
	$x$ adopts $\calB$.
Node $x$ then informs $z$ about $\calA$ and $\calB$, in this order.
However, since $\qao < \alpha^z_\calA$, $z$ does not adopt $\calA$, but it adopts
	$\calB$ ($\alpha^z_\calB = 0 \le \qbo$).
Next $z$ informs $v$ about $\calB$, which adopts $\calB$ since $\alpha^v_\calB=0\le \qbo$.
This happens one step earlier than $\calA$ reaches $v$ from $y_2$, but since
	$\qab < \alpha^v_\calA$, $v$ will not adopt $\calA$.

Now consider that $T$ is the $\calB$ seed set.
In this case, $w$ definitely adopts $\calB$ from $s_2$.
Since $\qab < \alpha^w_\calA$ and $w$ adopts $\calB$ first, $w$ will not adopt
	$\calA$ from $y_1$.
Because $\qbo < \alpha^x_\calB$ and $w$ blocks $\calA$ from reaching
	$x$, $x$ will not adopt $\calA$ or $\calB$.
Then $z$ will not adopt $\calA$ or $\calB$ either.
This allows $\calA$ to reach $v$ from $y_2$, and since $\qab < \alpha^v_\calA \le \qao$,
	$v$ adopts $\calA$.

We can certainly duplicate $v$ enough times so that when $\calB$ seed set grows from
	$S$ to $T$, the $\calA$-spread in this possible world also increases, even though
	$\calB$ is competing with $\calA$. \qed

}
\end{example}

\begin{figure}[h!]
 \centering
   \includegraphics[width=0.45\textwidth]{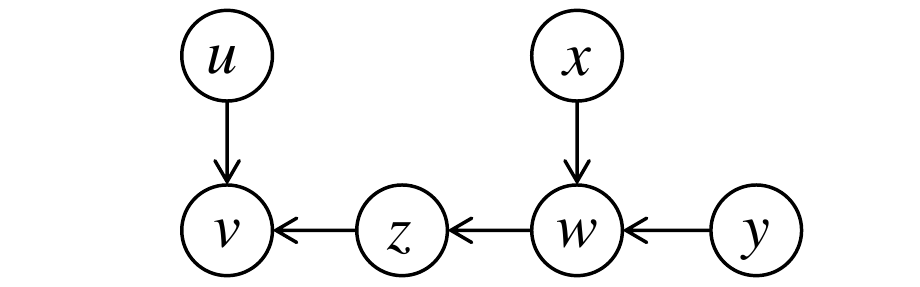}
 \caption{The graph for Example \ref{exp:complement1} and \ref{exp:complement2}}
\label{fig:counter-sub}
\end{figure}

The following two examples show that even when two items are mutually complementary,
	self-submodularity and cross-submodularity in general may not hold.

\begin{example}[Non-Self-Submodularity]
\label{exp:complement1}
{\em
Consider the possible world in Figure~\ref{fig:counter-sub}.
All edges are live.
The node thresholds are:
for $w$: $\alpha_\calA^w \leq \qao$, $\qbo < \alpha_\calB^w \leq \qba$;
for $z$: $\alpha_\calA^z > \qab$, $\alpha_\calB^z < \qbo$;
for $v$: $\qao < \alpha_\calA^v \leq \qab$, $\alpha_\calB^v \leq \qbo$;
Then fix $S_\calB = \{y\}$.
For $S_\calA$, let $S = \emptyset$, $T = \{x\}$, and $u$ is the additional seed.
It can be verified that only when $S_\calA = T \cup \{u\}$, $v$ becomes
	$\calA$-adopted, violating self-submodularity.

A concrete example of $\bQ$ for which submodularity does not hold is as follows: 
	$\qao = 0.078432$; $\qab = 0.24392$; $\qbo = 0.37556$; $\qba = 0.99545$.
Seed sets are the same as above.
We denote by $p_v(S_\calA)$ the probability that $v$ becomes $\calA$-adopted with A-seed set $S_\calA$.
It can be verified that:
$p_v(S) = 0$, $p_v(S \cup \{u\}) = 8.898 \cdot 10^{-5}$, $p_v(T) = 0.027254$, and $p_v(T \cup \{u\}) = 0.027383$.
Clearly, $p_v(T \cup \{u\}) - p_v(T) > p_v(S \cup \{u\}) - p_v(S)$.
Hence, replicating $v$ sufficiently many times will lead to
	$\sigma_\calA(T\cup\{u\}, S_\calB) - \sigma_\calA(T, S_\calB) \geq \sigma_\calA(S\cup\{u\} ,  S_\calB)- \sigma_\calA(S,  S_\calB)$, violating self-submodularity.
\qed
}
\end{example}

\begin{example}[Non-Cross-Submodularity]
\label{exp:complement2}
{\em
Consider the possible world in Figure~\ref{fig:counter-sub}.
The node thresholds are:
For $w$: $\qao < \alpha_\calA^w \leq \qab$, $\alpha_\calB^w \leq \qbo$;
for $z$: $\alpha_\calA^z \leq \qao$, $\alpha_\calB^z > \qba$;
for $v$: $\qao < \alpha_\calA^v \leq \qab$, $\alpha_\calB^v \leq \qbo$.
Fix $S_\calA = \{y\}$.
For $S_\calB$, let $S = \emptyset$, $T = \{x\}$, and $u$ is the additional seed.
It can be verified that only when $S_\calB = T \cup \{u\}$, $v$ becomes $\calA$-adopted,
	violating % both cross-monotonicity and 
	cross-submodularity. 
\qed
}
\end{example}

The above example applies even when $\qba = \qbo < 1$.
The key is that $\qba < 1$ and $\alpha_\calB^z > \qba$, which prevents $\calB$ to
	pass through $z$, and thus another $\calB$-seed $u$ is needed to inform $v$ of $\calB$.

%\section{Pseudo-code of {\large \timsimfast}}

%Here we give the pseudo-code to \timsimfast that is omitted from \textsection\ref{sec:rrset}.
%This is one of the RR-set generation
%	procedure we proposed for \SIM in \textsection\ref{sec:rrset}.

%\section{Additional Experimental Results}

%\subsubsection*{\large \generalTIM vs.\ \vanillaIC and \copying}

%We now report additional results for the experiments done with 
%	synthetic GAPs in \textsection\ref{sec:synGAPexp}.

%\subsubsection*{GAPs learned from Real Social Networks}

%\subsubsection*{Sandwich Approximation (SA)}

\section{Proofs and Additional Theoretical Results}\label{sec:proof}

\subsection{Proofs for Results in Section~\ref{sec:problem}}

\textbf{Theorem~\ref{thm:cimOpt}} (re-stated).
{\em \theoremCIMOpt}

%The proof will rely on the possible world model as defined in \textsection\ref{sec:submod}.

%\note[Laks]{I find that the overview plot discussion above is not tightly connected with what is presented in the formal proof below. Specifically, the connection to the equivalence classes idea is not made in the proof. I'd prefer the proof to be presented top-down. If we do that, we can cut the plot discussion above. First start by restating the claim in the theorem in terms of the equivalence classes idea. Then reduce that to showing that the claim holds in individual worlds. Then give the ``for each world'' proof. } 
%
%\note[Wei Lu]{Re: For better flow, I moved the discussions on equivalent class of PWs to Sec 5.1.
%In fact, I left another note there, which suggests that we slightly adjust the generation process of PWs, and make the total number finite.
%It might be simpler and easier to go that route.}

\note[Laks]{Some of the disconnect remains. Don't we have any use for equivalence classes in this proof? We actually said wheer equiv. classes are introduced that they are used in the proof of Theorem 2!} 

\note[Wei]{Re: I agree.  I have made changes to address this issue.}

\begin{proof}
We first show the theorem holds in an arbitrary possible world $W$.
%Consider an arbitrary PW $W$ where all edge status and $\alpha$-values of all nodes are
%	determined.
Following Eq.~\eqref{eqn:labels}, when $S_\calA$ acts alone (without any $\calB$-seeds),
	we can classify all nodes into four
	types: $\calA$-adopted, $\calA$-rejected, $\calA$-suspended, and $\calA$-potential.
Clearly, only $\calA$-suspended and $\calA$-potential may later adopt $\calA$ with the help
	of $\calB$, and the other two types are not relevant to \CIM.
	
Now, let all nodes in $S_\calA$ be $\calB$-seeds, and thus they
	become $\calB$-adopted, too.
Since $\qbo = 1$, all nodes that can be reached by the seeds via live-edge paths will
	eventually become $\calB$-adopted, including all $\calA$-suspended and $\calA$-potential
	nodes.
For $\calA$-suspended nodes, after adopting $\calB$, they will adopt $\calA$ through
	re-consideration right away.
For $\calA$-potential nodes, the propagation of $\calA$ will reach them after $\calA$-suspended nodes
	upstream adopt $\calA$ (together with $\calB$), and after being reached, they adopt both items. 

Therefore, in this possible world, making $S_\calA$ to be $\calB$-seeds will ``convert''
	all nodes that can possibly become $\calA$-adopted, namely all $\calA$-suspended and
	$\calA$-potential nodes.
Hence, any additional $\calB$-seed will not further increase the spread of $\calA$
	as no other nodes can possibly be ``converted''.

Since the claim in this theorem holds true for any arbitrary possible world $W$, it
also holds for all other possible worlds that belong to the same equivalence
class $\bf W$.
Since
\[
\sigma_\calA(S_\calA, S_\calB) = \sum_{\bf W} \Pr[\mathbf{W}] \sigma_\calA^{\mathbf{W}}(S_\calA, S_\calB),
\]
the theorem is proven.
\end{proof}

%%%%%%%%%%%%%%%%%%

\subsection{Proofs for Results in Section~\ref{sec:submod}}

%%%%%%%%%%%%%%%%%%

\textbf{Lemma~\ref{lemma:pw}} (re-stated).
{\em \lemmaPW}

\begin{proof}
The proof is based on establishing equivalence on all edge-level and node-level behaviors in \model and the PW model.
By the principle of deferred decisions and the fact that each edge is only tested once in one diffusion, edge transition processes are equivalent.
To generate a possible world, the live/blocked status of an edge is pre-determined and revealed when needed, while in a Com-IC process, the status is determined on-the-fly.

Tie-breaking is also equivalent.
Note that each node $v$ only needs to apply the random permutation $\pi_v$ for breaking ties at most once.
For \comic, we need to apply $\pi_v$ only when $v$ is transitioning out of state ($\calA$-idle, $\calB$-idle) after being informed of both $\calA$ and $\calB$.
Clearly, this transition occurs at most once for each node.
The same logic applies to the PW model.
Thus, the equivalence is obvious due to the principle of deferred decisions.
%each in-neighbor of $v$ can open the information
%	channel to $v$ in at most one time step 
%	and thus it only needs to be checked in the random
%	permutation order once.

%\note[Wei Chen]{I changed the wording of the tie-breaking rule in both the rule-based
%	description and the possible world model.
%My change in the rule-based model is a convenient way to ensure that
%	event "$v$ is tested with in-neighbor $u$ and item $\calA$ or $\calB$" is always
%	well defined (alternative descriptions are possible).
%My change for the possible world model is to make it precise on the iterative 
%	descriptio step by step and on when tie-breaking is applied.
%The proof of the above paragraph reflect my changes in the model.}

The equivalence of decision-making for adoption is straightforward as $\alpha_\calA^v$ and $\alpha_\calB^v$ are chosen uniformly at random from $[0,1]$.
Hence, $\Pr[\alpha_\calA^v \leq q] = q$, where $q \in \{ \qao, \qab, \qbo, \qba \}$. 

As to reconsideration, w.l.o.g.\ we consider $\calA$.
In \model,  the probability of reconsideration is $\rho_\calA = \max\{ (\qab - \qao), 0 \} / (1 - \qao)$.
In PW, when $\qab \geq \qao$, this amounts to the probability that $\alpha_\calA^v \leq \qab$ given $\alpha_\calA^v > \qao$, which is $(\qab - \qao) / (1 - \qao)$.
On the other hand, when $\qab < \qao$, 
	$\alpha_\calA^v > \qao$ implies $\alpha_\calA^v > \qab$, which means reconsideration
	is meaningless, and this corresponds to 
	$\rho_\calA=0$ in \comic.
Thus, the equivalence is established.

Finally, the seeding protocol is trivially the same.
%The rest of the rules: initialization, seeding, and global iterations are trivially equivalent.
Combining the equivalence for all edge-level and node-level activities, we can see that the two models are equivalent and yield the same distribution of $\calA$- and $\calB$-adopted nodes, for any given $S_\calA$ and $S_\calB$.
\end{proof}

%%%%%%%%%%%%%%%%%%
\noindent\smallskip\textbf{Theorem~\ref{thm:monotone}} (re-stated).
{\em \theoremMonotone}

\smallskip
For ease of exposition,  we also state a
	symmetric version of Theorem~\ref{thm:monotone}
	w.r.t.\ $\sigma_\calB$.
That is, {\em given any fixed $\calA$-seed set,
	$\sigma_\calB(S_\calA, S_\calB)$ is monotonically
	increasing in $S_\calB$ for any set of GAPs in
	$\bQ^+$ and $\bQ^-$.
Also, $\sigma_\calB(S_\calA, S_\calB)$ is monotonically
	increasing in $S_\calA$ for any GAPs in $\bQ^+$,
	and monotonically decreasing in $S_\calA$ for
	any $\bQ^-$.}
For technical reasons and notational convenience,
	in the proof of Theorem~\ref{thm:monotone}
	presented below, we ``concurrently'' prove both Theorem~\ref{thm:monotone}
	and this symmetric version, without loss of generality.

\begin{proof}[Proof of Theorem~\ref{thm:monotone}]
We first fix a $\calB$-seed set $S_\calB$.
Since $S_\calB$ is always fixed, in the remaining proof we ignore $S_\calB$ from the notations whenever
	it is clear from context.
It suffices to show that monotonicity holds in an arbitrary, fixed possible world, which implies
	monotonicity holds for the diffusion model. 
Let $W$ be an arbitrary possible world generated according to \textsection\ref{sec:pw}.

Define $\Phi_\calA^W(S_\calA)$ (resp.\ $\Phi_\calB^W(S_\calA)$) to be the set of $\calA$-adopted (resp.\ $\calB$-adopted) nodes in possible world $W$ 
	with $S_\calA$ being the $\calA$-seed set (and $S_\calB$ being the fixed $\calB$-seed set).
Furthermore, for any time step $t \ge 0$, define $\Phi_\calA^W(S_\calA, t)$ (resp. $\Phi_\calB^W(S_\calA, t)$) to be the set of $\calA$-adopted (resp. $\calB$-adopted)  nodes in $W$ by the end of step $t$, given $\calA$-seed set $S_\calA$. % and (and $\calB$-seed set $S_\calB$).
Clearly, $\Phi_\calA^W(S_\calA) = \cup_{t\ge 0} \Phi_\calA^W(S_\calA, t)$ and $\Phi_\calB^W(S_\calA) = \cup_{t\ge 0} \Phi_\calB^W(S_\calA, t)$.
Let $S$ and $T$ be two sets, with $S\subseteq T\subseteq V$. 

\spara{Mutual Competition $\bQ^-$}
Our goal is to prove that for any $v \in V$, (a) if $v \in \Phi_\calA^W(S)$, then $v \in \Phi_\calA^W(T)$; 
	and (b) if $v \in \Phi_\calB^W(T)$, then $v \in \Phi_\calB^W(S)$.
Item (a) implies self-monotonic increasing property while item (b) implies cross-monotonic decreasing property.
We use an inductive proof to combine the proof of above two results together, as follows.
For every $t\ge 0$, we inductively show that $(i)$ if $v \in \Phi_\calA^W(S, t)$, then $v \in \Phi_\calA^W(T, t)$; 
	and $(ii)$ if $v \in \Phi_\calB^W(T, t)$, then $v \in \Phi_\calB^W(S, t)$.

Consider the base case of $t=0$.
If $v \in \Phi_\calA^W(S, 0)$, then it means $v \in S$, and thus $v\in T = \Phi_\calA^W(T, 0)$.
If $v \in \Phi_\calB^W(T, 0)$, it means $v \in S_\calB$, and thus $v \in \Phi_\calB^W(S, 0) = S_\calB$.

For the induction step, suppose that for all $t < t$, $(i)$ and $(ii)$ hold, and we show
	$(i)$ and $(ii)$ also hold for $t=t'$.
For $(i)$, we only need to consider $v \in \Phi_\calA^W(S, t') \setminus \Phi_\calA^W(S, t'-1)$,
	i.e. $v$ adopts $\calA$ at step $t'$ when $S$ is the $\calA$-seed set.
Since $v$ adopts $\calA$, we know that $\alpha_\calA^v \le \qao$.
Let $U$ be the set of in-neighbors of $v$ in the possible world $W$.
Let $U_\calA(S_\calA) = U \cap \Phi_\calA^W(S_\calA, t'-1)$ and
	$U_\calB(S_\calA) = U \cap \Phi_\calB^W(S_\calA, t'-1)$,
	i.e. $U_\calA(S_\calA)$ (resp. $U_\calB(S_\calA)$) is the set of in-neighbors of
	$v$ in $W$ that adopted $\calA$ (resp. $\calB$) by time $t'-1$, when $S_\calA$
	is the $\calA$-seed set.
Since $v \in \Phi_\calA^W(S, t')$, we know that $U_\calA(S) \ne \emptyset$.
By induction hypothesis, we have $U_\calA(S) \subseteq U_\calA(T)$ and $U_\calB(T) \subseteq U_\calB(S)$.

Thus, $U_\calA(T)\ne \emptyset$, which implies that by step $t'$, $v$ must have been informed of $\calA$ when $T$ is the $\calA$-seed set.
If $\alpha_\calA^v \le \qab$, then no matter if $v$ adopted $\calB$ or not, $v$ would adopt $\calA$ by step $t'$ according to the possible world model.
That is, $v \in \Phi_\calA^W(T, t')$.

Now suppose $\qab < \alpha_\calA^v \le \qao$.
For a contradiction suppose $v \not \in \Phi_\calA^W(T, t')$, i.e.,  $v$ does not adopt $\calA$ by step $t'$ when $T$ is the $\calA$-seed set.
Since $v$ has been informed of $\calA$ by $t'$, the only possibility that $v$ does not adopt $\calA$ is because $v$ adopted $\calB$ earlier than $\calA$,
	which means $v \in \Phi_\calB^W(T, t')$.
Two cases arise:

First, if $v \in \Phi_\calB^W(T, t'-1)$, then by the induction hypothesis $v \in \Phi_\calB^W(S, t'-1)$.
Since $v \in \Phi_\calA^W(S, t') \setminus \Phi_\calA^W(S, t'-1)$, it means that when $S$ is the $\calA$-seed set, $v$ adopts $\calB$ first before adopting $\calA$, but this contradicts to the condition that $\qab < \alpha_\calA^v$.
Therefore, $v \not \in \Phi_\calB^W(T, t'-1)$.

Second, $v \in \Phi_\calB^W(T, t') \setminus \Phi_\calB^W(T, t'-1)$.
Since $v \not \in \Phi_\calA^W(T, t')$, it means that $v$ is informed of $\calA$ at step $t'$ when $T$ is the $\calA$-seed set, and thus
	the tie-breaking rule must have been applied at this step and $\calB$ is ordered first before $\calA$.
However, looking at the in-neighbors of $v$ in $W$, by the induction hypothesis, 
	 $U_\calA(S) \subseteq U_\calA(T)$ and $U_\calB(T) \subseteq U_\calB(S)$.
This implies that when $S$ is the $\calA$-seed set, the same tie-breaking rule at $\calA$ would still order $\calB$ first before $\calA$, but
	this would result in $v$ not adopting $\calA$ at step $t'$, a contradiction.
Therefore, we know that $v \in \Phi_\calA^W(T, t')$.

%We now prove the induction step for $(ii)$.
The statement of $(ii)$ is symmetric to $(i)$: if we exchange $\calA$ and $\calB$ and exchange $S$ and $T$, $(ii)$ becomes $(i)$.
In fact, one can check that we can literally translate the induction step proof for $(i)$ into the proof for $(ii)$ by exchanging
	pair $\calA$ and $\calB$ and pair $S$ and $T$ (except that (a) we keep the definitions of $U_\calA(S_\calA)$ and $U_\calB(S_\calA)$, and 
	(b) whenever we say some set is the $\calA$-seed set, we keep this $\calA$).
This concludes the proof of the mutual competition case.

%We now consider the mutual complementarity case, i.e. $\qao \le \qab$ and $\qbo \le \qba$.
\spara{Mutual Complementarity $\bQ^+$}
The proof structure is very similar to that of the mutual competition case.
Our goal is to prove that for any $v \in V$, (a) if $v \in \Phi_\calA^W(S)$, then $v \in \Phi_\calA^W(T)$; 
	and (b) if $v \in \Phi_\calB^W(S)$, then $v \in \Phi_\calB^W(T)$.
To show this, we inductively prove the following: 
For every $t\ge 0$, $(i)$ if $v \in \Phi_\calA^W(S, t)$, then $v \in \Phi_\calA^W(T, t)$; 
	and $(ii)$ if $v \in \Phi_\calB^W(S, t)$, then $v \in \Phi_\calB^W(T, t)$.
The base case is trivially true.

For the induction step, suppose $(i)$ and $(ii)$ hold for all $ t < t'$, and we 
	show that $(i)$ and $(ii)$ also hold for $t=t'$.

For $(i)$, we only need to consider $v \in \Phi_\calA^W(S, t') \setminus \Phi_\calA^W(S, t'-1)$, i.e. $v$ adopts $\calA$ at step $t'$ when $S$ is the $\calA$-seed set.
Since $v$ adopts $\calA$, we know that $\alpha_\calA^v \le \qab$.
Since $v \in \Phi_\calA^W(S, t')$, we know that $U_\calA(S) \ne \emptyset$.
By induction hypothesis we have $U_\calA(S) \subseteq U_\calA(T)$.
Thus we know that $U_\calA(T)\ne \emptyset$, which implies that by step $t'$, $v$ must have been informed of $\calA$ when $T$ is the $\calA$-seed set.
if $\alpha_\calA^v \le \qao$, then no matter $v$ adopted $\calB$ or not, $v$ would adopt $\calA$ by step $t'$ according to the possible world model.
Thus, $v \in \Phi_\calA^W(T, t')$.

Now suppose $\qao < \alpha_\calA^v \le \qab$.
Since $v \in \Phi_\calA^W(S, t')$, the only possibility is that $v$ adopts $\calB$ first by time $t'$ so that after reconsideration, $v$ adopts $\calA$
	due to condition $\alpha_\calA^v \le \qab$.
Thus we have $v \in \Phi_\calB^W(S, t')$, and $\alpha_\calB^v \le \qbo$.

If $v \in \Phi_\calB^W(S, t'-1)$, by induction hypothesis $v \in \Phi_\calB^W(T, t'-1)$, which means that $v$ adopts $\calB$ by time $t'-1$ when
	$T$ is the $\calA$-seed set.
Since $v$ has been informed of $\calA$ by time $t'$ when $T$ is the $\calA$-seed set, condition $\alpha_\calA^v \le \qab$ implies that
	$v$ adopts $\calA$ by time $t'$ when $T$ is the $\calA$-seed set, i.e. $v \in \Phi_\calA^W(T, t')$.

Finally we consider the case of $v \in \Phi_\calB^W(S, t') \setminus \Phi_\calB^W(S, t'-1)$.
Looking at the in-neighbors of $v$ in $W$, $v \in \Phi_\calB^W(S, t')$, implies that $U_\calB(S) \ne \emptyset$.
By the induction hypothesis, we have $U_\calB(S) \subseteq U_\calB(T)$, and thus $U_\calB(T) \ne \emptyset$.
This implies that when $T$ is the $\calA$-seed set, node $v$ must have been informed of $\calB$ by time $t'$.
Since $\alpha_\calB^v \le \qbo$, we have that $v$ adopts $\calB$ by time $t'$ when $T$ is the $\calA$-seed set.
Then the condition $\alpha_\calA^v \le \qab$ implies that $v$ adopts $\calA$ by time $t'$ when $T$ is the $\calA$-seed set,
	i.e. $v \in \Phi_\calA^W(T, t')$.

This concludes the inductive step for item $(i)$ in the mutual complementarity case.
The induction step for item $(ii)$ is completely symmetric to the inductive step for item $(i)$. 
Therefore, we complete the proof for the mutual complementarity case.
As a result, the whole theorem holds.
\end{proof}

%%%%%%%%%%%%%%%%%%
\noindent\smallskip\textbf{Lemma~\ref{lemma:tie}} (re-stated).
{\em \lemmaTie}

\begin{proof}
Without loss of generality, we only need to consider a node $v$ and two of its in-neighbours $u_\calA$ and $u_\calB$ which become $\calA$-adopted and $\calB$-adopted at $t-1$ respectively.
In a possible world, there are nine possible combinations of the values of $\alpha_\calA^v$ and $\alpha_\calB^v$.
We show that in all such combinations, the ordering $\pi_1  = \langle u_\calA, u_\calB \rangle$ and $\pi_2 = \langle u_\calB, u_\calA \rangle$ produce the same outcome for $v$.
\begin{enumerate}
\item {$\alpha_\calA^v \leq \qao \wedge \alpha_\calB^v \leq \qbo$.}
Both $\pi_1$ and $\pi_2$ make $v$ $\calA$-adopted and $\calB$-adopted. 

\item {$\alpha_\calA^v \leq \qao \wedge \qbo < \alpha_\calB^v \leq \qba$.}
Both $\pi_1$ and $\pi_2$ make $v$ $\calA$-adopted and $\calB$-adopted. 
With $\pi_2$, $v$ first becomes $\calB$-suspended, then $\calA$-adopted, and finally $\calB$-adopted due to re-consideration.

\item {$\alpha_\calA^v \leq \qao \wedge \alpha_\calB^v > \qba$.}
Both $\pi_1$ and $\pi_2$ makes $v$ $\calA$-adopted only. 

\item {$\qao < \alpha_\calA^v \leq \qab \wedge \alpha_\calB^v \leq \qbo$.}
%With both orderings, $v$ would adopt both $\calA$ and $\calB$, 
Symmetric to case 2 above.
%In particular, if $U_\calA$ is ordered first, $v$ first becomes $\calA$-suspended and then adopts $\calA$ by reconsideration.

\item {$\qao < \alpha_\calA^v \leq \qab \wedge \qbo < \alpha_\calB^v \leq \qba$.}
%It is easy to see that $v$ would adopt neither item.
In this case, $v$ does not adopt any item.

\item {$\qao < \alpha_\calA^v \leq \qab \wedge \alpha_\calB^v > \qba$.}
In this case, $v$ does not adopt any item.

\item {$\alpha_\calA^v > \qab \wedge \alpha_\calB^v \leq \qbo$.}
Symmetric to case 3 above $v$ is $\calB$-adopted only.
%$v$ only adopts $\calB$ and cannot adopt $\calA$ in either ordering.

\item {$\alpha_\calA^v > \qab \wedge \qbo < \alpha_\calB^v \leq \qba$.}
Symmetric to case 6 above.

\item {$\alpha_\calA^v > \qab \wedge \alpha_\calB^v > \qba$.}
In this case, $v$ does not adopt any item.
\end{enumerate}

Since the possible world model is equivalent to Com-IC (Theorem~\ref{lemma:pw}), the lemma holds as a result.
\end{proof}

%%%%%%%%%%%%%%%%%%

\noindent\smallskip\textbf{Lemma~\ref{lemma:indiff}} (re-stated).
{\em \lemmaInDiff}

\begin{proof}
Consider an arbitrary possible world $W$.
Let $q := \qbo = \qba$.
A node $v$ becomes $\calB$-adopted in $W$ as long as $\alpha_\calB^v \leq q$ and there is a live-edge path $P_\calB$ from $S_\calB$ to $v$ such that for all nodes $w$ on $P_\calB$ (excluding seeds), $\alpha_\calB^w \leq q$.
Since $\qbo = \qba$, this condition under which $v$ becomes $\calB$-adopted in $W$ is completely independent of any node's state w.r.t.\ $\calA$.
Thus, the propagation of $\calB$-adoption is completely independent of the actual $\calA$-seed set (even empty).
Due to the equivalence of the possible world model and Com-IC, the lemma holds.
\end{proof}

%%%%%%%%%%%%%%%%%%

\noindent\smallskip\textbf{Claim~\ref{clm:bjoina}} (re-stated).
{\em \lemmaClaimBA}

\begin{proof}[Proof of Claim~\ref{clm:bjoina}]
Since $P_\calA$ is an $\calA$-path (where all nodes are $\calA$-adopted), every node $w$ on $P_\calA$ (except the starting node)
	has $\alpha^w_\calA \le \qab$.
For every such node $w$, if $\qao < \alpha^w_\calA \le \qab$, then $w$ adopting $\calA$ implies that
	$w$ must have adopted $\calB$ first and $\alpha^w_\calB \le \qbo$.
	
Now suppose a node $w$ on the path adopts $\calB$ (under some $\calB$-seed set), and all nodes before $w$
	on path $P_\calA$ are $\calA$-ready.
Then all nodes before $w$ on this path adopts $\calA$ regardless of $\calB$-seed set.
Thus $w$ is informed of $\calA$.
Since $\alpha^w_\calA \le \qab$, $w$ adopts $\calA$.

Then consider the node $w'$ after $w$ on the path $P_\calA$.
Node $w'$ must be informed by both $\calA$ and $\calB$ since $w$ adopts
	both $\calA$ and $\calB$ and the edge $(w,w')$ is live.
If $\alpha^{w'}_\calA \le \qao$, $w'$ will adopt $\calA$, and then since $\qba =1$,
	$w'$ will then adopt $\calB$ -- this is where we use the key assumption that $\qba = 1$.
If $\qao < \alpha^{w'}_\calA \le \qab$, then we have argued that in this case $\alpha^{w'}_\calB \le \qbo$,
	so $w'$ would adopt $\calB$, followed by adopting $\calA$.
We can then inductively use the above argument along the path to show that every node after
	$w$ will adopt both $\calA$ and $\calB$, which proves the claim.
\end{proof}

%%%%%%%%%%%%%%%%%%

\noindent\smallskip\textbf{Claim~\ref{clm:bpathcrosssub}} (re-stated).
{\em \claimBPath}

\begin{proof}[Proof of Claim~\ref{clm:bpathcrosssub}]
We construct such a path $P_\calB$ backwards from $w$ as follows.

Since $w$ adopts $\calB$, if $w \in T \cup\{u\}$, then we are done; otherwise, 
	there must be some $\calB$-path $P'_{\calB}$ from 
	$\calB$-seed set $T\cup \{u\}$ to $w$.
If $P'_{\calB}$ is a $\calB$-ready path, then we have found $P_\calB$ to be $P'_{\calB}$,
 	and its starting node is $x_0$. 
This is because even if $x_0$ is the only $\calB$-seed, $\calB$ can still pass through all nodes on $P_\calB$
	to reach $w$, without the need of any node on the path to adopt $\calA$, because by definition,
	all nodes $v$ on a $\calB$-ready path satisfies that $\alpha_\calB^v \leq \qbo$. 

Now suppose $P'_{\calB}$ is not $\calB$-ready.
	It is clear that every non-$\calB$-ready node on path
	$P'_{\calB}$ must be $\calA$-ready (otherwise, it would not be possible
	for such a node to adopt $\calB$).
If every such non-$\calB$-ready node has an $\calA$-ready path from $S_\calA$, then
	we still find $P_\calB = P'_{\calB}$ with its starting node is $x_0$.
This is because whenever $\calB$ reaches a non-$\calB$-ready node on the path, the node has an $\calA$-ready path
	from $S_\calA$ and thus the node always adopt $\calA$, which means it will also adopt $\calB$ (for $\qba = 1$).

Now consider the case where 
	there exists some non-$\calB$-ready node on path $P'_{\calB}$ that does not have 
	any $\calA$-ready path from $S_\calA$.
Let $x_1$ be the first such node (we count backwards from $w$).	
By the definition of $x_1$, we know that as long as $x_1$ adopts $\calB$
	(regardless what is the actual $\calB$-seed set), then $x_1$ would
	pass $\calB$ along path $P'_{\calB}$ to $w$.
Thus our backward construction has found the last piece of the path $P_\calB$ as the
	path segment of $P'_{\calB}$ from $x_1$ to $w$, now we move the construction
	backward starting from $x_1$.

At $x_1$, find a $\calA$-path $P'_{\calA}$ from $S_\calA$ to $x_1$.
By the definition of $x_1$\footnote{Recall that it is $\calA$-ready, on a $\calB$-path, and there is no $\calA$-ready path from $S_\calA$ to $x_1$}, we know that $P'_{\calA}$ is not $\calA$-ready.
Let $x_2$ be the first non-$\calA$-ready node on path $P'_{\calA}$ counting forward from the starting $\calA$-seed.
Then $x_2$ must have adopted $\calB$.
By Claim~\ref{clm:bjoina}, we know that as long as $x_2$ adopts $\calB$,
	all nodes after $x_2$ on path $P'_{\calA}$ would adopt both $\calA$ and $\calB$, regardless of the actual
	$\calB$-seed set.
Applying this to $x_1$, we know that $x_1$ is both $\calA$- and $\calB$-adopted.
Then our backward construction has found the next piece of $P_\calB$, which 
	is the path segment of $P'_{\calA}$ from $x_2$ to $x_1$, which guarantees that if $x_2$
	adopts $\calB$, then $w$ must eventually adopt $\calB$.
	
Now if $x_2$ is a $\calB$-seed, we are done.
If not, we will repeat the above ``zig-zag constructions'': there must be a
	$\calB$-path $P''_{\calB}$ from $\calB$-seed set $T\cup \{u\}$ to $x_2$.
The argument on path $P''_{\calB}$ is exactly the same as the argument on $P'_{\calB}$, 
	and if the construction still cannot stop, we will find path $P''_{\calA}$
	similar to $P'_{\calA}$, and so on.
	
The construction keeps going backwards, and since the construction actually follows the strict adoption
	time line and going backward in time, in the diffusion process when $T\cup \{u\}$ is the $\calB$-seed set,
	the construction must eventually reach a $\calB$-seed $x_0$ and stops.
Then we have found the desired path $P_\calB$ and the desired starting node $x_0$.
\end{proof}

%\subsection{Proof of Theorem~\ref{thm:cimOpt}}

\subsection{Proofs for Results in Section~\ref{sec:rrset}}

%%%%%%%%%%%%%%%%%%
\noindent\smallskip\textbf{Lemma~\ref{lemma:generic-rr-2}} (re-stated).
{\em \lemmaGenericRRSubmod}

\begin{proof}
First consider ``if''.
Suppose both properties hold in $W$.
Monotonicity directly follows from Property~{\it (P1)}.
For submodularity, suppose $v$ can be activated by set $T \cup \{x\}$ but not by $T$, where $x\not\in T$.
By Property~{\it (P2)}, there exists some $u \in T \cup \{x\}$ such that $\{u\}$ can activate $v$ in $W$.
If $u \in T$, then $T$ can also activate $v$ by Property~{\it (P1)}, a contradiction.
Hence we have $u = x$.
Then, consider any subset $S \subset T$.
Note that by Property~{\it (P1)}, $S$ cannot activate $v$ (otherwise so could $T$), while $S \cup \{x\}$ can.
Thus, $f_{v,W}(\cdot)$ is submodular.

Next we consider ``only if''.
Suppose $f_{v,W}(\cdot)$ is monotone and submodular for every $v\in V$.
Property~{\it (P1)} directly follows from monotonicity.
For Property~{\it (P2)}, suppose for a contradiction that there exists a seed set $S$ that can activate $v$ in $W$, but there is no $u\in S$ so that $\{u\}$ activates $v$ alone.
We repeatedly remove elements from $S$ until the remaining set is the minimal set that can still activate $v$.
Let the remaining set be $S'$.
Note that $S'$ contains at least two elements.
Let $u\in S'$, and then we have
	$f_{v,W}(\emptyset)=f_{v,W}(\{u\})=f_{v,W}(S'\setminus \{u\})=0$,
	but $f_{v,W}(S') = 1$, which violates submodularity, a contradiction.
This completes the proof.
\end{proof}

%%%%%%%%%%%%%%%%%%
\noindent\smallskip\textbf{Lemma~\ref{lemma:generic-rr}} (re-stated).
{\em \lemmaGenericRR}

\begin{proof}
It is sufficient to prove that in every possible world $W \in \mathcal{W}$, $S$ activates $v$ if and only if $S$ intersects with $v$'s RR set in $W$, denoted by $R_W(v)$.

Suppose $R_W(v) \cap S \neq \emptyset$.
Without loss of generality, we assume a node $u$ is in the intersection.
By the definition of RR set, set $\{u\}$ can activate $v$ in $W$.
Per Property~{\it (P1)}, $S$ can also activate $v$ in $W$.

Now suppose $S$ activates $v$ in $W$.
Per Property~{\it (P2)}, there exists $u\in S$ such that $\{u\}$ can also activate $v$ in $W$.
Then by the RR-set definition, $u \in R_W(v)$.
Therefore, $S \cap R_W(v) \neq \emptyset$.
\end{proof}

%%%%%%%%%%%%%%%%%%
\noindent\smallskip\textbf{Theorem~\ref{thm:rrsim-correct}} (re-stated).
{\em \theoremRRSIMCorrect}

\begin{proof}
It suffices to show that, given a fixed possible world $W$, a fixed $\calB$-seed set $S_\calB$, and a certain node $u\in V$, for any node $v\not\in \Phi_\calA^W(\emptyset, S_\calB)$ with $\alpha_\calA^v \leq \qab$, we have: $v\in \Phi_\calA^W(\{u\}, S_\calB)$ if and only if there exists a live-edge path $P$ from $u$ to $v$ such that for all nodes $w\in P$, excluding $u$, $w$ satisfies $\alpha_\calA^w \leq \qab$, and in case $\alpha_\calA^w > \qao$,  then $w$ must be $\calB$-adopted.

The ``if'' direction is straightforward as $P$ will propagate the adoption information of $\calA$
	all the way to $v$.
If $\alpha_\calA^v \leq \qao$, it adopts $\calA$ without question.
If $\alpha_\calA^v \in (\qao, \qab]$, then $v$ must be $\calB$-adopted by the definition of $P$, which makes it $\calA$-adopted.

For the ``only if'' part,  suppose no such $P$ exists for $u$.
This leads to a direct contradiction since $u$ is the only $\calA$-seed, and $u$ lacks a live-edge path to $v$, it is impossible for $v$ to get informed of $\calA$, let alone adopting $\calA$.
Next, suppose there is a live-edge path $P$ from $u$ to $v$, but there is a certain node $w\in P$ which violates the conditions set out in the lemma.
First, $w$ could be have a ``bad'' threshold: $\alpha_\calA^w > \qab$.
In this case, $w$ will not adopt $\calA$ regardless of its status w.r.t.\ $\calB$, and hence the propagation of $\calA$ will not reach $v$.
Second, $w$ could have a threshold such that $\alpha_\calA^w \in (\qao, \qab]$ but it does not adopt $\calB$ under the influence of the  given $S_\calB$.
Similar to the previous case, $w$ will not adopt $\calA$ and the propagation of $\calA$ will not reach $v$.
This completes the ``only if'' part.

Then by Definition~\ref{def:generalRRset}, the theorem follows.
\end{proof}

%%%%%%%
\noindent\smallskip\textbf{Lemma~\ref{lemma:rr-sim-time}}
{\em \lemmaRRSIMRuntime } (re-stated).

\begin{proof}
Given a fixed RR-set $R\subseteq V$, let $\omega(R)$ be the number
	of edges in $G$ that point to nodes in $R$.
Since in \timsim, it is possible that we do not examine incoming
	edges to a node added to the RR-set
	({\em cf}.\ Cases 1$(ii)$ and 2$(ii)$ in the backward BFS),
	we have:
\begin{align*}
\EPT_B \leq \mathbb{E}[\omega(R)],
\end{align*}
where the expectation is taken over the random choices of $R$.
By Lemma 4 in \cite{tang14} (note that this lemma only relies on 
	the activation equivalence property of RR-sets, which holds true
	in our current one-way complementarity setting), 
\begin{align*}
\frac{|V|}{|E|}\cdot \mathbb{E}[\omega(R)] \leq \OPT_k.
\end{align*}
This gives
\[
\frac{|V|}{|E|} \cdot \EPT_B \leq \OPT_k.
\]
Following the same analysis as in~\cite{tang14} one can check that the lower bound
	$LB$ of $\OPT_k$ obtained by the estimation method in
		\cite{tang14} guarantees that $LB \ge \EPT_B\cdot |V|/|E|$.
Since in our algorithm we set $\theta = \lambda / LB$, where (following Eq.\eqref{eqn:opt})
\[
\lambda = \epsilon^{-2} \left((8+2\epsilon)|V| \left(\ell \log|V| + \log\binom{|V|}{k} + \log 2\right) \right),
\]
then we have that the expected running time of generating all RR-sets is:

\begin{align*}
O(\theta \cdot \EPT)
&= O(\frac{\lambda}{LB} \cdot (\EPT_F + \EPT_B)) \\
&= O\left(\frac{\lambda|E|}{|V| \EPT_B} (\EPT_F + \EPT_B) \right) \\
&= O\left(\frac{\lambda|E|}{|V|} \left(1+ \frac{\EPT_F}{\EPT_B} \right) \right) \\
& = O\left((k+\ell)(|V|+|E|)\log |V|\left(1+ \frac{\EPT_F}{\EPT_B}\right) \right).
\end{align*}

The time complexity for estimating $LB$ and for calculating the final seed set 
	given RR-sets are the same as in~\cite{tang14}, and thus the final complexity
	is $O\left((k+\ell)(|V|+|E|)\log |V|\left(1+ \frac{\EPT_F}{\EPT_B}\right) \right)$.
\end{proof}

%%%%%%%%%%%%%%%%%%
\noindent\smallskip\textbf{Theorem~\ref{thm:rrcim-correct}} (re-stated).
{\em \theoremRRCIMCorrect}

\begin{proof}
It suffices to show the following claim.

\begin{claim}\label{clm:rrcim-correct}
Consider any fixed possible world $W$, a fixed $\calA$-seed set $S_\calA$, and a certain node $u\in V$.
Then, for any node $v\not\in \Phi_\calA^W(S_\calA, \emptyset)$ with $\alpha_\calA^v \leq \qab$, we have: $v\in \Phi_\calA^W(S_\calA, \{u\})$ if and only if there exists a live-edge path $\mathcal{P}(u,v)$ from $u$ to $v$ such that one of the following holds :
\begin{itemize}
\item $(i)$.\ $\mathcal{P}(u,v)$ consists entirely of $\calA$-adopted or diffusible $\calA$-suspended/potential nodes, and $u$ must be $\calA$-suspended, or
\item $(ii)$.\ There exists an $\calA$-suspended node $u' \neq u$ on $\mathcal{P}(u,v)$, such that all nodes on the sub-path of $\mathcal{P}(u,u')$ (excluding $u$) are $\calB$-diffusible and the remaining ones (excluding $u'$ and $v$) are either $\calA$-adopted or diffusible $\calA$-suspended/potential.
\end{itemize}
\end{claim}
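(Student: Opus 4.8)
The plan is to fix a single possible world $W$ and prove the stated path characterization; the theorem then follows from the general framework, since the claim shows the set returned by \timcim is exactly the $R_W(v)$ of Definition~\ref{def:generalRRset}, while cross-monotonicity (Theorem~\ref{thm:monotone}) and cross-submodularity at $\qba=1$ (Theorem~\ref{thm:cross-submod}) supply Properties \emph{(P1)--(P2)}, so that Lemmas~\ref{lemma:generic-rr-2} and~\ref{lemma:generic-rr} yield the activation equivalence property and Theorem~\ref{thm:generalTIM} applies. I would therefore concentrate on the two directions of the path characterization.

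For the ``if'' direction I would run the deterministic cascade in $W$ with $u$ as the only $\calB$-seed and argue by induction along $\mathcal{P}(u,v)$ that $\calA$-adoption reaches $v$. In case $(i)$, $u$ is $\calA$-suspended, so as a $\calB$-seed it adopts $\calB$ at step $0$ and, since $\alpha_\calA^u \le \qab$, reconsideration makes it $\calA$-adopted; thereafter each node on the path is either $\calA$-adopted (and fires $\calA$ downstream once informed) or a diffusible $\calA$-suspended/potential node, for which $\qao < \alpha_\calA \le \qab$ forces $\alpha_\calB \le \qbo$, so it first adopts $\calB$ (informed by its predecessor, which carries $\calB$ because $\qba=1$ makes every $\calA$-adopter also adopt $\calB$) and then reconsiders into $\calA$. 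Case $(ii)$ is the same argument in two stages: the $\calB$-diffusible prefix $\mathcal{P}(u,u')$ carries $\calB$ from the seed $u$ to the $\calA$-suspended node $u'$ (using $\qba=1$ on any $\calA$-adopted node in the prefix), $u'$ reconsiders into $\calA$, and the suffix $\mathcal{P}(u',v)$ behaves exactly as in case $(i)$. In both cases $v$ becomes $\calA$-adopted, and since $v\notin\Phi_\calA^W(S_\calA,\emptyset)$ this is a genuine boost.

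The ``only if'' direction is the crux and the main obstacle. Assume $v\in\Phi_\calA^W(S_\calA,\{u\})\setminus\Phi_\calA^W(S_\calA,\emptyset)$. I would first trace a live-edge $\calA$-path delivering $\calA$-adoption to $v$ in the $(S_\calA,\{u\})$ cascade, and locate the last node $u'$ on it whose $\calA$-adoption genuinely required $\calB$, i.e.\ an $\calA$-suspended node ($\qao<\alpha_\calA^{u'}\le\qab$) that adopted $\calB$ and then reconsidered; such a node must exist, since otherwise the path would be $\calA$-ready and $v$ would already be $\calA$-adopted without $\calB$, contradicting $v\notin\Phi_\calA^W(S_\calA,\emptyset)$. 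The suffix $\mathcal{P}(u',v)$ is then shown to consist only of $\calA$-adopted or diffusible $\calA$-suspended/potential nodes, again using $\qba=1$ so that each such node also carries $\calB$ forward. The remaining and delicate task is to trace the $\calB$-adoption of $u'$ back to $u$: because $\calA$ in turn helps $\calB$ (through $\qba=1$), this back-trace need not be a plain $\calB$-ready path but can weave between $\calB$- and $\calA$-segments. I would reuse the backward ``zig-zag'' construction of Claim~\ref{clm:bpathcrosssub} (together with the reinforcement fact of Claim~\ref{clm:bjoina}) with $u$ in the role of the sole $\calB$-seed to extract a $\calB$-diffusible prefix $\mathcal{P}(u,u')$; concatenating it with the suffix yields a path satisfying $(ii)$, while the degenerate case $u'=u$ gives $(i)$.

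Finally, I would verify that \timcim enumerates exactly the nodes $u$ admitted by $(i)$ and $(ii)$: the primary backward search from $v$ locates the candidate boost nodes $u'$ (reachable back through $\calA$-adopted and $\calA\calB$-diffusible $\calA$-potential nodes), the secondary backward BFS over $\calB$-diffusible nodes from an $\calA$-suspended, $\calA\calB$-diffusible $u'$ collects every $u$ with a $\calB$-diffusible prefix to $u'$, and the forward/backward secondary searches of Case~4 detect precisely the ``loop'' in which $u$ is itself $\calA$-potential yet can seed $\calB$ into a boost node $u_0$ that reinstates $\calA$ at $u$ (Figure~\ref{fig:rrcim-tricky}). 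Matching each algorithmic case to $(i)$ or $(ii)$ then shows the generated RR-set equals $\{u : \{u\}\text{ activates }v\text{ in }W\}$, completing the proof. I expect the only-if back-trace --- reconciling the single clean path demanded by the claim with the genuinely two-directional reinforcement between $\calA$ and $\calB$ --- to be where essentially all of the difficulty lies.
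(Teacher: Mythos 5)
Your framework reduction and your ``if'' direction coincide with the paper's proof, so the issues are confined to the ``only if'' direction, where there are two genuine gaps. First, you anchor the decomposition at the \emph{last} node on the $\calA$-path whose adoption required $\calB$'s help and assert that it is $\calA$-suspended. This is unjustified: $\calA$-suspended is a label of the forward labeling \emph{without} $\calB$-seeds (Eq.~\eqref{eqn:labels}), so a non-$\calA$-ready node carries that label only if $S_\calA$ alone informs it of $\calA$. The last boosted node may lie downstream of other boosted nodes and hence be labeled $\calA$-potential; such a node cannot serve as $u'$ in case $(ii)$, because an $\calA$-potential node that adopts $\calB$ does not reconsider $\calA$ --- it has never been informed of $\calA$. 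The paper instead takes the \emph{first} non-$\calA$-ready node $w_1$: everything upstream of it is $\calA$-ready, hence labeled $\calA$-adopted, which forces $w_1$ to be labeled $\calA$-suspended; any later non-ready node on the suffix must have adopted $\calB$ with $\alpha_\calB\le\qbo$ before adopting $\calA$, hence is diffusible, which is all that $(i)$/$(ii)$ require of the suffix.

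Second, the tool you invoke for the prefix --- the zig-zag back-trace of Claim~\ref{clm:bpathcrosssub} --- cannot deliver the $\calB$-diffusible path that case $(ii)$ demands. The $\calA$-segments of that construction relay $\calB$ only because their nodes adopt $\calA$ (via $\qba=1$); under the label-based definition, such a node is $\calB$-diffusible only if it is labeled $\calA$-adopted in the no-$\calB$-seed labeling, and in a genuine zig-zag its $\calA$-adoption itself depends on $\calB$, so it is labeled suspended/potential and may have $\alpha_\calB>\qbo$, i.e., it is not $\calB$-diffusible. The paper's actual argument is different and much shorter: it observes that a node that is not $\calA\calB$-diffusible can adopt neither item unless it is itself a seed; since the $\calB$-seed set is the singleton $\{u\}$ and no extra $\calA$-seeds are available, at most one non-diffusible node can appear on the relevant path and it must equal $u$. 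This seed-counting observation is what collapses the characterization into case $(i)$ (no non-diffusible node on the suffix, $u=w_1$) or case $(ii)$ ($u$ is the unique non-diffusible node $w_2$, together with a $\calB$-diffusible live-edge path from $u$ to $w_1$); Claim~\ref{clm:bpathcrosssub} plays no role here, as it was designed for arbitrary seed sets in the cross-submodularity proof, not for a label-based path characterization with a singleton $\calB$-seed.
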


\begin{proof}[Proof of Claim~\ref{clm:rrcim-correct}]
$(\Longleftarrow).$
Suppose $(i)$ holds.
Clearly, $u$ will adopt $\calB$ first as a seed and then adopt $\calA$ by reconsideration.
Then all nodes on $\mathcal{P}(u,v)$ including $v$ will adopt both $\calA$ and $\calB$ since they are either $\calA$-adopted or diffusible $\calA$-suspended/potential.
Consider any node $w \neq v$ on  $\mathcal{P}(u,v)$: if $w$ is already $\calA$-adopted, it will adopt $\calB$ since $\qba = 1$; otherwise, $w$ adopts $\calB$ first since $\alpha_\calB^w \leq \qbo$ (diffusible) and then $\calA$.
Next, suppose $(ii)$ holds.
Since $S_\calB = \{u\}$, $u$ adopts $\calB$, so do all nodes on $\mathcal{P}(u,u')$.
Since $u'$ is $\calA$-suspended, it will then adopt $\calA$ by reconsideration.
The rest of the argument is exactly the same as the case above.

$(\Longrightarrow).$
Since $v$ is $\calA$-adopted, there must exist a live-edge path on $P_\calA$ from $S_\calA$ to $v$ such that all nodes on $P_\calA$ are $\calA$-adopted.
Suppose every node $w \in P_\calA$ have $\alpha_\calA^w \leq \qao$, then all nodes on $P_\calA$ would adopt $\calA$ when $S_\calB = \emptyset$, including $v$, which is a contradiction.
Thus, there exists at least one node $w\in P_\calA$ such that $\qao < \alpha_\calA^w \leq \qab$.
Let $w_1$ be the first of such nodes, counting from upstream.
In order for $v$ to adopt $\calA$, $w_1$ must adopt $\calA$, and in order for that to happen, $w_1$ must adopt $\calB$ first.

If the sub-path $\mathcal{P'}(w_1,v)$ does not have non-diffusible nodes, then it satisfies $(i)$ and $w_1 = u$ as in the theorem.
Now suppose there is one node $w_2 \in \mathcal{P'}(w_1,v)$ that is non-diffusible, i.e., $\alpha_\calB^{w_2} > \qbo$ and $\qao < \alpha_\calA^{w_2} \le \qab$.
This means in order for $w_2$ to adopt $\calA$ and $\calB$, it must be a seed for either or both items.
By construction, $w_2$ is not an $\calA$-seed, and thus it must be a $\calB$-seed.
Next, since $w_1$ adopts $\calB$ and $w_2$ is the only $\calB$-seed, there must be a live-edge path of $\calB$-diffusible nodes from $w_2$ to $w_1$.
In this way, $\mathcal{P'}(w_2,v)$ satisfies $(ii)$, with $w_2 = u$ and $w_1 = u'$ as in the theorem.
Also note that if there are more than one non-diffusible nodes on $\mathcal{P'}(w_1,v)$, all of them must be $\calB$-seeds in order to make $v$ adopt $\calA$.
Since the theorem only considers singleton $\calB$-seed sets, such paths will not lead $v$ to adopt $\calA$ with only one non-diffusible node as $\calB$-seed.
This completes the proof.
\end{proof}

By Definition~\ref{def:generalRRset}, the theorem thus holds true.
\qed
\end{proof}

%%%%%%%%%%
\noindent\smallskip\textbf{Lemma~\ref{lemma:rrcim-time}} 
{\em \lemmaRRCIMTime} (re-stated).

\begin{proof}
The following inequality continues to hold for \timcim,
	by applying Lemma 4 in \cite{tang14}.
\[
 |V| \cdot \EPT_{BS} / |E| \leq \OPT_k,
\]
%\[
% |V| \cdot \mathbb{E}[\omega(R)] / |E| \leq \OPT_k,
%\]
%where $\omega(R)$ is the number of edges pointing
%	to a node in $R$, given a fixed $R$.
%To estimate a lower bound of $\OPT_k$, for \timcim,
%	we do not include the eligible non-diffusible
%	$\calA$-potential nodes that are also in $R$, but
%	only focus on edges pointing to $\calA$-suspended nodes.
%Clearly,
%\[
%|V| \cdot EPT_{BS} / |E| \leq  |V| \cdot \mathbb{E}[\omega(R)] / |E| \leq \OPT_k.
%\]

%With this, in the execution of \generalTIM \& \timcim, we can 
%	estimate $LB$ of $\OPT_k$ by the method in \cite{tang14}, but
%	only counting the edges pointing to $\calA$-suspended nodes.
%We use $\EPT_{BO}$ to denote the expected number of all other
%	edges examined in Phase~\rom{2}, including primary
%	and all secondary searches.
Since we employ lazy sampling (of possible worlds), we conclude
	that $\EPT_F + \EPT_{BS} + \EPT_{BO} \leq |E|$.
That is, in the process of generating a single RR-set, the worst case
	is to flip a coin for every edge, and every edge becomes live, and
	the search goes on.
The rest of the analysis is similar to Lemma~\ref{lemma:rr-sim-time}, and finally,
	the expected time complexity for \generalTIM + \timcim is
\[
O\left((k+\ell)(|V|+|E|)\log |V|\left(1+ \frac{\EPT_F+\EPT_{BO}}{\EPT_{BS}}\right) \right).
\]
\end{proof}

%%%%%%%%%%%%%%%%%%
%\smallskip
%\textsc{Lemma~\ref{lemma:rr-sim-plus}} (re-stated).
%{\em \lemmaRRSIMPlus}

%%%%%%%%%%%%%%%%%%
\noindent\smallskip\textbf{Theorem~\ref{thm:sandwich}} (re-stated).
{\em 
Sandwich Approximation solution gives:
$$ %\begin{align} \label{eqn:sandapprox}
\sigma(S_{\mathit{sand}}) \ge \max \Big\{\frac{\sigma(S_\nu)}{\nu(S_\nu)}, \frac{\mu(S_\sigma^*)}{\sigma(S_\sigma^*)} \Big\}
  \cdot (1-1/e) \cdot  \sigma(S_\sigma^*),
$$ %\end{align}
where $S_\sigma^*$ is the optimal solution maximizing $\sigma$ (subject to  cardinality constraint $k$).
}

\begin{proof}
Let $S_\mu^*$ and $S_\nu^*$ be the optimal solution to maximizing $\mu$ and $\nu$ respectively.
We have
{\normalsize 
\begin{multline}\label{eqn:Snu}
 \sigma(S_\nu)  = \frac{ \sigma(S_\nu)}  {\nu(S_\nu)} \cdot \nu(S_\nu)  \geq \frac{\sigma(S_\nu)} {\nu(S_\nu)} \cdot (1-1/e) \cdot \nu(S_\nu^*) \\ 
 \geq \frac{\sigma(S_\nu)}{\nu(S_\nu)} \cdot (1-1/e) \cdot \nu(S_\sigma^*) \geq \frac{\sigma(S_\nu)}{\nu(S_\nu)} \cdot (1-1/e) \cdot \sigma(S_\sigma^*);
\end{multline}}
\vspace{-4mm}
{\normalsize
\begin{multline}\label{eqn:Smu}
\sigma(S_\mu) \geq \mu(S_\mu) \geq (1-1/e) \cdot \mu(S_\mu^*) \geq (1-1/e) \cdot \mu(S_\sigma^*) \\
 \geq \frac{\mu(S_\sigma^*)}{\sigma(S_\sigma^*)} \cdot (1-1/e) \cdot  \sigma(S_\sigma^*).
\end{multline}
}

The theorem follows by applying Eq. \eqref{eqn:sand}, the definition of $S_{\mathit{sand}}$.
\end{proof}

%%%%%%%%%%%%%%%%%%
\noindent\smallskip\textbf{Theorem~\ref{thm:mono-q}} (re-stated).
{\em \theoremMonoQ}

\begin{proof}[Proof (Sketch)]
The detailed proof would follow the similar induction proof structure for each possible world 
	as in the proof of Theorem~\ref{thm:monotone}.
Intuitively, 
%	for the mutual competition case, we would prove inductively that at every step increasing
%	$\qao$ or $\qab$ would increase $\calA$-adopted nodes and decrease $\calB$-adopted nodes;
%	for the mutual complementarity case, 
	we would prove inductively that at every step increasing
	$\qao$ or $\qab$  would increase both $\calA$-adopted and $\calB$-adopted nodes.
\end{proof}

%%%%%%%%%%%%%%%%%%%%%%%%%%%%%%%%%%%%
%%%%%%%%%%%%%%%%%%%%%%%%%%%%%%%%%%%%
%%%%%%%%%%%%%%%%%%%%%%%%%%%%%%%%%%%%

\subsection{Submodularity  Analysis for Competitive Cases of Com-IC}

%In competitive cases, there is no meaningful reasons
%	to study cross-submodularity as in the complementary settings.

First, we address cross-submodularity.
Note that by Theorem~\ref{thm:monotone}, $\sigma_\calA$ is monotonically decreasing w.r.t.\ $S_\calB$ (with any fixed $S_\calA$).
Intuitively, cross-submodularity for competitive products means adding an additional $\calB$-seed to a smaller $\calB$-seed set yields a larger decrease in the spread of $\calA$: for any $S\subseteq T\subseteq V$ and any $u\not\in T$, $\sigma_\calA(S_\calA, S) - \sigma_\calA(S_\calA, S\cup \{u\}) \geq \sigma_\calA(S_\calA, T) - \sigma_\calA(S_\calA, T\cup \{u\})$.
This notion is relevant and useful to the problem of influence blocking maximization, where one party wants to find the best seed set to block the spread of competitors~\cite{HeSCJ12,BudakAA11}.
Since influence blocking is not the focus of this work, from here on, we focus on self-submodularity and self-monotonicity only, and hereafter we
	drop ``self-'' in the terminologies.

When $q_{A|\emptyset} = q_{B|\emptyset} = 1$ and $q_{A|B} = q_{B|A} = 0$, the Com-IC model degenerates to 
	the homogeneous CIC model, for which submodularity and monotonicity both hold~\cite{infbook}.
In the general $\bQ^-$ setting, monotonicity holds (Theorem~\ref{thm:monotone}).
However, the following counter-example shows the opposite for submodularity.

\begin{example}%[Non-Self-Submodularity]
\label{exp:nonsub}
{\em
Consider the graph in Figure~\ref{fig:non-sub-compete}, where all edges have an influence probability of $1$.
Values of $\bQ$ are: $\qao = q \in (0,1)$, $\qab = \qba = 0$, $\qbo = 1$.
The $\calB$-seed set is $S_\calB = \{y\}$.
For $\calA$-seed set $S_\calA$, let $S = \{s_1\}$, $T = \{s_1, s_2\}$, and $u = s_3$.

We consider the probability $v$ becomes $\calA$-adopted with $S_\calA$, denoted by $ap_\calA(v, S_\calA)$ ($S_\calB$ is omitted since it is clear from context):
\begin{align*}
&ap_\calA(v, S) = 0, \\
&ap_\calA(v, S \cup \{u\}) = q^2, \\
&ap_\calA(v,T) = 0, \\
&ap_\calA(v,T \cup \{u\}) = q^2 + (1-q)\cdot q^6.
\end{align*}
%Thus, the marginal gain of $u$ w.r.t.\ $T$ is larger by $(1-q)\cdot q^6 > 0$, which is positive since $q\in (0,1)$.
Hence, we have
\begin{align*}
&\big(ap_\calA(v,T \cup \{u\}) - ap_\calA(v,T) \big) - \big(ap_\calA(v, S \cup \{u\}) - ap_\calA(v,S) \big)  \\
&= (1-q)\cdot q^6 > 0.
\end{align*}
Non-submodularity occurs for the entire graph if $v$ is replicated sufficiently many times.
\qed
}
\end{example}

The intuition of the above counter-example is that $\calA$-seeds $s_2$ and $s_3$
	together can block $\calB$ completely, and
	thus even if they cannot successfully activate $v$ to adopt $\calA$, $s_1$ could 
	later activate $v$ (with the additional probability $(1-q)\cdot q^6$), but
	when $s_2$ or $s_3$ acts alone, it cannot block $\calB$, so the influence of $\calA$
	from $s_1$ will not reach $v$.

\begin{figure}[t]
 \centering
   \includegraphics[width=0.5\textwidth]{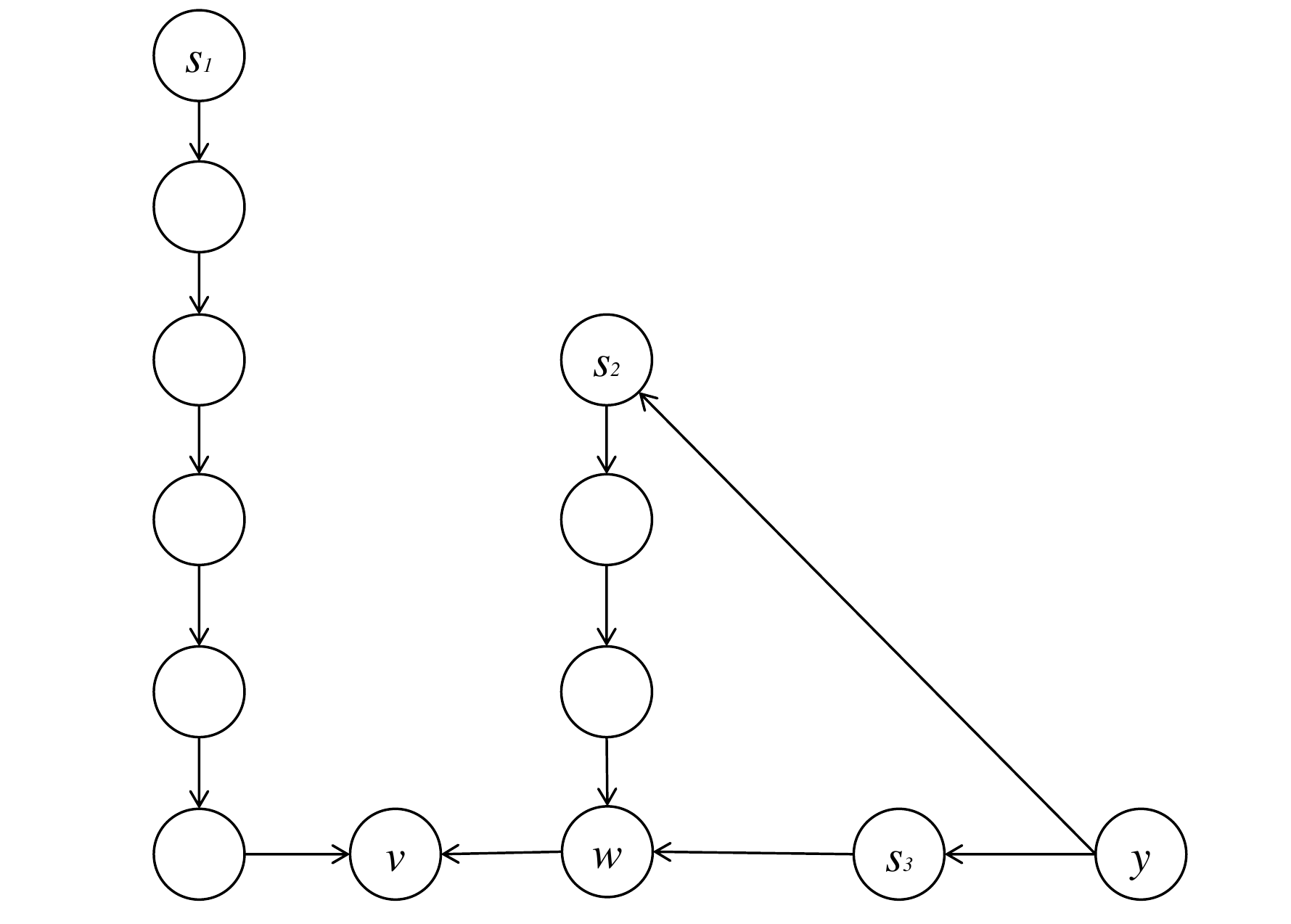}
   \caption{Graph for Example~\ref{exp:nonsub}}
   \label{fig:non-sub-compete}
\end{figure}

Next, we show a positive result which says that submodularity is satisfied
	as long as $q_{A|\emptyset} = q_{B|\emptyset} = 1$.

\begin{theorem}
In the Com-IC model, when $q_{A|\emptyset} = q_{B|\emptyset} = 1$, the influence spread function $\sigma_\calA(S_\calA, S_\calB)$ is \emph{submodular} w.r.t.\ $S_\calA$, for any given $S_\calB$.
\end{theorem}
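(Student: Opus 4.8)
The plan is to work entirely in the possible world (PW) model. By Lemma~\ref{lemma:pw}, $\sigma_\calA(S_\calA, S_\calB) = \sum_{\bf W} \Pr[{\bf W}]\, \sigma_\calA^{\bf W}(S_\calA, S_\calB)$ for the fixed $\calB$-seed set $S_\calB$, and since a nonnegative linear combination of submodular functions is submodular, it suffices to prove that $|\Phi_\calA^W(\cdot)|$ is submodular in the $\calA$-seed set within each fixed world $W$. Writing $f_{v,W}(S_\calA) = \mathbb{1}[v \in \Phi_\calA^W(S_\calA)]$, Lemma~\ref{lemma:generic-rr-2} reduces this to establishing the possible-world properties {\it (P1)} and {\it (P2)} for every root $v$. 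Property {\it (P1)} (monotonicity) is immediate: since $\qao = 1 \ge \qab$ and $\qbo = 1 \ge \qba$, the GAPs lie in $\bQ^-$, so Theorem~\ref{thm:monotone} gives $\Phi_\calA^W(S) \subseteq \Phi_\calA^W(T)$ whenever $S \subseteq T$. Hence the theorem reduces to proving {\it (P2)}: if an $\calA$-seed set $X$ makes $v$ become $\calA$-adopted in $W$, then some singleton $\{u\}$ with $u \in X$ already does so; equivalently, $f_{v,W}$ must be a coverage (``OR-of-singletons'') function.

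To prove {\it (P2)} I would exploit the consequence of $\qao = \qbo = 1$ that $\alpha_\calA^w \le \qao$ and $\alpha_\calB^w \le \qbo$ hold automatically for every node $w$: a node adopts the first item whose awareness reaches it at the very step of arrival, and a node reached by $\calB$ first adopts $\calA$ afterwards iff $\alpha_\calA^w \le \qab$. Call $w$ \emph{$\calA$-robust} if $\alpha_\calA^{w} \le \qab$ (it adopts $\calA$ whenever informed, regardless of $\calB$) and \emph{$\calA$-fragile} otherwise. Run the diffusion from $X$, let $t^\ast$ be the step at which $v$ first becomes $\calA$-adopted, and trace the chain of in-neighbours that carried $\calA$-awareness to $v$ back to a seed $s \in X$, producing a live path $s = w_0, w_1, \dots, w_k = v$ whose nodes have $\calA$-arrival times exactly $0,1,\dots,k=t^\ast$. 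I would then show by induction along the path that $s$ alone re-creates it, i.e.\ $w_j$ is $\calA$-adopted by step $j$ in the $\{s\}$-run. Robust nodes cause no difficulty. For a fragile $w_j$ the induction can break only if $\calB$ reaches $w_j$ strictly before step $j$ once the remaining seeds of $X$ are removed; the clean base case is when every fragile $w_j$ is safe by unconstrained distance, $d_{G_W}(S_\calB, w_j) > j$, because this bound is independent of the $\calA$-seeding and $\calB$ never advances faster than one live edge per step, so $\calB$ cannot preempt $w_j$ under any $\calA$-seed set containing $s$.

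The main obstacle is precisely the remaining case, in which a fragile path node $w_j$ adopted $\calA$ under $X$ only because $\calB$ was \emph{blocked}: some other seed's $\calA$-wave occupied $\calB$'s shortest route to $w_j$ and refused to re-adopt $\calB$ there, so that $d_{G_W}(S_\calB, w_j) \le j$ yet $\calB$ still failed to arrive in time. Deleting seeds to isolate $s$ threatens to un-block $\calB$ and sabotage the path, so a naive single-seed argument fails. The key insight I would develop is that, because $\qao = 1$ forces immediate adoption, every blocking event is itself single-seed-attributable: the blocking node $y$ on $\calB$'s route became $\calA$-adopted early, tracing back to a seed $s'$ whose $\calA$-wave overtook $\calB$ at $y$, and that same overtaking wave can be shown to carry $\calA$ through $y$ toward $v$ (or to re-block $\calB$ identically) when $s'$ acts alone. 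Formalizing this as an induction on blocking depth (equivalently on $\calA$-arrival time), maintaining the invariant that every $\calA$-adoption realized under $X$ is realized by some singleton of $X$, is the technical heart of the argument. Once {\it (P2)} is secured in each world, Lemma~\ref{lemma:generic-rr-2} gives per-world submodularity and the PW decomposition lifts it to $\sigma_\calA$.
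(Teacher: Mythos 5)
Your overall skeleton is the same as the paper's: fix a possible world via Lemma~\ref{lemma:pw}, get self-monotonicity (your {\it (P1)}) from Theorem~\ref{thm:monotone} (here $\qao=1\ge\qab$ and $\qbo=1\ge\qba$, so the GAPs lie in $\bQ^-$), and reduce per-world submodularity to single-seed attribution {\it (P2)}, established by tracing a live $\calA$-path backwards from $v$ to an origin seed and showing that seed alone suffices (the paper's Claim~\ref{clm:source}). You also correctly isolate the real difficulty --- deleting seeds can ``unblock'' $\calB$ and let it preempt a fragile node --- and you correctly dispose of the easy case $d_{G_W}(S_\calB,w_j)>j$.

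The gap is precisely at what you call the technical heart. Your proposed resolution --- attribute each blocking event to a seed $s'$ and argue $s'$ alone either ``carries $\calA$ through $y$ toward $v$'' or ``re-blocks $\calB$ identically'' --- is unjustified and, as stated, insufficient: re-blocking by $s'$ alone does not make $v$ adopt $\calA$ (the seed $s$ that supplied $v$'s $\calA$-path is absent from the $\{s'\}$-run), and your invariant ``every $\calA$-adoption under $X$ is realized by some singleton of $X$'' carries no control over adoption \emph{times}, which is exactly what preemption is about: a predecessor can be singleton-activated but strictly later, letting $\calB$ win at the next node, so the induction does not close. What actually closes the argument (and what the paper uses) is a quantitative consequence of $\qao=\qbo=1$ that your sketch never exploits: in \emph{every} run, each node reachable from $S_\calA\cup S_\calB$ adopts \emph{some} item at exactly its live-graph distance from $S_\calA\cup S_\calB$ (Claim~\ref{clm:aorb} plus a unit-speed induction) --- blocking changes \emph{what} arrives, never \emph{when}. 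Applied to a fragile traced node $w_j$, whose first adoption in the $X$-run must be $\calA$ at exactly time $j$, this forces $j=d_{G_W}(S_\calA\cup S_\calB\cup\{u\},w_j)\le d_{G_W}(S_\calB,w_j)$, so your problematic case ``$d_{G_W}(S_\calB,w_j)\le j$ yet $\calB$ failed to arrive in time'' collapses to the exact tie $d_{G_W}(S_\calB,w_j)=j$. That tie is then resolved not by blocking attribution but by consistency of the tie-breaking permutations $\pi_w$ (and $\tau$) across the two runs: if the permutation let $\calB$ win at $w_j$ in the singleton run, the same unit-speed argument along that $\calB$-path in the big run shows something reaches $w_j$ by time $j$ there as well, contradicting either $w_j$'s $\calA$-adoption in the big run or the backward construction of the traced path (this is the paper's three-way case analysis $\ell_\calB>\ell_\calA$, $\ell_\calB<\ell_\calA$, $\ell_\calB=\ell_\calA$). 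Without this distance-and-tie analysis, your proposal leaves {\it (P2)} unproved.
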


\begin{proof}
We fix a $\calB$-seed set $S_\calB$ and consider an arbitrary possible world $W$ to show
	that submodularity is satisfied in $W$.
%Hence, by Theorem~\ref{lemma:pw} and the linearity of submodularity, this submodularity of $\sigma^X(S_\calA, S_\calB)$ follows.

In the possible world $W$, for an $\calA$-adopted node $v$, there must exist a live-edge path from $\calA$-seed set $S_\calA$ to
	$v$ with all nodes on the path adopting $\calA$, and we call this path an $\calA$-path
	($\calB$-path is defined symmetrically).
With mutual competition, the length of the shortest $\calA$-path from $S_\calA$ to $v$ is the
	same as the time step at which $v$ adopts $\calA$.
This is because in competitions when a node is informed about an item, it only has one chance at the
	same time step to decide about the adoption (reconsideration in later steps is not possible).

\begin{claim}\label{clm:aorb}
If a node $v$ is reachable from $S_\calA$ or $S_\calB$ in the possible world $W$, then $v$ must adopt at least one of $\calA$ and $\calB$.
\end{claim}

\begin{proof}[Proof of Claim~\ref{clm:aorb}]
This is due to the fact that $\qao = \qbo = 1$.
In this case, suppose a node $v$ is reachable from $S_\calA$, i.e. there is a path from $S_\calA$ to $v$ in the possible world $W$.
The claim can be proven by an induction on the path length.
Essentially, no matter which item an earlier node on the path adopts, it would inform the next node on the path, and since
	$\qao=\qbo=1$, the node would adopt the first item it get informed.
\end{proof}

Let $S\subseteq T \subseteq V$ be two sets of nodes, and $u\in V \setminus T$ be an additional $\calA$-seed.
For a fixed $\calB$-seed set $S_\calB$ (omitted from the following notations), let $\Phi_\calA^W(S_\calA)$ denote the set of $\calA$-adopted nodes in the possible world $W$ with $\calA$-seed set $S_\calA$ after the diffusion ends.
Consider a node $v \in \Phi_\calA^W(T \cup \{u\}) \setminus \Phi_\calA^W(T)$.
We need to show $v \in \Phi_\calA^W(S \cup \{u\}) \setminus \Phi_\calA^W(S)$.

We now construct a shortest $\calA$-path, denoted $P_\calA$, from some node 
	in $w_0 \in T \cup \{u\}$ to $v$ when $S_\calA = T \cup \{u\}$.
The path construction is done backwards.
Starting from $v$, if $v$ has only one in-neighbor $w$ attempting to inform $v$ through the 
	live edge (open information channel) from $w$ to $v$, 
	then in the previous time step, $w$ must have adopted $\calA$, and we select $w$ as the predecessor of $v$ on path $P_\calA$. 
If there are multiple in-neighbors that have live edges pointing to $v$, 
%let $w$ be the one ordered before all such in-neighbors which adopted $X$ or $XY$,
we examine the permutation $\pi_v$ and pick $w$ as $v$'s predecessor on this path, such that $w$ ranks first over all of $v$'s in-neighbors that adopted $\calA$ in the previous time step.
%and we select $w$ as the predecessor of $v$.
%Similarly, this node $w$ must have adopted $X$, or both with $X$ being adopted first.
We trace back and stop once we reach an $\calA$-seed in $T \cup \{u\}$.
It is clear that $P_\calA$ must be a shortest $\calA$-path from $T \cup \{u\}$ to $v$, since in every construction
	step we move back one time step.

\begin{claim}\label{clm:source}
Let $w_0 \in T \cup \{u\}$ be the starting point of $P_\calA$.
Then, even when $S_\calA = \{w_0\}$, all nodes on $P_\calA$ would still adopt $\calA$.
\end{claim}

\begin{proof}[Proof of Claim~\ref{clm:source}]
Suppose, for a contradiction, that some node on $P_\calA$ \textsl{does not adopt $\calA$} when $S_\calA = \{w_0\}$.
Let $w$ be the first of such nodes on $P_\calA$ (counting from $w_0$)\footnote{$w \neq w_0$, since as a seed, $w_0$ adopts $\calA$ automatically.}.
By Claim~\ref{clm:aorb}, $w$ must adopt $\calB$ as it does not adopt $\calA$.
This means that there must exist a $\calB$-path from $S_\calB$ to $w$.
We construct a shortest $\calB$-path $P_\calB$  from $S_\calB$ to $w$ in the same way as we constructed $P_\calA$:
	start from $w'=w$ backwards and always select the in-neighbor of the current node $w'$ that adopts $\calB$ and is ordered first in $\pi_w$.
Moreover, since $w$ is informed of $\calA$ (by its predecessor on path $P_\calA$) but does not adopt $\calA$,
	it holds that $\qab < \alpha_\calA^w \leq \qao$ in $W$.

%\note[Wei]{Need to consider: what if $w_\calA = w_\calB$?}

We now compare the length of $P_\calB$, denoted by $\ell_\calB$, to the length
	of the segment of $P_\calA$ from $w_0$ to $w$, denoted by $\ell_\calA$.
Suppose $\ell_\calB > \ell_\calA$. Then $w$ would still adopt $\calA$ when $S_\calA = \{w_0\}$
	(regardless of whether it will adopt $\calB$ or not), a contradiction.
%If $\ell(P_\calB) < \ell(P_\calA[w_0, w])$, then $w$ would adopt $\calB$ first and reject $\calA$ even when $S_\calA = T \cup \{u\}$. % (as $P_\calB$ would still be effective with $T \cup \{u\}$, ).
Next consider $\ell_\calB < \ell_\calA$.
When $T \cup \{u\}$ is the $\calA$-seed set, from $\calB$-seed set $S_\calB$ and through the path $P_\calB$, every time step one more node
	on path $P_\calB$ has to adopt either $\calA$ or $\calB$ (by Claim~\ref{clm:aorb}).
Thus by time step $\ell_\calB$, $w$ adopts either $\calA$ or $\calB$.
But since $P_\calA$ is the shortest $\calA$-path from $T \cup \{u\}$ to $w$ and $\ell_\calA > \ell_\calB$, $w$ cannot adopt $\calA$
	at step $\ell_\calB$, so $w$ must adopt $\calB$ at step $\ell_\calB$.
Since $\alpha_\calA^w > \qab$, it means that $w$ would not adopt $\calA$ after adopting $\calB$, also a contradiction.

We are left with the case $\ell_\calB = \ell_\calA$.
Let $w_\calA$ and $w_\calB$ be the predecessor of $w$ on $P_\calA$ and $P_\calB$, respectively.
Assume $w_\calA \neq w_\calB$ (we deal with $w_\calA = w_\calB$ later).
If $w_\calA$ is ordered ahead of $w_\calB$ in $\pi_w$, then according to the tie-breaking rule, 
	$w$ would be informed of $\calA$ first and adopt $\calA$ (since $\alpha_\calA^w \le \qao$) when $w_0$ is the only $\calA$-seed,
	contradicting the definition of $w$.

If $w_\calB$ is ordered ahead of $w_\calA$ in $\pi_w$, then consider again the scenario when $S_\calA = T \cup \{u\}$.
By Claim~\ref{clm:aorb},  through the path $P_\calB$, $w_\calB$ adopts either $\calA$ or $\calB$ by step $\ell_\calB-1$.
If $w_\calB$ adopts $\calA$, it contradicts the construction of $P_\calA$ since we would have chosen $w_\calB$ instead of $w_\calA$ in the
	backward construction.
If $w_\calB$ adopts $\calB$, then $w$ would be informed of $\calB$ from $w_\calB$ first, and then due to
	$\alpha_\calA^w > \qab$, $w$ would not adopt $\calA$, again a contradiction.

Finally, we also need to consider the case of $w_\calA=w_\calB$.
Since $w$ does not adopt $\calA$ when $w_0$ is the only $\calA$-seed, according to our tie-breaking rule, node
	$w_\calA$ must adopt $\calB$ first and then adopt $\calA$ at the same step.
We then trace back the predecessor of $w_\calA$ on path $P_\calA$  and $P_\calB$ respectively.
If these two paths never branch backward, i.e. $P_\calA = P_\calB$, 
	then we know that $w_0$ is both $\calA$ and $\calB$-seed.
According to the possible world model we would use $\tau_{w_0}$ to decide whether $w_0$ adopts $\calA$ or $\calB$ first, and
	then all nodes on the path $P_\calA = P_\calB$ would follow the same order.
Since $w_\calA$ adopts $\calB$ first before $\calA$, we know that $\tau_{w_0}$ is such that $\calB$ is ranked ahead of $\calA$.

Now we consider the scenario when $S_\calA = T \cup \{u\}$.
By the construction of path $P_\calA$, the successor of $w_0$ on this path, $w_1$, orders $w_0$ first among all
	$\calA$-adopted in-neighbors, so $w_1$ would take $w_0$ to get informed of $\calA$, but since $w_0$ adopts $\calB$ first,
	this implies that $w_1$ also adopts $\calB$ first.
We can apply the same argument along the path $P_\calA$ to see that $w$ adopts $\calB$ first before $\calA$.
However, since $\alpha_\calA^w > \qab$, we know that $w$ would not adopt $\calA$, a contradiction.

If paths $P_\calA$ and $P_\calB$ branch at some node $x$ when we trace backward, then let $x_\calA$
	and $x_\calB$ are the two predecessors of $x$ on paths $P_\calA$  and $P_\calB$ respectively, and 
	$x_\calA \ne x_\calB$.
When $w_0$ is the only $\calA$-seed, we know from above that $w_\calA$ adopts $\calB$ first before adopting $\calA$.
By the construction of $P_\calB$, $w_\calA$ must be informed of $\calB$ from its predecessor on $P_\calB$, so if this
	predecessor has not branched yet, it must also adopts $\calB$ before $\calA$.
Using the same argument, we know that $x$ must adopt $\calB$ first before adopting $\calA$.
Then we know that $x_\calB$ must be ordered before $x_\calA$ in $\pi_x$.
Now consider the scenario when $S_\calA = T \cup \{u\}$.
The argument is exactly the same as the previous case of $w_\calB$ ordered before $w_\calA$, and we also reach a contradiction.

We have exhausted all cases and Claim~\ref{clm:source} is proven.
\end{proof}

Claim~\ref{clm:source} immediately implies that $w_0 = u$, since otherwise, we have $w_0 \in T$, and
	by the proof of monotonicity (Theorem~\ref{thm:monotone}) we know that $v \in \Phi_\calA^W(\{w_0\}) \subseteq \Phi_\calA^W(T)$.
This contradicts with the definition of $v \in \Phi_\calA^W(T \cup \{u\}) \setminus \Phi_\calA^W(T)$.

Again by monotonicity, we have $v \in \Phi_\calA^W(\{u\}) \subseteq \Phi_\calA^W(S \cup \{u\})$.
Since $S \subseteq T$ and $v \not\in \Phi_\calA^W(T)$, then $v \not\in \Phi_\calA^W(S)$.
This gives $v \in \Phi_\calA^W(S \cup \{u\}) \setminus \Phi_\calA^W(S)$, which was to be shown.
\end{proof}

\end{document}